\def\localversion{1}
\theoremstyle{plain}
\newtheorem{theorem}{Theorem}
\newtheorem{proposition}[theorem]{Proposition}
\newtheorem{lemma}[theorem]{Lemma}
\newtheorem{corollary}[theorem]{Corollary}
\theoremstyle{definition}
\newtheorem{example}[theorem]{Example}
\newtheorem{remark}[theorem]{Remark}
\newtheorem{definition}[theorem]{Definition}
\newcommand{\vladut}{Vl\u{a}du\c{t}}
\newcommand{\Z}{\mathbb{Z}}
\newcommand{\N}{\mathbb{N}}
\newcommand{\eqdef}{\stackrel{\text{def}}{=}}
\newcommand{\linspan}{\mathbf{\operatorname{Span}}}
\renewcommand{\leq}{\leqslant} 
\renewcommand{\geq}{\geqslant}
\newcommand{\Fq}{\ensuremath{\mathbb{F}_q}}
\newcommand{\Fqm}{\ensuremath{\mathbb{F}_{q^m}}}
\newcommand{\F}{\ensuremath{\mathbb{F}}}
\newcommand{\Fqbar}{\overline{\F}_{q}}
\renewcommand{\P}{\mathbb{P}}
\newcommand{\code}[1][C]{\ensuremath{\mathcal{#1}}}
\newcommand{\AC}{\code[A]}
\newcommand{\BC}{\code[B]}
\newcommand{\CC}{{\code}}
\newcommand{\rest}[2]{{#1}_{|#2}}
\newcommand{\dmin}{d_{\textrm{min}}}
\newcommand{\deuxbracket}[1][2]{^{\langle#1\rangle}}
\newcommand{\deuxstar}[1][2]{^{\star#1}}
\let\deux\deuxstar
\newcommand{\stab}{\mathbf{Stab}}
\newcommand{\word}[1]{\ensuremath{\boldsymbol{#1}}}
\newcommand{\av}{\word{a}}
\newcommand{\bv}{\word{b}}
\newcommand{\cv}{\word{c}}
\newcommand{\dv}{\word{d}}
\newcommand{\ev}{\word{e}}
\newcommand{\mv}{\word{m}}
\newcommand{\uv}{\word{u}}
\newcommand{\vv}{\word{v}}
\newcommand{\xv}{{\word{x}}}
\newcommand{\yv}{{\word{y}}}
\newcommand{\zv}{{\word{z}}}
\newcommand{\wt}[1]{w_H(#1)}
\newcommand{\eucl}[2]{{\left\langle #1, #2 \right\rangle}_{\textrm{Eucl}}}
\newcommand{\card}[1]{\ensuremath{|#1|}}
\newcommand{\mat}[1]{\ensuremath{\boldsymbol{#1}}}
\newcommand{\Gm}{\mat{G}}
\newcommand{\Hm}{\mat{H}}
\newcommand{\GRS}[3]{\text{\bf GRS}_{#1}\left(#2,#3\right)}
\newcommand{\RS}[2]{\text{\bf RS}_{#1}\left(#2\right)}
\newcommand{\CL}[3]{{\code}_L \left(#1, #2, #3\right)}
\newcommand{\COm}[3]{{\code}_{\Omega} \left(#1, #2, #3\right)}
\newcommand{\dgop}{d_{\textrm{Gop}}}
\newcommand{\dorder}{d_{\textrm{Ord}}}
\DeclareMathOperator{\supp}{Supp}
\renewcommand{\div}{\operatorname{div}}
\DeclareMathOperator{\res}{res}
\DeclareMathOperator{\Div}{Div}
\DeclareMathOperator{\Cl}{Cl}
\DeclareMathOperator{\Pic}{Pic}
\newcommand{\cP}{\mathcal{P}}
\newcommand{\map}[5][]{
\begin{array}{cccc}
#1 & #2 & \longrightarrow & #3 \\
& #4 & \longmapsto     & #5
        \end{array}
}
\DeclareMathOperator{\Aut}{Aut}
\title{Algebraic geometry codes and some applications}
\author{Alain Couvreur}
\address{Inria \& LIX, \'Ecole polytechnique, 1 rue Honor\'e d'Estienne d'Orves,
91120 Palaiseau Cedex}
\email{alain.couvreur@inria.fr}
\author{Hugues Randriambololona}
\address{ANNSI, Laboratoire de cryptographie \& T\'el\'ecom Paris,
Tour Mercure, 31 quai de Grenelle, 75015 Paris}
\email{hugues.randriam@ssi.gouv.fr}
\begin{document}
\maketitle
\begin{abstract}
  This article surveys the development of the theory of algebraic
  geometry codes since their discovery in the late 70's. We summarize
  the major results on various problems such as: asymptotic parameters,
  improved estimates on the minimum distance, and decoding
  algorithms.  In addition, we present various modern applications of
  these codes such as public-key cryptography, algebraic complexity
  theory, multiparty computation or distributed storage.
\end{abstract}
 
\setcounter{tocdepth}{1}
{\small \tableofcontents}

\section*{Introduction}\label{intro}
Algebraic geometry codes is a fascinating topic at the confluence of
number theory and algebraic geometry on one side and computer science
involving coding theory, combinatorics, information theory and
algorithms, on the other side.

\subsection*{History} The beginning of the story dates back
to the early 70's where the Russian mathematician V.~D.~Goppa proposed
code constructions from algebraic curves, using rational functions or
differential forms on algebraic curves over finite fields
\cite{goppa1977ppi,goppa1981dansssr,goppa1982iansssr}. In particular,
this geometric point of view permitted to regard Goppa's original
construction of codes from rational functions
\cite{goppa1970ppi,goppa1971ppi} (see \cite{berlekamp1973it} for a
description of these codes in English) as codes from
differential forms on the projective line.

Shortly after, appeared one of the most striking result in history of
coding theory, which is probably at the origin of the remarkable
success of algebraic geometry codes. In 1982, Tsfasman, \vladut{}
and Zink \cite{tsfasman1982mn} related the existence of a sequence of
curves whose numbers of rational points go to infinity and grow linearly with
respect to the curves' genera to the existence of sequences of
asymptotically good codes. Next, using sequences of modular curves and
Shimura curves they proved the existence of sequences of codes over a
field $\F_{q}$ where $q = p^2$ or ${p^4}$, for $p$ a prime number,
whose asymptotic rate $R$ and asymptotic relative distance $\delta$
satisfy
\begin{equation}\label{eq:tvzbound}
  R \geq 1 - \delta - \frac{1}{\sqrt q -1}\cdot
\end{equation}
In an independent work and using comparable arguments, Ihara
\cite{ihara1981jfsuTokyo} proves a similar result over any field $\Fq$
where $q$ is a square. For this reason, the ratio
\[
  \limsup_{g \rightarrow +\infty} \frac{\max |X(\Fq)|}{g},
\]
where the $\max$ is taken over the set of curves $X$ of genus $g$ over
$\Fq$ is usually referred to as the {\em Ihara constant} and denoted
by $A(q)$.  Further, \vladut{} and Drinfeld \cite{vladut1983faa}
proved an upper bound for the Ihara constant showing that the families
of curves exhibited in \cite{tsfasman1982mn,ihara1981jfsuTokyo} are
optimal.

The groundbreaking aspect of such a result appears when
$q$ is a square and $q \geq 49$,
since, for such a parameter, Tsfasman--\vladut{}--Zink bound is
better than the asymptotic {\em Gilbert--Varshamov bound}.  Roughly
speaking, this result asserts that some algebraic geometry codes are
better than random codes, while the opposite statement was commonly
believed in the community. 

For this breakthrough, Tsfasman, \vladut{} and Zink received the
prestigious {\em Information Theory Society Paper Award} and their
result motivated an intense development of the theory of algebraic
geometry codes. The community explored various sides of this theory in
the following decades. First, the question of producing sequences of
curves with maximal Ihara constant or the estimate of the Ihara
constant $A(q)$ when $q$ is not a square became a challenging new
problem in number theory. In particular, some constructions based on
class field theory gave lower bounds for $A(q)$ when $q$ is no longer
a square. In addition, in 1995, Garcia and Stichtenoth \cite{GS1995}
obtained new optimal sequences of curves (i.e. reaching Drinfeld
\vladut{} bound) using a much more elementary construction called {\em
  recursive towers}. Beside the asymptotic questions, many works
consisted in improving in some specific cases Goppa's estimate for the
minimum distance of algebraic geometry codes. Such results permitted
to construct new codes of given length whose parameters beat the
tables of best known codes (see for instance
\cite{grassl:codetables}). Third, another fruitful direction is on the
algorithmic side with the development of polynomial time decoding
algorithms correcting up to half the designed distance and even further
using list decoding.  This \ifx\localversion\undefined {chapter}
\else {article}
\fi
presents known results on
improved bounds on the minimum distance and discusses unique and list
decoding. The asymptotic performances of algebraic geometry codes are
also quickly surveyed without providing an in--depth study of the
Ihara constant and the construction of optimal towers. The latter
topic being much too rich would require a separate treatment which we
decided not to develop it in the present survey.

It should also be noted that codes may be constructed from higher
dimensional varieties. This subject, also of deep interest will not be
discussed in the present \ifx\localversion\undefined {chapter}
\else {article}
\fi
. We refer the interested reader to \cite{little2008chapter} for a
survey on this question, to \cite{sullivan1998decoding} for a decoding
algorithm and to \cite{couvreur2020cm} for a first attempt toward good
asymptotic constructions of codes from surfaces.

\subsection*{Applications of algebraic geometry codes}
Algebraic geometry codes admit many interesting properties that make
them suitable for a very wide range of applications.  Most of these
properties are inherited from Reed-Solomon (RS) codes and their
variants\footnote{Beware that the terminology on Reed--Solomon codes
  varies in the literature with several names for variants: {\em
    generalized Reed--Solomon codes}, {\em extended Reed--Solomon
    codes}, {\em doubly extended Reed--Solomon codes}, etc. In this
  \ifx\localversion\undefined {chapter}
  \else {article}
  \fi
  , we refer to {\em Reed--Solomon} or {\em generalized Reed--Solomon
    codes} as the algebraic geometry codes from a curve of genus
  $0$. See \S~\ref{subsec:GRS_codes} for further details.}, of which
AG codes are a natural extension:
\begin{itemize}
\item AG codes can be explicitly constructed.
\item AG codes can be efficiently decoded.
\item AG codes admit good bounds on their parameters: although they
  might not be MDS, they remain close to the Singleton bound
  (\cite[Th.~1.11]{sloane1977book}).
\item AG codes behave well under duality: the dual of an AG code is an AG code.
\item AG codes behave well under multiplication: the $\star$-product
  of two AG codes is included in, and in many situations is equal to,
  an AG code.
\item AG codes may have automorphisms, reflecting the geometry of the
  underlying objects.
\end{itemize}
However, as already discussed above, AG codes enjoy an additional
property over their Reed-Solomon genus $0$ counterparts, which was perhaps the main
motivation for their introduction:
\begin{itemize}
\item For a given $q$, the length of an (extended) RS code over $\F_q$ cannot
  exceed $q+1$ while one can construct arbitrarily long AG codes
  over a given fixed field $\Fq$.
\end{itemize}
It would be an endless task to list all applications of AG codes.
Below we focus on a selection of those we find most meaningful.
Basically, in every situation where Reed--Solomon codes
are used, replacing them by algebraic geometry codes is natural and
frequently leads to improvements. These application topics may be
symmetric cryptography, public-key cryptography, algebraic complexity
theory, multiparty computation and secret sharing, distributed
storage and so on.
The present survey aims at presenting various aspects of the theory
of algebraic geometry codes together with several applications.

\subsection*{Organization of the
  \ifx\localversion\undefined
  {chapter}
  \else
  {article}
  \fi}
We start by fixing some general notation in Section~\ref{sec:notation}.
The background on algebraic geometry and number theory is recalled in
Section~\ref{sec:curves} and the construction and first properties of
algebraic geometry codes are recalled in Section~\ref{sec:basics}.  In
particular, Goppa bound for the minimum distance is recalled.  In
Section~\ref{sec:asymptotics}, we discuss asymptotic performances of
algebraic geometry codes and relate this question to the Ihara
constant. As said earlier, the construction and study of good families
of curves is too rich to be developed in the present survey and would
require a separate treatment. Next, Section~\ref{sec:dmin} is devoted
to various improvements of Goppa designed distance.
Section~\ref{sec:decoding} surveys the various decoding algorithms.
Starting with algorithms correcting up to half
the designed distance and then moving to the more recent developments
of list decoding permitting to exceed this threshold.  Finally, the
three last sections present various applications of algebraic geometry
codes. Namely, we study their possible use for post--quantum public key
cryptography in Section~\ref{sec:McEliece}. Thanks to their nice
behavior with respect to the so--called {\em $\star$--product},
algebraic geometry codes have applications to algebraic complexity
theory, secret sharing and multiparty computation, which are presented in
Section~\ref{sec:MPC}. Finally, applications to distributed storage
with the algebraic geometric constructions of {\em locally recoverable
  codes} are presented in Section~\ref{sec:LRC}.
 \section{Notation}\label{sec:notation}
For any prime power $q$, the finite field with $q$ elements is denoted
by $\Fq$ and its algebraic closure by $\Fqbar$. Given any ring
$R$, the group of invertible elements of $R$ is denoted by $R^\times$.
In particular, for a field $\F$, the group $\F^\times$
is nothing but $\F \setminus \{0\}$.

Unless otherwise specified, any code in this
\ifx\localversion\undefined chapter \else article \fi is linear.  The
vector space $\Fq^n$ is equipped with the {\em Hamming weight} denoted
by $w_H(\cdot)$ and the Hamming distance between two vectors
$\xv, \yv$ is denoted by $d_H(\xv, \yv)$.  Given a linear code
$\CC \subseteq \Fq^n$, as usually in the literature, the fundamental
parameters of $\CC$ are listed as a triple of the form $[n, k, d]$,
where $n$ denotes its block length, $k$ its dimension as an
$\Fq$--space and $d$ its minimum distance, which is sometimes also
referred to as $d(\CC)$. When the minimum distance is unknown, we
sometimes denote by $[n,k]$ the known parameters.

Another important notion in the sequel is that of {\em
  $\star$-product}, which is nothing but the component wise
multiplication in $\Fq^n$: for $\word{x}=(x_1,\dots,x_n)$ and
$\word{y}=(y_1,\dots,y_n)$ we
have \[\word{x}\star\word{y}\eqdef(x_1y_1,\dots,x_ny_n).\] This notion
extends to codes: given two linear codes $\code,\code'\subseteq\F_q^n$
we let
\[\code\star\code'
\eqdef\linspan_{\F_q}\{\word{c}\star\word{c'} ~|~ \word{c}\in\code,\,\word{c'}\in\code'\}\]
be the \emph{linear span} of the pairwise products of codewords from
$\code$ and $\code'$.  Observe that $\code\star\code'\subseteq\F_q^n$
is again a linear code, since we take the linear span.  Then the
\emph{square} of $\code$ is defined as
\[\code\deuxbracket\eqdef \code\deuxstar=\code\star\code.\]
The two notations $\code\deuxbracket$ and $\code\deuxstar$ are
equivalent and depend only on the authors. In this text we use the
notation $\code\deux$.

Recall that $\Fq^n$ is equipped with a canonical Euclidean
bilinear form defined as
\[ 
  \map[\eucl{\cdot}{\cdot}~:]{\Fq^n \times \Fq^n}{\Fq}{(\xv,
    \yv)}{\sum_{i=1}^n x_i y_i.}
\]
The $\star$--product and the Euclidean product are related by the following adjunction
property

\begin{lemma}\label{lem:adjunction}
  Let $\xv, \yv, \zv \in \Fq^n$, then
  $
    \eucl{\xv \star \yv}{\zv} = \eucl{\xv}{\yv \star \zv}\ \!.
  $
\end{lemma}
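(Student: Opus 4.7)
The plan is to prove this by a direct expansion using the definitions of the two operations and the commutativity and associativity of multiplication in $\Fq$. This is essentially an unpacking of definitions rather than a substantive argument, so I would present it as a short one-line calculation preceded by a brief sentence of motivation.

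More precisely, I would first write out the left-hand side using the definition of $\star$, namely $\xv \star \yv = (x_1 y_1, \ldots, x_n y_n)$, and then apply the definition of the Euclidean pairing to obtain
\[
\eucl{\xv \star \yv}{\zv} = \sum_{i=1}^n (x_i y_i) z_i.
\]
Then I would use associativity of multiplication in $\Fq$ to regroup each summand as $x_i (y_i z_i)$, and recognize the result as the Euclidean pairing of $\xv$ with $\yv \star \zv$:
\[
\sum_{i=1}^n x_i (y_i z_i) = \eucl{\xv}{\yv \star \zv}.
\]

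There is no genuine obstacle here; the only thing to be careful about is making the bookkeeping between the two notations (componentwise product versus sum of products) completely transparent, so that the role of associativity in $\Fq$ is made explicit. I would not bother to invoke commutativity or state a more general formulation (e.g.\ $\eucl{\xv \star \yv}{\zv} = \eucl{\xv \star \zv}{\yv}$), since only the adjunction form is needed in the sequel; a one-sentence remark at the end could point out that by symmetry any permutation of the three vectors gives the same scalar $\sum_i x_i y_i z_i$.
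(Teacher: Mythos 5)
Your proof is correct and is the obvious direct computation; the paper in fact states this lemma without proof, precisely because both sides expand to $\sum_{i=1}^n x_i y_i z_i$ by associativity of multiplication in $\Fq$, exactly as you argue. Nothing to add.
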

 \section{Curves and function fields}\label{sec:curves}
Algebraic geometry, the study of geometric objects defined by
polynomial equations, has a long history.  In the last century it
received rigorous foundations, in several waves, each bringing its own
language.  For \emph{most} applications to coding theory, we will only
need to work with some of the simplest geometric objects, namely
curves.  These can be described equally well in the following
languages:
\begin{itemize}
\item the language of \emph{algebraic function fields}, for which a
  recommended reference is \cite{stichtenoth2009book};
\item the language of \emph{varieties over an algebraically closed
    field}, as in \cite{fulton1989book} or \cite{shafarevich1994book};
\item the language of \emph{schemes}, for which we refer to
  \cite{hartshorne1977book}.
\end{itemize}

As long as one works with one fixed curve, these languages have
the same power of expression.

We briefly recall some of the basic notions and results that we
need, and explain how they correspond in these different languages.
For more details the reader should look in the references given above.

\subsection{Curves, points, function fields and places}

\begin{definition}
\label{def_function_field}
An algebraic function field $F$ with constant field $\F_q$ is a finite
extension of a purely transcendental extension of $\F_q$ of
transcendence degree~$1$, in which $\F_q$ is {\em algebraically closed},
i.e. any element $\alpha \in F$ which is algebraic over $\Fq$ is actually
in $\Fq$.
\end{definition}

Any such $F$ is of the form
$F=\operatorname{Frac}(\F_q[x,y]/(P(x,y)))$ where $P \in \Fq[x,y]$ is
{\em absolutely irreducible}, i.e. irreducible even when regarded as
an element of $\Fqbar[x,y]$.

\begin{definition}
\label{def_curve}
A {\em curve} over $\F_q$ is a geometrically irreducible smooth projective
variety of dimension~$1$ defined over $\F_q$.
\end{definition}

Any such curve can be obtained as the projective completion and
desingularization of an affine plane curve of the form $\{P(x,y)=0\}$
where $P$ is an absolutely irreducible polynomial in the two
indeterminates $x$ and $y$ over $\F_q$.

From this observation we see that Definitions~\ref{def_function_field}
and~\ref{def_curve} are essentially equivalent.  This can be made more
precise:
\begin{theorem}
There is an equivalence of categories between:
\begin{itemize}
\item algebraic function fields, with field morphisms, over $\F_q$;
\item curves, with dominant (surjective) morphisms, over $\F_q$.
\end{itemize}
\end{theorem}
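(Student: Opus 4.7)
The plan is to construct two contravariant functors (or covariant, depending on convention — one is naturally covariant on curves, contravariant on function fields, but in the equivalence one reverses morphisms on one side) and show they are quasi-inverse. Let $\Phi$ send a curve $X$ to its field of rational functions $\F_q(X)$, and a dominant morphism $f\colon X \to Y$ to the pullback embedding $f^*\colon \F_q(Y) \hookrightarrow \F_q(X)$. Conversely, let $\Psi$ send a function field $F$ to a smooth projective model $X_F$, and a field embedding $F' \hookrightarrow F$ to the induced morphism $X_F \to X_{F'}$.

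First I would verify that $\Phi$ is well defined: for a geometrically irreducible curve $X$, the field $\F_q(X)$ is a finitely generated extension of $\F_q$ of transcendence degree $1$, and geometric irreducibility is exactly the condition that $\F_q$ be algebraically closed in $\F_q(X)$. A dominant morphism of irreducible varieties induces an injection of function fields, and this assignment is clearly functorial. Next I would show that $\Psi$ is well defined. Given $F$, I would write $F = \operatorname{Frac}(\F_q[x,y]/(P))$ with $P$ absolutely irreducible, take the projective closure of $\{P=0\}$, and normalize to obtain a smooth projective curve $X_F$; the absolute irreducibility of $P$ guarantees geometric irreducibility. Equivalently, and more canonically, $X_F$ can be constructed intrinsically as the set of places (discrete valuation rings of $F$ containing $\F_q$), endowed with the Zariski topology and a structure sheaf whose stalk at a place $P$ is the corresponding DVR $\mathcal{O}_P$. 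This intrinsic construction already shows $\Psi$ is independent of the presentation.

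The main step is the quasi-inverse property. That $\Phi \circ \Psi \simeq \mathrm{id}$ is built in by construction: the function field of $X_F$ is $F$ itself. For $\Psi \circ \Phi \simeq \mathrm{id}$, the content is that a smooth projective curve is determined up to unique isomorphism by its function field, i.e. it is the unique smooth projective model. This rests on the fact that for a smooth curve $X$ the local rings at closed points are exactly the discrete valuation rings of $\F_q(X)$ that contain $\F_q$, which follows from the valuative criterion of properness combined with smoothness in dimension one. The fully faithfulness of $\Phi$ on morphisms reduces to the classical fact that a field embedding $\F_q(Y) \hookrightarrow \F_q(X)$ first yields a rational map $X \dashrightarrow Y$, which then extends to an everywhere-defined morphism because $X$ is a smooth curve and $Y$ is projective.

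The main obstacle is precisely this last point: extending a rational map from a smooth curve to a projective variety to a genuine morphism. I would handle it by the usual DVR argument: any closed point $P \in X$ gives a DVR $\mathcal{O}_{X,P} \subset \F_q(X)$, and since $Y$ is projective, the valuative criterion of properness supplies a unique morphism $\operatorname{Spec}\mathcal{O}_{X,P} \to Y$ extending the rational map; gluing these over all $P$ produces the required morphism $X \to Y$. Dominance translates into the injectivity of the corresponding field map, closing the loop and finishing the equivalence.
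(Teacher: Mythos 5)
Your proposal is correct and follows essentially the same route as the proof the paper points to (Hartshorne, Ch.~I, \S 6): the two functors (function field / smooth projective model via places or normalization), with the key step being the extension of rational maps from a smooth curve to a projective target via the DVR--valuative criterion argument. The only cosmetic point worth noting is that the correspondence is contravariant, so strictly speaking it is an (anti-)equivalence, i.e.\ an equivalence with the opposite category on one side.
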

The proof can be found e.g. in \cite[\S~I.6]{hartshorne1977book}.  In
one direction, to each curve $X$ one associates its field of rational
functions $F=\F_q(X)$, which is an algebraic function field.  We then
have a natural correspondence between:
\begin{itemize}
\item places, or discrete valuations of $F$;
\item Galois orbits in the set $X(\overline{\F}_q)$ of points of $X$
  with coordinates in the algebraic closure of $\F_q$;
\item closed points of the topological space of $X$ seen as a scheme.
\end{itemize}

If $P$ is a closed point of $X$, or a place of $F$, we denote by
$v_P:F\to\Z\cup\{\infty\}$ the corresponding discrete valuation, by
$\mathcal{O}_P=\{f\in F ~|~ v_P(f)\geq0\}$ its valuation ring (the local
ring of $X$ at $P$), and by $\mathfrak{m}_P=\{f\in F ~|~ v_P(f)>0\}$ its
maximal ideal.  An element $t_P\in\mathfrak{m}_P$ is called a
\emph{local parameter}, or a \emph{uniformizer} at $P$, if it
satisfies $v_P(t_P)=1$.
The residue field $k_P=\mathcal{O}_P/\mathfrak{m}_P$ is a finite
extension of $\F_q$.  We define the {\em degree} of $P$ as
the degree of the field extension:
\[
  \deg(P)=[k_P:\F_q].
\]
This degree is equal to the cardinality of the Galois orbit
corresponding to $P$ in $X(\overline{\F}_q)$.  Conversely, for any
finite extension $\F_{q^d}$ of $\F_q$, the set $X(\F_{q^d})$ of
$\F_{q^d}$-rational points of $X$ (i.e. points of $X$ with coordinates
in $\F_{q^d}$) identifies with the set of degree $1$ places in the
base field extension $\F_{q^d}F$ seen as a function field over
$\F_{q^d}$.

\subsection{Divisors}
The \emph{divisor group} $\Div(X)$ is the free abelian group generated
by the set of closed points of $X$ or equivalently by the set of
places of its function field $\Fq(X)$.  Thus, a divisor is a
formal sum
\[D=\sum_P n_P P\]
where $P$ ranges over closed points and $v_P(D) \eqdef n_P \in\Z$ are almost
all zero.  The {\em support} of $D$ is the finite set $\supp(D)$ of
such $P$ with $n_P\neq0$.  The degree of $D$ is
\[\deg(D)\eqdef\sum_P n_P\deg(P).\]
We say $D$ is {\em effective} if $n_P\geq0$ for all $P$. We write
$D_1\geq D_2$ if $D_1-D_2$ is effective.

A divisor is \emph{principal} if it is of the form
\[
  \div(f)\eqdef\sum_Pv_P(f)P
\]
for $f\in F^\times = F\setminus \{0\}$.
So we can write $\div(f)=(f)_0-(f)_\infty$
where $(f)_0=\sum_{v_P(f)>0}v_P(f)P$ is the divisor of zeros of $f$,
and $(f)_\infty=\sum_{v_P(f)<0}-v_P(f)P$ is its divisor of poles.
Principal divisors have degree zero (``a
rational function on a curve has as many poles as zeros''), and they
form a subgroup of $\Div(X)$.  We say that two divisors $D_1$ and
$D_2$ are {\em linearly equivalent}, and we write
\begin{equation}\label{eq:equivalence_lineaire}
  D_1\sim D_2,
\end{equation}
when $D_1-D_2$ is principal.  Passing to the quotient we get the
\emph{divisor class group} $\Cl(X)=\Div(X)/\sim$ of $X$, together with
a degree map $\Cl(X)\to\Z$ that is a surjective group
morphism~\cite[Th.~3.2(i)]{moreno1990book}.

The \emph{Riemann-Roch space} of a divisor $D$ is the vector space
\[L(D)=\{f\in F^\times ~|~ \div(f)\geq -D\}\cup\{0\}.\]
It has finite dimension $\ell(D)=\dim_{\F_q}L(D)$.  Actually $\ell(D)$
only depends on the linear equivalence class of $D$ in $\Cl(X)$.

To a divisor $D=\sum_P n_P P$, one can associate an invertible sheaf
(or line bundle) $\mathcal{O}(D)$ on $X$, generated locally at each
$P$ by $t_P^{-n_P}$, for $t_P$ a uniformizer at $P$.  There is then a
natural identification
\[L(D)=\Gamma(X,\mathcal{O}(D))\]
between the Riemann-Roch space of $D$ and the space of global sections
of $\mathcal{O}(D)$.  Conversely, given an invertible sheaf
$\mathcal{L}$ on $X$, any choice of a nonzero rational section
$s$ of $\mathcal{L}$ defines a divisor $D=\div(s)$ with
$\mathcal{L}\simeq\mathcal{O}(D)$. Another choice of $s$ gives a
linearly equivalent $D$.  From this we get an isomorphism
\[
  \Cl(X)\simeq\Pic(X)
\]
where $\Pic(X)$, the \emph{Picard group} of $X$, is the group of
isomorphism classes of invertible sheaves on $X$ equipped with the
tensor product.

\subsection{Morphisms of curves and
  pullbacks}\label{subsec:morphisms}
  A {\em morphism of curves} is a map
$\phi: X \rightarrow Y$ that is component wise described by polynomials
or rational functions. To such a map is associated a function field extension
$\phi^* : \Fq(Y) \hookrightarrow \Fq(X)$: given a rational function $f$ on $Y$,
one defines the {\em pullback} of $f$ by $\phi$ denoted $\phi^* f$ to be
the function $f \circ \phi$ on $X$.
The {\em degree} of $\phi$ is the extension degree $[\Fq(X) : \Fq(Y)]$
induced by the $\phi^*$ field extension.

\begin{definition}[Pullback of a divisor]\label{def:pullback}
Given a divisor $D = \sum_{i=1}^r n_i P_i$ on $Y$, one defines the {\em
  pullback of $D$ by $\phi$} and denotes it by $\phi^* D$:
\[
  \phi^* D \eqdef \sum_{i=1}^r \sum_{Q \stackrel{\phi}{\rightarrow} P_i} n_i \cdot e_{Q|P_i} \cdot Q,
\]
where $e_{Q|P_i}$ denotes the {\em ramification index at $Q$} (see
\cite[Def.~3.1.5]{stichtenoth2009book}).
\end{definition}

The divisor $\phi^* D$ is sometimes also called the {\em conorm} of
$D$ (\cite[Def.~3.1.8]{stichtenoth2009book}). In addition, it is
well--known that
\begin{equation}\label{eq:pullback_div}
  \deg \phi^* D = \deg \phi \cdot \deg D.
\end{equation}

\subsection{Differential forms}
The space of \emph{rational differential forms} on $X$ is the
one-dimensional $F$-vector space $\Omega_F$ whose elements are of the
form \[\omega=u\mathrm{d}v\] for $u,v\in F$, subject to the usual
Leibniz rule $\mathrm{d}(u_1u_2)=u_1\mathrm{d}u_2+u_2\mathrm{d}u_1$.
Given $\omega\in\Omega_F$ and $t_P$ a uniformizer at $P$, we can write
$\omega=f_P\mathrm{d}t_P$ for some $f_P\in F$ and we define the
valuation of $\omega$ at $P$ as
\[
  v_P(\omega)=v_P(f_P).
\]
One can prove that the definition does not depend on the choice of $t_P$.
In the same spirit as functions, to any nonzero rational differential form
$\omega$, one associates its divisor
\[
  \div (\omega) \eqdef \sum_{P} v_P(\omega) P.
\]
Equivalently, $\Omega_F$ is the space of rational sections of the
invertible sheaf $\Omega^1_X$, called the {\em canonical sheaf}, or
the sheaf of differentials of $X$, generated locally at each $P$ by
the differential $\mathrm{d}t_P$, for $t_P$ a uniformizer at $P$.

For any divisor $D$ we set
\[
  \Omega(D) \eqdef \Gamma(X, \Omega^1_X \otimes \mathcal O(-D)) =
  \{\omega \in \Omega_F \setminus \{0\} ~|~ \div(\omega) \geq D\} \cup \{0\}.
\]

\begin{remark}
  Beware of the sign change compared to the definition of the $L(D)$
  space.  This choice of notation could seem unnatural, but it is
  somehow standard in the literature on algebraic geometry codes (see
  e.g. \cite{stichtenoth2009book}), and is related to Serre duality.
\end{remark}

\subsubsection{Canonical divisors}
A \emph{canonical divisor} is a divisor $K_X$ on $X$ such that
$\Omega^1_X\simeq\mathcal{O}(K_X)$.  Thus a canonical divisor is of
the form
\[
  K_X=\div(\omega)
\]
for any choice of
$\omega=u\mathrm{d}v\in \Omega_F \setminus \{0\}$.  More explicitly, we have
$K_X=\sum_Pv_P(f_P)P$ where locally at each $P$ we write
$\omega=f_P\mathrm{d}t_P$. Any two canonical divisors are linearly
equivalent (see for instance \cite[Prop.~1.5.13(b)]{stichtenoth2009book}).

\subsubsection{Residues}
Given a rational differential $\omega_P$ at $P$ and a uniformizer $t_P$
we have a local Laurent series expansion
\[\omega_P=a_{-N}t_P^{-N}\mathrm{d}t_P+\cdots+a_{-1}t_P^{-1}\mathrm{d}t_P+\eta_P\]
where $N$ is the order of the pole of $\omega_P$ at $P$ and $\eta_P$ is
regular at $P$, i.e. $v_P(\eta_P)\geq 0$.
Then \[\res_P(\omega_P)=a_{-1}\] is independent of the choice of $t_P$
and is called the \emph{residue} of $\omega_P$ at $P$.  In particular,
if $\omega_P$ is regular at $P$, then we have $\res_P(\omega_P)=0$.
We refer to \cite[Chap.~IV]{stichtenoth2009book} for further details.

\subsection{Genus and Riemann--Roch theorem}
An important numerical invariant of a curve $X$ is its \emph{genus}
\[g=\ell(K_X).\]
We then also have $\deg(K_X)=2g-2$ (\cite[Cor.~1.5.16]{stichtenoth2009book}).

The following result, which is for instance proved in
\cite[Th.~1.5.15]{stichtenoth2009book}, is a central result in the theory of
curves.
\begin{theorem}[Riemann-Roch]
For a divisor $D$ on $X$, we have
\[\ell(D)-\ell(K_X-D)=\deg(D)+1-g.\]
\end{theorem}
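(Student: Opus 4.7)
The plan is to follow the classical adelic (or ``repartition'') proof of Riemann--Roch, as presented for instance in \cite{stichtenoth2009book}, which fits naturally with the language of valuations and residues already set up. The argument splits into two phases: Riemann's inequality on one side, and the identification of its defect with $\ell(K_X-D)$ through a duality with differentials on the other.

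For the first phase I would introduce the adele ring $\mathcal{A}_F$, defined as the restricted product of copies of $F$ indexed by the places $P$ of $F$, and, for each divisor $D$, the subspace $\mathcal{A}_F(D) = \{(f_P)_P \mid v_P(f_P) \geq -v_P(D) \text{ for all } P\}$. Embedding $F$ diagonally, a place-by-place analysis gives $\mathcal{A}_F(D) \cap F = L(D)$ and, for $D_1 \geq D_2$,
\[
\dim_{\F_q}\bigl(\mathcal{A}_F(D_1)/\mathcal{A}_F(D_2)\bigr) \;=\; \deg(D_1) - \deg(D_2).
\]
A dimension count in the short exact sequence
\[
0 \to L(D_1)/L(D_2) \to \mathcal{A}_F(D_1)/\mathcal{A}_F(D_2) \to (\mathcal{A}_F(D_1)+F)/(\mathcal{A}_F(D_2)+F) \to 0
\]
then shows that $\ell(D) - \deg(D) - \dim_{\F_q}\bigl(\mathcal{A}_F/(\mathcal{A}_F(D)+F)\bigr)$ is independent of $D$. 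Calling this constant $1-g_0$ and setting $i(D) := \dim_{\F_q}\bigl(\mathcal{A}_F/(\mathcal{A}_F(D)+F)\bigr)$ (finite, by dominating $D$ by a large effective divisor), we obtain Riemann's inequality $\ell(D) \geq \deg(D)+1-g_0$ with explicit defect $i(D)$.

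The second phase identifies $i(D)$ with $\ell(K_X-D)$ and, in particular, $g_0$ with $\ell(K_X) = g$. I would introduce the $F$-vector space $W$ of \emph{Weil differentials}, i.e. $\F_q$-linear forms $\omega : \mathcal{A}_F \to \F_q$ vanishing on $\mathcal{A}_F(D)+F$ for some $D$; among such $D$ the largest one defines a divisor $(\omega)$, and the subspace $W(D) := \{\omega \in W \mid (\omega) \geq D\}$ has $\F_q$-dimension $i(D)$ by duality with the quotient $\mathcal{A}_F/(\mathcal{A}_F(D)+F)$. An $F$-linearity argument combined with the finiteness of these quotients forces $\dim_F W \leq 1$, and exhibiting a single nonzero Weil differential from a sufficiently positive divisor yields $\dim_F W = 1$. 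I would then construct the $F$-linear map $\Omega_F \to W$ sending $\eta$ to the functional $(f_P)_P \mapsto \sum_P \res_P(f_P \eta)$; matching divisors on both sides identifies $\Omega(D)$ with $W(D)$, hence $\ell(K_X-D) = \dim_{\F_q}\Omega(D) = i(D)$, which is Riemann--Roch. The identity $\deg(K_X) = 2g-2$ then drops out by specializing to $D = K_X$.

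The main obstacle is the construction of the isomorphism $\Omega_F \simeq W$, which rests on the residue theorem $\sum_P \res_P(f\eta) = 0$ for any $f \in F$ and $\eta \in \Omega_F$. This global vanishing is the only genuinely non-formal ingredient; it is typically handled by reducing to $X = \P^1$ via a finite morphism $X \to \P^1$ and tracking how residues behave under trace, or by a direct computation on $\P^1$ coupled with a norm/trace compatibility across finite extensions of function fields.
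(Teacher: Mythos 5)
Your sketch is the classical adelic/Weil-differential proof, which is precisely the argument of \cite[Th.~1.5.15]{stichtenoth2009book} that the paper cites in lieu of giving a proof itself, so you are taking essentially the same route as the source relied upon here. The outline is correct, including your identification of the one genuinely non-formal ingredient, namely the residue theorem needed to identify rational differential forms with Weil differentials.
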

In particular, we always have
\[
  \ell (D) \geq \deg D + 1 - g.
\]
In addition, Riemann-Roch spaces satisfy the following properties
(see e.g. Cor.~1.4.12(b) and Th.~1.5.17 of \cite{stichtenoth2009book}).

\begin{proposition}\label{prop:RR}
  Let $D$ be a divisor on a curve $X$ such that $\deg (D) < 0$. Then,
  $L(D)=\{0\}$.
\end{proposition}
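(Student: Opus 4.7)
The plan is to argue by contradiction. Suppose $L(D) \neq \{0\}$, and pick some nonzero $f \in L(D)$. By the very definition of the Riemann-Roch space recalled in the excerpt, this means $\div(f) \geq -D$, equivalently the divisor $\div(f) + D$ is effective.

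An effective divisor has nonnegative degree (each coefficient is nonnegative, and the degrees of closed points are positive integers). Hence
\[
  \deg(\div(f)) + \deg(D) = \deg(\div(f) + D) \geq 0.
\]
Now I would invoke the fact, stated in the excerpt just after the definition of principal divisors, that any principal divisor has degree zero: $\deg(\div(f)) = 0$. Plugging this in gives $\deg(D) \geq 0$, contradicting the hypothesis $\deg(D) < 0$. Therefore no such nonzero $f$ exists, and $L(D) = \{0\}$.

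There is essentially no obstacle here: the whole argument rests on the two facts that effective divisors have nonnegative degree and that principal divisors have degree zero, both of which are recorded earlier. The only thing to be slightly careful about is the convention: an element $f \in L(D)$ must be nonzero by the very definition (the $\{0\}$ is added separately), so that $\div(f)$ is well-defined and one can legitimately apply the degree-zero property of principal divisors.
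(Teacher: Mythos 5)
Your proof is correct: the contradiction via $\div(f)+D\geq 0$, $\deg(\div(f))=0$ and the nonnegativity of degrees of effective divisors is exactly the standard argument. The paper itself gives no proof and simply cites \cite{stichtenoth2009book}, where the same reasoning is used, so your approach matches the intended one.
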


\begin{corollary}\label{cor:RR}
When $\deg(D)>2g-2$ we have $\ell(K_X-D)=0$, and
then \[\ell(D)=\deg(D)+1-g.\] 
\end{corollary}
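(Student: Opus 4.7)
The plan is to derive the corollary as a direct consequence of Proposition~\ref{prop:RR} applied to the divisor $K_X - D$, combined with the Riemann--Roch theorem. This is really a one-step argument, so the proof proposal is essentially a matter of bookkeeping with degrees.

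First I would compute the degree of $K_X - D$. Using the additivity of the degree map on $\Div(X)$ together with the identity $\deg(K_X) = 2g-2$ stated just before the Riemann--Roch theorem, I get
\[
  \deg(K_X - D) \;=\; \deg(K_X) - \deg(D) \;=\; (2g-2) - \deg(D).
\]
By the hypothesis $\deg(D) > 2g-2$, this quantity is strictly negative.

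Next I would invoke Proposition~\ref{prop:RR}, which asserts that any divisor of negative degree has trivial Riemann--Roch space. Applying it to $K_X - D$ yields $L(K_X - D) = \{0\}$, and therefore $\ell(K_X - D) = 0$.

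Finally, substituting this into the Riemann--Roch formula $\ell(D) - \ell(K_X - D) = \deg(D) + 1 - g$ collapses it to the desired equality $\ell(D) = \deg(D) + 1 - g$. There is no real obstacle here; the only subtlety worth flagging is the proper use of $\deg(K_X) = 2g-2$, which has just been recalled and which makes the threshold $2g-2$ in the statement natural.
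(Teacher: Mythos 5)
Your proof is correct and is exactly the intended argument: the paper presents this as an immediate consequence of Proposition~\ref{prop:RR} (applied to $K_X-D$, which has degree $(2g-2)-\deg(D)<0$) together with the Riemann--Roch theorem and $\deg(K_X)=2g-2$. Nothing is missing.
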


 \section{Basics on algebraic geometry codes}\label{sec:basics}
\subsection{Algebraic geometry codes, definitions and elementary
  results}
Let $X$ be a curve over $\Fq$ and fix a divisor $G$ and an ordered
sequence of $n$ distinct rational points $\cP=(P_1,\dots,P_n)$
disjoint from $\supp(G)$. The latter sequence $\cP$ will be referred
to as the {\em evaluation points sequence} and a divisor
is associated to it, namely:
\begin{equation}\label{eq:DP}
  D_{\cP} \eqdef P_1 +\cdots + P_n.
\end{equation}
With this data, we
can define two codes:
\begin{definition}
\label{defCL}
The {\em evaluation} code, or {\em function code}, $\CL{X}{\cP}{G}$ is the
image of the map
\begin{equation*}
  \map{L(G)}{\F_q^n}{f}{(f(P_1),\dots,f(P_n)).}
\end{equation*}
\end{definition}
\begin{definition}
\label{defCOm}
The {\em residue} code, or {\em differential code}, $\COm{X}{\cP}{G}$ is the
image of the map
\begin{equation*}
\map{\Omega(G-D_{\cP})}{\F_q^n}{\omega}{    (\res_{P_1}(\omega),\dots,\res_{P_n}(\omega))}
\end{equation*}
where $D_{\cP}$ is defined in \eqref{eq:DP}.
\end{definition}

The two constructions are dual to each other (see
for instance \cite[Th.~2.2.8]{stichtenoth2009book}):
\begin{theorem}\label{thm:duality_CL_COm}
The two codes defined just above are dual of each other:
\[\COm{X}{\cP}{G}=\CL{X}{\cP}{G}^\perp.\]
\end{theorem}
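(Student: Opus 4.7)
The plan is to prove the duality in two steps: first the inclusion $\COm{X}{\cP}{G}\subseteq\CL{X}{\cP}{G}^\perp$ via the residue theorem, then the reverse inclusion by a dimension count based on Riemann--Roch and Serre duality.

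For the inclusion, I would take $f \in L(G)$ and $\omega \in \Omega(G-D_{\cP})$ and consider the rational differential $f\omega \in \Omega_F$. From $\div(f) \geq -G$ and $\div(\omega) \geq G - D_{\cP}$ one gets $\div(f\omega) \geq -D_{\cP}$, so $f\omega$ is regular away from $\{P_1,\dots,P_n\}$ and has at worst simple poles there. Since each $P_i$ is disjoint from $\supp(G)$, we have $v_{P_i}(f) \geq 0$, and a direct computation on the Laurent expansion gives
\[
  \res_{P_i}(f\omega) = f(P_i)\,\res_{P_i}(\omega).
\]
The residue theorem (sum of residues of any rational differential is zero, cf.\ \cite[Th.~4.3.2]{stichtenoth2009book}) then yields
\[
  \sum_{i=1}^n f(P_i)\,\res_{P_i}(\omega) = 0,
\]
which is exactly the Euclidean orthogonality of the image vectors. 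Hence every codeword of $\COm{X}{\cP}{G}$ lies in $\CL{X}{\cP}{G}^\perp$.

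For the reverse inclusion, I would match dimensions. The kernel of the evaluation map $L(G)\to\F_q^n$ is $L(G-D_{\cP})$, so $\dim \CL{X}{\cP}{G} = \ell(G)-\ell(G-D_{\cP})$. Similarly, the kernel of the residue map on $\Omega(G-D_{\cP})$ consists of those $\omega$ with $\div(\omega) \geq G$, namely $\Omega(G)$, so $\dim \COm{X}{\cP}{G} = \dim\Omega(G-D_{\cP}) - \dim\Omega(G)$. Using the identification $\Omega(D) \simeq L(K_X-D)$ coming from $\Omega^1_X \simeq \mathcal{O}(K_X)$, and applying Riemann--Roch to both $G$ and $G-D_{\cP}$, a subtraction gives
\[
  \dim\CL{X}{\cP}{G} + \dim\COm{X}{\cP}{G} = \deg(D_{\cP}) = n.
\]
Combined with $\dim\CL{X}{\cP}{G}^\perp = n - \dim \CL{X}{\cP}{G}$, this forces equality.

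The main obstacle is technical rather than conceptual: one must justify the local formula $\res_{P_i}(f\omega) = f(P_i)\res_{P_i}(\omega)$ carefully from the Laurent expansion, and one must invoke the nontrivial residue theorem. The Serre duality identification $\Omega(D) \simeq L(K_X-D)$ is immediate from the sheaf description given earlier, so the dimension count reduces to two applications of Riemann--Roch together with additivity of $\deg$.
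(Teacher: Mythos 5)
Your proof is correct, and it is essentially the standard argument: the paper itself gives no proof but defers to \cite[Th.~2.2.8]{stichtenoth2009book}, whose proof is exactly your two steps, namely the inclusion $\COm{X}{\cP}{G}\subseteq\CL{X}{\cP}{G}^\perp$ via $\res_{P_i}(f\omega)=f(P_i)\res_{P_i}(\omega)$ and the residue theorem, followed by the Riemann--Roch dimension count showing $\dim\CL{X}{\cP}{G}+\dim\COm{X}{\cP}{G}=n$. Nothing further is needed.
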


Recall that $\star$ denotes the component wise multiplication in
$\F_q^n$, so if $\mathbf{x}=(x_1,\dots,x_n)$ and
$\mathbf{y}=(y_1,\dots,y_n)$, then
$\mathbf{x}\star\mathbf{y}=(x_1y_1,\dots,x_ny_n)$.
\begin{definition}
Two codes $\code_1,\code_2\subseteq\F_q^n$ are diagonally equivalent
(under $\mathbf{a}\in(\F_q^\times)^n$) if
\[\code_2=\code_1\star\mathbf{a}\]
or equivalently if
\[\mathbf{G_2}=\mathbf{G_1}\mathbf{D_a}\]
where  $\mathbf{G_1},\mathbf{G_2}$ are generator matrices
of $\code_1,\code_2$ respectively, and $\mathbf{D_a}$
is the $n\times n$ diagonal matrix whose diagonal entries are
the entries of $\mathbf{a}$.
\end{definition}

\begin{remark}
  Diagonally equivalent codes are isometric with respect to the Hamming
  distance.
\end{remark}

\begin{lemma}
\label{lineq=diageq}
Let $G_1\sim G_2$ be two linearly equivalent divisors on $X$,
both with support disjoint from $\cP=\{P_1,\dots,P_n\}$.
Then $\CL{X}{\cP}{G_1}$ and $\CL{X}{\cP}{G_2}$
are diagonally equivalent under $\mathbf{a}=(h(P_1),\dots,h(P_n))$
where $h$ is any choice of function with $\div(h)=G_1-G_2$.

Likewise $\COm{X}{\cP}{G_1}$ and $\COm{X}{\cP}{G_2}$
are diagonally equivalent under $\mathbf{a}^{-1}$.
\end{lemma}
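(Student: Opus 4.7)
The plan is to exhibit an explicit $\F_q$-linear isomorphism between the Riemann--Roch (resp.\ differential) spaces attached to $G_1$ and $G_2$, obtained by multiplying with the function $h$ (or $h^{-1}$), and then to check that under evaluation at $\cP$ this multiplication produces exactly a diagonal scaling by $\mathbf{a}$.

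First I would fix $h\in F^\times$ with $\div(h)=G_1-G_2$, which exists by the definition of linear equivalence. A preliminary but essential remark is that for every $i$ one has $v_{P_i}(h)=v_{P_i}(G_1)-v_{P_i}(G_2)=0$, because $P_i\notin\supp(G_1)\cup\supp(G_2)$; thus $h(P_i)\in\F_q^\times$ and $\mathbf{a}\in(\F_q^\times)^n$, so that the statement about diagonal equivalence even makes sense. Next I would verify that the map $f\mapsto hf$ is an $\F_q$-linear isomorphism $L(G_1)\to L(G_2)$: for $f\in L(G_1)$ one computes
\[
\div(hf)+G_2=\div(h)+\div(f)+G_2=(G_1-G_2)+\div(f)+G_2=\div(f)+G_1\geq 0,
\]
and the inverse is multiplication by $h^{-1}$. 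Since $\ev_\cP(hf)=(h(P_i)f(P_i))_i=\mathbf{a}\star\ev_\cP(f)$, taking images gives $\CL{X}{\cP}{G_2}=\CL{X}{\cP}{G_1}\star\mathbf{a}$, which is the first assertion.

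For the differential code I would argue analogously, but now using multiplication by $h^{-1}$ on $1$-forms. A valuation check shows that $\omega\mapsto h^{-1}\omega$ is an $\F_q$-linear isomorphism $\Omega(G_1-D_\cP)\to\Omega(G_2-D_\cP)$: if $\div(\omega)\geq G_1-D_\cP$, then $\div(h^{-1}\omega)=-\div(h)+\div(\omega)\geq(G_2-G_1)+(G_1-D_\cP)=G_2-D_\cP$. The main subtle point---and the place I expect the only real work---is the local computation of residues. Since $h^{-1}$ is regular at $P_i$ with nonzero value $h(P_i)^{-1}$, expanding $h^{-1}=h(P_i)^{-1}+O(t_{P_i})$ in a uniformizer and multiplying by the Laurent expansion of $\omega$ shows that the coefficient of $t_{P_i}^{-1}\mathrm{d}t_{P_i}$ in $h^{-1}\omega$ equals $h(P_i)^{-1}$ times that of $\omega$; hence
\[
\res_{P_i}(h^{-1}\omega)=h(P_i)^{-1}\res_{P_i}(\omega).
\]
Assembling this componentwise yields $\COm{X}{\cP}{G_2}=\COm{X}{\cP}{G_1}\star\mathbf{a}^{-1}$, which completes the proof. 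The only delicate ingredient is the residue computation above; everything else is a direct manipulation of divisors and Riemann--Roch spaces.
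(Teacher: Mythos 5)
Your proof is correct and is essentially the argument the paper relies on: the paper gives no proof of its own but cites \cite[Prop.~2.2.14]{stichtenoth2009book}, where the same multiplication-by-$h$ isomorphism $L(G_1)\to L(G_2)$ (and by $h^{-1}$ on differentials) is used, and your evaluation/residue bookkeeping matches the paper's convention $\CL{X}{\cP}{G_2}=\CL{X}{\cP}{G_1}\star\mathbf{a}$. The only point worth making explicit in your residue step is that the identity $\res_{P_i}(h^{-1}\omega)=h(P_i)^{-1}\res_{P_i}(\omega)$ uses $v_{P_i}(\omega)\geq -1$ (which holds here since $\omega\in\Omega(G_1-D_{\cP})$ and $P_i\notin\supp(G_1)$), because for a higher-order pole the $O(t_{P_i})$ part of $h^{-1}$ would also contribute to the coefficient of $t_{P_i}^{-1}\mathrm{d}t_{P_i}$.
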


See \cite[Prop.~2.2.14]{stichtenoth2009book} for a proof
of the latter statement.

\begin{remark}\label{rem:homotecy}
  In the previous statement, the choice of $h$ is not unique. More
  precisely, $h$ may be replaced by any nonzero scalar multiple
  $\lambda h$ of $h$. This would replace $\av$ by $\lambda \av$, which
  has no consequence, since the codes are linear and hence globally
  invariant by a scalar multiplication.
\end{remark}

\begin{remark}
\label{UseWeakApprox}
If we accept codes defined only up to diagonal equivalence, then we
can relax the condition that $\supp(G)$ is disjoint from $\cP$ in
Definitions~\ref{defCL} and~\ref{defCOm}. Indeed, if $\supp(G)$ is not
disjoint from $\cP$, then by the weak approximation theorem
\cite[Th.~1.3.1]{stichtenoth2009book} we can find $G'\sim G$ with
support disjoint from $\cP$, and then we can use $\CL{X}{\cP}{G'}$ in
place of $\CL{X}{\cP}{G}$, and $\COm{X}{\cP}{G'}$ in place of
$\COm{X}{\cP}{G}$.  Lemma~\ref{lineq=diageq} then shows that up to
diagonal equivalence, these codes do not depend on the choice of $G'$.

In summary, the usual restriction ``the support of $G$ should avoid
the $P_i$'s'' in the definition of $\CL{X}{\cP}{G}$ can always be ruled out
at the cost of some technical clarifications.
\end{remark}

A slightly more general construction is the following:
\begin{definition}
  \label{generalAGcode}
  Given a curve $X$ over $\Fq$, an invertible sheaf $\mathcal{L}$ on
  $X$, and an ordered sequence of $n$ distinct rational points
  $\cP=(P_1,\dots,P_n)$.  After some choice of a trivialisation
  $\mathcal{L}|_{P_i}\simeq\F_q$ for the fibres of $\mathcal{L}$ at
  the $P_i\in\cP$, the code $\code(X,\cP,\mathcal{L})$ is the image of
  the map
\begin{equation*}
  \map{\Gamma(X,\mathcal{L})}{
    \bigoplus_{1\leq i\leq n}\mathcal{L}|_{P_i}\simeq\F_q^n
  }{s}{(s|_{P_1},\dots,s|_{P_n}).}
\end{equation*}
\end{definition}
Choosing another trivialisation of the fibres, and also replacing the
invertible sheaf $\mathcal{L}$ with an isomorphic one, leaves
$\code(X,\cP,\mathcal{L})$ unchanged up to diagonal equivalence.

Definition~\ref{defCL} is a special case of this construction with
$\mathcal{L}=\mathcal{O}(G)$ together with the natural trivialisation
$\mathcal{O}(G)|_{P_i}=\F_q$ when $P_i\not\in\supp(G)$.  Relaxing this
last condition, we can use a trivialisation
$\mathcal{O}(G)|_{P_i}=\F_q\cdot h_i|_{P_i}$ depending on the choice
of a local function $h_i$ at $P_i$ with minimal valuation
$v_{P_i}(h_i)=-v_{P_i}(G)$. This defines $\CL{X}{\cP}{G}$ as the image
of the map
\begin{equation*}
\map{L(G)}{\F_q^n}{f}{((f/h_1)(P_1),\dots,(f/h_n)(P_n)).}
\end{equation*}
A possible choice for $h_i$ is $h_i=t_{P_i}^{-v_{P_i}(G)}$ where
$t_{P_i}$ is a uniformizer.  Alternatively, given $G'\sim G$ with
$\supp(G')\cap\cP=\emptyset$, one can find $h$ with $\div(h)=G'-G$ and
set $h_i=h$ for all $i$. Doing so, we obtain
Remark~\ref{UseWeakApprox}.

Likewise Definition~\ref{defCOm} is a special case of
Definition~\ref{generalAGcode} with
$\mathcal{L}=\Omega^1_X\otimes \mathcal O(D_{\cP}-G)$ and
trivialisation given by the residue map when
$\supp(G)\cap\cP=\emptyset$, and can be relaxed in a similar way when
this condition is relaxed. This also gives:
\begin{lemma}
\label{dualityCLCOm}
For any canonical divisor $K_X$ on $X$, the codes $\COm{X}{\cP}{G}$
and $\CL{X}{\cP}{K_X+D_{\cP}-G}$ are diagonally equivalent.

Actually, if $\supp(G)$ is disjoint from $\cP$, then there is a choice
of a canonical divisor $K_X$, of support disjoint from $\supp(G)$ and
$\cP$, that turns this diagonal equivalence into an equality:
$\COm{X}{\cP}{G}=\CL{X}{\cP}{K_X-D_{\cP}+G}$.
\end{lemma}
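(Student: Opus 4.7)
The plan is to view both $\COm{X}{\cP}{G}$ and $\CL{X}{\cP}{K_X+D_{\cP}-G}$ as instances of the construction of Definition~\ref{generalAGcode} applied to isomorphic invertible sheaves, and to read off the diagonal equivalence from the comparison of two trivializations at each $P_i$.

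First, I would pick any nonzero rational differential $\omega_0\in\Omega_F$ with $\div(\omega_0)=K_X$. Multiplication by $\omega_0$ defines an isomorphism of invertible sheaves
\[
\mathcal{O}(K_X+D_{\cP}-G)\xrightarrow{\sim}\Omega^1_X\otimes\mathcal{O}(D_{\cP}-G),
\]
which on global sections induces the $\F_q$-linear isomorphism $L(K_X+D_{\cP}-G)\xrightarrow{\sim}\Omega(G-D_{\cP})$, $f\mapsto f\omega_0$. Transporting the residue trivialization of the right-hand side to the left via this isomorphism yields a second system of trivializations of $\mathcal{O}(K_X+D_{\cP}-G)$ at the $P_i$'s, which differs from the natural one by a nonzero scalar $a_i\in\F_q^\times$ at each point. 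By the general principle recorded right after Definition~\ref{generalAGcode}, two choices of trivializations of the same invertible sheaf give diagonally equivalent codes; hence $\COm{X}{\cP}{G}$ and $\CL{X}{\cP}{K_X+D_{\cP}-G}$ are diagonally equivalent under $\av=(a_1,\ldots,a_n)$, which proves the first assertion.

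For the equality statement, the aim is to exhibit a specific $K_X$ with $\supp(K_X)\cap(\supp(G)\cup\cP)=\emptyset$, together with an associated $\omega_0$, for which every $a_i$ equals $1$. Starting from an arbitrary canonical $\omega_0'$, I would look for $h\in F^\times$ such that $\omega_0=h\omega_0'$ has valuation $0$ at every place of $\supp(G)\cup\cP$ and moreover the prescribed leading Laurent coefficient at each $P_i$ yielding $a_i=1$. These are finitely many conditions on successive Laurent coefficients of $h$ at finitely many places, so the weak approximation theorem \cite[Th.~1.3.1]{stichtenoth2009book} provides such an $h$. With this $\omega_0$, the diagonal vector collapses to $(1,\ldots,1)$, turning the equivalence into an equality.

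The main technical obstacle is to identify the scalar $a_i$ explicitly enough to know which local normalization to impose. A direct local computation at a point $P_i$ disjoint from $\supp(K_X)\cup\supp(G)$: writing $\omega_0=u\,dt_{P_i}$ with $v_{P_i}(u)=0$, and expanding $f=a\,t_{P_i}^{-1}+\cdots\in L(K_X+D_{\cP}-G)$, one obtains $\res_{P_i}(f\omega_0)=u(P_i)\,a$, while the natural trivialization of $\mathcal{O}(K_X+D_{\cP}-G)$ sends $f$ to $a$. Thus $a_i=u(P_i)$, and the weak approximation step reduces to simultaneously prescribing the values $u(P_i)=1$ together with the valuations of $\omega_0$ on the relevant places.
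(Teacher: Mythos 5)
Your first assertion is handled correctly and by exactly the paper's route: both codes are instances of Definition~\ref{generalAGcode}, multiplication by $\omega_0$ with $\div(\omega_0)=K_X$ identifies $\mathcal{O}(K_X+D_{\cP}-G)$ with $\Omega^1_X\otimes\mathcal{O}(D_{\cP}-G)$ and $L(K_X+D_{\cP}-G)$ with $\Omega(G-D_{\cP})$, and the principle stated after Definition~\ref{generalAGcode} gives the diagonal equivalence. (You also rightly work with $K_X+D_{\cP}-G$ throughout; the divisor displayed in the second claim of the statement is a sign typo.)

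The gap is in the equality part. With your normalization, $K_X=\div(\omega_0)$ avoids $\cP$, so the divisor $K_X+D_{\cP}-G$ has coefficient $+1$ at every $P_i$; hence $\CL{X}{\cP}{K_X+D_{\cP}-G}$ is \emph{not} given by Definition~\ref{defCL}, but only by the relaxed construction, i.e. after a choice of local functions $h_i$ (you implicitly take $h_i=t_{P_i}^{-1}$), and by Remark~\ref{UseWeakApprox} it is then well defined only up to diagonal equivalence. Your ``equality'' therefore holds only relative to the chosen uniformizers: replacing $t_{P_i}$ by $c_i t_{P_i}$ simultaneously changes the subset of $\F_q^n$ you call $\CL{X}{\cP}{K_X+D_{\cP}-G}$ and your normalization $u(P_i)=1$, so the conclusion carries no invariant content beyond the diagonal equivalence already proved in the first part. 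The intended argument — see Remark~\ref{rem:good_diff} right after the lemma, and \cite[Lem.~2.2.9 \& Prop.~2.2.10]{stichtenoth2009book} — uses the opposite normalization: by weak approximation choose $\omega$ with $v_{P_i}(\omega)=-1$ and $\res_{P_i}(\omega)=1$ at every $P_i$, and set $K_X=\div(\omega)$. Then $v_{P_i}(K_X+D_{\cP}-G)=0$, the evaluation code is defined in the strict sense of Definition~\ref{defCL}, and the local computation $\res_{P_i}(f\omega)=f(P_i)$ for $f\in L(K_X+D_{\cP}-G)$ gives a genuine, choice-free equality. (The phrase ``of support disjoint from $\supp(G)$ and $\cP$'' in the statement is at odds with Remark~\ref{rem:good_diff} — the relevant requirement is that $K_X+D_{\cP}-G$ avoid $\cP$ — and taking it literally is precisely what forces the trivialization dependence in your argument.)
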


\begin{remark}\label{rem:good_diff}
  The canonical divisors providing the equality between the
  $\mathcal C_\Omega$ and the $\mathcal C_L$ is the divisor of a
  differential form having simple poles with residue equal to $1$ at
  all the $P_i's$. The existence of such a differential is a
  consequence of the weak approximation theorem (see \cite[Lem.~2.2.9
  \& Prop.~2.2.10]{stichtenoth2009book}).
\end{remark}

The parameters of AG codes satisfy the following basic estimates
(\cite[Th.~2.2.2~\&~2.2.7]{stichtenoth2009book}):
\begin{theorem}\label{thm:basics}
The evaluation code $\CL{X}{\cP}{G}$ is a linear code of 
length $n=\card{\cP}=\deg(D_{\cP})$ and dimension
\[k=\ell(G)-\ell(G-D_{\cP}).\]
In particular, if $\deg(G)<n$, then
\[k=\ell(G)\geq\deg(G)+1-g,\]
and if moreover $2g-2<\deg(G)<n$, then
\[k=\ell(G)=\deg(G)+1-g,\]
where $g$ is the genus of $X$.

Its minimum distance $d=d(\CL{X}{\cP}{G})$ satisfies
\[d\geq \dgop^* \eqdef n-\deg(G)\]
where $\dgop^*$ is the so-called \emph{Goppa designed distance} of $\CL{X}{\cP}{G}$.
\end{theorem}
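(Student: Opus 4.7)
The plan is to analyze the evaluation map
\[
\mathrm{ev}_\cP \colon L(G) \longrightarrow \Fq^n, \qquad f \longmapsto (f(P_1),\dots,f(P_n))
\]
and extract length, dimension and minimum distance from it.

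\textbf{Length and dimension.} The length is immediate since we evaluate at the $n$ points of $\cP$. For the dimension, I would identify the kernel of $\mathrm{ev}_\cP$. A function $f\in L(G)$ lies in the kernel iff $v_{P_i}(f)\geq 1$ for all $i$, which together with $\div(f)\geq -G$ is equivalent to $\div(f)\geq -G+D_\cP$, i.e. $f\in L(G-D_\cP)$. The rank-nullity theorem then gives $k = \ell(G)-\ell(G-D_\cP)$. If $\deg(G)<n$, then $\deg(G-D_\cP)=\deg(G)-n<0$, so Proposition~\ref{prop:RR} forces $\ell(G-D_\cP)=0$; hence $k=\ell(G)$, and the Riemann–Roch inequality following the theorem yields $k\geq \deg(G)+1-g$. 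If additionally $\deg(G)>2g-2$, then Corollary~\ref{cor:RR} upgrades this to the equality $k=\deg(G)+1-g$.

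\textbf{Minimum distance.} Let $\cv = \mathrm{ev}_\cP(f)$ be a nonzero codeword coming from some $f\in L(G)\setminus\{0\}$, and let $w=\wt{\cv}$. Denote by $P_{i_1},\dots,P_{i_{n-w}}$ the evaluation points at which $f$ vanishes. Then $f\in L(G-P_{i_1}-\cdots-P_{i_{n-w}})$; since $f\neq 0$, this space is nontrivial, and Proposition~\ref{prop:RR} forces its degree to be nonnegative:
\[
\deg(G)-(n-w) \;\geq\; 0,
\]
which rearranges to $w\geq n-\deg(G)=\dgop^*$.

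\textbf{Main obstacle.} The argument is essentially bookkeeping on Riemann–Roch spaces; no step is conceptually deep. The one point to be careful about is the kernel computation for the dimension: one must check that the condition $f\in L(G)$ combined with $f(P_i)=0$ for all $i$ really translates to the divisor inequality $\div(f)\geq -G+D_\cP$ (this uses the hypothesis $\supp(G)\cap\cP=\emptyset$, so that the values $f(P_i)$ are well defined, and that each $P_i$ is a rational point of degree one, so vanishing at $P_i$ is exactly $v_{P_i}(f)\geq 1$).
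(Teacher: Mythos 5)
Your proof is correct and is exactly the standard argument: the paper itself does not prove Theorem~\ref{thm:basics} but cites \cite[Th.~2.2.2~\&~2.2.7]{stichtenoth2009book}, whose proof proceeds precisely as you do (kernel of the evaluation map equals $L(G-D_{\cP})$, then Proposition~\ref{prop:RR} and Corollary~\ref{cor:RR} for the dimension, and nonvanishing of $L(G-P_{i_1}-\cdots-P_{i_{n-w}})$ for the distance bound). Your closing remark about where $\supp(G)\cap\cP=\emptyset$ and $\deg P_i=1$ are used is the right point of care and is handled the same way there.
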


Joint with Lemma~\ref{dualityCLCOm} it gives likewise:
\begin{corollary}\label{cor:Goppa_bound_COmega}
If $2g-2<\deg(G)<n$, the residue code $\COm{X}{\cP}{G}$ has dimension
\[
  k=n+g-1-\deg(G)
\]
and minimum distance
\[
  d\geq\dgop\eqdef\deg(G)+2-2g.
\]
\end{corollary}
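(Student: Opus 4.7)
The plan is to reduce the statement about $\COm{X}{\cP}{G}$ to a statement about a suitable $\CL{X}{\cP}{G'}$ via Lemma~\ref{dualityCLCOm}, and then apply the already--established Theorem~\ref{thm:basics}. The key observation is that diagonal equivalence is a Hamming isometry that preserves the $\F_q$--linear structure, hence preserves both dimension and minimum distance; so any parameter bound proved for one code carries over to its diagonally equivalent partner.

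First, I would set $G' \eqdef K_X + D_{\cP} - G$ (for any choice of canonical divisor $K_X$), so that Lemma~\ref{dualityCLCOm} gives a diagonal equivalence between $\COm{X}{\cP}{G}$ and $\CL{X}{\cP}{G'}$. I would then compute the degree of $G'$: using $\deg(K_X) = 2g-2$ and $\deg(D_{\cP}) = n$, one finds
\[
  \deg(G') = (2g-2) + n - \deg(G).
\]
Next I would verify that the hypothesis $2g-2 < \deg(G) < n$ translates exactly into $2g-2 < \deg(G') < n$, which is precisely the range in which the strong form of Theorem~\ref{thm:basics} applies to $\CL{X}{\cP}{G'}$.

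Applying Theorem~\ref{thm:basics} to $\CL{X}{\cP}{G'}$ then yields
\[
  k \;=\; \deg(G') + 1 - g \;=\; n + g - 1 - \deg(G)
\]
for the dimension, and
\[
  d \;\geq\; n - \deg(G') \;=\; \deg(G) + 2 - 2g \;=\; \dgop
\]
for the minimum distance. Transferring these equalities and inequalities across the diagonal equivalence furnished by Lemma~\ref{dualityCLCOm} gives the claimed parameters for $\COm{X}{\cP}{G}$.

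There is no real obstacle here: the argument is a straightforward bookkeeping combination of Riemann--Roch (already packaged inside Theorem~\ref{thm:basics}), the degree formula $\deg(K_X) = 2g-2$, and the duality/diagonal equivalence lemma. The only mildly delicate point worth mentioning in the write--up is the invariance of dimension and minimum distance under diagonal equivalence, which one may cite directly from the remark following the definition of diagonal equivalence.
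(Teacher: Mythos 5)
Your proof is correct and follows essentially the same route as the paper, which states the corollary precisely as a combination of Theorem~\ref{thm:basics} with the diagonal equivalence $\COm{X}{\cP}{G}\simeq\CL{X}{\cP}{K_X+D_{\cP}-G}$ of Lemma~\ref{dualityCLCOm}; your degree bookkeeping and the observation that $2g-2<\deg(G)<n$ translates into the same range for $K_X+D_{\cP}-G$ are exactly the intended verification.
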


Another important consequence of these bounds is the following
statement, providing a comparison of these bounds with the well--known
{\em Singleton bound} \cite[Th.~1.11]{sloane1977book}.

\begin{corollary}
\label{Singleton_defect}
Let $\code=\CL{X}{\cP}{G}$ with $\deg(G)<n$, or $\code=\COm{X}{\cP}{G}$ with $\deg(G)>2g-2$. Then
\[k+d\geq n+1-g\]
i.e. the \emph{Singleton defect} of $\code$ is at most $g$.
\end{corollary}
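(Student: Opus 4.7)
The plan is to derive the Singleton-defect bound directly from the dimension and minimum-distance estimates already established in Theorem~\ref{thm:basics} and Corollary~\ref{cor:Goppa_bound_COmega}, by simply adding them. The key algebraic observation is that in each of the two bounds, the quantity $\deg(G)$ appears with opposite signs in the dimension and distance estimates, so summing makes it cancel.

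First I would treat the case $\code = \CL{X}{\cP}{G}$ with $\deg(G)<n$. Theorem~\ref{thm:basics} gives the two inequalities
\[
k \;\geq\; \deg(G) + 1 - g \qquad \text{and} \qquad d \;\geq\; n - \deg(G),
\]
and adding them yields $k + d \geq n + 1 - g$, as wanted.

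For the case $\code = \COm{X}{\cP}{G}$ with $\deg(G) > 2g-2$, the cleanest route is to reduce to the previous case via Lemma~\ref{dualityCLCOm}: there exists a canonical divisor $K_X$ such that $\code$ is diagonally equivalent to $\CL{X}{\cP}{K_X + D_{\cP} - G}$. Since diagonal equivalence preserves both the dimension and the Hamming minimum distance, it suffices to check that the new divisor $G' \eqdef K_X + D_{\cP} - G$ satisfies $\deg(G') < n$. This is immediate from $\deg(K_X) = 2g-2$, $\deg(D_{\cP}) = n$, and the hypothesis $\deg(G) > 2g-2$, which give $\deg(G') = 2g - 2 + n - \deg(G) < n$. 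The first case applied to $\CL{X}{\cP}{G'}$ then delivers $k+d \geq n+1-g$. Alternatively, when moreover $\deg(G) < n$, one can simply add the dimension equality $k = n+g-1-\deg(G)$ and the bound $d \geq \deg(G) + 2 - 2g$ from Corollary~\ref{cor:Goppa_bound_COmega} and observe that again $\deg(G)$ cancels.

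There is no real obstacle here; the only thing to be careful about is the bookkeeping on degrees in the residue case, specifically that the hypothesis $\deg(G) > 2g-2$ is exactly what makes the translated evaluation-code divisor fall into the regime of Theorem~\ref{thm:basics}.
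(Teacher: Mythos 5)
Your proof is correct and follows essentially the same route as the paper, which presents the corollary as an immediate consequence of Theorem~\ref{thm:basics} and Corollary~\ref{cor:Goppa_bound_COmega} obtained by adding the dimension and distance bounds so that $\deg(G)$ cancels. Your extra detour through Lemma~\ref{dualityCLCOm} for the residue case is a harmless (and in fact slightly more careful) refinement, since it also covers $\deg(G)\geq n$, where Corollary~\ref{cor:Goppa_bound_COmega} does not directly apply.
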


\begin{remark}
  In the sequel, both quantities $\dgop^*$ and $\dgop$ are referred
  to as the {\em Goppa bound} or the {\em Goppa
    designed distance}. They do not provide {\em a priori} the actual
  minimum distance but yield a lower bound. In addition, as we will
  see in Section~\ref{sec:decoding}, correcting errors up to half
  these lower bounds will be considered as good ``targets'' for
  decoding.
\end{remark}

Finally, automorphisms of $X$ give rise to automorphisms of evaluation
codes on it:
\begin{proposition}
\label{auto_courbe_code}
Assume $\supp(G)$ is disjoint from $\cP$, and let $\sigma$ be an
automorphism of $X$ such that $\sigma(\cP)=\cP$ and $\sigma^*G\sim G$
(see Definition~\ref{def:pullback}).
Let $\mathbf{P}_\sigma$ be the permutation matrix given by
$(\mathbf{P}_\sigma)_{i,j}=1$ if $P_i=\sigma(P_j)$ and
$(\mathbf{P}_\sigma)_{i,j}=0$ otherwise.  Also set
$\word{v}=(h(P_1),\dots,h(P_n))$, where $\div(h)=\sigma^*G-G$.  Then
the map
\[\word{c}\mapsto\word{c}\mathbf{P}_\sigma\star\word{v}\]
defines a linear automorphism of $\CL{X}{\cP}{G}$.
\end{proposition}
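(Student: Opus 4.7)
The plan is to interpret the map $\cv \mapsto \cv \Pm_\sigma \star \vv$ on the level of functions in $L(G)$ and to check that the corresponding operation $f \mapsto \sigma^* f \cdot h$ preserves $L(G)$. Here $\sigma^* f$ denotes the pullback $f \circ \sigma$, viewed as an element of $\F_q(X)$.

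First I would unravel the action of $\Pm_\sigma$. If $\cv = (f(P_1), \dots, f(P_n))$ for some $f \in L(G)$, then by the definition of $\Pm_\sigma$ the $j$-th coordinate of $\cv \Pm_\sigma$ equals $f(P_i)$ where $i$ is the unique index with $P_i = \sigma(P_j)$, hence
\[
  (\cv \Pm_\sigma)_j = f(\sigma(P_j)) = (\sigma^* f)(P_j).
\]
Multiplying by $\vv$ component wise gives $(\cv \Pm_\sigma \star \vv)_j = (\sigma^* f \cdot h)(P_j)$, so it suffices to prove $\sigma^* f \cdot h \in L(G)$, and that the evaluations above are well defined (i.e.\ no $P_j$ is a pole of $\sigma^* f \cdot h$).

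The central step is the divisor computation. Using the standard identity $\div(\sigma^* f) = \sigma^* \div(f)$ for the pullback of functions by a morphism of curves (Definition~\ref{def:pullback}), together with $\div(h) = \sigma^* G - G$, I obtain
\[
  \div(\sigma^* f \cdot h) = \sigma^* \div(f) + \sigma^* G - G \geq \sigma^*(-G) + \sigma^* G - G = -G,
\]
since $f \in L(G)$ means $\div(f) \geq -G$ and pullback preserves the partial order on divisors. Hence $\sigma^* f \cdot h \in L(G)$, and the image of the proposed map indeed lies in $\CL{X}{\cP}{G}$. The support condition is automatic: because $\sigma(\cP) = \cP$, the support of $\sigma^* G$ is again disjoint from $\cP$, so $\div(h) = \sigma^* G - G$ has support disjoint from $\cP$, meaning $h$ is regular and nonvanishing at each $P_j$; a fortiori the same holds for $\sigma^* f \cdot h$ at every $P_j$.

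Finally, linearity is immediate since $f \mapsto \sigma^* f$ is $\F_q$-linear and multiplication by the fixed vector $\vv$ is linear on $\F_q^n$. For bijectivity, the vector $\vv$ has all nonzero entries (as just noted), so $\star\,\vv$ is an invertible diagonal operator on $\F_q^n$, and $\Pm_\sigma$ is a permutation matrix; hence the composed map is a linear automorphism of $\F_q^n$ that sends $\CL{X}{\cP}{G}$ into itself. Applying the same construction to $\sigma^{-1}$ (which also satisfies the hypotheses of the proposition, using $-\sigma^{-*}h$ in place of $h$) produces a two-sided inverse, so the restriction to $\CL{X}{\cP}{G}$ is indeed an automorphism. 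The only genuinely delicate point is the divisor inequality step, which ultimately relies on the functorial behavior $\div \circ \sigma^* = \sigma^* \circ \div$ and on the fact that pullback of divisors by a finite morphism is order-preserving.
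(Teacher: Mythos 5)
Your proof is correct and follows essentially the same route as the paper: the heart of both arguments is that $\cv\Pm_\sigma$ corresponds to $f\circ\sigma$, whose divisor bound places it in $L(\sigma^*G)$, and multiplication by $h$ (with $\div(h)=\sigma^*G-G$) brings it back into $L(G)$ — the paper merely factors this through the intermediate code $\CL{X}{\cP}{\sigma^*G}$ via Lemma~\ref{lem:iso_RR_from_auto} and Lemma~\ref{lineq=diageq}, whereas you merge the two steps into one divisor inequality. Your explicit checks of the pole/support condition and of bijectivity (which in fact already follows from injectivity on the finite-dimensional invariant subspace, without needing the $\sigma^{-1}$ construction) are fine.
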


The proof of Proposition~\ref{auto_courbe_code} uses the following
lemma.

\begin{lemma}\label{lem:iso_RR_from_auto}
  In the context of Proposition~\ref{auto_courbe_code},
  the map 
  \[
    \map[\varphi_{\sigma}:]{\Fq(X)}{\Fq(X)}{f}{f\circ \sigma}
  \]
  induces an isomorphism
  $L(G) \stackrel{\scriptstyle{\sim}}{\longrightarrow} L(\sigma^* G)$.
\end{lemma}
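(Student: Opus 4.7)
The plan is to reduce the statement to two basic compatibility facts: one relating valuations under composition with $\sigma$, and one describing the pullback divisor $\sigma^*G$ explicitly. Once both are in hand, the inclusion $\varphi_\sigma(L(G))\subseteq L(\sigma^*G)$ follows by direct comparison of valuations, and bijectivity is obtained for free.

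First I would observe that, since $\sigma:X\to X$ is an automorphism, for any closed point $Q$ of $X$ the comorphism $\sigma^\ast:\mathcal O_{\sigma(Q)}\stackrel{\sim}{\to}\mathcal O_Q$, $f\mapsto f\circ\sigma$, is an isomorphism of discrete valuation rings: if $t$ is a uniformizer at $\sigma(Q)$, then $t\circ\sigma$ is a uniformizer at $Q$. Consequently,
\[
v_Q(f\circ\sigma)\;=\;v_{\sigma(Q)}(f)\qquad\text{for every }f\in F^\times.
\]
Next I would unwind Definition~\ref{def:pullback} applied to $\phi=\sigma$: every ramification index $e_{Q|\sigma(Q)}$ equals $1$, and each $P$ has the unique preimage $\sigma^{-1}(P)$, so writing $G=\sum_P n_P P$ one obtains $\sigma^\ast G=\sum_P n_P\,\sigma^{-1}(P)$, i.e.
\[
v_Q(\sigma^\ast G)\;=\;v_{\sigma(Q)}(G)\qquad\text{for every closed point }Q.
\]

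Combining the two displays, for $f\in L(G)$ and any $Q$,
\[
v_Q(f\circ\sigma)\;=\;v_{\sigma(Q)}(f)\;\geq\;-v_{\sigma(Q)}(G)\;=\;-v_Q(\sigma^\ast G),
\]
so $\varphi_\sigma(f)\in L(\sigma^\ast G)$. The map $\varphi_\sigma$ is $\F_q$-linear because composition with $\sigma$ respects addition and $\Fq$-scalars. Its inverse is simply $\varphi_{\sigma^{-1}}$: the very same valuation computation, applied to $\sigma^{-1}$ together with the identity $(\sigma^{-1})^\ast\sigma^\ast G=G$, shows that $\varphi_{\sigma^{-1}}$ sends $L(\sigma^\ast G)$ back into $L(G)$, and $\varphi_{\sigma^{-1}}\circ\varphi_\sigma=\mathrm{id}$ on functions.

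There is essentially no obstacle; the only point requiring a bit of care is the interaction between the direction convention for $\sigma^\ast G$ and the direction convention for $\varphi_\sigma$, but both involve precomposition with $\sigma$ at the level of local rings, which is why the two formulas match up sign-for-sign. As a sanity check, the equality $\deg(\sigma^\ast G)=\deg(G)$, which is a special case of \eqref{eq:pullback_div} since $\deg(\sigma)=1$, is consistent with the fact that $L(G)$ and $L(\sigma^\ast G)$ must have the same dimension.
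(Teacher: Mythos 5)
Your proof is correct and takes essentially the same route as the paper: the valuation identity $v_Q(f\circ\sigma)=v_{\sigma(Q)}(f)$ combined with the explicit description $\sigma^*G=\sum_P v_P(G)\,\sigma^{-1}(P)$ from Definition~\ref{def:pullback} yields $\varphi_\sigma(L(G))\subseteq L(\sigma^*G)$, and the inverse is precomposition with $\sigma^{-1}$. If anything, you are slightly more thorough than the paper, which declares $\varphi_\sigma$ ``clearly an isomorphism'' and checks only the one inclusion, whereas you explicitly verify via $(\sigma^{-1})^*\sigma^*G=G$ that $\varphi_{\sigma^{-1}}$ carries $L(\sigma^*G)$ back into $L(G)$.
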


\begin{proof}
  The map $\varphi_{\sigma}$ is clearly an isomorphism with inverse
  $h \mapsto h \circ \sigma^{-1}$.  Hence, we only need to prove that
  $\varphi_{\sigma}(L(G)) \subseteq L(\sigma^* G)$. From
  Definition~\ref{def:pullback},
  we have
  \begin{equation}\label{eq:sigma_pullback}
    \sigma^* G = \sum_{P} v_P(G) \sigma^{-1}(P).
   \end{equation}
   Next,
   for any place $P$ of $\Fq(X)$ and any $f \in \Fq(X)$,
   \begin{equation}\label{eq:sigma_pullback_f}
     v_P(f) = v_{\sigma^{-1}(P)}(f\circ \sigma).
   \end{equation}
   Combining (\ref{eq:sigma_pullback}) and (\ref{eq:sigma_pullback_f}),
   for any place $P$ of $\Fq(X)$, we have
   \[
     v_P(f) \geq -v_P(G) \quad \Longrightarrow
     \quad v_{\sigma^{-1}(P)}(f \circ \sigma) \geq
     v_{\sigma^{-1}(P)}(\sigma^*G).
   \]
   This yields the result.
\end{proof}

\begin{proof}[Proof of Proposition~\ref{auto_courbe_code}]
  The map $\varphi_{\sigma}$ of Lemma~\ref{lem:iso_RR_from_auto}
  induces an isomorphism
  \[
    \map[\phi_{\sigma} :]{\CL{X}{\cP}{G}}{\CL{X}{\cP}{\sigma^* G}}
    {(f(P_1),\dots , f(P_n))}{(f(\sigma(P_1)), \dots, f(\sigma(P_n))),}
  \]
  which, by definition of $\mathbf{P}_{\sigma}$, is nothing but the map
  $\cv \mapsto \cv \mathbf{P}_{\sigma}$.
  Next, Lemma~\ref{lineq=diageq} yields
  an isomorphism
  \[
    \map[\psi :]{\CL{X}{\cP}{\sigma^*G}}{\CL{X}{\cP}{G}}{\cv}{\cv \star \vv.}
  \]
  The composition map $\psi \circ \phi_{\sigma}$ provides
  an automorphism of $\CL{X}{\cP}{G}$ which is 
  the map $\cv \mapsto \cv \mathbf{P}_{\sigma} \star \vv$.
\end{proof}

\subsection{Genus $0$, generalized Reed--Solomon and classical Goppa
  codes}\label{subsec:GRS_codes}

In this section, we focus on algebraic geometry codes from the
projective line $\P^1$. We equip this line with homogeneous
coordinates $(X:Y)$. We denote by $x$ the rational function
$x \eqdef \frac X Y$ and for any $x_i \in \Fq$ we associate the point
$P_i = (x_i:1)$. Finally, we denote by $P_\infty \eqdef (1:0)$.

\begin{remark}
  In this \ifx\localversion\undefined {chapter}
  \else {article}
  \fi
  , accordingly to the usual notation in algebraic geometry,
  the projective space of dimension $m$ is denoted as $\P^m$.
  In particular, its set of rational points $\P^m(\Fq)$
  is the finite set sometimes denoted as $PG(m, q)$
  in the literature of finite geometries and combinatorics.
\end{remark}

\subsubsection{The $\mathcal C_L$ description}
One of the most famous families of codes is probably that of
Reed--Solomon codes.

\begin{definition}\label{def:GRS}
  Let $\xv = (x_1, \dots, x_n)$ be an $n$--tuple of distinct elements
  of $\Fq$ and $\yv = (y_1, \ldots, y_n)$ be an $n$--tuple of nonzero
  elements of $\Fq$. Let $k < n$, the {\em generalized Reed--Solomon}
  (GRS) code of dimension $k$ associated to the pair $(\xv,\yv)$ is
  defined as
  \[
    \GRS{k}{\xv}{\yv} \eqdef \{(y_1f(x_1), \dots, y_nf(x_n)) ~|~ f
    \in \Fq[X],\ \deg f < k\},
  \]
  where, by convention, the zero polynomial has degree $- \infty$.
  A {\em Reed--Solomon} code is a GRS one with $\yv = (1, \dots, 1)$
  and is denoted as $\RS{k}{\xv}$.
\end{definition}

\begin{remark}\label{rem:diagonal_eq_RS}
  In terms of diagonal equivalence, any generalized Reed--Solomon code
  is diagonally equivalent to a Reed--Solomon one thanks to the
  obvious relation
  \[
    \GRS{k}{\xv}{\yv} = \RS{k}{\xv} \star \yv.
  \]
\end{remark}

\begin{remark}
  Beware that our definition of {\em Reed--Solomon codes} slightly
  differs from the most usual one in the literature
  \ifx\localversion\undefined { and in particular may differ from that
    of other chapters of this encyclopedia}
  \else
  \fi
  . Indeed, most of the references define Reed--Solomon codes as
  cyclic codes of length $q-1$, i.e. as a particular case of BCH
  codes. For instance, see \cite[\S~10.2]{sloane1977book},
  \cite[\S~5.2]{huffman03book} \cite[\S~5.2]{roth06book} or
  \cite[Def.~6.8.1]{vanLint99book}. Note that this commonly
  accepted definition is not exactly
  the historical one made by Reed and Solomon themselves in
  \cite{reed60siam}, who introduced a code of length $n = 2^m$ over
  $\F_{2^m}$ which is not cyclic.

  Further, Reed--Solomon codes ``of length $q$'' are sometimes referred to
  as {\em extended Reed--Solomon codes}. Next, using
  Remark~\ref{UseWeakApprox},
  one can actually define generalized
  Reed--Solomon codes of length $q+1$, corresponding to the codes
  called {\em (generalized) doubly extended Reed--Solomon codes} in
  the literature.
\end{remark}

Generalized Reed--Solomon codes are known to have length $n$,
dimension $k$ and minimum distance $d = n-k+1$. That is to say, such
codes are {\em Maximum Distance Separable} (MDS), i.e. they reach
Singleton bound \cite[Th.~1.11]{sloane1977book} asserting that for any
code of length $n$ and dimension $k$ and minimum distance $d$, we
always have $k+d \leq n+1$.  In addition, many algebraic constructions
of codes such as BCH codes, Goppa codes or Srivastava codes derive
from some particular GRS codes by applying the {\em subfield subcode
  operation}.
\begin{definition}\label{def:subfield_subcode}
  Consider a finite field $\Fq$ and its degree $m$ extension $\Fqm$
  for some positive integer $m$. Let $\CC \subseteq \Fqm^n$
  be a linear code, the {\em subfield subcode} of $\CC$ is the code:
  \[
  \CC \cap \Fq^n.
  \]
\end{definition}
The above--defined operation is
of particular interest for public-key cryptography applications. All
these codes fit in a broader class called {\em alternant codes}. See
for instance \cite[Chap.~12, Fig. 12.1]{sloane1977book}.

In some sense, algebraic geometry codes are natural generalisations of
generalized Reed--Solomon codes, the latter being algebraic geometry
codes from the projective line $\P^1$. Let us start with the case of
Reed--Solomon codes.

\begin{proposition}\label{prop:RS_from_P1}
  Let $\xv = (x_1, \ldots, x_n) \in \Fq^n$ be an $n$--tuple of
  distinct elements. Set $\cP= ((x_1:1), \dots,
  (x_n:1)) \in \P^1$ and 
  $P_\infty = (1:0)$. Then, the code
  $\CL{\P^1}{\cP}{(k-1)P_\infty}$ is nothing but the Reed--Solomon
  code $\RS{k}{\xv}$.
\end{proposition}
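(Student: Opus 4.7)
The plan is to compute the Riemann--Roch space $L((k-1)P_\infty)$ on $\P^1$ explicitly and identify the evaluation map with the Reed--Solomon evaluation map.

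First I would recall that $\F_q(\P^1) = \F_q(x)$ where $x = X/Y$. The rational function $x$ has a single simple pole at $P_\infty = (1:0)$ and a single simple zero at $(0:1)$. More generally, for any polynomial $f \in \F_q[x]$ of degree $d$, the only pole on $\P^1$ is at $P_\infty$, and $v_{P_\infty}(f) = -d$. Conversely, any rational function whose only pole is at $P_\infty$ is a polynomial in $x$, since its divisor of poles is supported only at infinity, so it lies in $\F_q[x]$ by the structure of $\F_q(x)$.

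Next I would identify $L((k-1)P_\infty)$. By the preceding observation, an element $f \in \F_q(x)^\times$ lies in $L((k-1)P_\infty)$ iff its only possible pole is at $P_\infty$ with order at most $k-1$, i.e. iff $f \in \F_q[x]$ with $\deg f \leq k-1$. Hence
\[
  L((k-1)P_\infty) = \{f \in \F_q[x] \mid \deg f < k\},
\]
which is a $k$--dimensional $\F_q$--space (consistent with Corollary~\ref{cor:RR} applied with $g=0$). Since $\supp((k-1)P_\infty) = \{P_\infty\}$ is disjoint from $\cP$, Definition~\ref{defCL} applies directly with the natural trivialisation, and the evaluation of $f \in L((k-1)P_\infty)$ at $P_i = (x_i:1)$ equals $f(x_i)$ since $P_i$ is an affine point.

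Putting these together, the image of the evaluation map $f \mapsto (f(P_1), \dots, f(P_n))$ equals
\[
  \{(f(x_1), \dots, f(x_n)) \mid f \in \F_q[x],\ \deg f < k\} = \RS{k}{\xv},
\]
by Definition~\ref{def:GRS}. No step is really an obstacle; the only subtle point is the identification of rational functions on $\P^1$ regular away from $P_\infty$ with the polynomial ring $\F_q[x]$, which follows from the fact that the affine open $\P^1 \setminus \{P_\infty\}$ is $\mathrm{Spec}\,\F_q[x]$ (equivalently, from elementary partial fraction decomposition in $\F_q(x)$).
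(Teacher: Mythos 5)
Your proof is correct and follows essentially the same route as the paper, which simply observes that $L((k-1)P_\infty)$ is the space of rational functions with pole of order at most $k-1$ at infinity, i.e.\ the polynomials of degree less than $k$. Your version merely fills in the details (structure of $\F_q(x)$, disjointness of the support from $\cP$, evaluation at affine points) that the paper leaves implicit.
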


\begin{proof}
By definition, the Riemann--Roch
space $L((k-1)P_\infty)$ is the space of rational functions with
a pole of order less than $k$ at infinity, which is nothing but
the space of polynomials of degree less than $k$.  
\end{proof}

More generally, the equivalence between GRS codes and algebraic geometry
codes from $\P^1$ is summarized here.

\begin{theorem}\label{thm:codes_on_P1}
  Any generalized Reed--Solomon code is an AG code from $\P^1$
  whose evaluation points avoid $P_\infty$. Conversely,
  any such AG code is a GRS one.
  More precisely:
  \begin{enumerate}[(i)]    
  \item\label{item:GRS_to_P1} for any pair $(\xv, \yv)$ as in
    Definition~\ref{def:GRS}, we have
    $\GRS{k}{\xv}{\yv} = \CL{\P^1}{\cP}{G}$,
    with
    \[
      \cP = ((x_1:1),\dots, (x_n:1)) \quad {\rm and} \quad
      G = (k-1)P_\infty - \div (h),
    \]
    where $h$ is the Lagrange interpolation polynomial satisfying
    $\deg h < n$  and $h(x_i) = y_i$ for any $1 \leq i \leq n$.
  \item\label{item:P1_to_GRS} Conversely, for any ordered $n$--tuple
    $\cP$ of distinct rational points of $\P^1\setminus \{P_\infty\}$
    with coordinates $(x_1:1),\dots, (x_n:1)$ and any divisor $G$ of
    $\P^1$ of degree $k-1$ whose support avoids $\cP$,
    we have $\CL{\P^1}{\cP}{G} = \GRS{k}{\xv}{\yv}$ where
    \[
      \xv = (x_1, \ldots, x_n) \quad {\rm and} \quad
      \yv = (f(x_1), \dots, f(x_n))
    \]
    for some function
    $f \in L(G - (k-1)P_\infty) \setminus \{0\}$.
  \end{enumerate}
\end{theorem}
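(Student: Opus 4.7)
The plan is to reduce both directions of the statement to the Reed--Solomon case already handled by Proposition~\ref{prop:RS_from_P1}, using Lemma~\ref{lineq=diageq} to bridge an arbitrary degree $k-1$ divisor with the ``standard'' one $(k-1)P_\infty$. The underlying geometric input is that on $\P^1$ the degree map is an isomorphism $\Cl(\P^1)\stackrel{\sim}{\to}\Z$, so two divisors on $\P^1$ are linearly equivalent if and only if they have the same degree; in particular, any degree $k-1$ divisor $G$ on $\P^1$ is linearly equivalent to $(k-1)P_\infty$.

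For item (\ref{item:GRS_to_P1}), I first check that $G=(k-1)P_\infty-\div(h)$ is well defined and of degree $k-1$ with support disjoint from $\cP$. Indeed, the Lagrange interpolation polynomial $h$ (of degree $<n$) has only $P_\infty$ as a pole and its zeros are distinct from the $x_i$'s since $h(x_i)=y_i\neq 0$; hence $\supp(\div(h))\subseteq\{P_\infty\}\cup\{\textrm{zeros of }h\}$ avoids $\cP$. Applying Lemma~\ref{lineq=diageq} with $G_1=(k-1)P_\infty$ and $G_2=G$ (so that $\div(h)=G_1-G_2$) gives
\[
\CL{\P^1}{\cP}{G}=\CL{\P^1}{\cP}{(k-1)P_\infty}\star(h(x_1),\dots,h(x_n)).
\]
By Proposition~\ref{prop:RS_from_P1} the first factor equals $\RS{k}{\xv}$, while the second factor equals $\yv$ by the defining property of Lagrange interpolation. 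Combined with Remark~\ref{rem:diagonal_eq_RS}, this yields $\GRS{k}{\xv}{\yv}$.

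For item (\ref{item:P1_to_GRS}), the divisor $(k-1)P_\infty-G$ has degree zero on $\P^1$, hence is principal; equivalently, $L(G-(k-1)P_\infty)$ is one-dimensional and any nonzero element $f$ in it satisfies $\div(f)=(k-1)P_\infty-G$. Since $\supp(G)\cap\cP=\emptyset$ and no $P_i$ equals $P_\infty$, we get $v_{P_i}(f)=0$ for every $i$, so that $\yv\eqdef(f(x_1),\dots,f(x_n))\in(\Fq^\times)^n$. A second application of Lemma~\ref{lineq=diageq} with $G_1=(k-1)P_\infty$, $G_2=G$, and $f$ playing the role of $h$, followed by Proposition~\ref{prop:RS_from_P1} and Remark~\ref{rem:diagonal_eq_RS}, then gives $\CL{\P^1}{\cP}{G}=\RS{k}{\xv}\star\yv=\GRS{k}{\xv}{\yv}$.

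The argument has no real obstacle: essentially the full content is already packaged in Proposition~\ref{prop:RS_from_P1}, Lemma~\ref{lineq=diageq}, and the triviality of $\Pic^0(\P^1)$. The only care required is to track the correct direction of the diagonal twist in Lemma~\ref{lineq=diageq} and to check that the components $y_i$ are nonzero, which is what guarantees that the code is genuinely a GRS code in the sense of Definition~\ref{def:GRS}.
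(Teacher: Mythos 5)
Your proof is correct and follows essentially the same route as the paper's: both reduce to Proposition~\ref{prop:RS_from_P1} via Lemma~\ref{lineq=diageq} and Remark~\ref{rem:diagonal_eq_RS}, the only cosmetic difference being that you invoke the triviality of $\Pic^0(\P^1)$ where the paper cites Corollary~\ref{cor:RR} to see that $L(G-(k-1)P_\infty)$ is one-dimensional with any nonzero element $f$ satisfying $\div(f)=(k-1)P_\infty-G$. Your extra checks (that $\supp(G)$ avoids $\cP$ in part (i) and that the $y_i=f(x_i)$ are nonzero in part (ii)) are details the paper leaves implicit, and they are welcome.
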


\begin{proof}
  Using Remark~\ref{rem:diagonal_eq_RS}, it suffices to prove that
  $\CL{\P^1}{\cP}{G} = \RS{k}{\xv} \star \yv$. This last equality is
  deduced from Proposition~\ref{prop:RS_from_P1} together with
  Lemma~\ref{lineq=diageq} since $(k-1)P_\infty - G = \div (h)$ ,
  proving (\ref{item:GRS_to_P1}).

  Conversely, since $\deg G = k-1$ and the genus of $\P^1$ is $0$, then, from
  Corollary~\ref{cor:RR}, we deduce that the space
  $L(G-(k-1)P_\infty)$ has dimension $1$. Let us take any nonzero
  function $f$ in this space. Then, by definition,
  \[
    \div (f) \geq - G + (k-1)P_\infty
  \]
  and, for degree reasons, the above inequality is an equality.
  Set $\yv = (f(x_1), \dots, f(x_n))$.
  Again from Proposition~\ref{prop:RS_from_P1} and
  Lemma~\ref{lineq=diageq}, we deduce that
  \[
    \CL{\P^1}{\cP}{G} = \CL{\P^1}{\cP}{(k-1)P_\infty} \star \yv =
    \RS{k}{\xv} \star \yv = \GRS{k}{\xv}{\yv}.
  \]
  This proves (\ref{item:P1_to_GRS}).
\end{proof}

\begin{remark}
  Theorem~\ref{thm:codes_on_P1} asserts that any AG code from the
  projective line is diagonally equivalent to a {\em one--point code},
  i.e. an AG code whose divisor $G$ is supported by one point.
  This fact can be directly observed without introducing the notation
  of Reed--Solomon codes, by using a classical result in algebraic geometry
  asserting that on $\P^1$, two divisors are linearly equivalent if and
  only if they have the same degree.
\end{remark}

\subsubsection{The $\mathcal C_{\Omega}$ description}\label{subsec:class_Goppa}
Another way to describe codes from the projective line is to use the
differential description. Note that, the equivalence between the
$\mathcal C_L$ and the $\mathcal C_\Omega$ description can be made
easily explicit in the $\P^1$ case as follows.

\begin{proposition}
  Let $\cP = (P_1, \ldots, P_n)$ be an ordered $n$--tuple of distinct
  points of $\P^1 \setminus \{P_\infty\}$ with respective homogeneous
  coordinates $(x_1:1), \dots, (x_n:1)$ and $G$ be a divisor on $\P^1$
  whose support avoids the $P_i$'s. Then
  \[ \COm{\P^1}{\cP}{G} = \CL{\P^1}{\cP}{\div (\omega) - D_\cP + G}, \]
  with
  $\omega \eqdef \frac{dh}{h}$,
  where $h(x) \eqdef \prod_{i=1}^n (x-x_i)$.
\end{proposition}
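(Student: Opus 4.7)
The plan is to identify $\omega = dh/h$ as a differential form having simple poles of residue $1$ at each $P_i$, which is precisely the ``good'' type singled out in Remark~\ref{rem:good_diff}, and then to apply Lemma~\ref{dualityCLCOm} with the canonical divisor $K_{\P^1} = \div(\omega)$.

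First, I would carry out the local analysis of $\omega$ at each evaluation point $P_i = (x_i{:}1)$. Factoring $h(x) = (x-x_i)\, g_i(x)$ where $g_i(x) \eqdef \prod_{j\neq i}(x-x_j)$ does not vanish at $x_i$, the logarithmic derivative splits as
\[
  \omega = \frac{dh}{h} = \frac{dx}{x-x_i} + \frac{dg_i}{g_i}.
\]
Since $g_i$ is a unit in the local ring at $P_i$, the second summand is regular at $P_i$, while $x-x_i$ is a uniformizer there, so the first summand contributes a simple pole with residue $1$. Thus $v_{P_i}(\omega) = -1$ and $\res_{P_i}(\omega) = 1$ for every $i$.

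Second, this is precisely the hypothesis of Remark~\ref{rem:good_diff}, so the canonical divisor $K_{\P^1} \eqdef \div(\omega)$ is one that turns the diagonal equivalence of Lemma~\ref{dualityCLCOm} into an equality between $\COm{\P^1}{\cP}{G}$ and $\CL{\P^1}{\cP}{K_{\P^1} - D_\cP + G}$. Substituting $K_{\P^1} = \div(\omega)$ yields
\[
  \COm{\P^1}{\cP}{G} = \CL{\P^1}{\cP}{\div(\omega) - D_\cP + G},
\]
which is the desired identity. If one prefers a self-contained argument rather than invoking Lemma~\ref{dualityCLCOm}, the same reasoning can be made explicit: every $\eta \in \Omega(G - D_\cP)$ can be uniquely written as $\eta = f\omega$ with $f \in \F_q(\P^1)$, the condition $\div(\eta) \geq G - D_\cP$ translates to $f$ lying in the appropriate Riemann--Roch space, and the residue-$1$ property of $\omega$ at each $P_i$ forces $\res_{P_i}(\eta) = f(P_i)$, identifying the residue code with the evaluation code.

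The only nontrivial step is the one-line Laurent expansion giving the residues of $\omega$; beyond that, no obstacle is expected, since the bulk of the work is already packaged into Lemma~\ref{dualityCLCOm}.
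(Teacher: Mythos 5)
Your proposal is correct and follows essentially the same route as the paper: verify that $\omega = \frac{dh}{h}$ has simple poles with residue $1$ at each $P_i$ (you do this by a local factorization of $h$, while the paper cites the classical decomposition $\frac{dh}{h} = \sum_i \frac{dx}{x-x_i}$), and then conclude by Lemma~\ref{dualityCLCOm} together with Remark~\ref{rem:good_diff}. The extra self-contained variant you sketch (writing $\eta = f\omega$ and matching residues with evaluations) is just the content of that lemma made explicit, so nothing further is needed.
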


\begin{proof}
  A classical result on logarithmic differential forms asserts that
  $\frac{dh}{h} = \sum_{i} \frac{dx}{x-x_i}$ has simple poles at the
  $P_i$'s with residue $1$ at them. Then, we conclude using
  Lemma~\ref{dualityCLCOm} and Remark~\ref{rem:good_diff}.
\end{proof}

Despite $\mathcal C_L$ and $\mathcal C_\Omega$ definitions are
equivalent thanks to Lemma~\ref{dualityCLCOm}, the differential
description is of interest since it permits to redefine the so--called
{\em classical Goppa codes} \cite{goppa1970ppi,goppa1971ppi}.  Given an
$n$--tuple $\xv = (x_1, \ldots, x_n) \in \Fq$ with distinct entries, a
polynomial $f \in \Fq[X]$ which does not vanish at any of the entries
of $\xv$ and a subfield $\mathbb K$ of $\Fq$, then the {\em classical
  Goppa code} associated to $(\xv, f, \mathbb K)$ is defined as
\begin{equation}\label{eq:classical_Goppa_code}
  \Gamma (\xv, f, \mathbb K) \eqdef \left\{ \cv = (c_1, \dots, c_n)
    \in \mathbb K^n ~\bigg|~ \sum_{i=1}^n \frac{c_i}{X - x_i} \equiv 0
    \mod (f) \right\}.
\end{equation}

The major interest of this definition lies in the case of a proper
subfield $\mathbb K \varsubsetneq \Fq$, for which the corresponding
code benefits from a better estimate of its parameters compared to
general alternant codes \cite{sloane1977book, sugiyama1976it}.  This
estimate comes with an efficient decoding algorithm called {\em
  Patterson algorithm} \cite{patterson1975it} correcting up to half
the designed minimum distance.  However, in what follows, we are
mainly interested in the relation between this construction and that
of $\mathcal C_\Omega$ codes and, for this reason, we will focus on
the case $\Fq = \mathbb K$.

\begin{remark}
  Note that the terminology might be misleading here. Algebraic
  geometry codes are sometimes referred to as {\em Goppa codes}, while
  {\em classical Goppa codes} are {\bf not} algebraic geometry codes from
  $\P^1$ in general since $\mathbb K$ may be different from $\Fq$.
  These codes are subfield subcodes (see
  Definition~\ref{def:subfield_subcode}) of some algebraic geometry
  codes from $\P^1$.

  For this reason and to avoid any confusion, in the present
  \ifx\localversion\undefined chapter, \else article, \fi we refer to
  {\em algebraic geometry codes} when speaking about $\mathcal C_L$
  and $\mathcal C_{\Omega}$ codes and to {\em Goppa codes} or {\em
    classical Goppa codes} when dealing with codes as defined in
  \eqref{eq:classical_Goppa_code} with $\mathbb K \varsubsetneq \Fq$.
\end{remark}

\begin{theorem}
  Denote by $\cP = (P_1, \dots, P_n) = ((x_1:1),\dots, (x_n:1))$. The
  code $\Gamma (\xv, f, \Fq)$ equals $\COm{\P^1}{\cP}{(f)_0-P_\infty}$
  where $(f)_0$ is the effective divisor given by the zeroes of the
  polynomial $f$ counted with multiplicity.
\end{theorem}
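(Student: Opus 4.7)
The plan is to unwind both sides of the claimed equality until they appear as the same subspace of $\Fq^n$. First, I would reformulate the classical Goppa condition: writing $h(X) \eqdef \prod_{i=1}^n (X-x_i)$, which is coprime to $f$ by assumption, the congruence $\sum_{i=1}^n c_i/(X-x_i) \equiv 0 \pmod{f}$ is equivalent to $f(X)$ dividing the polynomial $g_{\cv}(X) \eqdef \sum_{i=1}^n c_i \prod_{j \neq i}(X - x_j)$ in $\Fq[X]$.

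Next, I would parametrize the differential code. Every rational differential on $\P^1$ can be written uniquely as $\omega = \phi(X)\,dX$ for some $\phi \in \Fq(X)$, and $\div(dX) = -2 P_\infty$. With $G = (f)_0 - P_\infty$, the membership $\omega \in \Omega(G - D_{\cP})$ translates to $\div(\phi) \geq (f)_0 + P_\infty - D_{\cP}$, meaning that $\phi$ has at most simple poles at the $P_i$'s and no other poles, a zero of order at least $1$ at $P_\infty$, and a zero of order at least $v_Q(f)$ at each zero $Q$ of $f$.

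The pole structure combined with the vanishing at infinity forces a partial fraction decomposition $\phi = \sum_{i=1}^n a_i/(X-x_i) = g_{\av}(X)/h(X)$ with $\av = (a_1, \dots, a_n)$ and $g_\av$ defined as above. Since $h$ is coprime to $f$, the remaining multiplicity condition at zeros of $f$ amounts precisely to the divisibility $f \mid g_{\av}$ in $\Fq[X]$. A direct local computation at $P_i$, using $X - x_i$ as a uniformizer, gives $\res_{P_i}(\phi\,dX) = a_i$, so the residue map sends $\omega$ to the vector $\av$ of partial fraction coefficients. Comparing with the first step, both codes coincide with $\{\cv \in \Fq^n : f \mid g_\cv\}$, which yields the equality.

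The delicate step, and the one I would double-check carefully, is the coefficient of $P_\infty$ in $G$: taking exactly $-P_\infty$ is calibrated so that, after absorbing $\div(dX) = -2P_\infty$, the differential condition forces $v_{P_\infty}(\phi) \geq 1$. This is precisely what kills the polynomial part of the partial fraction decomposition of $\phi$ while still allowing a full simple pole at each $P_i$. Any other choice would either leave room for a polynomial term, yielding a strictly larger code, or over-constrain the $a_i$'s, yielding a strictly smaller one.
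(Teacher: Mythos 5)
Your proof is correct, but it organizes the key step differently from the paper. The paper proves the two inclusions separately: for $\subseteq$ it checks directly that $\omega_{\cv}=\sum_i c_i\,dx/(x-x_i)$ lies in $\Omega((f)_0-P_\infty-D_{\cP})$ (including the substitution $x=1/u$ at infinity), and for the harder inclusion $\supseteq$ it takes an arbitrary $\omega$ in that space, forms the candidate $\eta=\sum_i \res_{P_i}(\omega)\,dx/(x-x_i)$, and shows $\eta=\omega$ by invoking the residue theorem (to match the residues at $P_\infty$) together with the fact that a nonzero differential on $\P^1$ must have poles since $\deg K_{\P^1}=-2$. You instead parametrize the whole space $\Omega((f)_0-P_\infty-D_{\cP})$ at once: writing $\omega=\phi\,dX$ with $\div(dX)=-2P_\infty$, translating the divisor condition into valuation conditions on $\phi$, and using partial fractions plus the degree count $\deg g_{\av}\leq n-1<\deg h$ to see that the condition $v_{P_\infty}(\phi)\geq 1$ kills the polynomial part, so $\phi=g_{\av}/h$ with $f\mid g_{\av}$ (coprimality of $f$ and $h$ converting the multiplicity condition at $(f)_0$ into divisibility), and $\res_{P_i}(\omega)=a_i$. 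Your reformulation of the Goppa congruence as $f\mid g_{\cv}$ is the same fact the paper uses implicitly when asserting that $\omega_{\cv}$ vanishes on $(f)_0$. What your route buys is elementarity and symmetry: both inclusions fall out of one explicit identification, with no appeal to the residue theorem or to $\deg K_{\P^1}$; what the paper's route illustrates is the general algebraic-geometric toolkit (residue formula, degree of the canonical divisor) that does not depend on having an explicit rational parametrization. One small point to keep in mind in your write-up: the zeros of $f$ are closed points of possibly higher degree, so the condition ``zero of order at least $v_Q(f)$ at each zero $Q$ of $f$'' should be read place by place over $\Fq$; since $h$ is a unit at each such place, this is indeed equivalent to $f\mid g_{\av}$ in $\Fq[X]$, so the argument stands.
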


\begin{proof}
  Let $\cv \in \Gamma(\xv, f, \Fq)$ and set
  \[
    \omega_{\cv} \eqdef \sum_{i=1}^n \frac{c_i dx}{x-x_i} \cdot
  \]
  By definition of $\Gamma (\xv, f, \Fq)$, the form $\omega_{\cv}$
  vanishes on $(f)_0$. In addition, it has simple poles at all the
  $P_i$'s and is regular at any other point of $\P^1 \setminus \{P_\infty\}$.
  There remains to check its valuation at infinity. This can be done
  by replacing $x$ by $1/u$ and get:
  \[
    \omega_{\cv} = \sum_{i=1}^n \frac{-c_i \frac{du}{u^2}}{\frac{1}{u} - x_i}
    = -\sum_{i=1}^n \frac{c_i du}{u(1-x_iu)}\cdot
  \]
  We deduce that $\omega_{\cv}$ has valuation $\geq -1$ at $P_\infty$
  and hence $\omega_{\cv} \in \Omega((f)_0 - P_\infty - D_{\cP})$,
  which yields
  $\Gamma(\xv, f, \Fq) \subseteq \COm{\P^1}{\cP}{(f)_0 - P_{\infty}}$

  Conversely, given
  $\omega \in \Omega((f)_0 - P_\infty - D_{\cP})$, consider the
  differential form
  \[
    \eta \eqdef \sum_{i=1}^n \frac{\res_{P_i}(\omega) dx}{x-x_i}\cdot
  \]
  The previous discussion shows that the poles of $\eta$ are simple
  and contained in $\{P_1, \dots, P_n, P_\infty\}$. In addition, the
  two forms have simple poles and the same residue at any of the points
  $P_1, \ldots, P_n$ and, by the residue formula
  \cite[Cor.~4.3.3]{stichtenoth2009book} they also should have the
  same residue at $P_\infty$. Therefore, the differential form
  $\eta - \omega$ has no pole on $\P^1$. Moreover, since the degree of
  a canonical divisor is $2g-2 = -2$, a nonzero rational differential
  form on $\P^1$ should have poles. Therefore, $\eta = \omega$ and we
  deduce that the rational function
  $\sum_i \frac{\res_{P_i}(\omega)}{(x-x_i)}$ vanishes on $(f)_0$ or,
  equivalently that
  \[
    \sum_{i=1}^n \frac{\res_{P_i}(\omega)}{x-x_i} \equiv 0 \mod (f).
  \]
  This yields the converse inclusion
  $\COm{\P^1}{\cP}{(f)_0 - P} \subseteq \Gamma(\xv, f, \Fq)$
  and concludes the proof.
\end{proof}

 \section{Asymptotic parameters of algebraic geometry
  codes}\label{sec:asymptotics}
\subsection{Preamble}

A nonconstructive argument shows that asymptotically good codes,
i.e. $[n,k,d]$-codes with
\begin{itemize}
\item $n\to\infty$
\item $\liminf\frac{k}{n}\geq R>0\quad$ (positive asymptotic rate)
\item $\liminf\frac{d}{n}\geq\delta>0\quad$ (positive asymptotic
  relative minimum distance)
\end{itemize}
exist over any given finite field $\F_q$.  More precisely, the
asymptotic version of the Gilbert-Varshamov bound shows that this is
possible for $0<\delta<1-1/q$ and $R=1-H_q(\delta)$,
where \[H_q(x)\eqdef x\log_q(q-1)-x\log_q(x)-(1-x)\log_q(1-x)\] is the
$q$-ary entropy function.

For a long time it remained an open question to give an explicit
description, or even better an effectively computable construction, of
asymptotically good codes.  The first such construction was proposed
by Justesen in \cite{Justesen1972}.  Let us briefly recall how it
works (in a slightly modified version).

Let $m$ be an integer, and assume that we have an explicit
$\F_q$-linear identification of $\F_{q^m}$ with $\F_q^m$.  Set
$n=q^m-1$ and let $\word{x}=(x_1,\dots,x_n)$ be the collection of all
nonzero elements in $\F_{q^m}$.  For an integer $k<n$, consider the
evaluation map
\begin{equation*}
\map{\F_{q^m}[X]_{<k}}{(\F_{q^m})^{2n}}{f(X)}{(f(X),Xf(X))(\word{x})},
\end{equation*}
where $\F_{q^m}[X]_{<k}$ denotes the subspace of polynomials of degree less than $k$, and
\[(f(X),Xf(X))(\word{x})\eqdef (f(x_1),x_1f(x_1),f(x_2),x_2f(x_2),
\dots,f(x_n),x_nf(x_n)).\]
The image of this evaluation map is a $[2n,k]$-code over $\F_{q^m}$,
and identifying $\F_{q^m}$ with $\F_q^m$ this becomes a
$[2nm,km]$-code $\code$ over $\F_q$.
\begin{theorem}
  Let $0<R<\frac{1}{2}$.  Then as $m\to\infty$ and
  $\frac{k}{n}\to 2R$, the codes $\code$ are asymptotically good, with
  asymptotic rate $R$ and asymptotic relative minimum distance at
  least $(1-2R)H_q^{-1}(\frac{1}{2})$.
\end{theorem}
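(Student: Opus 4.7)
The plan is to check the rate and the minimum distance separately. The rate is immediate: the block length is $2nm$ and the dimension is $km$, so the rate equals $k/(2n)$, which tends to $R$ by the hypothesis $k/n \to 2R$.

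For the minimum distance, I would follow Justesen's counting argument (a precursor of what is now called the Wozencraft ensemble). Fix a nonzero polynomial $f \in \F_{q^m}[X]_{<k}$. Since $\deg f < k$, $f$ vanishes at fewer than $k$ of the evaluation points, so the index set $I = \{i : f(x_i) \neq 0\}$ satisfies $|I| \geq n - k + 1$ (this is just the MDS property of Reed--Solomon). For each $i \in I$, the $i$-th block $B_i \eqdef (f(x_i), x_i f(x_i))$ is a nonzero element of $\F_{q^m}^2$, identified with $\F_q^{2m}$ through the chosen $\F_q$-linear isomorphism, and the total Hamming weight of the codeword is $\sum_{i \in I} w_H(B_i)$.

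Now pick a threshold $W$ and partition $I = I_{\text{good}} \cup I_{\text{bad}}$ according to whether $w_H(B_i) \geq W$ or $w_H(B_i) < W$, so that the total weight is at least $|I_{\text{good}}| \cdot W$. The key observation is that the map $i \mapsto B_i$ is injective on $I$: given $B_i = (a, b)$ with $a, b \neq 0$, one recovers $x_i = b/a$, and hence $i$ itself since the $x_i$ are distinct. Consequently $|I_{\text{bad}}|$ is bounded above by the number of nonzero vectors in $\F_q^{2m}$ of Hamming weight less than $W$, which by the standard entropy estimate is at most $q^{2m H_q(W/(2m))}$ whenever $W/(2m) \leq 1 - 1/q$.

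To conclude, I would choose $W = \lfloor 2m \delta_0 \rfloor$ for a fixed $\delta_0 < H_q^{-1}(1/2)$. Then $q^{2m H_q(\delta_0)}$ grows like $q^{2m\alpha}$ for some $\alpha < 1/2$, which is $o(q^m) = o(n)$, while $n - k + 1 = (1 - 2R + o(1))\,n$. Hence $|I_{\text{good}}| \geq (1 - 2R + o(1))\,n$, and dividing $|I_{\text{good}}| \cdot W$ by the block length $2nm$ yields an asymptotic relative minimum distance of at least $(1 - 2R)\delta_0$. Letting $\delta_0 \to H_q^{-1}(1/2)$ from below gives the announced bound. The main point requiring care is the uniformity of the $o(1)$ estimates as $k/n$ approaches $2R$ and as $\delta_0$ approaches its limit; the two structural ingredients (Reed--Solomon is MDS, and the Hamming ball volume entropy bound) are otherwise routine.
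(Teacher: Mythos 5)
Your proof is correct and follows essentially the same route as the paper's argument: the injectivity of $i\mapsto(f(x_i),x_if(x_i))$ together with the entropy bound on the number of words of $\F_q^{2m}$ of low weight, applied to the at least $n-k+1$ indices where $f(x_i)\neq 0$. The only cosmetic difference is that you fix a threshold $\delta_0<H_q^{-1}(\frac{1}{2})$ and let $\delta_0$ tend to the limit, while the paper phrases the same device with $H_q^{-1}(\frac{1}{2}-\epsilon)$.
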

The idea of the proof is that, for $\epsilon>0$ and $m\to\infty$, the
number of words in $\F_q^{2m}$ of relative weight less than
$H_q^{-1}(\frac{1}{2}-\epsilon)$ is roughly
$q^{2m(\frac{1}{2}-\epsilon)}=q^{m(1-2\epsilon)}$, which is
exponentially negligible compared to $n=q^m-1$.  Moreover, if such
a vector, seen in $\F_{q^m}\times\F_{q^m}$, is of the form
$(\alpha,x_i\alpha)$, then it uniquely determines $x_i$.  Now if $f$
has degree $k\leq 2Rn$, there are at least $(1-2R)n$ values of $x_i$
such that $f(x_i)\neq0$, and then, except for a negligible fraction of
them, $(f(x_i),x_if(x_i))$ has weight at least
$2mH_q^{-1}(\frac{1}{2}-\epsilon)$ in $\F_q^{2m}$.  This concludes.

\subsection{The Tsfasman-\vladut-Zink bound}
Algebraic geometry also provides a\-symptotically good codes.  Dividing
by $n$ in the basic estimate of Corollary~\ref{Singleton_defect} and
setting $R=\frac{k}{n}$ and $\delta=\frac{d}{n}$ gives codes whose
rate $R$ and relative minimum distance $\delta$ satisfy
$R+\delta\geq 1+\frac{1}{n}-\frac{g}{n}$.  Letting $n\to\infty$
motivates the following:
\begin{definition}\label{Ihara_constant}
  The Ihara constant of $\F_q$
  is \[A(q)=\limsup_{g(X) \rightarrow \infty}\frac{n(X)}{g(X)}\] where $X$ ranges over all
  curves over $\F_q$, and $n(X)=|X(\F_q)|$ is the number of rational
  points of $X$.
\end{definition}
We then readily get:
\begin{theorem}\label{generalTVZ}
  Assume $A(q)>1$. Then, for any $R,\delta>0$
  satisfying \[R+\delta=1-\frac{1}{A(q)},\] there exist asymptotically
  good codes with asymptotic rate at least $R$ and asymptotic relative
  minimum distance at least $\delta$.
\end{theorem}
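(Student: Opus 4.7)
The plan is to realize the claimed asymptotic parameters by evaluating rational functions on a sequence of curves whose genus-to-point ratio approaches the Ihara bound, using the basic dimension and distance estimate of Theorem~\ref{thm:basics}. Starting from Definition~\ref{Ihara_constant}, since $A(q) = \limsup_{g \to \infty} n(X)/g(X)$, for any $\epsilon > 0$ I can extract a sequence of curves $X_i/\Fq$ with $g_i \eqdef g(X_i) \to \infty$ and $n_i \eqdef |X_i(\Fq)|$ satisfying $g_i/n_i \to 1/A(q)$ along the subsequence. Note that $n_i \to \infty$ automatically, since $A(q) > 1$ forces $n_i \geq (A(q) - \epsilon) g_i$ for $i$ large.

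On each $X_i$, I would fix a rational point $P_\infty^{(i)} \in X_i(\Fq)$, set $\cP_i = X_i(\Fq) \setminus \{P_\infty^{(i)}\}$ of cardinality $N_i \eqdef n_i - 1$, and choose the integer
\[
    m_i \eqdef \lfloor (1-\delta) N_i \rfloor.
\]
The divisor $G_i \eqdef m_i P_\infty^{(i)}$ has support disjoint from $\cP_i$ by construction, and $\deg G_i = m_i < N_i$ as soon as $\delta > 0$. I then consider the evaluation code $\code_i \eqdef \CL{X_i}{\cP_i}{G_i}$.

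Theorem~\ref{thm:basics} directly gives parameters $[N_i, k_i, d_i]$ of $\code_i$ satisfying
\[
k_i \geq \deg G_i + 1 - g_i = m_i + 1 - g_i, \qquad d_i \geq N_i - m_i.
\]
Dividing through by $N_i$ and taking limits, using that $m_i/N_i \to 1 - \delta$ and $g_i/N_i \to 1/A(q)$, one obtains
\[
\liminf_{i \to \infty} \frac{k_i}{N_i} \geq (1-\delta) - \frac{1}{A(q)} = R, \qquad \liminf_{i \to \infty} \frac{d_i}{N_i} \geq \delta.
\]
Combined with $N_i \to \infty$ this exhibits the desired asymptotically good family. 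Note that if $R > 0$ then $1 - \delta > 1/A(q)$, so $m_i > g_i$ eventually, guaranteeing $k_i \geq 1$ and positive rate.

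The only genuinely subtle point is the extraction step: the Ihara constant is defined as a $\limsup$ over \emph{all} curves, so only a subsequence need achieve the ratio $1/A(q)$. Since asymptotic code parameters are themselves defined by $\liminf$, restricting to such a subsequence is harmless. Everything else is a bookkeeping argument; in particular the removal of one point in order to accommodate $\supp(G_i)$ only shortens the block length by one and is irrelevant in the limit. The real content of the statement is thus encoded entirely in the definition of $A(q)$ together with Riemann's inequality $\ell(G) \geq \deg G + 1 - g$ and the Goppa designed distance bound, both contained in Theorem~\ref{thm:basics}.
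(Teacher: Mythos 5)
Your proposal is correct and is essentially the argument the paper intends: it is the computation preceding the theorem (divide the estimate $k+d\geq n+1-g$ of Corollary~\ref{Singleton_defect}, equivalently Theorem~\ref{thm:basics}, by the length and let the genus-to-points ratio tend to $1/A(q)$), merely written out with the explicit choice of one-point divisors $G_i=m_iP_\infty^{(i)}$ on a sequence of curves attaining the Ihara constant. The details you add (extraction of the subsequence, removing one point to keep $\supp(G_i)$ disjoint from $\cP_i$, checking $\deg G_i<N_i$ and $N_i\to\infty$) are exactly the bookkeeping the paper leaves implicit, so there is nothing further to flag.
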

For this result to be meaningful, we need estimates on $A(q)$. 

Let us start with upper bounds. First, the well-known Hasse-Weil bound
\cite[Th.~5.2.3]{stichtenoth2009book}
implies $A(q)\leq 2\sqrt{q}$.  Several improvements were proposed,
culminating with the following, known as the Drinfeld-\vladut{} bound.
\begin{theorem}[\cite{vladut1983faa}]\label{DVbound}
For any $q$ we have $A(q)\leq \sqrt{q}-1$.
\end{theorem}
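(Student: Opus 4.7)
The plan is to use the explicit formula method: combine the Riemann hypothesis for curves with positivity of the Fej\'er trigonometric kernel to extract a sharp bound on $|X(\Fq)|/g(X)$. For a curve $X$ of genus $g=g(X)$ over $\Fq$, set $N_m \eqdef |X(\Fqm)|$, so $N_1 = n(X)$. The Riemann hypothesis for curves (a refinement of the Hasse--Weil bound already cited) provides complex numbers $\alpha_1,\dots,\alpha_{2g}$ with $|\alpha_j|=\sqrt{q}$, paired so that $\alpha_{g+j}=\overline{\alpha_j}$, satisfying $N_m = q^m + 1 - \sum_{j=1}^{2g} \alpha_j^m$. Writing $\alpha_j = \sqrt{q}\, e^{i\theta_j}$ for $1\leq j\leq g$ yields the key identity $2\sum_{j=1}^g \cos(m\theta_j) = q^{m/2}+q^{-m/2} - q^{-m/2} N_m$.

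Next I would exploit the nonnegativity of the Fej\'er kernel: for each integer $N\geq 1$,
\[
  \Phi_N(\theta) \eqdef \frac{1}{N+1}\left|\sum_{k=0}^N e^{ik\theta}\right|^2 = 1 + 2 \sum_{m=1}^N c_m \cos(m\theta) \geq 0, \quad c_m\eqdef 1-\tfrac{m}{N+1}.
\]
Summing $\Phi_N(\theta_j)\geq 0$ over $1\leq j\leq g$ and substituting the identity above gives $\sum_{m=1}^N c_m q^{-m/2} N_m \leq g + \sum_{m=1}^N c_m (q^{m/2}+q^{-m/2})$. Using that $X(\Fq)\subseteq X(\Fqm)$ yields $N_m\geq N_1$, and hence
\[
  N_1 \sum_{m=1}^N c_m q^{-m/2} \leq g + \sum_{m=1}^N c_m (q^{m/2}+q^{-m/2}).
\]
Dividing by $g$ and taking $\limsup$ as $g \to \infty$ with $N$ fixed makes the last term vanish, yielding $A(q)\cdot \sum_{m=1}^N c_m q^{-m/2} \leq 1$. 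Letting then $N\to\infty$, each $c_m$ increases to $1$, and monotone convergence gives $\sum_{m=1}^N c_m q^{-m/2} \to \sum_{m=1}^\infty q^{-m/2} = 1/(\sqrt{q}-1)$, from which $A(q)\leq \sqrt{q}-1$.

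The conceptual crux, and main anticipated obstacle, is the choice of test function. A constant test function would produce only the Hasse--Weil bound $A(q)\leq 2\sqrt{q}$; the sharpness $\sqrt{q}-1$ emerges precisely because the Fej\'er coefficients $c_m$ tend to $1$, exactly matching the geometric series $\sum q^{-m/2}$. A further technical subtlety is the order of the two limits: one must first take $g\to\infty$ (to absorb the $g+O(1)$ term on the right) and only then $N\to\infty$; swapping these would yield nothing.
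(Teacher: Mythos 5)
Your proposal is correct: the identity $2\sum_{j}\cos(m\theta_j)=q^{m/2}+q^{-m/2}-q^{-m/2}N_m$, the positivity of the Fej\'er kernel, the monotonicity $N_m\geq N_1$, and the order of limits ($g\to\infty$ first, then $N\to\infty$, giving $\sum_{m\geq1}q^{-m/2}=1/(\sqrt{q}-1)$) all check out. The paper itself offers no proof, only the citation to \cite{vladut1983faa}, and your argument is essentially the standard explicit-formula proof of Drinfeld and \vladut{} given there, so there is nothing further to compare.
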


On the other hand, lower bounds on $A(q)$ combine with
Theorem~\ref{generalTVZ} to give lower bounds on codes.
\begin{itemize}
\item When $q=p^{2m}$ ($m\geq1$) is a square, we have
  $A(q)\geq\sqrt{q}-1$. This was first proved by Ihara
  in~\cite{ihara1981jfsuTokyo} then independently in
  \cite{tsfasman1982mn}, using modular curves if $m=1$, and Shimura
  curves for general $m$.  Observe that Ihara's lower bound matches
  the Drinfeld-\vladut{} bound, so we actually get equality:
  $A(q)=\sqrt{q}-1$.  Other more effective constructions matching the
  Drinfeld-\vladut{} bound were later proposed, for instance in
  \cite{GS1995}.  These constructions use recursive towers of curves,
  although it was observed by Elkies \cite{elkies1997allerton,elkies2001ecm}
  that they in fact yield modular curves.

  Combined with Theorem~\ref{generalTVZ}, Ihara's lower bound gives
  asymptotically good codes with
  \[R+\delta\geq 1-\frac{1}{\sqrt{q}-1}\cdot\] This is known as the
  Tsfasman-\vladut-Zink bound~\cite{tsfasman1982mn}.  For 
  $q\geq 49$, it is shown that this TVZ bound beats the GV bound on a
  certain interval as illustrated in Figure~\ref{fig:TVZ}
  
\item When $q=p^{2m+1}$ ($m\geq1$) is a non-prime odd power of a
  prime, Bassa, Beelen, Garcia, and Stichtenoth~\cite{BBGS2015} show
  \[A(p^{2m+1})\geq \frac{2(p^{m+1}-1)}{p+1+\frac{p-1}{p^m-1}}=
  \left(\frac{1}{2}((p^m-1)^{-1}+(p^{m+1}-1)^{-1})\right)^{-1}.\]
The proof is constructive and uses a recursive tower of curves,
although these curves can also be interpreted in terms of Drinfeld
modular varieties.
\item For general $q$, Serre~\cite{Serre1983} shows
  $A(q)\geq c\log(q)>0$ for a certain constant $c$ which can be taken
  as $c = \frac{1}{96}$ (see \cite[Th.~5.2.9]{niederreiter2001book}).
  When $q=p$ is prime this is often the best one
  knows.  
\end{itemize}
\begin{figure}[h]
  \centering
  \includegraphics[scale=.7]{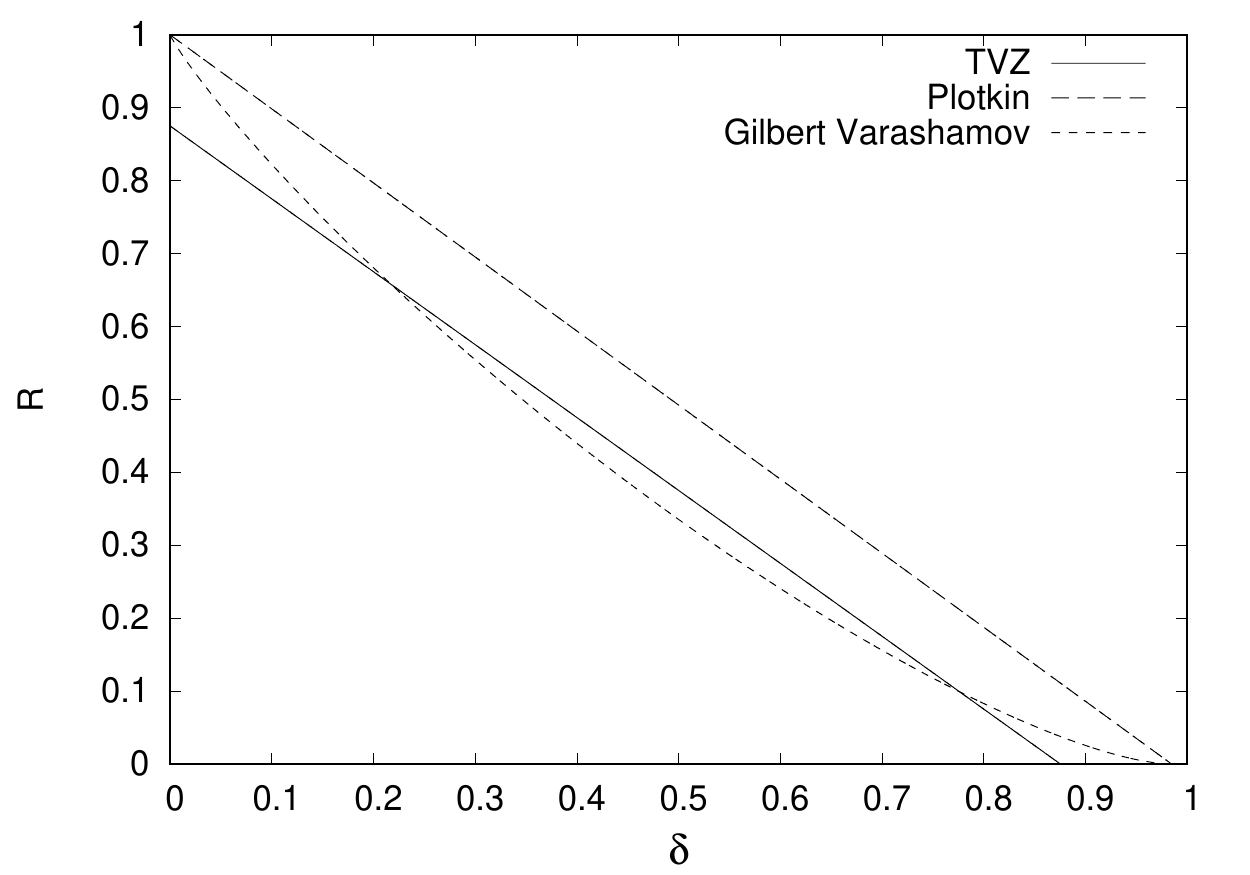}
  \caption{Tsfasman--\vladut--Zink bound for $q = 64$}
  \label{fig:TVZ}
\end{figure}
There are some explicit lower bounds for $A(p)$ where $p=2,3,5,\dots$
is a small prime.  However these bounds cannot be used in
Theorem~\ref{generalTVZ}, which requires $A(q)>1$.  Indeed, the
Drinfeld-\vladut{} bound actually shows $A(p)<1$ for $p=2$ or $3$.

\subsection{Subfield subcodes, Katsman-Tsfasman-Wirtz bound}
Tsfasman--\vladut{}--Zink bound provides remarkable results when
$q\geq 49$ but, on the other hand, it turns out to be inefficient for
smaller values of $q$ and is in particular irrelevant for $q \leq 4$ where
it does not even prove the existence of asymptotically good families of
algebraic geometry codes.

Very often, lower bounds on codes over a small base field can be
obtained by considering good codes over an extension field, and then
using either a concatenation argument or a subfield subcode argument
(see Definition~\ref{def:subfield_subcode}).
In \cite{wirtz1988it} and independently in \cite{katsman1989cm} the
parameters of subfield subcodes of algebraic geometric codes and
asymptotic parameters are studied. More recently, a slightly different
construction called {\em Cartier codes} \cite{couvreur2014ams} using
the Cartier operator has been proposed providing the same asymptotic
parameters.  Namely,

\begin{theorem}
  For any even positive integer $\ell$, there exists a sequence of
  codes over $\Fq$ defined as subfield subcodes of codes over
  $\F_{q^\ell}$ whose asymptotic parameters $(R, \delta)$ satisfy
  \begin{equation}\nonumber
    R \geq 1-\frac{2(q-1)\ell}{q(q^{\ell /2} -1)}-\frac{(q-1)\ell}{q}\delta \qquad {\rm for}\qquad \frac{q-2}{q^{\ell/2}-1} \leq \delta \leq \frac{q}{m(q-1)} - \frac{2}{q^{\ell/2}-1}\cdot
\end{equation}
\end{theorem}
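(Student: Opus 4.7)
The plan is to build the family as subfield subcodes (or equivalently Cartier codes, cf.~\cite{couvreur2014ams}) of differential algebraic geometry codes over $\F_{q^\ell}$, starting from curves that attain the Drinfeld--\vladut{} bound at level $q^\ell$. Since $\ell$ is even, $q^\ell$ is a perfect square and $A(q^\ell)=q^{\ell/2}-1$ is achieved by explicit recursive towers (e.g.~Garcia--Stichtenoth~\cite{GS1995}), which can moreover be chosen defined over $\F_q$. Fix such a sequence $(X_s)_s$, with genera $g_s\to\infty$ and $n_s=|X_s(\F_{q^\ell})|$ satisfying $n_s/g_s\to q^{\ell/2}-1$.

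On each $X_s$, pick an $\F_q$-rational divisor $G_s$ of degree $m_s=\lceil\delta n_s+2g_s-2\rceil$ with support disjoint from an ordered, Frobenius-stable set $\cP_s\subset X_s(\F_{q^\ell})$ of cardinality $n_s$. Set $\CC_s=\COm{X_s}{\cP_s}{G_s}$ over $\F_{q^\ell}$ and $\CC_s'=\CC_s\cap \F_q^{n_s}$. Corollary~\ref{cor:Goppa_bound_COmega} gives $d(\CC_s')\geq d(\CC_s)\geq m_s-2g_s+2\geq \delta n_s$, so $\liminf d(\CC_s')/n_s\geq \delta$.

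The crux is an improved dimension bound for $\CC_s'$. The naive subfield subcode estimate only yields $\dim_{\F_q}\CC_s'\geq n_s-\ell(m_s-g_s+1)$, which gives an asymptotic rate $R\geq 1-\ell\delta-\ell/(q^{\ell/2}-1)$, weaker than the target by a factor $q/(q-1)$. Wirtz~\cite{wirtz1988it} and Katsman--Tsfasman~\cite{katsman1989cm} independently proved the sharper estimate
\[
  \dim_{\F_q}\CC_s' \geq n_s-\frac{(q-1)\ell}{q}\,m_s + o(n_s).
\]
Via Delsarte's identity $(\CC_s')^\perp = \tr(\CC_s^\perp)$ and Theorem~\ref{thm:duality_CL_COm}, this reduces to showing $\dim_{\F_q}\tr(\CL{X_s}{\cP_s}{G_s}) \leq \frac{(q-1)\ell}{q}\,m_s + o(n_s)$. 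The $(q-1)/q$ saving comes from exhibiting a large $\F_q$-subspace of the dual code lying in the kernel of componentwise trace: since $G_s$ is $\F_q$-rational and $\cP_s$ is Frobenius-stable, the map combining componentwise $q$-th power with the Frobenius permutation on $\cP_s$ is an $\F_q$-linear endomorphism of $\CL{X_s}{\cP_s}{G_s}$, and for every codeword $\cv$ in the dual the vector $(c_i^q-c_i)_i$ lies in the dual and is annihilated by $\tr$. The Cartier-operator incarnation of~\cite{couvreur2014ams} counts the same phenomenon intrinsically, in terms of Cartier-invariant differentials in $\Omega(G_s-D_{\cP_s})$.

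Combined with $m_s/n_s\to\delta+2/(q^{\ell/2}-1)$, the bound gives
\[
  R \geq 1-\frac{(q-1)\ell}{q}\!\left(\delta+\frac{2}{q^{\ell/2}-1}\right) = 1 - \frac{2(q-1)\ell}{q(q^{\ell/2}-1)} - \frac{(q-1)\ell}{q}\delta,
\]
as claimed. The lower bound $\delta\geq (q-2)/(q^{\ell/2}-1)$ ensures $m_s>2g_s-2$ asymptotically so that Corollary~\ref{cor:Goppa_bound_COmega} applies and the subfield/Cartier subcode is non-degenerate; the upper bound on $\delta$ keeps $m_s<n_s$ and $R\geq 0$. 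The main obstacle is the improved dimension bound: while the trace-kernel heuristic gives the correct leading constant, making it rigorous with controlled $o(n_s)$ error requires careful book-keeping of the Frobenius action on $L(G_s)\otimes \F_{q^\ell}$, of how evaluation interacts with the permutation of Galois orbits in $\cP_s$, and of corrections coming from places of $G_s$ and from the fixed subspace of Frobenius; this is the technical heart of~\cite{wirtz1988it,katsman1989cm,couvreur2014ams}.
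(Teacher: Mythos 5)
Your overall route is the one the paper intends: the paper states this theorem without proof, as a summary of \cite{wirtz1988it,katsman1989cm} and of the Cartier--code construction \cite{couvreur2014ams}, and your scheme --- codes $\COm{X_s}{\cP_s}{G_s}$ on a Drinfeld--\vladut{}--optimal sequence of curves over $\F_{q^\ell}$ (here $\ell$ even makes $q^\ell$ a square), distance from Corollary~\ref{cor:Goppa_bound_COmega}, the improved Wirtz/Katsman--Tsfasman dimension estimate for the subfield subcode, then $g_s/n_s\to 1/(q^{\ell/2}-1)$ --- is exactly how those references produce the stated $(R,\delta)$ trade--off. Since you, like the paper, ultimately invoke them for the key dimension bound, your sketch sits at the same level of completeness as the source.

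One substantive caveat: the heuristic you offer for the factor $(q-1)/q$ is incorrect as stated and is not the mechanism of the cited proofs. For $\cv=(f(P_1),\dots,f(P_n))\in\CL{X_s}{\cP_s}{G_s}$ one has $c_i^q=f^{\sigma}(P_i^{\sigma})$, with $\sigma$ the $q$--Frobenius, so $(c_i^q)_i$ is a Frobenius--twisted codeword whose coordinates are permuted by the action of $\sigma$ on $\cP_s$; without that permutation, $(c_i^q-c_i)_i$ has no reason to lie in $\CL{X_s}{\cP_s}{G_s}$ (the permutation induced by $\sigma$ is not covered by Proposition~\ref{auto_courbe_code}, because $\sigma$ is not an $\F_{q^\ell}$--automorphism of $X_s$). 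And the genuine $\F_q$--linear endomorphism $\phi:\cv\mapsto\cv^{(q)}\Pm_\tau$ (componentwise $q$--th power followed by the Frobenius permutation) does not yield trace--zero codewords either: the entries of $\phi(\wv)-\wv$ are $w_{\tau(i)}^q-w_i$, whose traces are $\tr(w_{\tau(i)})-\tr(w_i)$ and need not vanish, so no large subspace of $\ker(\tr)\cap\CL{X_s}{\cP_s}{G_s}$ is exhibited this way. The actual gains are obtained differently: \cite{wirtz1988it} generalizes to AG codes the classical identity for Goppa codes with separable Goppa polynomial \cite{sugiyama1976it}, comparing the subfield subcodes attached to two different divisors, and \cite{couvreur2014ams} gets the same asymptotic saving from the Cartier operator acting on $\Omega(G_s-D_{\cP_s})$; note also that your rationality assumptions ($G_s$ being $\F_q$--rational, $\cP_s$ Frobenius--stable, the tower defined over $\F_q$) are only needed for the flawed heuristic, not for the subfield--subcode construction itself. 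So keep the citations as the proof of the dimension bound, and drop or correct the trace--kernel sketch rather than present it as an outline of that proof.
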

In particular, when $\ell \rightarrow \infty$ the asymptotic parameters
of such codes reach Gilbert--Varshamov bound for $\delta \sim 0$.

\subsection{Nonlinear codes}

Works of Xing~\cite{Xing2003}, Elkies~\cite{Elkies2003}, and Niederreiter-\"Ozbudak~\cite{NO2004}
show that it is possible to construct codes with asymptotic parameters better
than those in Theorem~\ref{generalTVZ}, and in particular better than the TVZ bound,
if one turns to \emph{nonlinear} codes.
Observe then that, strictly speaking, these codes are not AG codes of the form $\CL{X}{\cP}{G}$.
However they are still obtained by evaluation of functions on an algebraic curve.
More precisely, the original construction of these codes uses derivative evaluation of functions on curves,
combined with certain intricate combinatorial arguments.
An alternative, arguably simpler construction is proposed in \cite{SX2005},
still based on curves.

\begin{theorem}
For any prime power $q$, there exist asymptotically good nonlinear codes over $\F_q$
with asymptotic relative minimum distance $\delta$
and asymptotic rate \[R\geq 1-\delta-\frac{1}{A(q)}+\log_q\left(1+\frac{1}{q^3}\right).\]
This holds for any $\delta>0$ such that this quantity is positive.
\end{theorem}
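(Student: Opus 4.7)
The plan is to build a nonlinear code by augmenting a linear algebraic geometry code near the Tsfasman--\vladut--Zink bound with a controlled amount of extra combinatorial data at each evaluation point, inflating the total number of codewords by a factor of roughly $(1+1/q^3)^n$ while preserving the minimum distance up to lower-order corrections.

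First, I would fix a sequence of curves $(X_m)$ over $\Fq$ with genus $g_m \to \infty$ and $|X_m(\Fq)|/g_m \to A(q)$, and for each $X_m$ pick a subsequence $\cP_m = (P_1, \dots, P_{n_m})$ of its rational points together with a divisor $G_m$ disjoint from $\cP_m$ of degree roughly $(1-\delta) n_m$. By Theorem~\ref{thm:basics} and Corollary~\ref{Singleton_defect}, the linear code $\CL{X_m}{\cP_m}{G_m}$ has rate at least $1-\delta - 1/A(q) + o(1)$ and minimum distance at least $\delta n_m$; this is the TVZ bound, which must now be beaten.

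The crucial new ingredient, in the spirit of Xing and Elkies, is to associate to each $f \in L(G_m)$ not a single codeword but a whole family $\Sigma(f) \subseteq \Fq^{n_m}$ of admissible \emph{decorated evaluations} of $f$, where the decoration at each point $P_i$ encodes a small piece of higher-order local information (for example, a truncated expansion of $f$ modulo $\mathfrak{m}_{P_i}^r$ for a suitably chosen small integer $r$). The nonlinear code is then the disjoint union $\CC = \bigsqcup_{f \in L(G_m)} \Sigma(f)$, and the goal is to arrange the sets $\Sigma(f)$ so that $|\Sigma(f)| \geq (1+1/q^3)^{n_m}$ on average, while two codewords coming from distinct $f \neq f'$ always disagree in at least $\delta n_m - o(n_m)$ positions.

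The technical heart of the argument splits into two estimates. The first is a Riemann--Roch dimension count of the higher-order jet spaces $L(G_m + r D_{\cP_m}) / L(G_m)$: optimizing over $r$, together with a combinatorial packing of the permissible local labels, produces exactly the constant $1+1/q^3$ per coordinate. The second is a combinatorial disjointness statement: one must verify that for $f \neq f'$, no element of $\Sigma(f)$ lies too close (in Hamming distance) to any element of $\Sigma(f')$, which can be enforced either probabilistically or by a greedy selection of local labels. The main obstacle is precisely this second step, since adding local perturbations naively would erode the Goppa designed distance; the subtle tradeoff is to balance the size of the local perturbation sets against the slack $n_m - \deg(G_m) - \delta n_m$ in the distance estimate. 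Once both bounds are in place, letting $m \to \infty$ and taking logarithms yields the claimed rate bound $R \geq 1 - \delta - 1/A(q) + \log_q(1 + 1/q^3)$.
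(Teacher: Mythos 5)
The statement you are asked to prove is cited in the paper without proof (it is attributed to Xing, Elkies, Niederreiter--\"Ozbudak and Stichtenoth--Xing \cite{Xing2003,Elkies2003,NO2004,SX2005}, with the remark that the original constructions use ``derivative evaluation of functions on curves, combined with certain intricate combinatorial arguments''). Your sketch correctly gestures at that circle of ideas, but it is a plan rather than a proof: the two steps you yourself call the ``technical heart'' are precisely the content of the theorem, and you assert both of them without argument. Concretely: (1) you never derive the constant $1+\frac{1}{q^3}$; you only announce that a Riemann--Roch count of $L(G_m+rD_{\cP_m})/L(G_m)$, ``optimizing over $r$,'' produces it. Worse, this particular count cannot play the role you want, because functions in $L(G_m+rD_{\cP_m})\setminus L(G_m)$ have poles at some of the evaluation points and so do not even define vectors of $\Fq^{n_m}$ by evaluation; and the truncated expansions of a fixed $f$ modulo $\mathfrak m_{P_i}^r$ are determined by $f$, so by themselves they do not enlarge the codebook. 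Turning higher-order local data into extra codewords over the \emph{same} alphabet $\Fq$ and the \emph{same} length, with the loss in distance kept strictly smaller than the gain $\log_q(1+q^{-3})$ in rate, is exactly where the known proofs do real work (distinguished-divisor/zeta-function counting in \cite{Xing2003,NO2004,Elkies2003}, or the union over many pole divisors in \cite{SX2005}).

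(2) The distance claim is also only postulated. Saying the required separation ``can be enforced either probabilistically or by a greedy selection'' is a restatement of the goal, not an argument: a naive probabilistic or greedy choice of $(1+q^{-3})^{n}$ perturbed words per coset is exactly what the Gilbert--Varshamov-type counting already gives, and it does not yield the claimed bound with the $1/A(q)$ term. Moreover, with your definition $\CC=\bigsqcup_f \Sigma(f)$ you only discuss pairs coming from distinct $f\neq f'$; the minimum distance must also be controlled for two distinct decorated words arising from the \emph{same} $f$, which your construction leaves completely unconstrained. So there is a genuine gap: the reduction of the theorem to the choice of curves attaining $A(q)$ and the final arithmetic are fine, but the nonlinear enlargement step --- the only new ingredient beyond Theorem~\ref{generalTVZ} --- is missing. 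To repair it you would essentially have to reproduce one of the cited constructions, e.g.\ the Stichtenoth--Xing evaluation of a union of Riemann--Roch spaces attached to varying divisors supported away from $\cP$, together with the counting argument that produces the factor $1+q^{-3}$ and the degree argument bounding the weight of differences of codewords.
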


 \section{Improved lower bounds for the minimum
  distance}\label{sec:dmin}
To write this section, the authors followed some advices of
Iwan Duursma. They warmly thank him for this help.

Given a curve $X$ over $\Fq$ of genus $g$, a divisor $G$ and an
ordered set of $n$ rational points $\cP$ such that $n > \deg G$
(resp. $\deg G > 2g-2$), from Corollary~\ref{Singleton_defect}
\begin{equation}\label{eq:almost_MDS}
  d+k \geq n+1-g.
\end{equation}
Therefore, the parameters of an algebraic geometry code are ``at
distance at most $g$'' from the Singleton bound. However, Goppa bound
is not always reached and improvements may exist under some
hypotheses. The present section is devoted to such possible
improvements.  It should be emphasized that this concerns only
``finite length'' codes: the statements to follow do not provide any
improvement on the asymptotic performances of (linear) algebraic geometry
codes presented in Section~\ref{sec:asymptotics}.

\medskip

First, let us try to understand why Goppa bound may be not reached
by proving the following lemma.

\begin{lemma}\label{lem:Reach_Goppa}
  The code $\CL{X}{\cP}{G}$ has minimum distance
  $d = \dgop^* = n - \deg G$ if and only if there exists $f \in L(G)$
  such that the positive divisor $\div(f)+G$ is of the form
  $P_{i_1} + \cdots + P_{i_s}$ where the $P_{i_j}'s$ are distinct
  points among $P_1, \dots, P_n$.
\end{lemma}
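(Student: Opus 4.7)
The plan is to express the Hamming weight of a codeword in divisor-theoretic language, then compare with $\deg G$ via an elementary chain of inequalities. Throughout I would use the hypothesis $\supp(G)\cap\cP=\emptyset$, which entails no loss of generality by Remark~\ref{UseWeakApprox}.

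First I would observe that for any $f\in L(G)\setminus\{0\}$ the effective divisor $D_f\eqdef\div(f)+G$ has degree $\deg G$, and that $\supp(G)\cap\cP=\emptyset$ forces $v_{P_i}(D_f)=v_{P_i}(f)$ for every evaluation point. In particular $f(P_i)=0$ iff $P_i\in\supp(D_f)$, so the codeword $\ev_\cP(f)=(f(P_1),\dots,f(P_n))$ has Hamming weight
\[
  w_H(\ev_\cP(f))\ =\ n-\bigl|\supp(D_f)\cap\cP\bigr|.
\]

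The key step is the chain of inequalities
\[
  \bigl|\supp(D_f)\cap\cP\bigr|\ \leq\ \sum_{P\in\supp(D_f)\cap\cP}v_P(D_f)\ \leq\ \sum_{P}v_P(D_f)\ =\ \deg G,
\]
together with the observation that the two inequalities are simultaneously equalities if and only if $D_f$ is a sum of distinct points of $\cP$, i.e.\ $D_f=P_{i_1}+\cdots+P_{i_s}$: the first one forces the multiplicities of the $P_{i_j}$ to equal $1$, the second forces $\supp(D_f)\subseteq\cP$. Hence the weight of $\ev_\cP(f)$ equals $\dgop^*=n-\deg G$ exactly when $D_f$ has the form described in the statement.

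To conclude I would invoke Goppa's lower bound $d\geq\dgop^*$ from Theorem~\ref{thm:basics}: existence of an $f$ as in the statement produces a codeword of weight $\dgop^*$, so $d=\dgop^*$; conversely, if $d=\dgop^*$ then a minimum-weight codeword is of the form $\ev_\cP(f)$ for some $f\in L(G)\setminus\{0\}$, and by the equivalence just established this $f$ satisfies the required property. No real obstacle is expected; the only point to watch is that $f=0$ has to be excluded (which is automatic, since its codeword has weight $0\neq\dgop^*$ because $\deg G<n$) and that the hypothesis $\supp(G)\cap\cP=\emptyset$ be handled cleanly via weak approximation.
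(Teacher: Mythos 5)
Your proof is correct and follows essentially the same route as the paper's: both arguments rest on the facts that, since $\supp(G)\cap\cP=\emptyset$, the zeros of $f$ among the $P_i$'s are read off the effective divisor $\div(f)+G$, whose degree is $\deg G$, together with Goppa's lower bound $d\geq \dgop^*$ from Theorem~\ref{thm:basics}; the paper simply writes the two implications separately (a direct zero count for one direction, the divisor inequality $\div(f)\geq -G+P_{i_1}+\cdots+P_{i_s}$ forced into equality by degrees for the other) rather than through your single chain of inequalities. One small imprecision: $\sum_P v_P(\div(f)+G)$ equals $\deg G$ only when every place in the support is rational — in general it is merely bounded above by $\deg(\div(f)+G)=\deg G$, because degrees of places enter the degree of a divisor — but this does not harm your equality analysis, since equality in the chain already forces $\supp(\div(f)+G)\subseteq\cP$, where all places have degree $1$.
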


\begin{proof}
  Suppose there exists such an $f \in L(G)$ satisfying
  $\div (f)+G = P_{i_1} + \cdots + P_{i_s}$. Since principal
  divisors have degree $0$, then $s = \deg G$. Consequently, the
  corresponding codeword vanishes at positions with index
  $i_1, \dots, i_s$ and hence has weight $n - \deg G$. Thus, such a
  code reaches Goppa bound.

  Conversely, if Goppa bound is reached, then there exists $f \in L(G)$
  vanishing at $s=\deg G$ distinct points $P_{i_1}, \dots, P_{i_s}$
  among the $P_i$'s. Hence
  \[
    \div (f) \geq - G + P_{i_1} + \cdots + P_{i_s}.
  \]
  For degree reasons, the above inequality is an equality.
\end{proof}

\begin{remark}
  In algebraic geometry, the {\em complete linear system} or {\em
    complete linear series} associated to a divisor $G$ denoted by
  $|G|$ is the set of positive divisor linearly equivalent to
  $G$. Such a set is parameterized by the projective space $\P
  (L(G))$. Using this language, one can claim that {\em the code
    $\CL{X}{\cP}{G}$ reaches Goppa bound if and only if $|G|$ contains a
    reduced divisor supported by the $P_i$'s}.
\end{remark}

For instance, one can provide examples of codes from elliptic curves
(i.e. curves with $g=1$) that are MDS, i.e. reach the Singleton bound.

\begin{example}
  Recall that given an elliptic curve $E$ over a finite field $\Fq$ with a
  fixed rational point $O_E$, the set of rational points has a natural
  structure of abelian group with zero element $O_E$ given by the
  chord--tangent group law (see \cite[\S~III.2]{silverman2009book}).
  The addition law will be denoted by $\oplus$.
  This group law can also be deduced from linear equivalence of
  divisors (see \eqref{eq:equivalence_lineaire})  as follows:
  \begin{equation}\label{eq:elliptic_group_law}
    P\oplus Q=R \quad \Longleftrightarrow \quad (P-O_E) + (Q-O_E) \sim (R-O_E)
  \end{equation}
  (see \cite[Prop.~3.4]{silverman2009book}).

  From \cite[Th.~3.3.15.5a]{tsfasman2007book}, there exists an
  elliptic curve $E$ over a field $\Fq$ whose group of rational points
  $E(\Fq)$ is cyclic of cardinality $q+1$. Let $P \in E(\Fq)$ be a
  generator of the group of points and set
  $\cP = (P_1, P_2, \dots, P_n)$ for some positive integer $n$ such
  that $\frac{n(n+1)}{2} < q$ and where for any
  $i \in \{1, \dots, n\}$, $P_i = iP = P \oplus \cdots \oplus P$.
  Now, let $q+1 > r > \frac{n(n+1)}{2}$ and set $Q = rP$, choose an
  integer $n > k > 0$ and consider $Q_1, \ldots, Q_k \in E(\Fq)$
  (possibly non distinct) such that
  $Q_1 \oplus \cdots \oplus Q_k = Q$. Let $G \sim Q_1+\cdots + Q_k$ be
  a divisor. From (\ref{eq:elliptic_group_law}), we have
  \begin{equation}
    \label{eq:G}
    (G - kO_E) \sim Q-O_E.
  \end{equation}
  Consider the code $\CC \eqdef \CL{E}{\cP}{G}$. Its length equals $n$.
  Moreover, since $\deg G > 0 = 2g-2$, from Theorem~\ref{thm:basics},
  we have $\dim \CC = k+1-g = k$. 

  We claim that $\CC$ is MDS. Indeed, from the Singleton bound and
  (\ref{eq:almost_MDS}), the minimum distance of the code is either
  $n-k$ or $n-k+1$. Suppose the minimum distance is $n-k$, then, from
  Lemma~\ref{lem:Reach_Goppa}, there exists $f \in L(G)$ and $k$
  distinct points $P_{i_1}, \dots, P_{i_k}$ among $P_1,\dots, P_n$
  such that
  \begin{equation}\label{eq:div_f_elliptic_curve}
    \div (f) = P_{i_1} + \cdots + P_{i_k} - G.
  \end{equation}
  Therefore
  \begin{align*}
    G \sim P_{i_1}+ \cdots + P_{i_k} \quad &\Longleftrightarrow \quad
                                             G-kO_E \sim (P_{i_1}-O_E) +
                                             \cdots + (P_{i_k} - O_E)\\
                                           &\Longleftrightarrow \quad
                                             Q =rP= P_{i_1} \oplus \cdots
                                             \oplus P_{i_k},
  \end{align*}
  where the last equivalence is a consequence of (\ref{eq:elliptic_group_law})
  and (\ref{eq:G}), but contradicts the assertion $r > \frac{n(n+1)}{2}$.
  Thus, $\CC$ is MDS.
\end{example}

\begin{remark}
  The existence of MDS codes from elliptic curves is
  further discussed in \cite[\S~4.4.2]{tsfasman2007book}.
\end{remark}

A significant part of the
literature on algebraic geometry codes proposes improvements of Goppa
bound under some assumptions on the pair $(\cP, G)$ and applies them on
examples in order to beat records.
These improved bounds can roughly be split into two main categories
\begin{enumerate}
\item {\em Floor bounds} rest on the use of {\em base points}
  of divisors related to $G$;
\item {\em Order bounds} rest on filtrations of the code
  \[
    \CL{X}{\cP}{G} \supset \CL{X}{\cP}{G_1} \supset \CL{X}{\cP}{G_2}
    \supset \cdots
    \]for some strictly decreasing sequence of divisors
  $(G_i)_i$ and the iterated estimates of the minimum distance of the
  sets $\CL{X}{\cP}{G_i} \setminus \CL{X}{\cP}{G_{i+1}}$.
\end{enumerate}
A nice overview of many improved bounds in the literature is given in
\cite{duursma2011jpaa}.

Thanks to Lemma~\ref{dualityCLCOm}, we know that any $\CC_L$ code is a 
$\CC_\Omega$ one. 
Therefore, always choosing the most convenient point of view, we alternate between 
improved lower bounds for the minimum distance of $\CC_L$ and $\CC_\Omega$
codes. Our point is to provide lower bounds for the minimum distance
that improve the Goppa designed distance
\[
  \dgop \eqdef \deg (G-K) = \deg G +2 - 2g,
\]
where $K$ denotes a canonical divisor on $X$.

\subsection{Floor bounds}
Given a divisor $A$ on a curve $X$, a point $P$ of $X$ is said to be a
{\em base point of $A$} if $L(A) = L(A-P)$.

\begin{remark}
  Actually, the notion of base point depends only on the divisor class.
  Thus, if $A' \sim A$, then $P$ is also a base point of $A'$. 
\end{remark}

\begin{remark}
  From Riemann--Roch Theorem, any divisor $A$ such that $\deg A > 2g-1$,
  has no base points.
\end{remark}

If a divisor $G$ has a base point $P$ outside the set $\cP$,
then $\CL{X}{\cP}{G} = \CL{X}{\cP}{G-P}$ which entails that
the minimum distance of this code satisfies $d \geq n - \deg G +1$
instead of $n - \deg G$.
Similarly, for $\CC_\Omega$ codes, if $G-K$ has a base point, we have:

\begin{lemma}[{\cite[Lem.~1.3]{duursma2011jpaa}}]
  Let $P$ be a base point of $G-K$ and $P$ is disjoint from the elements of
  $\cP$, then 
  \[
    d(\COm{X}{\cP}{G}) \geq \dgop + 1.
  \]
\end{lemma}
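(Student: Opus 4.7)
My plan is to argue by contradiction: assume there exists a codeword of weight exactly $w=\dgop=\deg G+2-2g$, and then exploit the base point hypothesis to force $P$ into $\cP$. Such a codeword arises from a nonzero $\omega\in\Omega(G-D_\cP)$ whose residue is nonzero at a subset of $\cP$ indexed by $T$ with $|T|=\dgop$, and zero on the complement. Since each $P_i$ lies outside $\supp(G)$, the membership $\omega\in\Omega(G-D_\cP)$ gives $v_{P_i}(\omega)\ge -1$; for $i\notin T$ the vanishing of the residue upgrades this to $v_{P_i}(\omega)\ge 0$ (no pole at all). Consequently $\div(\omega)\ge G-D_T$, where $D_T=\sum_{i\in T}P_i$.

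I would then translate this into a Riemann--Roch statement. Fix a nonzero rational differential $\omega_0$ with $\div(\omega_0)=K$ and write $\omega=h\omega_0$ for a unique nonzero $h\in\Fq(X)$. Then $\div(h)\ge G-K-D_T$, so $h\in L(D_T+K-G)$. The key numerical observation is that $D_T+K-G$ has degree $\dgop+(2g-2)-\deg G=0$; since it admits a nonzero function, it must be principal and its Riemann--Roch space is one-dimensional, with any nonzero element having divisor equal to $-(D_T+K-G)$. Hence the inequality tightens to $\div(h)=G-K-D_T$ exactly, i.e.\ $D_T\sim G-K$.

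Finally I would invoke the base point hypothesis on $h^{-1}$. From $\div(h^{-1})=D_T-(G-K)$ and $D_T\ge 0$, one gets $h^{-1}\in L(G-K)$. The base point property says $L(G-K)=L(G-K-P)$, so actually $h^{-1}\in L(G-K-P)$; spelled out this reads $\div(h^{-1})\ge -(G-K)+P$, that is $D_T\ge P$. But $\supp(D_T)\subseteq \cP$, while $P\notin\cP$ by hypothesis, a contradiction. Combined with the Goppa bound of Corollary~\ref{cor:Goppa_bound_COmega}, this yields $d(\COm{X}{\cP}{G})\ge\dgop+1$. The only delicate step is the upgrade from the inequality $\div(h)\ge G-K-D_T$ to an exact equality, which rests precisely on the degree-zero coincidence for $D_T+K-G$; without it, the argument would only place $P$ somewhere in $\supp(\div(h)+K-G)$ rather than in $\supp(D_T)$.
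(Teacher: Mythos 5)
Your proof is correct, and it is the standard argument behind this statement: the paper itself gives no proof but defers to \cite[Lem.~1.3]{duursma2011jpaa}, whose reasoning is exactly what you carry out. In effect you show that a codeword of weight exactly $\dgop$ would produce an effective divisor $D_T\leq D_{\cP}$ with $D_T\sim G-K$ (your degree-zero rigidity step, which is indeed the crucial point), while the base point $P$ of $G-K$ must appear in the support of every effective divisor in $|G-K|$ — your $h^{-1}\in L(G-K)=L(G-K-P)$ computation — contradicting $P\notin\cP$; together with the Goppa bound this gives $d(\COm{X}{\cP}{G})\geq \dgop+1$.
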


\begin{remark}
  According to Remark~\ref{UseWeakApprox}, one can get rid of the hypothesis
  that $P$ is disjoint from the elements of $\cP$. In such a situation,
  the codes $\CL{X}{\cP}{G}$ and $\CL{X}{\cP}{G-P}$ are only
  diagonally equivalent, but the result on the minimum distance
  still holds.
\end{remark}

More generally, the {\em floor} $\lfloor A \rfloor$ of a divisor $A$
is the divisor of smallest degree $A'$ such that $L(A') = L(A)$. Such
a divisor satisfies $\lfloor A \rfloor \leq A$
\cite[Prop.~2.1]{maharaj2005jpaa} and we have

\begin{theorem}[Maharaj, Matthews, Pirsic
  {\cite[Th.~2.9]{maharaj2005jpaa}}]
  The code $\CL{X}{\cP}{G}$ has dimension $k$ and minimum distance
  $d$ which satisfy
  \[
    k \geq \deg G +1 - g \qquad d \geq n - \deg \lfloor G \rfloor.
  \]
\end{theorem}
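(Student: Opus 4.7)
The plan is to handle the two bounds separately. The dimension bound follows directly from Theorem~\ref{thm:basics}: provided $\deg G < n$, the divisor $G - D_{\cP}$ has negative degree, so Proposition~\ref{prop:RR} gives $\ell(G - D_{\cP}) = 0$, whence $k = \ell(G) \geq \deg G + 1 - g$ by Riemann--Roch.

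For the minimum distance, the key observation is that by the defining property of the floor, $L(\lfloor G \rfloor) = L(G)$. Thus every codeword of $\CL{X}{\cP}{G}$ is of the form $\cv = (f(P_1), \ldots, f(P_n))$ for some $f \in L(\lfloor G \rfloor)$. If $\cv$ is nonzero of weight $w$, then $f$ vanishes at exactly $n - w$ evaluation points $P_{i_1}, \ldots, P_{i_{n-w}}$, and I would mimic the proof of Goppa's bound (Theorem~\ref{thm:basics}) with $G$ replaced by $\lfloor G \rfloor$. Combining $f \in L(\lfloor G \rfloor)$ with $v_{P_{i_j}}(f) \geq 1$ at each vanishing point yields $f \in L(\lfloor G \rfloor - P_{i_1} - \cdots - P_{i_{n-w}})$; since $f$ is nonzero, Proposition~\ref{prop:RR} applied to this space forces $\deg(\lfloor G \rfloor - P_{i_1} - \cdots - P_{i_{n-w}}) \geq 0$, yielding $w \geq n - \deg \lfloor G \rfloor$.

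The main obstacle is the local check that $f \in L(\lfloor G \rfloor - P_{i_1} - \cdots - P_{i_{n-w}})$, which at each $P_{i_j}$ requires $v_{P_{i_j}}(f) + v_{P_{i_j}}(\lfloor G \rfloor) \geq 1$. Using the pointwise characterization $v_P(\lfloor G \rfloor) = -\min_{g \in L(G) \setminus \{0\}} v_P(g)$ together with $\supp(G) \cap \cP = \emptyset$, one sees that $v_{P_{i_j}}(\lfloor G \rfloor) \leq 0$, with equality exactly when $P_{i_j}$ is not a base point of $G$; this is the generic case (and is automatic when $G$ is effective), in which the local check is immediate. In the degenerate case where some $P_{i_j} \in \cP$ is a base point of $G$, the corresponding coordinate of every codeword is forced to zero, and one must adapt the argument, for instance by first replacing $\lfloor G \rfloor$ by a linearly equivalent divisor of support disjoint from $\cP$ via Remark~\ref{UseWeakApprox} and transporting the conclusion through the diagonal equivalence of Lemma~\ref{lineq=diageq}.
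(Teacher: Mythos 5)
The paper itself offers no proof of this theorem; it only cites Maharaj--Matthews--Pirsic. Measured against the standard argument, your main line is exactly right: the dimension bound is Theorem~\ref{thm:basics} plus Proposition~\ref{prop:RR} (under the tacit hypothesis $\deg G<n$, which you correctly flag), and for the distance you use $L(\lfloor G\rfloor)=L(G)$ and rerun the Goppa argument with $\lfloor G\rfloor$ in place of $G$. Your local analysis is also accurate: the only delicate indices are those $P_i$ with $v_{P_i}(\lfloor G\rfloor)<0$, i.e.\ base points of $G$ lying in $\cP$, and when there are none (automatic for $G\geq 0$, since $1\in L(G)$ forces $v_P(\lfloor G\rfloor)=0$ for $P\notin\supp(G)$) your proof is complete and is the intended one.

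The gap is in your treatment of the degenerate case, and it cannot be repaired, because the inequality $d\geq n-\deg\lfloor G\rfloor$ is actually false when some $P_i$ is a base point of $G$. Take an elliptic curve $X$ with enough rational points, distinct rational points $Q_1,Q_2,Q_3$, and a further rational point $R\notin\{Q_1,Q_2,Q_3\}$ with $R\sim Q_1+Q_2-Q_3$ (such an $R$ exists by the group law, cf.~\eqref{eq:elliptic_group_law}); set $G=Q_1+Q_2-Q_3$ and let $\cP$ contain $R$ while avoiding $Q_1,Q_2,Q_3$. Then $L(G)=\langle f_0\rangle$ with $\div(f_0)=R-G$, so $\CL{X}{\cP}{G}$ is one--dimensional of minimum distance $n-1$, whereas $\lfloor G\rfloor=-\div(f_0)=G-R$ has degree $0$, so the bound would assert $d\geq n$. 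This also pinpoints why your weak--approximation patch fails: Lemma~\ref{lineq=diageq} (and Remark~\ref{UseWeakApprox}) requires \emph{both} divisors to have support disjoint from $\cP$; if $H\sim\lfloor G\rfloor$ has support away from $\cP$, the function $h$ with $\div(h)=\lfloor G\rfloor-H$ necessarily has a zero or pole at the offending $P_i$, the multiplier $(h(P_1),\dots,h(P_n))$ is not in $(\F_q^\times)^n$, and the naive evaluation code of $L(\lfloor G\rfloor)$ is \emph{not} diagonally equivalent to $\CL{X}{\cP}{H}$ (in the example above their minimum distances are $n-1$ and $n$). The correct resolution is not an adapted argument but an added hypothesis: one must require $\supp(\lfloor G\rfloor)\cap\cP=\emptyset$, equivalently that no $P_i$ is a base point of $G$ (automatic for effective $G$, which is the setting of the cited reference); under that hypothesis your main argument already proves the statement.
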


This principle has been used and improved in several references such
as
\cite{maharaj2005jpaa,lundell2006jpaa,guneri2009jpaa,duursma2010ffa}. The
so--called {\em ABZ bound} due to Duursma and Park
\cite[Th.~2.4]{duursma2010ffa}, inspired from the AB bound of van
Lint and Wilson \cite[Th.~5]{vanLint1986it} permits to deduce many
other ones.

\begin{theorem}[Duursma, Park
  {\cite[Th.~2.4]{duursma2010ffa}}]\label{thm:ABZ_bound}
  Let $G=A+B+Z$ for a divisor $Z\geq 0$ whose support is disjoint from
  $\cP$. Then,
  \[
    d(\COm{X}{\cP}{G}) \geq \ell (A) - \ell (A-G+K) + \ell (B) - \ell
    (B-G+K).
  \]
\end{theorem}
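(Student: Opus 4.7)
The plan is to fix a nonzero codeword $\cv \in \COm{X}{\cP}{G}$ of weight $s$, coming from some $\omega \in \Omega(G - D_{\cP})$ with $c_i = \res_{P_i}(\omega)$, and to introduce the support divisor $S \eqdef \sum_{P_i \in \supp(\cv)} P_i$ of degree $s$. Up to adjusting the decomposition $G = A + B + Z$ by weak approximation (replacing $A$ and $B$ by linearly equivalent divisors so that all three summands have support disjoint from $\cP$, while preserving their sum $G$ and the divisor $Z$), I would derive the claimed bound in two stages.

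For the first stage, the key tool is the residue theorem applied to $ab\omega$ for arbitrary $a \in L(A)$ and $b \in L(B)$. Since $\div(ab\omega) \geq -A - B + G - D_{\cP} = Z - D_{\cP}$ and $Z \geq 0$ avoids $\cP$, the form $ab\omega$ is regular off $\cP$ and has at worst simple poles along $\cP$, so the residue theorem forces $\sum_i a(P_i) b(P_i) c_i = 0$. The evaluation maps $L(A), L(B) \to \Fq^s$ at the points in $\supp(\cv)$ have kernels $L(A - S)$ and $L(B - S)$, and the identity above says that their images are mutually orthogonal in $\Fq^s$ under the weighted pairing $(\xv, \yv) \mapsto \sum_j c_{i_j} x_j y_j$, which is nondegenerate since every $c_{i_j} \neq 0$. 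This produces the intermediate estimate
\[
s \;\geq\; \bigl(\ell(A) - \ell(A - S)\bigr) \;+\; \bigl(\ell(B) - \ell(B - S)\bigr).
\]

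The second stage, which I expect to be the technical crux, is to eliminate the dependence on $S$ by bounding $\ell(A - S)$ and $\ell(B - S)$ in terms of the divisor--theoretic quantities from the theorem. My key move is to show that multiplication by $\omega$ defines an injection $L(A - S) \hookrightarrow \Omega(B + Z)$. Injectivity is immediate from $\omega \neq 0$; the nontrivial point is that $f\omega$ lies in $\Omega(B + Z)$ for every $f \in L(A - S)$. At every $P_i \in \cP$ the product $f\omega$ has residue zero, because either $f$ vanishes there ($P_i \in \supp(S)$) or $\omega$ is already regular there ($c_i = 0$ together with at most a simple pole), and a routine valuation check at the remaining points (using $\div(\omega) \geq G - D_{\cP}$ and $\div(f) \geq -A + S$) yields $\div(f\omega) \geq B + Z$ globally. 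Serre duality then gives $\ell(A - S) \leq \dim \Omega(B + Z) = \ell(K - B - Z) = \ell(A - G + K)$, and the symmetric statement $\ell(B - S) \leq \ell(B - G + K)$ follows by swapping the roles of $A$ and $B$. Substituting both inequalities into the estimate from the first stage yields the theorem.
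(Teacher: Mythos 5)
Your proof is correct, and it is worth noting that the paper itself gives no proof of this statement: Theorem~\ref{thm:ABZ_bound} is quoted with a citation to Duursma--Park, so your argument is a self-contained reconstruction rather than a paraphrase. It is in fact the standard AB-type argument (in the spirit of van Lint--Wilson, which the paper mentions as the inspiration for the ABZ bound): stage one is the residue-theorem orthogonality of the evaluations of $L(A)$ and $L(B)$ on the support $S$ of the codeword under the nondegenerate pairing weighted by the nonzero residues, giving $s \geq (\ell(A)-\ell(A-S)) + (\ell(B)-\ell(B-S))$; stage two eliminates $S$. All the steps check out: the weak-approximation adjustment does preserve $G$ and $Z$, since choosing $h$ with $A'=A+\div(h)$ avoiding $\cP$ forces $B'=B-\div(h)$, which automatically avoids $\cP$ because $v_{P_i}(A)+v_{P_i}(B)=v_{P_i}(G)-v_{P_i}(Z)=0$; and in stage two what you actually need (and what your parenthetical reasons deliver) is regularity of $f\omega$ at every $P_i$, not merely vanishing residue --- a form with at worst a simple pole and zero residue is regular, so the phrasing is slightly loose but the argument is sound. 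One small simplification you might like: since $\omega$ is regular at every $P_i$ outside $\supp(\cv)$, one has directly $\div(\omega)\geq G-S$, hence $A-S \leq A-G+\div(\omega)$ and $\ell(A-S)\leq \ell(A-G+K)$ because $\div(\omega)$ is a canonical divisor; your multiplication-by-$\omega$ injection $L(A-S)\hookrightarrow \Omega(B+Z)$ is exactly this observation in disguise, so both routes are equivalent, yours just routes through Serre duality once more.
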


\begin{remark}
  Goppa designed distance can be deduced from ABZ bound by choosing
  $A = G$ and $B = Z = 0$. Indeed, we get
  \begin{align*}
    d (\COm{X}{\cP}{G}) &\geq \ell (G) - \ell (K) + \ell (0) - \ell (K-G).  
  \end{align*}
  From Riemann--Roch theorem, we have $\ell (G) - \ell (K-G)= \deg G +1 -g$
  and $\ell (0) - \ell(K) = 1 - g$. Therefore,
  \[
    d (\COm{X}{\cP}{G}) \geq \deg G + 2 - 2g = \dgop.  
  \]
\end{remark}

Several floor bounds can be deduced from Theorem~\ref{thm:ABZ_bound},
such as:

\begin{theorem}[Lundell, McCullough {\cite[Th.~3]{lundell2006jpaa}}]
  \label{thm:LM_bound}
  Let $G = A+B+Z$ where $Z \geq 0$, the support of $Z$ is disjoint
  from $\cP$, $L(A+Z) = L(A)$ and $L(B+Z) = L(B)$. Then,
  \[
    d(\COm{X}{\cP}{G}) \geq d_{{\rm LM}} \eqdef \deg G +2-2g + \deg Z
    = \dgop + \deg Z.
  \]
\end{theorem}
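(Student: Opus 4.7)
The plan is to derive the Lundell--McCullough bound as a direct consequence of the ABZ bound (Theorem~\ref{thm:ABZ_bound}), by reinterpreting the floor-type hypotheses $L(A+Z)=L(A)$ and $L(B+Z)=L(B)$ through Riemann--Roch duality.

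First, I would substitute the decomposition $G = A+B+Z$ into the quantities appearing in Theorem~\ref{thm:ABZ_bound}. Since $G-A = B+Z$ and $G-B = A+Z$, we have $A-G+K = K-B-Z$ and $B-G+K = K-A-Z$, so that the ABZ inequality reads
\[
d(\COm{X}{\cP}{G}) \;\geq\; \ell(A) - \ell(K-B-Z) + \ell(B) - \ell(K-A-Z).
\]

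The key observation is that the floor-type hypotheses translate, via Riemann--Roch, into exact gaps for the ``dual'' terms. Applying Riemann--Roch to $A$ and to $A+Z$, and subtracting, one gets
\[
\ell(A+Z) - \ell(A) \;=\; \deg Z \;+\; \ell(K-A-Z) - \ell(K-A).
\]
Since the hypothesis $L(A+Z)=L(A)$ forces the left-hand side to vanish, this yields $\ell(K-A) - \ell(K-A-Z) = \deg Z$. The same argument applied to $B$ gives $\ell(K-B) - \ell(K-B-Z) = \deg Z$.

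Finally, I would inject these two identities into the ABZ estimate and regroup:
\begin{align*}
d(\COm{X}{\cP}{G}) &\geq \bigl[\ell(A) - \ell(K-A)\bigr] + \bigl[\ell(K-A) - \ell(K-A-Z)\bigr] \\
&\quad + \bigl[\ell(B) - \ell(K-B)\bigr] + \bigl[\ell(K-B) - \ell(K-B-Z)\bigr] \\
&= (\deg A + 1 - g) + \deg Z + (\deg B + 1 - g) + \deg Z \\
&= \deg G + \deg Z + 2 - 2g \;=\; \dgop + \deg Z,
\end{align*}
using Riemann--Roch once more on $A$ and on $B$, together with $\deg G = \deg A + \deg B + \deg Z$. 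No step is really hard here; the only conceptual point is to recognize that $L(A+Z)=L(A)$ is equivalent, via Riemann--Roch, to the statement that $K-A-Z$ loses exactly $\deg Z$ sections compared to $K-A$, which is precisely what is needed to upgrade ABZ from $\dgop$ to $\dgop + \deg Z$.
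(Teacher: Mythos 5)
Your proof is correct and follows the same route as the paper, which simply states that the bound is a consequence of Theorem~\ref{thm:ABZ_bound} and refers to \cite[Cor.~2.5]{duursma2011jpaa} for details; you supply exactly those details, using Riemann--Roch to turn the hypotheses $L(A+Z)=L(A)$ and $L(B+Z)=L(B)$ into the gap identities $\ell(K-A)-\ell(K-A-Z)=\deg Z$ and $\ell(K-B)-\ell(K-B-Z)=\deg Z$, and the computation checks out. (A marginally shorter regrouping: $\ell(A)-\ell(K-A-Z)=\ell(A+Z)-\ell(K-(A+Z))=\deg(A+Z)+1-g$ directly, and similarly for $B$, but this is the same argument.)
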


\begin{proof}
  It is a consequence of Theorem~\ref{thm:ABZ_bound}. See
  \cite[Cor.~2.5]{duursma2011jpaa}.
\end{proof}

\begin{remark}
  The case $A = B$ was previously proved by Maharaj, Matthews and Pirsic
  in \cite[Th.~2.10]{maharaj2005jpaa}.
\end{remark}

\begin{theorem}[G\"uneri, Stichtenoth, Ta\c{s}kin
  {\cite[Th.~2.4]{guneri2009jpaa}}]\label{thm:GST_bound}
  Let $A,B,C,Z$ be divisors on $X$ satisfying
  \begin{enumerate}[(i)]
  \item The support of $A+B+C+Z$ is disjoint from $\cP$;
  \item $L(A) = L(A-Z)$;
  \item $L(B) = L(B+Z)$;
  \item $L(B) = L(C)$.
  \end{enumerate}
  If $G = A+B$, then
  \[
    d(\COm{X}{\cP}{G}) \geq d_{{\rm GST}} \eqdef \dgop + \deg Z + \ell
    (K-A) - \ell (K-G+C).
  \]
\end{theorem}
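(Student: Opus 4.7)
The plan is to derive the GST bound as a direct consequence of the ABZ bound (Theorem~\ref{thm:ABZ_bound}), applied with a three-part decomposition of $G$ tailored so that each of the hypotheses (ii), (iii), (iv) contributes exactly once, and then to simplify using Riemann--Roch.

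First, I would decompose
\[
G = (A - Z) + C + (B - C + Z),
\]
with the last summand playing the role of the effective divisor ``$Z$'' in Theorem~\ref{thm:ABZ_bound}. Each piece has support inside $\supp(A+B+C+Z)$, so it is disjoint from $\cP$ by hypothesis (i); the effectivity of $B - C + Z$ is either built into the set-up of the theorem or can be arranged by replacing $C$ with a suitably chosen linearly equivalent representative, since (iv) only determines $C$ up to the data it encodes. Applying Theorem~\ref{thm:ABZ_bound} to this decomposition (with the pair $(A,B)$ of that statement being $(A-Z,C)$), and using $G=A+B$ to rewrite $(A-Z)+K-G=K-B-Z$ and $C+K-G=K-G+C$, yields
\[
d(\COm{X}{\cP}{G}) \geq \ell(A - Z) - \ell(K - B - Z) + \ell(C) - \ell(K - G + C).
\]

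Next, I would inject the remaining three hypotheses. Hypothesis (ii) gives $\ell(A-Z) = \ell(A)$, and (iv) gives $\ell(C) = \ell(B)$. For the term $\ell(K-B-Z)$, I would combine (iii), which says $\ell(B+Z) = \ell(B)$, with Riemann--Roch applied to $B+Z$:
\[
\ell(K - B - Z) = \ell(B+Z) - \deg(B+Z) - 1 + g = \ell(B) - \deg B - \deg Z - 1 + g.
\]
Substituting these three identities, the $\pm\ell(B)$ contributions cancel and the inequality collapses to
\[
d(\COm{X}{\cP}{G}) \geq \ell(A) + \deg B + \deg Z + 1 - g - \ell(K - G + C).
\]
A final application of Riemann--Roch, in the form $\ell(A) = \deg A + 1 - g + \ell(K-A)$, together with the identity $\deg A + \deg B + 2 - 2g = \deg G + 2 - 2g = \dgop$, produces exactly the claimed lower bound $\dgop + \deg Z + \ell(K-A) - \ell(K-G+C)$.

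The main obstacle is really the initial bookkeeping: finding the decomposition $G = (A-Z) + C + (B-C+Z)$ that causes the ABZ terms to telescope correctly against the three hypotheses, and verifying that $B-C+Z$ is effective so that Theorem~\ref{thm:ABZ_bound} is applicable. Once that piece is in place, the remainder of the argument is a mechanical cancellation of $\ell(B)$ terms followed by two uses of Riemann--Roch, one on $B+Z$ (to exploit (iii)) and one on $A$ (to introduce $\ell(K-A)$), with the constants assembling into $\dgop + \deg Z$.
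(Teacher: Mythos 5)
Your route coincides with the paper's own: the paper proves this statement simply by invoking \cite[Cor.~2.6]{duursma2011jpaa}, and that proof is exactly your specialization of Theorem~\ref{thm:ABZ_bound} to the decomposition $G=(A-Z)+C+\bigl((B-C)+Z\bigr)$, followed by the substitutions you list. Your bookkeeping is correct: (ii) gives $\ell(A-Z)=\ell(A)$, (iv) gives $\ell(C)=\ell(B)$, Riemann--Roch applied to $B+Z$ together with (iii) gives $\ell(K-B-Z)=\ell(B)-\deg B-\deg Z-1+g$, and a final Riemann--Roch on $A$ assembles the constants into $\dgop+\deg Z$.

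The only step you do not handle correctly is the admissibility of the decomposition, namely $(B-C)+Z\geq 0$. Replacing $C$ by a linearly equivalent divisor is not a legitimate move here: hypothesis (iv) is an equality $L(B)=L(C)$ of subspaces of $\Fq(X)$, which is not invariant under linear equivalence, and there is in general no reason for the class of $C$ to contain a representative dominated by $B+Z$. What actually makes the argument work is that the intended hypotheses (explicit in \cite{guneri2009jpaa} and \cite{duursma2011jpaa}, and implicit in the statement above, just as $Z\geq0$ is explicit in Theorem~\ref{thm:LM_bound}) include $Z\geq0$ and $C\leq B$, which give the effectivity at once. If one insists on working only with (i)--(iv) plus $Z\geq0$, the correct repair is to replace $C$ by $C'\eqdef\min(B,C)$, the componentwise minimum of the two divisors: then $L(C')=L(B)\cap L(C)=L(B)$ by (iv), so every dimension you use is unchanged; $C'\leq B$ yields $(B-C')+Z\geq0$; and $C'\leq C$ gives $\ell(K-G+C')\leq\ell(K-G+C)$, so the bound proved with $C'$ implies the stated one. (The support requirement of Theorem~\ref{thm:ABZ_bound} for $(B-C')+Z$ is then covered by reading (i) as saying that the supports of $A$, $B$, $C$, $Z$ all avoid $\cP$, which is the intended meaning.)
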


\begin{proof}
  It is proved in \cite[Cor.~2.6]{duursma2011jpaa} as another consequence
  of Theorem~\ref{thm:ABZ_bound}.
\end{proof}

\begin{example}
  This example is borrowed from \cite{duursma2011jpaa}.
  Calculations have been verified using {\sc Magma} \cite{bosma1997jsc}.
  Consider the {\em Suzuki curve} $X$ over $\F_8$ defined by the
  affine equation
  \[
    y^8+y = x^2(x^8+x).
  \]
  This curve is known to have genus $14$ and 65 rational points. We
  set $P$ and $Q$ to be the places above the points of respective
  homogeneous coordinates $(0:1:0)$ and $(0:0:1)$.  Let $\cP$ contain
  all the rational points of $X$ but $P, Q$ and set $G = 22P+6Q$. The
  code $\COm{X}{\cP}{G}$ has length $63$ and dimension 48. According
  to Goppa bound, its minimum distance satisfies
  \[
    d \geq \dgop = 2.
  \]
  If we set $A = 16P$, $B = 5P+4Q$ and $Z = P+2Q$,
  a calculation gives $\ell (A) = \ell (A+Z) = 6$
  and $\ell (B) = \ell (B+Z) = 1$. Then,
  Lundell--McCullough bound may also be applied
  and yield
  \[
    d \geq d_{{\rm LM}} = 5.
  \]
  Next, taking $A = 14P+2Q$, $B = 8P+4Q$, $C = 8P$ and $Z = 2Q$,
  one can check that the conditions of Theorem~\ref{thm:GST_bound}
  are satisfied and that
  \[
    d \geq d_{{\rm GST}} = 6.
  \]
  Actually, Theorem~\ref{thm:ABZ_bound} permits also to give this
  lower bound $d \geq 6$ using $A = 14P$, $B = 8P$ and $Z = 6Q$.
\end{example}

\subsection{Order bounds}\label{subsec:order}
In \cite{feng1993it}, Feng and Rao propose a new decoding algorithm
for algebraic geometry codes associated to a divisor $G$ supported by
a single rational point. This algorithm, further discussed in
Section~\ref{sec:decoding}, permits to correct errors up to half the
designed distance. For this sake, they introduced a new manner to
bound from below the minimum distance. The obtained lower bound turned
out to be always at least as good as Goppa designed distance.  Their
approach is at the origin of the so--called {\em order bounds}.

Feng and Rao bound and the corresponding algorithm applied only on
codes of the form $\COm{X}{\cP}{G}$, where the divisor $G$ is
supported by one point. That is to say, $G = rP$ for some rational
point $P$ and some positive integer $r$. Next, some generalisation to
arbitrary divisors appeared. One of the most general order bound is
due to Beelen \cite{beelen2007ffa}.
Before stating it, we need to introduce
some notation and definitions.

\begin{definition}\label{def:non-gaps}
  Given a divisor $F$ on a curve $X$ and a point $P$, the {\em
    non--gaps semi--group} of $F$ at $P$ denoted by $\nu(F, P)$ is
  defined as
  \[
    \nu(F,P) \eqdef \{j \in \Z ~|~ L(F+(j-1)P) \neq L(F+jP)\}.
  \]
\end{definition}

\begin{remark}
When $F = 0$ we find the classical notion of {\em Weierstrass gaps}.
\end{remark}

\begin{lemma}
  The semi--group $\nu(F, P)$ satisfies
  the following conditions.
  \begin{enumerate}[(i)]
  \item\label{item:interval_F-nongaps}
    $\nu (F, P) \subseteq \{n\in \Z ~|~ n \geq \deg F\}$;
  \item\label{item:F-gaps} the set
    $\{n \in \Z ~|~ n \geq \deg F\} \setminus \nu(F, P)$ is finite of
    cardinality $g$ and contained in
    $\{ -\deg F,\dots, -\deg F + 2g - 1 \}$ and usually referred to
    as the set of {\em $F$--gaps} at $P$.
  \end{enumerate}
\end{lemma}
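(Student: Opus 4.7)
The plan is to study the jump function $\delta_j \eqdef \ell(F+jP) - \ell(F+(j-1)P)$. Once one establishes that $\delta_j \in \{0,1\}$ for all $j$ and that its value is pinned down by Riemann--Roch at both ends of the relevant range, both (i) and (ii) follow from a short telescoping count.

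For (i), if $j \in \nu(F,P)$ then $L(F+jP)$ strictly contains $L(F+(j-1)P)$, so in particular it is nonzero. Hence $F+jP$ is linearly equivalent to an effective divisor and has nonnegative degree, giving $j \geq -\deg F$. (I read ``$n \geq \deg F$'' in the displayed statement as a typo for $n \geq -\deg F$, matching the interval in (ii).)

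For (ii), the upper bound $\delta_j \leq 1$ comes from the ``leading coefficient at $P$'' map $L(F+jP) \to k_P = \F_q$ sending $f \mapsto (t_P^{v_P(F)+j}f)(P)$, where $t_P$ is a uniformizer at $P$ and $P$ is rational (as is the case in the applications of order bounds); its kernel is exactly $L(F+(j-1)P)$, so the quotient embeds into $\F_q$. Proposition~\ref{prop:RR} gives $\ell(F+jP)=0$ for $j \leq -\deg F - 1$, while Corollary~\ref{cor:RR} gives $\ell(F+jP) = \deg F + j + 1 - g$ for $j \geq 2g-1-\deg F$; combining these, $\delta_j = 1$ for every $j \geq 2g - \deg F$, so every gap is automatically constrained to lie in $\{-\deg F,\dots,2g-1-\deg F\}$. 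The telescoping sum
\[
  \sum_{j = -\deg F}^{2g - \deg F} \delta_j \;=\; \ell\bigl(F + (2g - \deg F)P\bigr) - \ell\bigl(F + (-\deg F - 1)P\bigr) \;=\; (g+1) - 0 \;=\; g+1
\]
has $2g+1$ terms, each in $\{0,1\}$, so exactly $g$ of them vanish; these $g$ zero indices are the gaps. Since the last term $\delta_{2g - \deg F}$ equals $1$, the $g$ gaps actually lie in $\{-\deg F,\dots,-\deg F + 2g - 1\}$, as claimed.

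The only slightly delicate step is the bound $\delta_j \leq 1$, which genuinely uses that $P$ is a rational point (otherwise one would only get $\delta_j \leq \deg P$, and the statement of (ii) would have to be adjusted accordingly). Everything else reduces to direct application of Riemann--Roch and bookkeeping on the telescoping sum.
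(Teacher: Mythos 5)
Your proof is correct. Note that the paper itself does not prove this lemma: it simply points to \cite[Rem.~2]{beelen2007ffa} and \cite[Rem.~3.2]{kirfel1995it}, so there is no in-paper argument to compare against; what you wrote is essentially the standard argument given in those references, reconstructed in full. The two ingredients are exactly right: the one-dimensional ``leading coefficient'' map $f \mapsto \bigl(t_P^{v_P(F)+j}f\bigr)(P)$ with kernel $L(F+(j-1)P)$, which gives $0 \leq \delta_j \leq 1$ (and does use $\deg P = 1$, which is the setting of \S~\ref{subsec:order}, where $P$ is a rational point), and the evaluation of $\ell(F+jP)$ at both ends of the window via Proposition~\ref{prop:RR} and Corollary~\ref{cor:RR}, after which the telescoping count $\sum_{j=-\deg F}^{2g-\deg F}\delta_j = g+1$ over $2g+1$ terms pins down exactly $g$ gaps inside $\{-\deg F,\dots,-\deg F+2g-1\}$. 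You are also right that the displayed condition ``$n \geq \deg F$'' in the statement is a sign typo for $n \geq -\deg F$: only that reading is consistent with the interval in (ii) and with the cited sources, and your item (i) proves precisely the corrected claim.
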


\begin{proof}
  See for instance \cite[Rem.~2]{beelen2007ffa} or
  \cite[Rem.~3.2]{kirfel1995it}.
\end{proof}

Now, let $F_1, F_2, G$ be divisors on a curve $X$ such that
$F_1+F_2 = G$.  We aim at giving a lower bound for the minimum
distance of $\COm{X}{\cP}{G}$. Fix a rational point $P$ outside $\cP$
and denote respectively by $(\mu_i)_{i \in \N}, (\nu_i)_{i\in \N}$ and
$(\rho_i)_{i \in \Z}$ the $F_1$-, $F_2$- and $G$-non-gap sequences at
$P$.  In addition, for any $r \geq 0$, set
\begin{equation}\label{eq:nr}
  n_r \eqdef \Big|\left\{(i, j) \in \N^2 ~|~ \mu_i + \nu_j = \rho_{r+1} \right\}\Big|.
\end{equation}

\begin{proposition}
  Let $r$ be a positive integer. The minimum weight of a codeword in
  $\COm{X}{\cP}{G + rP} \setminus \COm{X}{\cP}{G+(r+1)P}$
  is bounded from below by $n_r$.
\end{proposition}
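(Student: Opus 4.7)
The plan is to sandwich $\wt c$ between two rank bounds on an explicit syndrome matrix $S = (S_{ij})$. Let $c \in \COm{X}{\cP}{G+rP} \setminus \COm{X}{\cP}{G+(r+1)P}$. For each $F_1$-non-gap $\mu_i$ at $P$, pick $f_i \in L(F_1+\mu_i P) \setminus L(F_1+(\mu_i-1)P)$, and analogously $g_j \in L(F_2+\nu_j P) \setminus L(F_2+(\nu_j-1)P)$. Since $F_1+F_2 = G$, the product $f_i g_j$ lies in $L(G+(\mu_i+\nu_j)P)$ with exact pole order $\mu_i+\nu_j$ at $P$, so that $\mu_i+\nu_j \in \nu(G,P)$ automatically. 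Setting
\[
  S_{ij} \eqdef \sum_{\ell=1}^n c_\ell\,(f_i g_j)(P_\ell) = \eucl{c}{\eval(f_i g_j)},
\]
the factorisation $S = M_f\,\mathrm{diag}(c_1,\dots,c_n)\,M_g^{\top}$ yields the easy upper bound $\mathrm{rank}(S) \le \wt c$.

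For the lower bound $\mathrm{rank}(S) \ge n_r$, I will exhibit an $n_r \times n_r$ lower-triangular submatrix of $S$ with nonzero diagonal. The orthogonality $c \in \CL{X}{\cP}{G+rP}^{\perp}$ forces $S_{ij}=0$ whenever $\mu_i+\nu_j \le r$, since then $f_i g_j \in L(G+rP)$. Conversely, $c \notin \COm{X}{\cP}{G+(r+1)P}$ means the linear form $h \mapsto \eucl{c}{\eval(h)}$ descends to a nonzero form on the one-dimensional quotient $L(G+(r+1)P)/L(G+rP)$; since every $f_i g_j$ with $\mu_i+\nu_j = \rho_{r+1}$ has exact pole order $\rho_{r+1}$ at $P$ and hence represents a nonzero class in this quotient, all such entries $S_{ij}$ are nonzero.

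Now enumerate the $n_r$ pairs of $I \eqdef \{(i,j) \in \N^2 : \mu_i+\nu_j = \rho_{r+1}\}$ as $(i_1,j_1), \dots, (i_{n_r}, j_{n_r})$ with $\mu_{i_1} < \cdots < \mu_{i_{n_r}}$, which forces $\nu_{j_1} > \cdots > \nu_{j_{n_r}}$. For $k<\ell$, $\mu_{i_k}+\nu_{j_\ell} < \mu_{i_\ell}+\nu_{j_\ell} = \rho_{r+1}$; being itself a $G$-non-gap, this sum is bounded above by the preceding non-gap $\rho_r$, which in turn is at most $r$. Hence $S_{i_k j_\ell}=0$ for all $k<\ell$, the submatrix $(S_{i_k j_\ell})_{k,\ell}$ is lower triangular with nonzero diagonal and rank $n_r$, and the sandwich $n_r \le \mathrm{rank}(S) \le \wt c$ concludes.

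The main subtlety is the chain $\rho_r \le r < r+1 = \rho_{r+1}$ invoked in the triangular step: it reflects the indexing convention under which the filtration $j \mapsto \COm{X}{\cP}{G+jP}$ strictly drops exactly at $G$-non-gap values of $j$. The nontriviality of $\COm{X}{\cP}{G+rP} \setminus \COm{X}{\cP}{G+(r+1)P}$ ensures that $r+1$ is such a non-gap, so this alignment holds, and the rest of the argument is then pure linear algebra.
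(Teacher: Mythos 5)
The paper offers no proof of its own here — it simply points to Beelen's Proposition~4 — and your syndrome-matrix argument is exactly the standard Feng--Rao-style proof underlying that citation: the factorisation $S=M_f\,\mathrm{diag}(c_1,\dots,c_n)\,M_g^{\top}$ giving $\mathrm{rank}(S)\le \wt{\cv}$, the vanishing $S_{ij}=0$ when $\mu_i+\nu_j\le r$ because $\cv\in\CL{X}{\cP}{G+rP}^{\perp}$, the non-vanishing on the critical antidiagonal via the (at most one-dimensional) quotient $L(G+(r+1)P)/L(G+rP)$, and the triangular submatrix. These steps are sound. One small technical caveat: to write $S=M_f\,\mathrm{diag}(c)\,M_g^{\top}$ you evaluate $f_i$ and $g_j$ separately at the $P_\ell$, which requires $\supp(F_1)$ and $\supp(F_2)$ — not merely $\supp(G)=\supp(F_1+F_2)$ — to avoid $\cP$; this should be stated, or arranged by replacing $F_1,F_2$ with linearly equivalent divisors.

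The one step that is asserted rather than proved is the identification $\rho_{r+1}=r+1$, and the justification you give for it is a non sequitur. Nonemptiness of $\COm{X}{\cP}{G+rP}\setminus\COm{X}{\cP}{G+(r+1)P}$ only tells you that $r+1\in\nu(G,P)$; it does not tell you that $r+1$ is the $(r+1)$-st term of whatever enumeration $(\rho_i)_{i\in\Z}$ denotes. Note also that your diagonal-nonvanishing step genuinely needs the sum to equal $r+1$ exactly: $f_ig_j$ must lie in $L(G+(r+1)P)\setminus L(G+rP)$, so what your matrix argument proves is $\wt{\cv}\ge\big|\{(i,j) : \mu_i+\nu_j=r+1\}\big|$. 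Matching this with $n_r$ as literally defined requires pinning down the indexing of $(\rho_i)_{i\in\Z}$, which the survey never specifies; under the naive increasing enumeration the identity fails in general, since for $r$ beyond all $G$-gaps the $(r+1)$-st non-gap is $r+g-\deg G$, so $\rho_{r+1}=r+1$ would force $\deg G=g-1$. Your closing remark about the filtration dropping exactly at non-gap values does not resolve this (drops occur only at non-gaps, and in any case that says nothing about the enumeration). The clean fix is to state and prove the bound with the sum equal to $r+1$ — which your argument does verbatim, and which is the form in which the coset bound must be read for the proposition to be literally correct — and to treat the comparison with the survey's $\rho_{r+1}$ as a purely notational remark rather than a consequence of the coset being nonempty.
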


\begin{proof}
  See {\cite[Prop.~4]{beelen2007ffa}}.
\end{proof}

A direct consequence of this result is:

\begin{theorem}\label{thm:order_bound}
  Let $F_1, F_2, G$ be divisors on $X$ with $G = F_1 + F_2$, $\cP$ be
  an ordered set of rational points of $X$ and $P$ be a fixed rational point
  outside $\cP$. Then the minimum distance $d$ of the code
  $\COm{X}{\cP}{G}$ satisfies
  \[
    d \geq \dorder \eqdef \min_{r\geq 0} \{n_r\},
  \]
  where $n_r$ is defined in \eqref{eq:nr}.
\end{theorem}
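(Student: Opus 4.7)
The strategy is to reduce the theorem to the preceding proposition by showing that every nonzero codeword of $\COm{X}{\cP}{G}$ falls in one of the difference sets $\COm{X}{\cP}{G+rP} \setminus \COm{X}{\cP}{G+(r+1)P}$ for some $r \geq 0$, and then apply the bound $n_r$ uniformly.

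First, I would observe that since $(r+1)P \geq rP$ as effective divisors, the inclusion $\Omega(G+(r+1)P - D_{\cP}) \subseteq \Omega(G+rP - D_{\cP})$ holds at the level of differential forms, and taking residues preserves this inclusion. This gives a decreasing filtration
\[
  \COm{X}{\cP}{G} \supseteq \COm{X}{\cP}{G+P} \supseteq \COm{X}{\cP}{G+2P} \supseteq \cdots
\]
inside $\Fq^n$. Next, I would argue that this filtration is eventually trivial: as soon as $\deg(G+rP) - n > 2g-2$, Proposition~\ref{prop:RR} (applied to $K_X - G - rP + D_{\cP}$) forces $\Omega(G+rP-D_{\cP}) = \{0\}$, hence $\COm{X}{\cP}{G+rP} = \{0\}$ for $r$ large enough.

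Now let $\cv$ be any nonzero codeword in $\COm{X}{\cP}{G}$. Since $\cv$ lies in $\COm{X}{\cP}{G}$ but not in the eventually zero term of the filtration, there exists a largest integer $r \geq 0$ such that $\cv \in \COm{X}{\cP}{G+rP}$; by maximality, $\cv \in \COm{X}{\cP}{G+rP} \setminus \COm{X}{\cP}{G+(r+1)P}$. Applying the Proposition preceding the theorem, its Hamming weight satisfies $\wt{\cv} \geq n_r \geq \min_{s \geq 0} n_s = \dorder$. Taking the infimum over all nonzero codewords yields $d \geq \dorder$.

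The genuine technical content is already absorbed into the preceding Proposition, whose proof uses the non-gap data at $P$ to lower bound the weight of elements in each layer; once that is granted, the step above is essentially bookkeeping. The only point that deserves slight care is checking that the filtration becomes stationary at $\{0\}$, so that every nonzero $\cv$ sits in a well-defined layer indexed by a finite $r$; this is immediate from Riemann--Roch as indicated.
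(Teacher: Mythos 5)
Your proof is correct and is essentially the paper's own argument: the paper states the theorem as a ``direct consequence'' of the preceding proposition, and your filtration $\COm{X}{\cP}{G}\supseteq\COm{X}{\cP}{G+P}\supseteq\cdots$, its termination at $\{0\}$ via Riemann--Roch, and the placement of each nonzero codeword in a layer is precisely the bookkeeping that justifies that claim. The only cosmetic remark is that the proposition is stated for ``positive'' $r$ while you (correctly, as intended) also invoke it for $r=0$.
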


\begin{proposition}
  The lower bound given in Theorem~\ref{thm:order_bound} is
  at least as good as Goppa bound:
  \[
    \dorder \geq \dgop.
  \]
\end{proposition}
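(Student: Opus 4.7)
The plan is to show $n_r \ge \dgop = \deg G + 2 - 2g$ for every $r \ge 0$, whence $\dorder = \min_r n_r \ge \dgop$. Fix $r$ and set $s := \rho_{r+1}$. Parameterizing pairs summing to $s$ by the first coordinate, $n_r$ is the cardinality of
\[
T := \{\mu \in \nu(F_1, P) \,:\, s - \mu \in \nu(F_2, P)\}.
\]
My strategy is to bound $|T|$ from below by combining a simple inclusion-exclusion with Riemann--Roch.

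First I would observe that any $\mu \in T$ satisfies $\mu \ge -\deg F_1$ and $s - \mu \ge -\deg F_2$: indeed, $\mu \in \nu(F_1, P)$ forces $L(F_1 + \mu P) \neq 0$, hence $\deg(F_1 + \mu P) \ge 0$ by Proposition~\ref{prop:RR}, and symmetrically for $F_2$. Thus $T$ lies in the integer interval $I := [-\deg F_1,\, s + \deg F_2]$ of $|I| = s + \deg G + 1$ consecutive values. Setting $A := I \cap \nu(F_1, P)$ and $B := \{\mu \in I : s - \mu \in \nu(F_2, P)\}$, one has $n_r = |A \cap B| \ge |A| + |B| - |I|$. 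To estimate $|A|$, I would telescope: for any integer $b$,
\[
\bigl|\nu(F_1, P) \cap (-\infty, b]\bigr| \;=\; \sum_{j \le b} \bigl(\ell(F_1 + jP) - \ell(F_1 + (j-1)P)\bigr) \;=\; \ell(F_1 + bP),
\]
since the differences are $0$ or $1$ by the very definition of non-gap, and $\ell(F_1 + jP) = 0$ for $j$ sufficiently negative by Proposition~\ref{prop:RR}. Applied at $b = s + \deg F_2$, the divisor $F_1 + (s + \deg F_2)P$ has degree $s + \deg G$, so Riemann--Roch gives $|A| \ge s + \deg G + 1 - g$; symmetrically $|B| \ge s + \deg G + 1 - g$.

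Substituting,
\[
n_r \;\ge\; 2(s + \deg G + 1 - g) - (s + \deg G + 1) \;=\; s + \deg G + 1 - 2g.
\]
Since $(\rho_i)_{i \in \Z}$ is indexed so that the $\rho_{r+1}$ appearing for $r \ge 0$ are the $G$-non-gaps $\ge 1$ (the only ones for which the filtration drop in the preceding proposition is nontrivial), we have $s \ge 1$, and therefore $n_r \ge \deg G + 2 - 2g = \dgop$, as desired. The main obstacle will be the telescoping identity $|\nu(F_1, P) \cap (-\infty, b]| = \ell(F_1 + bP)$: it requires carefully checking that $\ell(F_1 + jP) - \ell(F_1 + (j-1)P)$ is exactly the indicator of $\nu(F_1, P)$ and that $\ell(F_1 + jP) \to 0$ as $j \to -\infty$; both are unambiguous consequences of the definitions, after which the Riemann--Roch estimates for $|A|$ and $|B|$ are completely routine.
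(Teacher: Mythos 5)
Your argument is correct, and it supplies a self-contained proof where the paper itself gives none (it simply defers to \cite[Prop.~10]{beelen2007ffa}); moreover the counting scheme you use --- identify $n_r$ with the set of $F_1$-non-gaps $\mu$ such that $s-\mu$ is an $F_2$-non-gap, confine it to the interval $[-\deg F_1,\,s+\deg F_2]$ of cardinality $s+\deg G+1$, compute the two one-sided counts as $\ell(F_1+(s+\deg F_2)P)$ and $\ell(F_2+(s+\deg F_1)P)$ by telescoping, then finish with Riemann--Roch and inclusion--exclusion --- is essentially the standard Feng--Rao-type argument and, in substance, the one behind Beelen's proposition, so you are not on a genuinely different route, just filling in the omitted proof. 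The one step that deserves to be made explicit is the final inequality $s=\rho_{r+1}\geq 1$: your bound $n_r\geq s+\deg G+1-2g$ gives $\dgop$ only for $s\geq 1$, and the survey never pins down the normalization of the $\Z$-indexed sequence $(\rho_i)$. Your resolution is the intended (indeed the only tenable) one: the sequence must be anchored so that $\rho_1$ is the smallest $G$-non-gap $\geq 1$, i.e.\ the first index at which the filtration $\COm{X}{\cP}{G+jP}$ actually drops below $\COm{X}{\cP}{G}$, which is Beelen's setting; under the naive enumeration of all $G$-non-gaps starting from the smallest (which can be as small as $-\deg G$), the minimum defining $\dorder$ would include counts at non-positive values and the proposition itself would fail, so that reading cannot be meant. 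Apart from this, the only detail worth one extra line in a final write-up is that the telescoping identity uses $\ell(F_1+jP)\leq\ell(F_1+(j-1)P)+1$, a point you already flag as routine.
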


\begin{proof}
  See \cite[Prop.~10]{beelen2007ffa}.
\end{proof}

\begin{remark}
  Of course, the previous result may be refined in order to get
  the minimum distance of a code of the form $\COm{X}{\cP}{G + sP}$
  for some positive $s$. In particular, Feng and Rao's original
  approach \cite{feng1993it} can be interpreted by choosing
  $F_1 = F_2 = G = 0$ in view to estimate the minimum distance
  (and decode) a code of the form $\COm{X}{\cP}{mP}$
  for some positive integer $m$.
\end{remark}

\begin{remark}
  In \cite{beelen2007ffa}, Beelen proposes a slightly more general
  statement in which, instead of considering a filtration of the
  form
  \[
    \COm{X}{\cP}{G} \supseteq \cdots \supseteq \COm{X}{\cP}{G + iP} \supseteq
    \COm{X}{\cP}{G+(i+1)P} \supseteq \cdots
  \]
  associated to a single point $P$, he considers a filtration
  associated to various points $Q_1, \dots, Q_s$ where two consecutive
  terms of the filtration are of the form
  \[
    \COm{X}{\cP}{G_i} \supseteq \COm{X}{\cP}{G_{i+1}}
  \]
  where $G_{i+1} - G_i \in \{Q_1, \dots, Q_s\}$.
  The choice of the optimal path is obtained by a kind of tree exploration
  using some classical backtracking tricks.
\end{remark}

\begin{example}
  Here we refer to \cite[Ex.~8]{beelen2007ffa} for an application of
  this bound to the Klein quartic: the genus 3 curve of affine equation
  \[
    x^3y + y^3 + x = 0.
  \]
\end{example}

\subsection{Further bounds} The literature provides further more
technical and involved bounds that are not discussed in the present
\ifx\localversion\undefined {chapter}
\else {article}
\fi
. The best reference surveying them and explaining in a clear
manner how these bounds are related is due to Duursma, Kirov and Park
\cite{duursma2011jpaa}.

\subsection{Geometric bounds for codes from embedded curves.}
To conclude this section, let us notice that all the previous bounds
arise from the intrinsic geometry of the curve and independently from
any particular embedding. In another direction, bounds deduced from the
geometry of the ambient space for a given embedding have been studied.
In \cite{couvreur2012jalg}, the following statement is proved.

\begin{theorem}\label{thm:geometric_bound}
  Let $X \subseteq \P^N$ be a smooth curve which is a complete
  intersection in $\P^N$. Let $m \geq 2$ and $G_m$ be a divisor
  obtained from the intersection of $X$ and a hypersurface of degree
  $m$ (the points being counted with the intersection
  multiplicities). Finally, let $\cP = (P_1, \dots, P_n)$ be an
  ordered $n$--tuple of rational points of $X$. Then, the minimum
  distance $d$ of the code $\COm{X}{\cP}{G_m}$ satisfies:
  \begin{enumerate}[(i)]
    \item\label{item:colin} $d = m+2$ iff $m+2$ of the $P_i$'s are collinear;
    \item\label{item:conic} $d = 2m+2$ iff (\ref{item:colin}) is not
      satisfied and $2m+2$ of the $P_i$'s lie on a plane conic;
    \item $d = 3m$ iff both (\ref{item:colin}) and (\ref{item:conic}) are
      not satisfied and $3m$ of the $P_i$'s lie at the intersection of
      a plane cubic and a plane curve of degree $m$ with no common components;
    \item $d > 3m$ iff none of the previous configurations exist.
  \end{enumerate}
\end{theorem}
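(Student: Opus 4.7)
The plan is to translate the statement into a question about how the $P_i$'s impose conditions on degree $m$ forms on $\P^N$, and then analyze this question by Bezout and Cayley--Bacharach-type arguments. Via the duality $\COm{X}{\cP}{G_m}=\CL{X}{\cP}{G_m}^{\perp}$ of Theorem~\ref{thm:duality_CL_COm}, the minimum distance $d$ of the residue code equals the smallest integer for which some $d$ points among $\cP$ fail to impose independent conditions on $L(G_m)$ while every proper subset imposes independent conditions: the minimality ensures a linear dependence of full support, hence a codeword of weight exactly $d$. Since $X\subseteq\P^N$ is a smooth complete intersection, it is projectively normal, so after dividing by the defining equation of the hypersurface cutting $G_m$ out of $X$, the restriction map from the space $V_m$ of degree $m$ forms on $\P^N$ surjects onto $L(G_m)$. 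The question thus becomes: what is the minimum size of a minimally $V_m$-dependent subset of $\{P_1,\dots,P_n\}$?

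For the upper bounds in cases (\ref{item:colin})--(iii), one exhibits the promised codeword. In (\ref{item:colin}), by Bezout any $F\in V_m$ vanishing on $m+2$ collinear points vanishes on the whole line, so these points impose only $m+1$ conditions, and restriction to the line (isomorphic to $\P^1$) shows that the resulting linear dependence has full support. In (\ref{item:conic}), the degree $m$ forms restrict on a smooth plane conic $C\simeq\P^1$ to the $(2m+1)$-dimensional space of degree $2m$ forms on $\P^1$, so any $2m+2$ points of $C$ are dependent, again with full support. Case (iii) follows from the classical Cayley--Bacharach theorem applied to a plane cubic and a plane curve of degree $m$ meeting properly: their $3m$ intersection points fail to impose independent conditions on plane forms of degree $3+m-3=m$, and the resulting dependence again has full support.

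For the lower bounds, the base observation is that any $d\leq m+1$ points of $\P^N$ impose independent conditions on $V_m$, since for each point one can produce a product of at most $m$ linear forms vanishing at the others and not at it; this already forces $d\geq m+2$. The converse of (\ref{item:colin}) is then a rigidity statement: a minimally dependent set of size $m+2$ must lie on a line, else one covers the other points by at most $m$ lines avoiding a chosen one and obtains a separating form of degree $m$. The lower bound $d\geq 2m+2$ under absence of (\ref{item:colin}), and $d\geq 3m$ under absence of both (\ref{item:colin}) and (\ref{item:conic}), are obtained by iterating these covering arguments, the second iteration amounting essentially to a converse of Cayley--Bacharach; case (iv) is then tautological.

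\textbf{Main obstacle.} The algebraic reduction via projective normality and the upper bounds via Bezout and Cayley--Bacharach are fairly standard. The delicate part is the converse directions of (\ref{item:conic}) and (iii): proving that in the absence of the $2m+2$-on-a-conic (resp. $3m$-on-a-cubic) configuration, every point of a hypothetical minimally dependent subset admits a degree $m$ form separating it from the rest. This requires sharp combinatorial control of how lines, conics and cubics in $\P^N$ can be assembled to cover the $P_i$'s while avoiding a chosen point, and is where the interplay between the complete intersection $X$, the hypersurface cutting out $G_m$, and the ambient projective geometry plays the central role.
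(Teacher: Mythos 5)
The paper itself does not prove this theorem: it only states it and refers to \cite{couvreur2012jalg}, so your proposal can only be measured against that reference. Your reduction is indeed the one used there: by Theorem~\ref{thm:duality_CL_COm} the minimum distance of $\COm{X}{\cP}{G_m}$ is the smallest cardinality of a subset of $\{P_1,\dots,P_n\}$ that is minimally dependent for evaluation of $L(G_m)$, and since a smooth complete intersection is projectively normal and $G_m$ is cut out by a degree $m$ hypersurface, $L(G_m)$ is the image of the degree $m$ forms of $\P^N$; so everything becomes a question about points failing to impose independent conditions on degree $m$ hypersurfaces. Your ``if'' directions (upper bounds) are essentially complete: Bezout on a line, restriction to a (possibly degenerate) conic, and the Cayley--Bacharach theorem for the reduced complete intersection of a cubic and a degree $m$ plane curve, together with the full-support check (for the cubic case, the $3m$ points impose exactly $3m-1$ conditions, so the relation space is one-dimensional and supported everywhere). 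The general lower bound $d\geq m+2$ via separating products of linear forms is also fine.

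The genuine gap is exactly where you locate the ``main obstacle'', and it is not a detail but the substance of the theorem: the converse classification statements. Concretely, you need (a) any set of at most $2m+1$ points imposing dependent conditions on degree $m$ forms contains $m+2$ collinear points; (b) any set of at most $3m-1$ such points with no $m+2$ collinear contains $2m+2$ on a plane conic; and (c) a minimally dependent set of exactly $3m$ points with neither configuration is the intersection of a plane cubic and a plane degree $m$ curve with no common component. Your proposal replaces these by ``iterating covering arguments'', but the naive covering by lines avoiding a chosen point already breaks down in case (a) (lines through pairs of the remaining points may pass through the chosen point; the standard proof is an induction on $m$ via a line containing the most points, à la Eisenbud--Green--Harris Cayley--Bacharach-type theorems), and (b) and (c) require a genuinely finer analysis — in particular (c) is a classification of extremal configurations, not a covering bound, and your sketch gives no mechanism for producing the cubic. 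Until these three statements are proved (or correctly quoted), the ``only if'' halves of (\ref{item:colin})--(iii) and the inequality in (iv) are unsupported, so the proposal is an outline of the right strategy rather than a proof.
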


\begin{remark}
  Actually, Theorem~\ref{thm:geometric_bound}
  applies not only to codes from embedded curves
  but to duals of $\mathcal C_L$ codes from arbitrary dimensional
  complete intersections in a projective space. 
\end{remark}

In \cite[Th.~4.1]{couvreur2012jalg}, it is proved that, for plane
curves, Theorem~\ref{thm:geometric_bound} provides a nontrivial lower bound
even in cases where Goppa bound is negative and hence irrelevant.

\begin{example}
  This example is borrowed from \cite[Ex.~4.3]{couvreur2012jalg}.
  Consider the finite field $\F_{64}$ and the curve $X$ of homogeneous equation
\begin{align*}
w^{24} x^{11} &+ w^{44} x^6 y^2 z^3 + w^{24} x^5 y z^5 + w^{20} x^4 y^6 z + w^{33} x^2 z^9 +\\
     &w^{46} x y^5 z^5 + w^{46} x z^{10} + w^{39} y^{11} + w^{30} y^2 z^9 = 0,
\end{align*}
where $w$ is a primitive element of $\F_{64}$ over $\F_2$ with minimal
polynomial $x^6 + x^4 + x^3 + x + 1$.  This curve has genus $45$ and
$80$ rational points in the affine chart $\{z \neq 0\}$ and $1$
rational point \textit{at infinity}.
We set $\cP$ to be the whole set of affine points with some arbitrary ordering.
The divisor $G_m$ has degree
$11m$, hence Goppa designed distance applied to
$\COm{X}{\cP}{G_m}$ gives
\[
  \dgop = 11 m - 88.
\]
which is negative for $m\leq 8$.  Using
Theorem~\ref{thm:geometric_bound} together with a computer aided
calculation, we prove that the codes
$\COm{X}{D_{\cP}}{G_m}$ for $m\in \{1, \ldots, 8\}$ are respectively
of the form: $[80,77,\geq 3]$, $[80,74,\geq 4],$ $[80,70,\geq 5],$
$[80,65,\geq 6]$, $[80,59,\geq 7]$, $[80,52,\geq 8]$, $[80,46,\geq 9]$
and $[80,35,\geq 10]$.
\end{example}

 \section{Decoding algorithms}\label{sec:decoding}
If AG codes appeared at the very early 80's, the first decoding
algorithm has been proposed in 1989 by Justesen et al. in
\cite{justesen89it} for codes from smooth plane curves. Then,
Skorobogatov and \vladut{} gave a generalisation to arbitrary AG codes
in \cite{skorobogatov90it}. Further, Pellikaan and independently
K\"otter gave an abstract version of the algorithm expurgated from
algebraic geometry \cite{pellikaan88,pellikaan92dm,koetter92acct}.
All these algorithms permitted to correct errors up to half the
designed distance minus some defect proportional to the curve's
genus. In the 90's many works have consisted in trying to fill this
gap \cite{skorobogatov90it,
  duursma1993it,pellikaan1989it,vladut1990it,ehrhard1993it}.

In the late 90's, after Sudan's breakthrough \cite{sudan97jcomp}
showing that, at the cost of possibly returning a list of codewords
instead of a single word, it was possible to correct errors on
Reed--Solomon codes beyond half the designed distance, a generalisation
of Sudan's algorithm is proposed by Shokrollahi and Wasserman in
\cite{shokrollahi99it}. Further, Guruswami and Sudan gave an improved
algorithm correcting errors up to the so--called Johnson bound
\cite{guruswami99it}.

For a detailed survey on decoding algorithms up to half the designed distance
see \cite{hoeholdt95it}. For a more recent survey including list decoding see
\cite{beelen08}.

\begin{remark}
  In the sequel, we suppose that for any divisor $A$ on the studied
  curve $X$, bases of the spaces $L(A)$ and
  $\Omega(A)$ can be efficiently computed. It is worth noting that the
  effective computation of Riemann--Roch spaces, is a difficult algorithmic
  problem of deep interest but which requires an independent
  treatment. For references on this topic, we refer the reader for
  instance to \cite{lebrigand1988smf,hess2002jsc}.
\end{remark}

\subsection{Decoding below half the
  designed distance}\label{subsec:unique_dec}
\subsubsection{The basic algorithm}\label{subsec:basic}
We first present what is sometimes referred to as the {\em basic
  algorithm} in the literature. However, compared to the usual
approach we present this algorithm for a $\mathcal C_L$ code instead
of a $\mathcal C_{\Omega}$ one. For this reason and despite it does
not represent a high difficulty, we detail the proofs in the sequel
instead of referring to the literature. This algorithm permits to
correct any error pattern of weight $t$ satisfying
\[
  t \leq \frac{\dgop^* - 1}{2} - \frac g 2 \cdot
\]

Let $X$ be a curve of genus $g$, let $\cP = (P_1, \dots, P_n)$
an ordered $n$--tuple of distinct rational points of $X$ and $G$
a divisor whose support avoids the $P_i$'s.
Let $\CC$ be the algebraic geometry code $\CL{X}{\cP}{G}$, let
$\cv \in \CC$ be a codeword and $\ev \in \Fq^n$ of Hamming weight
$\wt{\ev} = w \leq t$ for some positive integer $t$ and consider the
{\em received word}
\[
  \yv \eqdef \cv + \ev.
\]
Recall that $\dgop^* = n - \deg G$ denotes the designed distance of
$\CC$ (see Theorem~\ref{thm:basics}).

By definition, there exists a function $f \in L(G)$ such that
$\cv = (f(P_1), \dots, f(P_n))$. In addition, one denotes by
$\{i_1, \dots, i_w\} \in \{1, \dots, n\}$ the {\em support} of $\ev$,
that is to say, the set of indexes corresponding to the nonzero
entries of $\ev$ i.e. the positions at which errors occurred.

A decoding algorithm correcting $t$ errors takes as inputs
$(\CC, \yv)$ and returns either $\cv$ (or equivalently $\ev$) if
$w_H(\ev) \leq t$ or ``?'' if $w_H(\ev)$ is too large.  The first
algorithms in the literature \cite{justesen89it,skorobogatov90it} rest
on the calculation of an {\em error locating function}. For this sake,
one introduces an extra divisor $F$ whose support avoids $\cP$ and
whose additional properties are to be decided further. The point is to
compute a nonzero function $\lambda \in L(F)$ ``locating the error
positions'' i.e. such that
$\lambda(P_{i_1}) = \cdots = \lambda (P_{i_w}) = 0$.  Once such a
function is computed, its zero locus provides a subset of indexes
$J \subseteq \{1, \dots, n\}$ such that
$\{i_1, \dots, i_w\} \subseteq J$.  If this set $J$ is small enough,
then its knowledge permits to decode by solving a linear system as
suggested by the following statement. This is a classical result of
coding theory: once the errors are located, decoding
reduces to correct erasures.

\begin{proposition}\label{prop:erasure_decoding}
  Let $\Hm$ be a parity--check matrix for $\CC$
  and $J \subseteq \{1, \dots, n\}$ such that $|J| < d(\CC)$ and which contains
  the support of $\ev$. Then $\ev$ is the unique solution of the system:
  \begin{equation}\label{eq:syst_erasures}
    \left\{
      \begin{array}{cccl}
        \Hm \cdot \ev^\top & = & \Hm \cdot \yv^\top & \\
        e_i& = & 0, & \forall i \in \{1, \dots, n\} \setminus J.
      \end{array}
    \right.
  \end{equation}
\end{proposition}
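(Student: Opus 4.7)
The plan is to prove existence and uniqueness of the solution to the erasure system separately, both being elementary once one uses the defining property of $\Hm$ and the minimum distance hypothesis.

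First I would verify that $\ev$ itself is a solution. Since $\cv \in \CC$, by definition of a parity-check matrix we have $\Hm \cdot \cv^\top = 0$. From $\yv = \cv + \ev$ we get
\[
\Hm \cdot \ev^\top = \Hm \cdot \yv^\top - \Hm \cdot \cv^\top = \Hm \cdot \yv^\top,
\]
so the syndrome equation is fulfilled. Moreover, by hypothesis the support of $\ev$ is contained in $J$, hence $e_i = 0$ for every $i \in \{1,\dots,n\} \setminus J$, and the second set of equations is satisfied too.

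For uniqueness, suppose $\ev'$ is another solution of \eqref{eq:syst_erasures}. Set $\word{z} \eqdef \ev - \ev'$. Subtracting the syndrome equations yields $\Hm \cdot \word{z}^\top = 0$, so $\word{z} \in \CC$. On the other hand, both $\ev$ and $\ev'$ vanish outside $J$, so the support of $\word{z}$ is contained in $J$, which gives $\wt{\word{z}} \leq |J| < d(\CC)$. Since the only codeword of weight strictly less than $d(\CC)$ is the zero codeword, $\word{z} = 0$ and hence $\ev' = \ev$.

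There is essentially no obstacle here: the argument is the standard observation that knowing a superset of the support of an error of weight less than the minimum distance reduces decoding to solving a linear system whose kernel intersects $\CC$ only in zero. The only point worth flagging is the strict inequality $|J| < d(\CC)$, which is exactly what is needed to rule out a nonzero codeword supported in $J$.
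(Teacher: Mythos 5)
Your proof is correct and follows the same route as the paper's: existence is immediate from $\Hm\cv^\top = 0$ and $\supp(\ev)\subseteq J$, and uniqueness follows because the difference of two solutions lies in $\ker\Hm = \CC$, is supported in $J$, hence has weight below $d(\CC)$ and must vanish. You have merely spelled out the existence step in slightly more detail than the paper does.
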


\begin{proof}
  Clearly $\ev$ is solution. If $\ev'$ is another solution, then
  $\ev - \ev' \in \ker \Hm = \CC$ and has support included in
  $J$. This word has weight less than the code's minimum distance and
  hence is $0$.
\end{proof}

For a function
$\lambda \in L(F)$ vanishing at the error positions, the fundamental
observation is that,
\begin{equation}\label{eq:WB_eq}
  \forall i \in \{1, \dots, n\}, \quad \lambda(P_i)y_i = \lambda(P_i)f(P_i).
\end{equation}
Indeed, either there is no error at position $i$,
i.e. $e_i = 0$ and hence $y_i = f(P_i)$, or there is an error but in
this situation $\lambda(P_i) = 0$, making the above equality obviously
true. Next, since $\lambda f \in L(G+F)$, we deduce that
$(\lambda(P_1)y_1, \dots, \lambda(P_n)y_n) \in \CL{X}{\cP}{G+F}$.
This motivates to introduce the following space.
\begin{equation}\label{eq:def_Ky}
  K_{\yv} \eqdef \{ \lambda \in L(F) ~|~ (\lambda(P_1)y_1, \dots,
  \lambda(P_n)y_n) \in \CL{X}{\cP}{G+F} \}.
\end{equation}

\begin{lemma}\label{lem:incl_Ky}
  Let $D_{\ev}\eqdef P_{i_1}+ \cdots + P_{i_w}$ be the sum of points
  at which an error occurs. Then,
  \[\CL{X}{\cP}{F-D_{\ev}} \subseteq K_{\yv}.\]
\end{lemma}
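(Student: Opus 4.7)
The plan is a direct unwinding of the definitions, exploiting the fact that a function in $L(F-D_{\ev})$ automatically vanishes at every error position and therefore serves as an ``error locator.'' There is a small notational subtlety worth acknowledging up front: $K_{\yv}$ is defined as a subspace of $L(F)$ while $\CL{X}{\cP}{F-D_{\ev}}$ denotes a subspace of $\Fq^n$; I would read the stated inclusion through the natural inclusion $L(F-D_{\ev}) \subseteq L(F)$ (valid because $F-D_{\ev} \leq F$) and then show that every $\lambda \in L(F-D_{\ev})$, viewed inside $L(F)$, satisfies the membership condition defining $K_{\yv}$.

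The first step is to observe that if $\lambda \in L(F-D_{\ev})$, i.e. $\div(\lambda) \geq D_{\ev} - F$, then in particular $v_{P_{i_j}}(\lambda) \geq 1$ for each $j = 1,\dots,w$, hence $\lambda(P_{i_j}) = 0$ at every error position. The second step is the identity~\eqref{eq:WB_eq}: writing $\yv = \cv + \ev$ with $\cv = (f(P_i))_i$ for some $f \in L(G)$, for each index $i$ either $e_i = 0$, in which case $\lambda(P_i) y_i = \lambda(P_i) f(P_i)$, or else $i \in \{i_1,\dots,i_w\}$ and then $\lambda(P_i) = 0$ kills both sides; in either case
\[
  \lambda(P_i)\, y_i \;=\; (\lambda f)(P_i).
\]

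The third step is the observation that $\lambda f \in L(F+G)$, since $\lambda \in L(F)$ and $f \in L(G)$ yield $\div(\lambda f) = \div(\lambda) + \div(f) \geq -F - G$. Combining these facts,
\[
  (\lambda(P_1)y_1, \dots, \lambda(P_n)y_n) = ((\lambda f)(P_1), \dots, (\lambda f)(P_n)) \in \CL{X}{\cP}{G+F},
\]
which is exactly the condition defining $K_{\yv}$, so $\lambda \in K_{\yv}$ as required. There is no real obstacle in this argument beyond keeping track of which divisor a given function belongs to; the content is entirely captured by the ``locator'' property $\lambda(P_{i_j}) = 0$ and the multiplicativity of Riemann--Roch spaces under the $\star$-product, and the proof is essentially complete once these two observations are recorded.
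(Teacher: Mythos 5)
Your proof is correct and follows essentially the same route as the paper's (one-line) argument: $\lambda \in L(F-D_{\ev})$ vanishes at the error positions, so \eqref{eq:WB_eq} gives $\lambda(P_i)y_i=(\lambda f)(P_i)$ for all $i$, and $\lambda f\in L(G+F)$ places the resulting vector in $\CL{X}{\cP}{G+F}$. Your reading of the statement through the inclusion $L(F-D_{\ev})\subseteq L(F)$ is exactly the identification the paper itself uses (cf.\ Proposition~\ref{prop:key_of_basic}, which writes $K_{\yv}=L(F-D_{\ev})$), so nothing is missing.
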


\begin{proof}
  If $\lambda \in L(F-D_{\ev})$, then $\lambda$ vanishes at the error points
  and the result is a consequence of~(\ref{eq:WB_eq}).
\end{proof}

\begin{proposition}\label{prop:key_of_basic}
  If
  $t \leq \dgop^* - \deg F -1= n - \deg(G-F)-1$, then
  \[
    K_{\yv} = L(F-D_{\ev}).
  \]
\end{proposition}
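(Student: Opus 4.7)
The inclusion $L(F-D_{\ev})\subseteq K_{\yv}$ is already furnished by Lemma~\ref{lem:incl_Ky}, so the whole work is to establish the reverse inclusion $K_{\yv}\subseteq L(F-D_{\ev})$. Concretely, starting from an arbitrary $\lambda\in K_{\yv}$, I need to force $\lambda$ to vanish at every error position $P_{i_1},\dots,P_{i_w}$; the degree hypothesis on $t$ is what will buy this.

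The natural vehicle is to ``subtract off'' the codeword testifying that $\lambda\in K_{\yv}$. By definition of $K_{\yv}$ there exists $g\in L(G+F)$ with $g(P_i)=\lambda(P_i)y_i$ for every $i$. Since $f\in L(G)$, we have $\lambda f\in L(G+F)$ as well, hence
\[
h \eqdef \lambda f - g \in L(G+F).
\]
Evaluating $h$ at each $P_i$ and using $y_i=f(P_i)+e_i$ gives $h(P_i) = -\lambda(P_i)e_i$. In particular, $h(P_i)=0$ at each of the $n-w$ positions where no error occurred.

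Thus the evaluation vector $\bigl(h(P_1),\dots,h(P_n)\bigr)$ lies in $\CL{X}{\cP}{G+F}$ and has Hamming weight at most $w\leq t$. The Goppa bound of Theorem~\ref{thm:basics} applied to $\CL{X}{\cP}{G+F}$ yields
\[
d\bigl(\CL{X}{\cP}{G+F}\bigr)\geq n-\deg(G+F)=\dgop^*-\deg F,
\]
and the hypothesis $t\leq \dgop^*-\deg F-1$ (i.e. $t<n-\deg(G+F)$) then forces this codeword to be zero. So $h(P_i)=0$ for every $i$, that is $\lambda(P_i)e_i=0$ for all $i$; specialising to $i\in\{i_1,\dots,i_w\}$, where $e_i\neq 0$, we obtain $\lambda(P_{i_j})=0$ for $j=1,\dots,w$. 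This is precisely the condition $\lambda\in L(F-D_{\ev})$, and finishes the proof.

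The only delicate point is the inequality chain $w\leq t<d(\CL{X}{\cP}{G+F})$: one must read the hypothesis correctly as $t\leq n-\deg(G+F)-1$ (so that the Goppa designed distance of the larger code $\CL{X}{\cP}{G+F}$ strictly exceeds the number of errors), rather than mistakenly invoking the designed distance of $\CC$ itself. Everything else is bookkeeping with Riemann--Roch spaces and linearity of evaluation.
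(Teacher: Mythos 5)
Your argument is correct and is essentially the paper's own proof: in both cases one observes that the vector $(\lambda(P_1)e_1,\dots,\lambda(P_n)e_n)$ lies in $\CL{X}{\cP}{G+F}$ (the paper subtracts the two codewords $(\lambda(P_i)y_i)_i$ and $(\lambda(P_i)f(P_i))_i$ directly, while you lift the first to a function $g\in L(G+F)$ and work with $h=\lambda f-g$, a harmless extra step), and then the hypothesis $t\leq n-\deg(G+F)-1$ together with the Goppa bound forces this low-weight codeword to vanish, giving $\lambda\in L(F-D_{\ev})$. The handling of the inclusion from Lemma~\ref{lem:incl_Ky} and of the degree bookkeeping matches the paper exactly.
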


\begin{proof}
  Inclusion $\supseteq$ is given by
  Lemma~\ref{lem:incl_Ky}. Conversely, if
  $(\lambda(P_1)y_1, \dots, \lambda(P_n)y_n) \in \CL{X}{\cP}{G+F}$,
  then, since
  $(\lambda(P_1)f(P_1), \dots, \lambda(P_n)f(P_n)) \in
  \CL{X}{\cP}{G+F}$, this entails that
  $\uv \eqdef (\lambda(P_1)e_1, \dots, \lambda(P_n)e_n)$ also lies in
  this code. In addition, $w_H(\uv) \leq w_H(\ev) \leq t$. On the
  other hand, from Theorem~\ref{thm:basics},
  $d(\CL{X}{\cP}{G+F}) \geq n - \deg(G+F)$.  Therefore, $w_H(\uv)$ is
  less than the code's minimum distance, and hence
  $\uv = (\lambda(P_1)e_1, \dots, \lambda(P_n)e_n)= 0$. Thus, $\lambda$
  vanishes at any position where $\ev$ does not. Hence
  $\lambda \in L(F-D_{\ev})$.
\end{proof}

The previous statements provide the necessary material to describe a
decoding algorithm for the code $\CL{X}{\cP}{G}$.
Mainly, the algorithm consists in
\begin{enumerate}
\item Computing $K_{\yv}$;
\item taking a nonzero function $\lambda$ in it;
\item compute the zeroes of $\lambda$ among the $P_i$'s, which,
  hopefully, should provide a set localizing the errors;
\item Find $\ev$ by solving a linear system using
  Proposition~\ref{prop:erasure_decoding}.
\end{enumerate}
More precisely the pseudo-code of the complete procedure is given in
Algorithm~\ref{algo:basic}.

\begin{algorithm}
  \caption{Basic decoding algorithm}\label{algo:basic}
  \begin{description}
  \item[{\bf Inputs}] A code $\CC=\CL{X}{\cP}{G}$, a vector $\yv \in \Fq^n$
    and an integer $t > 0$;
  \item[{\bf Output}] A vector $\ev$ such that $w_H(\ev) \leq t$ and
    $\yv - \ev \in \CC$ if exists, ``?'' else.
  \end{description}
  \begin{algorithmic}[1]
    \State Compute the space $K_{\yv}$ defined in \eqref{eq:def_Ky}.
    \If{$K_{\yv} = \{0\}$}
    \State \Return ?
    \Else
    \State{Take $\lambda \in K_{\yv} \setminus \{0\}$;}
    \State{Compute $i_1, \dots, i_s \in \{1, \dots, n\}$ such that
      $\lambda (P_{i_j}) = 0$;}
    \State {Let $S$ be a affine space of solutions of the system
      \eqref{eq:syst_erasures}}
    \If{$S = \emptyset$ or $|S| > 2$}
    \State \Return ``?''
    \Else
    \State{Return the unique solution $\ev$ of \eqref{eq:syst_erasures}}
    \EndIf
    \EndIf
  \end{algorithmic}
\end{algorithm}

\begin{theorem}
  If $t \leq \frac{\dgop^* - 1}{2} - \frac g 2$ and $\deg F  = t + g$,
  then Algorithm~\ref{algo:basic} is correct and returns the good solution
  in $O(n^\omega)$ operations in $\Fq$, where $\omega$ is the complexity
  exponent of linear algebraic operations (in particular $\omega \leq 3$).
\end{theorem}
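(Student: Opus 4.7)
The plan is to verify that each step of Algorithm~\ref{algo:basic} behaves correctly assuming $w_H(\ev) \leq t$, and then to bound the total complexity. First I would show that $K_{\yv}$ is nonzero so the algorithm reaches a valid locator $\lambda$ at line~5. By Lemma~\ref{lem:incl_Ky}, $L(F - D_{\ev}) \subseteq K_{\yv}$; the choice $\deg F = t + g$ yields
\[
\deg(F - D_{\ev}) = (t + g) - w_H(\ev) \geq g,
\]
so Riemann--Roch gives $\ell(F - D_{\ev}) \geq \deg(F - D_{\ev}) + 1 - g \geq 1$ and hence $K_{\yv} \neq \{0\}$.

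Next I would identify $\lambda$ as an error locator and bound the size of its zero set so as to invoke Proposition~\ref{prop:erasure_decoding}. The hypothesis on $t$ rewrites as $2t + g + 1 \leq \dgop^*$, hence $t \leq \dgop^* - (t + g) - 1 = n - \deg(G + F) - 1$, which is exactly the numerical condition required in Proposition~\ref{prop:key_of_basic}. Thus $K_{\yv} = L(F - D_{\ev})$ and any nonzero $\lambda \in K_{\yv}$ vanishes on $D_{\ev}$, so the set $J = \{i : \lambda(P_i) = 0\}$ contains the error support. Moreover, since $\lambda \in L(F)$ the zero divisor of $\lambda$ has degree at most $\deg F = t + g$, so $|J| \leq t + g \leq \dgop^* - t - 1 < \dgop^* \leq d(\CC)$, the last inequality by Theorem~\ref{thm:basics}. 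Proposition~\ref{prop:erasure_decoding} then guarantees that $\ev$ is the unique solution of~\eqref{eq:syst_erasures}, and the algorithm returns it.

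For the complexity, each nontrivial step reduces to dense linear algebra over $\Fq$ on objects of size $O(n)$. Computing $K_{\yv}$ amounts to fixing bases of $L(F)$ and $L(G + F)$ (both of dimension $O(n)$), writing the $n$ linear equations $\lambda(P_i) y_i = h(P_i)$ in the unknown coordinates of $\lambda$ and $h$, and projecting the kernel onto the $\lambda$-coordinates; this costs $O(n^\omega)$ operations. Evaluating $\lambda$ on the $P_i$'s is an $O(n)$ side cost, and solving~\eqref{eq:syst_erasures} is again $O(n^\omega)$. The main subtlety, and the place where both hypotheses must balance, is the calibration of $\deg F = t + g$: it has to be large enough for $L(F - D_{\ev})$ to be nontrivial via Riemann--Roch, and small enough for Proposition~\ref{prop:key_of_basic} to apply. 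The bound $t \leq (\dgop^* - 1)/2 - g/2$ is precisely what makes these two requirements simultaneously satisfiable.
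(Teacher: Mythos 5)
Your proof is correct and follows essentially the same route as the paper: nontriviality of $K_{\yv}$ via Riemann--Roch applied to $F-D_{\ev}$ with $\deg F = t+g$, the identification $K_{\yv}=L(F-D_{\ev})$ through Proposition~\ref{prop:key_of_basic} using the rewriting $2t+g+1\leq \dgop^*$, the degree bound on the zero locus of $\lambda$ so that Proposition~\ref{prop:erasure_decoding} applies, and the reduction of the cost to $O(n)$-size linear algebra. The only difference is cosmetic: you make explicit the comparison $|J|<\dgop^*\leq d(\CC)$ and the way $K_{\yv}$ is computed from bases of $L(F)$ and $L(G+F)$, details the paper leaves implicit.
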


\begin{proof}
  If $\deg F \geq t+g$, then $\deg F-D_{\ev} \geq g$ and hence, by
  Riemann--Roch theorem, $\ell (F- D_{\ev}) > 0$. In particular, from
  Lemma~\ref{lem:incl_Ky}, $K_{\yv} \neq 0$.
  Next, by assumption on $t$ and $\deg F$, we get
  \begin{equation}\label{eq:bound_t}
    2t + g \leq \dgop^* - 1 \quad \Longrightarrow \quad
    t \leq \dgop^* - \deg F - 1
  \end{equation}
  which, from Proposition~\ref{prop:key_of_basic}, yields the equality
  $K_{\yv} = L(F-D_{\ev})$.  Therefore, one can compute
  $L(F - D_{\ev})$. Take any nonzero function in this space, there
  remains to prove that the conditions of
  Proposition~\ref{prop:erasure_decoding} are satisfied. That is to
  say, that the zero locus of $\lambda$ in $\{P_1, \dots, P_n\}$ is
  not too large. But, from~(\ref{eq:bound_t}), $\lambda \in L(F)$,
  then its zero divisor $(\lambda)_0$ has degree at most $\deg F$ and
  since $0 \leq t \leq \dgop^* - \deg F - 1$, yielding immediately
  $\deg (\lambda)_0 \leq \deg F < \dgop^*$ which proves that the
  resolution of System~\eqref{eq:syst_erasures} will provide $\ev$
  as the unique possible solution.

  About the complexity, the computation of $K_{\yv}$ such as
  the resolution of the system \eqref{eq:syst_erasures} are nothing
  but the resolution of linear systems with $O(n)$ equations and
  $O(n)$ unknowns. The other operations in the algorithm are negligible.
\end{proof}

\subsubsection{Getting rid of algebraic geometry, error correcting pairs}
As observed by Pellikaan \cite{pellikaan88,pellikaan92dm} and
independently by K\"otter \cite{koetter92acct}, the basic algorithm
can be defined on the level of codes without involving any algebraic
geometric notion. To do that, observe that
\[
  \av \eqdef (\lambda(P_1), \dots, \lambda(P_n)) \in \CL{X}{\cP}{F}.
\]
Next, on the level of codes, we searched a vector $\av$
such that $\av \star \yv \in \CL{X}{\cP}{G+F}$.
Thus, the space $K_{\yv}$ can be redefined on the level of codes
as the space:
\[
  \hat{K}_{\yv} \eqdef \left\{
  \av \in \CL{X}{\cP}{F} ~|~ \forall \bv \in
  \CL{X}{\cP}{G+F}^\perp,\ \eucl{\av \star \yv}{\bv} = 0
  \right\}.
\]
Next, one shows easily that the adjunction property noticed
in Lemma~\ref{lem:adjunction} yields an equivalent reformulation of the above
definition as:
\[
  \hat{K}_{\yv} = \left\{
  \av \in \CL{X}{\cP}{F} ~|~ \forall \bv \in
  \CL{X}{\cP}{G+F}^\perp,\ \eucl{\av \star \bv}{\yv} = 0
  \right\}.
\]
If we set
\[
  \AC = \CL{X}{\cP}{F} \quad {\rm and} \quad \BC \eqdef
  \CL{X}{\cP}{G+F}^\perp = \COm{X}{\cP}{G+F}
\]
then, one can prove in particular that
\[
  \AC \star \BC \subseteq \COm{X}{\cP}{G} = \CC^\perp
\]
and this
material turns out to be sufficient to provide a decoding algorithm.

\begin{definition}[Error correcting pair]
  \label{def:ECP}
  Let $\CC \subseteq \Fq^n$ be a code and $t$ be a positive integer. A
  pair of codes $(\AC, \BC)$ is said to be a {\em $t$--Error Correcting
    Pair} (ECP) for $\CC$ if it satisfies the following conditions:
  \begin{enumerate}[(ECP1)]
  \item $\AC \star \BC \subseteq \CC^\perp$;
  \item $\dim \AC > t$;
  \item $\dim \BC^\perp > t$;
  \item $d(\AC) + d(\CC) > n$.
  \end{enumerate}
\end{definition}

\begin{theorem}\label{thm:ECP_correct}
  Let $\CC \subseteq \Fq^n$ be a code equipped with a $t$--error
  correcting pair $(\AC, \BC)$ with $t \leq \frac{d(\CC) -
    1}{2}$. Then, there is a decoding algorithm for $\CC$ correcting
  any error pattern of weight less than or equal to $t$ in $O(n^{\omega})$
  operations in $\Fq$.
\end{theorem}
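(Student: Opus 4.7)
The plan is to abstract the basic algorithm of \S\ref{subsec:basic} to the level of codes, using $(\AC, \BC)$ in place of the AG data $(L(F), \Omega(G+F))$. The role of the error-locating space is taken by
\[\hat{K}_{\yv} \eqdef \{\av \in \AC ~|~ \forall \bv \in \BC,\ \eucl{\av \star \yv}{\bv} = 0\},\]
computable as the kernel of an $O(n) \times O(n)$ homogeneous linear system. With $\yv = \cv + \ev$ and $\cv \in \CC$, the adjunction property (Lemma~\ref{lem:adjunction}) combined with (ECP1) yields
\[\eucl{\av \star \yv}{\bv} = \eucl{\av \star \bv}{\yv} = \eucl{\av \star \bv}{\ev}\]
because $\av \star \bv \in \CC^\perp$ and $\cv \in \CC$. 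Hence $\hat{K}_{\yv} = \{\av \in \AC ~|~ \av \star \ev \in \BC^\perp\}$, a space depending only on $\ev$.

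Let $J \eqdef \supp(\ev)$ and $\AC(J) \eqdef \{\av \in \AC ~|~ a_i = 0,\ \forall i \in J\}$. Clearly $\AC(J) \subseteq \hat{K}_{\yv}$, and since $|J| \leq t$, condition (ECP2) gives $\dim \AC(J) \geq \dim \AC - t > 0$; in particular $\hat{K}_{\yv}$ contains a nonzero vector. The heart of the argument is the reverse inclusion $\hat{K}_{\yv} \subseteq \AC(J)$: for $\av \in \hat{K}_{\yv}$, the word $\av \star \ev$ lies in $\BC^\perp$ and is supported in $J$, hence has weight at most $t$; the hypothesis (ECP3) --- read in the metric form ``$\BC^\perp$ contains no nonzero word of weight $\leq t$'' that makes the argument go through --- forces $\av \star \ev = 0$, whence $a_i = 0$ for all $i \in J$ since $e_i \neq 0$ there.

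Once the equality $\hat{K}_{\yv} = \AC(J)$ is established, picking any nonzero $\av$ in it produces a zero set $J' \supseteq J$ of cardinality $|J'| = n - \wt{\av} \leq n - d(\AC) < d(\CC)$ by (ECP4). Proposition~\ref{prop:erasure_decoding} then identifies $\ev$ as the unique solution of the system \eqref{eq:syst_erasures}, itself computable in $O(n^\omega)$. The main obstacle is precisely the reverse inclusion $\hat{K}_{\yv} \subseteq \AC(J)$, the sole step that invokes information on $\BC^\perp$; the other three axioms play clean structural roles --- (ECP1) for the adjunction identification of $\hat{K}_{\yv}$ as an error-dependent space, (ECP2) for the existence of nonzero error locators, and (ECP4) for the uniqueness of the final erasure-decoding step.
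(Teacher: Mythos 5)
Your proof is correct and follows essentially the same route as the argument the paper points to (Pellikaan's error-correcting-pair decoding: compute $\hat{K}_{\yv}$, identify it via the adjunction of Lemma~\ref{lem:adjunction} with the space of words of $\AC$ vanishing on the error support, then finish by erasure decoding through Proposition~\ref{prop:erasure_decoding}), i.e.\ it is exactly Algorithm~\ref{algo:basic} transposed to the level of codes, as the paper's remark following the theorem indicates. Your reading of (ECP3) in the metric form $d(\BC^\perp)>t$ is the right one --- it is the condition in Pellikaan's original definition and the one the argument genuinely needs, the ``$\dim$'' in Definition~\ref{def:ECP} being a misprint --- and you were right to flag it.
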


\begin{proof}
  See for instance \cite[Th.~2.14]{pellikaan92dm}.
\end{proof}

\begin{remark}
  Using Theorem~\ref{thm:basics} and
  Corollary~\ref{cor:Goppa_bound_COmega}, one can easily observe that
  the codes $\AC = \CL{X}{\cP}{F}$ and $\BC = \COm{X}{\cP}{G+F}$
  satisfy the conditions of Definition~\ref{def:ECP}.  Next, the
  algorithm mentioned in Theorem~\ref{thm:ECP_correct} is nothing but
  Algorithm~\ref{algo:basic} described at the level of codes: in
  particular, replacing elements $\lambda \in L(F)$ by the
  corresponding evaluation vectors
  $(\lambda(P_1), \dots, \lambda(P_n)) \in \CL{X}{\cP}{F}$.
\end{remark}

\begin{remark}
  As noticed earlier, the tradition in the literature is to present
  the basic algorithm such as error correcting pairs to decode
  $\COm{X}{\cP}{G}$. This code benefits of a very similar
  decoding algorithm correcting up to $\frac{\dgop - 1}{2} - \frac g 2$
  errors using the error correcting pair
  \[\AC = \CL{X}{\cP}{F} \quad {\rm and} \quad \BC =
    \CL{X}{\cP}{G-F}
  \]
  with an extra divisor $F$ with $\deg F \geq t+g$.
\end{remark}

\begin{remark}
  Note that the defect $\frac{g}{2}$ corresponds actually to a
  {\em worst case}; it is observed for instance in
  \cite[Rem.~4.6]{hoeholdt95it} that this algorithm can actually
  correct a uniformly random error pattern of weight
  $t = \lfloor \frac{\dgop^* - 1}{2} \rfloor$ with a failure
  probability of $O(1/q)$.
\end{remark}

\subsubsection{Reducing the gap to half the designed distance}
After the basic algorithm, several attempts appeared in the literature
to reduce the gap between the decoding radius of this algorithm and half
the designed distance:
\begin{itemize}
\item The {\em modified algorithm} \cite{skorobogatov90it} and the
  {\em extended modified algorithm} \cite{duursma1993it} combine the
  basic algorithm with an iterative search of a relevant choice of the
  extra divisor $F$. These algorithms
  permit to reduce the gap $\frac g 2$ to about $\frac g 4$
  (see \cite[Rem.~4.11 \& 4.15]{hoeholdt95it}).
\item In \cite{pellikaan1989it,vladut1990it} the question of the
  existence of an extra divisor $F$ for which the basic algorithm
  corrects up to half the designed distance is discussed. It is
  in particular proved that such an $F$ exists and can be chosen
  in a set of $O(n)$ divisors as soon as
  $q \geq 37$.
\item Finally, an iterative approach to find an $F$ achieving
  half the designed distance is proposed by Ehrhard
  \cite{ehrhard1993it}.
\end{itemize}

All the above contributions are summarized with further details in
\cite[\S~4 to 7]{hoeholdt95it}.

\subsubsection{Decoding up to half the designed distance, Feng--Rao
  algorithm and error correcting arrays}\label{subsec:Feng_Rao}
The first success in getting a decoding algorithm correcting exactly
up to half the designed distance is due to Feng and Rao
\cite{feng1993it}. Their algorithm consists in using a filtration
of codes and to iterate {\em coset decoding}. Note that Feng--Rao's
original algorithm applied to codes $\COm{X}{\cP}{rP}$ for some
rational point $P$ of $X$. Later, their approach has been extended
to arbitrary divisors \cite{pellikaan1993eurocode, kirfel1995it}.

Similarly to the basic algorithm which lead to the abstract and
algebraic geometry--free formulation of error correcting pairs,
Feng--Rao's approach led to a purely coding theoretic formulation
called {\em error correcting arrays}. We present this approach in the
sequel using the notation of \cite{pellikaan1993eurocode} together
with refinements appearing in \cite{couvreur2017it}.

\begin{definition}[Array of codes]
  An {\em array}
  of codes for $\CC$ is a triple of sequences of codes
  ${(\AC_i)}_{0 \leq i \leq u}, {(\BC_j)}_{0 \leq j \leq v},
  {(\CC_r)}_{w \leq r \leq n}$ such that
  \begin{enumerate}[({A}1)]
  \item $\CC = \CC_w$
  \item $\forall i, j ,r,\ \dim \AC_i = i$, $\dim \BC_j = j$ and
    $\dim \CC_r = n-r$;
  \item The sequences ${(\AC_i)}_i$ and ${(\BC_j)}_j$ are increasing,
    while ${(\CC_r)}_r$ is decreasing;
  \item\label{eq} $\forall i, j$, we define $\hat r(i,j)$ to be the least
    integer $r$ such that $\AC_i \star \BC_j \subseteq \CC_r^\perp$.
    For any $i, j \geq 1$ if $\av \in \AC_i \setminus \AC_{i-1}$ and
    $\bv \in \BC_j \setminus \BC_{j-1}$ and $r=\hat r(i,j) > w$ then
    $\av \star \bv \in \CC_{r}^{\perp} \setminus \CC_{r-1}^{\perp}$.
  \end{enumerate}
\end{definition}

Note that the function $\hat r(i, j)$ is increasing in the variables $i$
and $j$ but not necessarily strictly increasing. This motivates the
following definition of {\em well--behaving pair}, a terminology borrowed from
that of {\em well--behaving sequences} \cite{geil2013ffa}:

\begin{definition}[Well--behaving pair]
  A pair $(i, j) \in \{1,\dots, u\} \times \{1, \dots, v\}$
  is said to be {\em well-behaving} (WB) if for any pair $(i',j')$
  such that $i' \leq i$, $j' \leq j$ and $(i',j') \neq (i,j)$
  we have $\hat r(i',j') < \hat r(i,j)$.
  Next, we define for any $r \in \{w, \dots, n-1\}$ the quantity:
  \[\hat n_r \eqdef \big|\left\{
        (i,j) \in \{1,\dots, u\}\times \{1, \dots, v\} ~|~
        (i,j)\ {\rm is\  WB\ and\ }
        \hat r(i,j) = r+1
      \right\}\big|.\]
\end{definition}

\begin{theorem}\label{thm:FR_bound}
  For any array of codes $({(\AC_i)}_i, {(\BC_j)}_j, {(\CC_r)}_r)$, we
  have
  \[
    \forall w \leq r \leq n-1,\quad d(\CC_r) \geq \min\{\hat n_{r'} ~|~ r \leq r' \leq n-1\}.
  \]
\end{theorem}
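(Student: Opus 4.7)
The plan is to show that the Hamming weight of an arbitrary nonzero $\cv \in \CC_r$ is at least $\hat n_s$ for some $s \in \{r,\dots,n-1\}$ depending on $\cv$; the bound on $d(\CC_r)$ then follows by minimising over nonzero codewords. Since the filtration $(\CC_{r'})_{r'}$ is decreasing with $\dim \CC_n = 0$, every nonzero $\cv \in \CC_r$ lies in $\CC_s \setminus \CC_{s+1}$ for a unique $s \in \{r,\dots,n-1\}$, and this will be my choice of $s$.

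Next, I would fix bases $(\av_1,\dots,\av_u)$ of $\AC_u$ and $(\bv_1,\dots,\bv_v)$ of $\BC_v$ adapted to the filtrations (so that $\av_1,\dots,\av_i$ spans $\AC_i$ for each $i$, and similarly for the $\bv_j$), and form the ``syndrome matrix'' $M$ of size $u\times v$ with entries
\[
M_{ij} \eqdef \eucl{\av_i \star \bv_j}{\cv}.
\]
Writing $A,B$ for the matrices whose rows are the $\av_i,\bv_j$ respectively, and $D_{\cv}$ for the $n\times n$ diagonal matrix with diagonal $\cv$, we have $M = A\,D_{\cv}\,B^{\top}$, whence $\operatorname{rank}M \leq \operatorname{rank}D_{\cv}=\wt{\cv}$.

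The core of the argument is to exhibit inside $M$ an $N\times N$ submatrix of rank $N=\hat n_s$, coming from the set $W=\{(i,j)\text{ WB} : \hat r(i,j)=s+1\}$, which has cardinality $\hat n_s$ by definition. Any two distinct WB pairs sharing the same $\hat r$-value must be incomparable in $\N^2$: otherwise the larger one would fail the WB condition. Enumerate $W=\{(i_1,j_1),\dots,(i_N,j_N)\}$ with $i_1<\cdots<i_N$, so necessarily $j_1>\cdots>j_N$, and consider the submatrix $M'$ with $M'_{kl}=M_{i_k,j_l}$. For $k<l$, $(i_k,j_l)$ lies componentwise strictly below $(i_l,j_l)$, so the WB property of $(i_l,j_l)$ forces $\hat r(i_k,j_l) \leq s$; by definition of $\hat r$ we have $\av_{i_k}\star\bv_{j_l}\in \CC_s^{\perp}$, and $M'_{kl}=0$ since $\cv\in \CC_s$. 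Hence $M'$ is lower-triangular.

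It remains to check the real heart of the argument, namely that the diagonal entries $M'_{kk}$ are nonzero; this is where property~(A4) is essential. Since $\hat r(i_k,j_k)=s+1>w$ (as $s\geq r \geq w$) and the adapted basis vectors $\av_{i_k},\bv_{j_k}$ lie in $\AC_{i_k}\setminus\AC_{i_k-1}$ and $\BC_{j_k}\setminus\BC_{j_k-1}$ respectively, (A4) yields $\av_{i_k}\star\bv_{j_k}\in \CC_{s+1}^{\perp}\setminus\CC_s^{\perp}$. On the other hand, because $\perp$ is reflexive on linear codes, the Euclidean form induces a non-degenerate pairing between the one-dimensional quotients $\CC_s/\CC_{s+1}$ and $\CC_{s+1}^{\perp}/\CC_s^{\perp}$; since $\cv$ is non-trivial in the first and $\av_{i_k}\star\bv_{j_k}$ in the second, $M'_{kk}\neq 0$. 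Therefore $\operatorname{rank}M'=N=\hat n_s$, giving $\wt{\cv}\geq \operatorname{rank}M \geq \operatorname{rank}M'=\hat n_s$. The main subtlety is precisely this last step: property~(A4) is what guarantees that the ``new'' elements $\av_{i_k}\star\bv_{j_k}$ of $\CC_{s+1}^{\perp}$ actually escape $\CC_s^{\perp}$, which is essential for the triangular submatrix to be nonsingular.
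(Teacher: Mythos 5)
Your proof is correct, and it is essentially the proof the paper points to (via the citation and the remark following the theorem): bound $\wt{\cv}$ by the rank of the syndrome matrix $M = A\,D_{\cv}\,B^{\top}$, and use the well--behaving pairs with $\hat r(i,j)=s+1$ as pivot positions of a triangular submatrix whose diagonal entries are nonzero thanks to property (A4) applied with $\cv\in\CC_s\setminus\CC_{s+1}$. No gaps to report.
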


\begin{proof}
  See \cite[Th.~4.2]{pellikaan1993eurocode}.
\end{proof}

\begin{remark}
  Of course, Theorem~\ref{thm:FR_bound} is almost the same as
  Theorem~\ref{thm:order_bound}. The slight difference lies in the
  quantity $\hat n_r$, which is very close to the objects introduced
  at \eqref{eq:nr} in \S~\ref{subsec:order}. The difference relies on
  the fact that $n_r$ is defined on the level of function algebras
  while $\hat n_r$ is defined on the level of codes. On function
  algebras over curves, valuations assert a strict
  growth of the function
  $r(i,j)$
  that could be defined in this
  context. When dealing on the level of codes, we should restrict to a
  subset of pairs (the well--behaving ones) to keep this strict
  increasing property.

  This requirement of strict increasing is necessary to prove
  Theorem~\ref{thm:FR_bound}. This bound on the minimum distance is
  obtained by relating the Hamming weight of some codeword with the rank
  of a given matrix. The well--behaving pairs provide the position of
  some pivots in this matrix. Thus, their number give a lower bound
  for the rank. Of course two pivots cannot lie on the same row or
  column, hence the requirement of strict decreasing.  Note that in
  \cite[Th.~4.2]{pellikaan1993eurocode}, the strict increasing
  requirement is lacking, which makes the proof not completely correct. Replacing
  general pairs by well--behaving ones, as it is done in
  \cite{couvreur2017it}, fixes the proof.
\end{remark}

\begin{definition}[Error--correcting array]\label{def:ECA}
  Let $\CC \subseteq \Fq^n$ be a code. An array of codes
  $({(\AC_i)}_{1 \leq i \leq u}, {(\BC_j)}_{1 \leq j \leq v},
  {(\CC_r)}_{w \leq r \leq n})$ is said to be a {\em $t$--error
    correcting array for $\CC$} if
  \begin{enumerate}[(i)]
  \item $\CC_w = \CC$;
  \item $t \leq \frac {d(\CC_w)-1}{2}$.
  \end{enumerate}
\end{definition}

\begin{theorem}
  If a code $\CC$ has a $t$--error correcting array, then it has a decoding
  algorithm correcting up to $t$ errors in $O (n^{\omega})$ operations in $\Fq$.
\end{theorem}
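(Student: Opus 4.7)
My proof plan is to generalise Feng and Rao's original decoding algorithm to the abstract error--correcting array setting, along the lines of \cite{pellikaan1993eurocode,couvreur2017it}. Fix bases $(\av_i)_{1 \leq i \leq u}$ of $\AC_u$ and $(\bv_j)_{1 \leq j \leq v}$ of $\BC_v$ compatible with the filtrations, so that $\av_1, \dots, \av_i$ is a basis of $\AC_i$ and similarly $\bv_1, \dots, \bv_j$ is a basis of $\BC_j$. Given the received word $\yv = \cv + \ev$ with $\wt{\ev} \leq t$, the whole algorithm revolves around the syndrome matrix $S \in \Fq^{u \times v}$ defined by
\[
  S_{i,j} \eqdef \eucl{\ev}{\av_i \star \bv_j}.
\]
When $\hat r(i,j) = w$, the vector $\av_i \star \bv_j$ lies in $\CC_w^\perp = \CC^\perp$, so $\eucl{\cv}{\av_i \star \bv_j} = 0$ and hence $S_{i,j} = \eucl{\yv}{\av_i \star \bv_j}$ is directly computable from the received word. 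These are the \emph{known} entries; all others must be deduced.

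The key structural fact is the factorisation $S = \mat{A}\, \mathrm{Diag}(\ev)\, \mat{B}^\top$, where $\mat{A}$ and $\mat{B}$ are the matrices with rows $\av_i$ and $\bv_j$ respectively; this forces $\mathrm{rank}\, S \leq \wt{\ev} \leq t$. I would then fill in the unknown entries of $S$ iteratively by processing pairs $(i,j)$ in order of strictly increasing $\hat r(i,j)$. At each level $r > w$, for every well--behaving pair $(i,j)$ with $\hat r(i,j) = r$, the well--behaving property guarantees that every entry $S_{i',j'}$ with $(i',j') \neq (i,j)$ and $i' \leq i$, $j' \leq j$ has already been determined at an earlier stage. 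The rank constraint $\mathrm{rank}\, S \leq t$ then restricts $S_{i,j}$ to at most two candidate values (the true one and possibly a spurious one); Feng and Rao's majority--voting argument resolves the ambiguity, since wrong candidates can be consistent with at most $t$ of the $\hat n_r$ well--behaving votes at level $r$, so a strict majority identifies the correct value as soon as $\hat n_r \geq 2t+1$. Once $S$ is fully determined, the support of $\ev$ is recovered from a nonzero element of $\ker S$ (which produces an error--locating vector $\av \in \AC_u$ analogous to the one used in the basic algorithm of \S\ref{subsec:basic}), and the error values are then obtained by erasure decoding via Proposition~\ref{prop:erasure_decoding}.

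The main obstacle will be the rigorous justification of the majority--voting step: one must show that the rank--$t$ constraint truly leaves at most two candidates per unknown entry, and that the structure of $\mathrm{Diag}(\ev)$ (whose support has size $\leq t$) bounds the number of well--behaving votes that an incorrect candidate can capture. The strictness in the well--behaving condition --- which replaces the strict growth of pole orders used in Feng and Rao's original geometric argument --- is essential here, and transcribing the combinatorics carefully to the abstract array framework is the technical heart of the proof. As for complexity, each iteration involves linear algebra on submatrices of size at most $n \times n$; using incremental rank updates together with fast matrix multiplication, the overall cost is $O(n^\omega)$ operations in $\Fq$.
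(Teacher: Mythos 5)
Your overall route is the one the paper itself takes: its proof is a citation of \cite[Th.~4.4]{pellikaan1993eurocode}, i.e.\ exactly the Feng--Rao majority-voting / syndrome-completion scheme you outline, including the well-behaving-pair correction of \cite{couvreur2017it}. Two steps of your plan, however, do not go through as written. The more serious one is the termination step. From a kernel element of $S=\mat{A}\,\mathrm{Diag}(\ev)\,\mat{B}^{\top}$ you only obtain a vector $\av\in\AC_u$ with $\ev\star\av\in\BC_v^{\perp}$; to promote it to a genuine error locator you need $d(\BC_v^{\perp})>t$, and to run the erasure step of Proposition~\ref{prop:erasure_decoding} you need the zero set of $\av$ to be smaller than $d(\CC)$, i.e.\ $d(\AC_u)+d(\CC)>n$. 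These are precisely conditions (ECP3)--(ECP4) of Definition~\ref{def:ECP}, and the abstract array axioms do not supply them, so under the stated hypotheses this ending is unjustified. The argument the paper refers to avoids it altogether: since $\dim\CC_r=n-r$, the filtration terminates with $\CC_n=\{0\}$, so one simply continues the majority voting level by level until the syndrome of $\ev$ with respect to $\CC_n^{\perp}=\Fq^n$ is known, and that completed syndrome \emph{is} $\ev$; no localisation or erasure step is needed. Only in the AG instantiation (Remark~\ref{rem:variant_of_FR}) can one legitimately stop early and hand over to an error-correcting pair.

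The second point concerns the voting step itself. The candidates produced by the well-behaving pairs at level $r+1$ are a priori candidates for \emph{different} entries $S_{ij}$, so a ``strict majority'' among them is meaningless until each is converted, via axiom (A4) (which guarantees $\av_i\star\bv_j\in\CC_{r+1}^{\perp}\setminus\CC_r^{\perp}$), into a candidate for the single new syndrome coordinate spanning $\CC_{r+1}^{\perp}/\CC_r^{\perp}$. This conversion is also what lets the elected value determine \emph{all} entries $S_{i'j'}$ with $\hat r(i',j')\le r+1$, not only those at well-behaving pairs, so that the induction (and your ``already determined at an earlier stage'' claim) can actually proceed. Finally, the count you invoke must hold at \emph{every} level: one needs $\hat n_r\ge 2t+1$ for all $r\ge w$, which is where the order-bound machinery of \S\ref{subsec:order} and Theorem~\ref{thm:FR_bound} enters; you rightly flag this as the technical heart, but in the proposal it is asserted rather than proved.
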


\begin{proof}
  See \cite[Th.~4.4]{pellikaan1993eurocode}
\end{proof}

The spirit of the algorithm is to start from a received vector
$\yv = \cv + \ev$ with $\cv \in \CC = \CC_w$ and $w_H(\ev) \leq t$,
then use the array to ``guess an additional syndrome'' using a process
called {\em majority voting} and hence to transform this decoding
problem into
\[
  \yv_{w+1} = \cv_{w+1} + \ev
\]
where $\cv_{w+1} \in \CC_{w+1}$ and {\bf the error is
  unchanged}. Applying this process iteratively until
$\cv_{n} = 0$ yields the error.

\begin{remark}\label{rem:variant_of_FR}
  Actually, as noticed in \cite{pellikaan1993eurocode}, the process
  may be stopped before reaching $\cv_n$. Indeed, after $r \geq g$ iterations,
  one obtains a new decoding problem $\yv_r = \cv_r + \ev$ where
  $\CC_r$ benefits from a $t$--error correcting pair, one can at this
  step switch to the error correcting pair algorithm, which corrects up to
  $\frac{d(\CC_r)-1}{2} - \frac{g}{2} \geq \frac{d(\CC_w)-1}{2}$ errors
  and get $\ev$.
\end{remark}

The impact of
the previous constructions on algebraic geometry
codes is summarized in the following statement.

\begin{theorem}[{\cite[Th.~4.5]{pellikaan1993eurocode}}]
  Let $\CC = \COm{X}{\cP}{G}$ be a code on a curve $X$ of genus $g$
  such that $2g-2 < \deg G < n-g$. Then $\CC$ has a $t$--error correcting
  array with
  \[
    t = \frac{\dgop - 1}{2} \cdot
  \]  
\end{theorem}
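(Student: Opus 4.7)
The plan is to construct an explicit array of codes with $\CC_w = \CC$, after which the size condition in Definition~\ref{def:ECA} will follow directly from Goppa's bound. I would fix a rational point $Q \in X(\Fq)$ with $Q \notin \cP \cup \supp G$ and decompose $G = A + B$ with $\supp A \cup \supp B$ disjoint from $\cP \cup \{Q\}$. Setting $w \eqdef \deg G + 1 - g$, Corollary~\ref{cor:Goppa_bound_COmega} gives $\dim \COm{X}{\cP}{G} = n - w$. Enumerating the $A$-non-gaps and $B$-non-gaps at $Q$ (Definition~\ref{def:non-gaps}) as $(a_i)_{i \geq 1}$ and $(b_j)_{j \geq 1}$, I would take
\[
\AC_i \eqdef \CL{X}{\cP}{A + a_i Q}, \quad
\BC_j \eqdef \CL{X}{\cP}{B + b_j Q}, \quad
\CC_r \eqdef \COm{X}{\cP}{G + (r-w) Q}.
\]
Under the degree hypotheses, the evaluation maps defining $\AC_i$ and $\BC_j$ are injective (a nonzero function in the kernel would satisfy a divisor inequality forcing negative degree), so $\dim \AC_i = i$ and $\dim \BC_j = j$ by the very definition of non-gaps; meanwhile Corollary~\ref{cor:Goppa_bound_COmega} yields $\dim \CC_r = n - r$. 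Axioms (A1)--(A3) then hold by inspection.

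For axiom (A4), I would use Theorem~\ref{thm:duality_CL_COm} to identify $\CC_r^\perp = \CL{X}{\cP}{G + (r-w) Q}$, together with the general inclusion $\CL{X}{\cP}{F_1} \star \CL{X}{\cP}{F_2} \subseteq \CL{X}{\cP}{F_1 + F_2}$, to obtain
\[
\AC_i \star \BC_j \subseteq \CL{X}{\cP}{G + (a_i + b_j) Q} = \CC_{w + a_i + b_j}^\perp,
\]
so $\hat r(i,j) \leq w + a_i + b_j$. For the reverse (strictness), I would pick any $\av \in \AC_i \setminus \AC_{i-1}$ and $\bv \in \BC_j \setminus \BC_{j-1}$, lifted to functions $f_a \in L(A + a_i Q) \setminus L(A + a_{i-1} Q)$ and $f_b \in L(B + b_j Q) \setminus L(B + b_{j-1} Q)$. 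Because every integer strictly between consecutive non-gaps is an $A$- (resp.\ $B$-) gap, one has $v_Q(f_a) = -a_i$ and $v_Q(f_b) = -b_j$ \emph{exactly}, whence $v_Q(f_a f_b) = -(a_i + b_j)$. This forces $f_a f_b \notin L(G + (a_i + b_j - 1) Q)$, and so $\av \star \bv \in \CC_r^\perp \setminus \CC_{r-1}^\perp$ with $r = w + a_i + b_j$.

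Condition (ii) of Definition~\ref{def:ECA} is then immediate: by Corollary~\ref{cor:Goppa_bound_COmega}, $d(\CC_w) \geq \dgop$, hence $t = (\dgop - 1)/2 \leq (d(\CC_w) - 1)/2$. The main obstacle I anticipate is the strict-valuation step of (A4): the argument must apply to \emph{every} pair of pivot vectors, not just to some special choice, and it is precisely the gap/non-gap structure at $Q$ that guarantees this uniformly. A secondary technical point is the boundary behaviour of the filtration $(\CC_r)$ as $r$ approaches $n$, where the naive Riemann-Roch dimension formula degenerates; the upper bound $\deg G < n - g$ in the hypothesis is exactly what keeps the divisors in the generic regime long enough for the $\CC_r$ to exhaust $\CC$ properly, so that the Feng-Rao counting $\hat n_r$ reaches the Goppa value on the whole relevant range.
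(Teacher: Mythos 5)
Your construction is the same one the paper itself sketches (the remark following the theorem, with the Beelen-style refinement of splitting $G=A+B$ as in \S\ref{subsec:order}), and the exact-pole-order argument for (A4) is sound \emph{while the evaluation maps stay injective}. The concrete gap is at the tail of the filtration. The array must be defined for all $w\leq r\leq n$ with $\dim\CC_r=n-r$, but for $\CC_r=\COm{X}{\cP}{G+(r-w)Q}$ one has $\deg\bigl(G+(r-w)Q\bigr)=r+g-1$, so Corollary~\ref{cor:Goppa_bound_COmega} applies only for $r\leq n-g$; for the last $g$ indices one gets $\dim\CC_r=n-r+\ell\bigl(G+(r-w)Q-D_{\cP}\bigr)$, and the correction term need not vanish, so (A2) can fail and the filtration need not reach the zero code at $r=n$. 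Your closing claim that the hypothesis $\deg G<n-g$ ``keeps the divisors in the generic regime'' up to $r=n$ is not correct: whatever $\deg G$ is, the range $[w,n]$ forces degrees up to $n+g-1$; the hypothesis only guarantees a clean window of length $\geq g$ above $w$. The same loss of injectivity breaks the strictness half of your (A4) argument (being in $L(G+mQ)\setminus L(G+(m-1)Q)$ no longer implies the evaluation vector avoids $\CL{X}{\cP}{G+(m-1)Q}$) and the claim $\dim\AC_i=i$ for large $i$. The standard repair is to index the tail by the non-gaps of $G-D_{\cP}$ at $Q$ and to bound the ranges $u,v$ actually needed, but none of this is done. (A smaller point: a rational point $Q\notin\cP\cup\supp G$ need not exist, and the theorem carries no such hypothesis; the paper's remark makes the same tacit assumption, so I regard this as minor.)

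More importantly, the substantive content of Pellikaan's theorem is asserted but not proved. Verifying condition (ii) of Definition~\ref{def:ECA} via Goppa's bound on $d(\CC)$ makes the whole construction do no work: as written, your argument would go through without the hypothesis $\deg G<n-g$ and with trivial sequences $\AC_i,\BC_j$, which signals that the real step has been skipped. What the majority-voting decoder behind the theorem actually requires is the Feng--Rao count $\hat n_r\geq \dgop=2t+1$ for every $r\geq w$ in the voting range (equivalently, that the order bound of the array meets the Goppa designed distance, cf.\ \cite[Prop.~10]{beelen2007ffa} and Theorem~\ref{thm:order_bound}); this is exactly where the non-gap combinatorics of the pairs $(a_i,b_j)$, the choice of the splitting $G=A+B$, and the bound $\deg G<n-g$ enter. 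You gesture at this in the last sentence (``the counting reaches the Goppa value on the whole relevant range'') but give no argument, so the proposal establishes only a formal shell of the statement, not its content.
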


\begin{remark}
  Of course, a similar result holds for evaluation codes by replacing
  $G$ by $K - G + D_{\cP}$.
\end{remark}

\begin{remark}
  A simple choice of error--correcting array for $\COm{X}{\cP}{G}$ can
  be obtained by choosing a rational point $P$ of $X$ and consider the
  following sequences to construct the $\AC_i$'s, the $\BC_j$'s and the
  $\CC_r$'s:
  \begin{align*}
    \cdots &\subseteq \CL{X}{\cP}{\mu_iP} \subseteq \CL{X}{\cP}{\mu_{i+1}P}
             \subseteq \cdots\\
    \cdots &\subseteq \CL{X}{\cP}{G+\nu_jP} \subseteq \CL{X}{\cP}{G+\nu_{j+1}P}
             \subseteq \cdots\\
    \cdots & \supseteq \COm{X}{\cP}{G+\nu_j P}\supseteq
             \COm{X}{\cP}{G+\nu_{j+1}P} \supseteq \cdots
  \end{align*}
  where ${(\mu_i)}_i$ and ${(\nu_j)}_j$ respectively denote the
  $0$-- and $G$--non gap sequences at $P$ (see Definition~\ref{def:non-gaps}).
\end{remark}

\subsection{List decoding, Guruswami Sudan algorithm}
The previous algorithms permit to correct up to half the designed
distance. Note that Shannon theory
essentially asserts that, for a random code, decoding up to the
minimum distance is almost always possible. A manner to fill the gap
between Hamming and Shannon's point of view is to use {\em list
  decoding}. That is to say, given a code $\CC \subseteq \Fq^n$, a
received word $\yv \in \Fq^n$ and a positive integer $t$, list
decoding consists in returning the whole list of codewords
$\cv \in \CC$ at distance less than or equal to $t$ from $\yv$.

Let $\CC = \CL{X}{\cP}{G}$, $\yv \in \Fq^n$ and $t$ be a positive
integer. Guruswami--Sudan algorithm may be regarded as a
generalisation of Algorithm~\ref{algo:basic} when reformulated as
follows.  In \S~\ref{subsec:basic}, we were looking for a function
$Q \in \Fq(X)[Y]$ of the form $Q = Q_0 + Q_1 Y$ with $Q_0 \in L(G+F)$
and $Q_1 \in L(F)$ such that the function $Q(f)$ is identically
$0$. In this reformulation, the function $Q_1$ is nothing but our
former error locating function $\lambda$.  Note that, regarded as a polynomial
in the variable $Y$, the polynomial $Q$ has degree $1$ and hence has a unique
root. Sudan's key idea is to consider a polynomial
of larger degree in $Y$ possibly providing a list of solutions instead
of a single one.

Fix a positive integer $\ell$ which will be our maximal list size, a
positive integer $s$ called the {\em multiplicity} and an extra
divisor $F$ whose support avoids the points of $\cP$ and whose
additional properties are to be decided later.  The algorithm is
divided in two main steps: {\em interpolation} and {\em root finding}
which are described in Algorithm~\ref{algo:GS}.

\begin{algorithm}
  \caption{Guruswami--Sudan algorithm for algebraic geometry codes}
  \label{algo:GS}
\begin{description}
\item[{\it Interpolation}] Compute a nonzero polynomial $Q \in \Fq(X)[Y]$
  of the form
  \[
    Q = Q_0 + Q_1 Y + \cdots + Q_\ell Y^\ell
  \]
  satisfying:
  \begin{enumerate}[(i)]
  \item For any $j \in \{0, \dots, \ell\}$, $Q_j \in L(F + (\ell - j)G)$;
  \item For any $i \in \{1, \dots, n\}$,
    the function $Q$ vanishes at $(P_i, y_i)$ with multiplicity at least $s$.
  \end{enumerate}
  \medskip
\item[{\it Root finding}] Compute the roots $f_1,\dots, f_m$
  ($m \leq \ell$) of $Q(Y)$ lying in $\Fq(X)$ and output the list of
  codewords of the form $(f_i(P_1), \dots, f_i(P_n))$ which are at
  distance at most $t$ from $\yv$.
\end{description}    
\end{algorithm}

\begin{remark}
  Geometrically speaking, the function $Q$ in Algorithm~\ref{algo:GS}
  can be interpreted as a rational function on the surface
  $X \times \P^1$.  In this context, saying that {\it $Q$ vanishes at
  $(P_i, y_i)$ with multiplicity at least $s$} means that
  $Q \in \mathfrak m_{(P_i, y_i)}^s$, where $\mathfrak m_{(P_i, y_i)}$
  denotes the maximal ideal of the local ring of the surface at the
  point $(P_i, y_i)$.  From a more computational point of view, given
  a local parameter $t \in \Fq(X)$ at $P_i$, the function $Q$ has a Taylor
  series expansion $\sum_{u,v} q_{uv}t^u (Y-y_i)^v$ and the vanishing
  requirement means that $q_{uv} = 0$ for any pair $(u,v)$ such that
  $u+v < s$.
\end{remark}

\begin{remark}
  In \cite[\S~6.3.4]{guruswami2005book}, a polynomial time algorithm
  to perform the root finding step is presented. This algorithm
  consists in a reduction of $Q$ modulo some place of $\Fq(X)$ of
  large enough degree, then a factorisation of the reduced polynomial
  using Berlekamp or Cantor--Zassenhaus algorithm is done, followed by
  a ``lifting'' of the roots in $\Fq(Y)$, under the assumption
  that they lie in a given Riemann--Roch space.
\end{remark}

\begin{theorem}\label{thm:GS_radius}
  Let
  \[
    t \leq n - \frac{n(s+1)}{2(\ell + 1)} - \frac{\ell \deg G}{2s} - \frac g s,
  \]
  then, for any extra divisor $F$ satisfying
  \begin{enumerate}[(a)]
  \item $\deg (F + \ell G) < s(n-t)$;
  \item
    $\deg (F + \ell G) > \frac{ns(s+1)}{2(\ell + 1)} + \frac{\ell \deg
      G}{2} + g - 1$,
  \end{enumerate}
  Guruswami--Sudan algorithm succeeds in returning in polynomial time
  the full list of codewords in $\CL{X}{\cP}{G}$ at distance less than
  or equal to $t$ from $\yv$.
\end{theorem}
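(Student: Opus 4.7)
The plan is to follow the standard Guruswami--Sudan paradigm in three steps: (1) use a dimension count to show that the interpolation step has a nonzero solution $Q$, (2) prove that any $f\in L(G)$ whose evaluation lies within Hamming distance $t$ of $\yv$ must satisfy $Q(f)=0$, and (3) bound the number of such candidates via the $Y$-degree of $Q$.

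For step~(1), I would count linear constraints versus unknowns. The requirement that $Q$ vanish at $(P_i, y_i)$ with multiplicity $\geq s$ imposes $s(s+1)/2$ linear conditions per point (on the Taylor coefficients of $Q$ in a local parameter at $P_i$ and in $Y - y_i$), for $ns(s+1)/2$ constraints in total. The number of unknowns is
$$\sum_{j=0}^\ell \ell(F + (\ell - j)G),$$
which by Riemann--Roch is bounded below by $(\ell+1)\deg F + \tfrac{\ell(\ell+1)}{2}\deg G + (\ell+1)(1-g)$. Condition~(b), after multiplication by $\ell+1$, is exactly equivalent to this lower bound strictly exceeding $ns(s+1)/2$, which forces the existence of a nonzero~$Q$.

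For step~(2), since $Q_j\in L(F+(\ell-j)G)$ and $f^j \in L(jG)$, each product $Q_j f^j$ lies in $L(F+\ell G)$, hence so does $Q(f)$. Next, at each index $i$ with $f(P_i) = y_i$ I would compute the local expansion of $Q$ around $(P_i,y_i)$: writing $Q = \sum q_{u,v}\,t^u(Y-y_i)^v$ for a uniformizer $t$ at $P_i$, the assumption $q_{u,v}=0$ when $u+v<s$, together with $v_{P_i}(f - y_i)\geq 1$, gives $v_{P_i}(Q(f))\geq s$. There being at least $n-t$ such indices, the function $Q(f)$ has at least $s(n-t)$ zeros counted with multiplicity. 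Condition~(a) reads $s(n-t) > \deg(F+\ell G)$, which exceeds the maximum number of zeros a nonzero element of $L(F+\ell G)$ can have; hence $Q(f)=0$.

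For step~(3), $Q$ is a nonzero polynomial of $Y$-degree at most $\ell$ over the field $\Fq(X)$, so it has at most $\ell$ roots there, and the roots yielding codewords within distance $t$ from $\yv$ are filtered out by direct evaluation. The compatibility of conditions~(a) and~(b) (existence of an integer $\deg F$ satisfying both) reduces after straightforward arithmetic to the hypothesis on $t$ in the statement. For complexity, the interpolation is polynomial-time linear algebra in $n,s,\ell$, and the root-finding in $\Fq(X)[Y]$ can be carried out in polynomial time via the procedure referenced in the surrounding remark. The main technical subtlety is the local Taylor computation showing that the $s$-fold vanishing of $Q$ at $(P_i,y_i)$ transfers to an $s$-fold vanishing of $Q(f)$ at $P_i$ whenever $f(P_i) = y_i$; the remainder is careful bookkeeping with Riemann--Roch.
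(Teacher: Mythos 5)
Your proposal is correct and follows exactly the standard Guruswami--Sudan argument (constraint/unknown count via Riemann--Roch for the interpolation step, the zero-counting argument in $L(F+\ell G)$ for root containment, and the $Y$-degree bound for the list size), which is precisely the route taken in the reference \cite{beelen08} to which the paper delegates this proof. In particular, your observation that condition (b) multiplied by $\ell+1$ is exactly the inequality ``unknowns $>$ constraints'' and that condition (a) is exactly ``$s(n-t)$ zeros exceed $\deg(F+\ell G)$'' matches that proof, so there is nothing to add.
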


\begin{proof}
  See for instance \cite[Lem.~2.3.2 \& 2.3.3]{beelen08}.
\end{proof}

\begin{remark}
  The case with no multiplicity, i.e. the case $s=1$ has been
  considered before Guruswami and Sudan by Sudan for Reed--Solomon
  codes \cite{sudan97jcomp} and Shokrollahi and Wasserman
  \cite{shokrollahi99it} for algebraic geometry codes.
\end{remark}

The following remark has been transmitted by Peter Beelen. We warmly
thank him for this help.

\begin{remark}
  When, $\ell = s = 1$, the algorithm should yield the basic
  algorithm.  However, Theorem~\ref{thm:GS_radius} yields a decoding
  radius of $\lfloor \frac{\dgop^* - 1}{2} - g \rfloor$: a defect $g$
  instead of $\frac g 2$. A further analysis shows that, whenever
  $g > 0$, the linear system whose solution is the polynomial $Q$ has
  never full--rank. On the other hand, the analysis of the decoding
  radius is done without taking this rank defect into account, leading
  to a slightly pessimistic estimate. A further analysis of the rank of the
  system in the general case might provide a slight improvement of
  Guruswami--Sudan decoding radius.
\end{remark}

 \section{Application to public-key cryptography: McEliece-type
  crypto\-sys\-tem}\label{sec:McEliece}
\subsection{History}
In \cite{berlekamp1978it} Berlekamp, McEliece
and van Tilborg proved that the following problem is NP-complete.

\bigbreak

\noindent \textbf{Problem.}
 {\em Let $\CC \subseteq \Fq^n$ be a code, $t \leq n$ be a positive integer,
  and $\yv \in \Fq^n$. Decide
  whether there exists $\cv \in \CC$ whose Hamming distance with
  $\yv$ is less than or equal to $t$.
 }
\bigbreak

Note that NP--completeness may only assert hardness in the worst case.
However, this problem is commonly believed by the community to be
``{difficult in average}''. By this, we mean that most of the instances
of the problem seem difficult.

This result motivated McEliece to propose a code
based encryption scheme whose security was relying on the hardness of
this problem \cite{mceliece1978dsn}.  Roughly speaking, McEliece
scheme can be described follows.
\begin{itemize}
\item The public key is a code $\CC$;
\item the secret key is a secret element related to $\CC$ and permitting
  to decode;
\item encryption consists in encoding the plain text and include errors
  in it;
\item decryption is decoding.
\end{itemize}

More formally, let $\mathcal{F}_{n,k}$ be a family of codes of fixed
length and dimension $[n, k]$.  Consider also a set $\mathcal S$ of
``secrets'' together with a surjective map
$\CC : \mathcal S \rightarrow \mathcal F_{n,k}$ such that for any
$s \in \mathcal S$, the code $\CC(s)$ benefits from a decoding
algorithm $\mathbf{Dec}(s)$ {\bf
  depending on $s$} and correcting up to $t$ errors.

\bigskip

\noindent {\bf McEliece encryption scheme:}

\smallskip
\begin{description}
\item[{\bf Key generation}] Draw a uniformly random element
  $s \in \mathcal S$:
  \begin{description}
\item[{\it Secret key}] the secret $s$;
\item[{\it Public key}] A pair $(\Gm, t)$, where $\Gm$ is a
  $k \times n$ generator matrix of the code $\CC(s) \in \mathcal F_{n,k}$
  and $t$, the number of errors that our decoder $\mathbf{Dec}(s)$ can
  correct for the code $\CC(s)$.
  \end{description}
\item[{\bf Encryption}] The plain text is a vector $\mv \in \Fq^k$.
  Then, pick $\ev$ a uniformly random element of the
  set of words of weight $t$ and define the cipher text as:
  \[
  \yv_{\rm cipher} \eqdef \mv \Gm + \ev.
  \]
\item[{\bf Decryption}] Apply $\mathbf{Dec}(s)(\yv_{\rm cipher})$ to
  recover the plain text $\mv$ from $\yv_{\rm cipher}$.
\end{description}

\bigskip

\begin{example}[Algebraic geometry codes]
  \label{ex:McEliece_AG}
  Let $X$ be a curve of genus $g$ and $\mathcal S$ be the set of pairs
  $(\cP, G)$ where $\cP$ is an ordered $n$--tuple of rational points
  of $X$ and $G$ a divisor of degree $k-1+g$ with $2g-2 < \deg G < n$.

  Then, the corresponding family $\mathcal F_{n,k}$ is nothing but the
  family of any $[n,k]$ algebraic geometry code $\CL{X}{\cP}{G}$. For
  such codes, according to the results of \S~\ref{subsec:Feng_Rao},
  using error--correcting arrays,
  these codes benefit from an efficient decoding algorithm correcting up
  to $t = \frac{\dgop^* - 1}{2}\cdot$
\end{example}

\begin{example}[Generalized Reed--Solomon codes]
  \label{ex:McEliece_GRS}
  A subcase of the previous one consists in considering generalized
  Reed--Solomon codes, i.e. algebraic geometry codes from $\P^1$.
  According to Definition~\ref{def:GRS}, the set $\mathcal S$
  can be constructed as the set of pairs $(\xv , \yv)$
  where $\xv$ is an ordered $n$--tuple of distinct elements
  of $\Fq$ and $\yv$ and ordered $n$--tuple of nonzero elements
  of $\Fq$. Then, the map $\CC$ is nothing but
  $\CC : (\xv, \yv) \mapsto \GRS{k}{\xv}{\yv}$.
\end{example}

\subsection{McEliece's original proposal using binary classical
  Goppa codes}
As explained further in \S~\ref{sec:AG_unsecure}, none of
Examples~\ref{ex:McEliece_AG} and~\ref{ex:McEliece_GRS} provide secure
instantiations for McEliece scheme.  An efficient and up to now secure
way to instantiate the scheme is to use subfield subcodes (see
Definition~\ref{def:subfield_subcode}) of GRS codes instead of genuine
GRS ones.  Such codes are usually called {\em alternant codes} (see
\cite[Chap.~12]{sloane1977book}). McEliece's historical proposal was
based on classical Goppa codes defined at
\eqref{eq:classical_Goppa_code} in \S~\ref{subsec:class_Goppa}.  That
is to say codes of the form
\[
\Gamma(\xv, f, \F_{q_0}) = \COm{\P^1}{\cP}{G} \cap \F_{q_0}^n
\]
where $\cP$ is the sequence of points $((x_1 : 1), \ldots, (x_n:1))$
and $G = (f)_0 + P_\infty$. Therefore, classical Goppa codes are
subfield subcodes of AG codes from $\P^1$.

\subsection{Advantages and drawbacks of McEliece scheme}
Since the works of Shor \cite{shor94focs}, it is known that if a
quantum computer exists in the future, then the currently used
cryptographic primitives based on number theoretic problems such as
integer factoring or the discrete logarithm problem would become
insecure. Code--based cryptography is currently of central interest
since it belongs to the few cryptographic paradigms that are believed
to resist to a quantum computer.

On the other hand, a major drawback of McEliece original proposal
\cite{mceliece1978dsn} is the size of the public key: about 32,7
kBytes to claim 65 bits of security\footnote{A cryptosystem is
  claimed to have {\em $x$ bits of security} if the best known attack
  would cost more than $2^x$ operations to the attacker.}. Note that
nowadays the limits in terms of computation lie around $2^{80}$
operations and the new standards require at least $128$ bits
security. To reach this latter level, the recent NIST submission {\em
  Classic McEliece} \cite{bernstein2019nist} suggests public keys of
at least 261 kBytes.  As a comparison, according to NIST
recommendations for key managements
\cite[\S~5.6.1.1,~Tab.~2]{barker2019nist}, to reach a similar security
level with RSA, a public key of 384 Bytes (3072 bits) would be
sufficient!

\subsection{Janwa and Moreno's proposals using AG codes}
Because of this drawback, many works subsequent to McEliece consisted
in proposing other code families in order to reduce the size of the
keys. In particular, Niederreiter suggests to use Reed--Solomon codes
in \cite{niederreiter1986pcit}, i.e. AG codes from $\P^1$.  Later,
the use of AG codes from curves of arbitrary genus has been suggested in
Janwa and Moreno's article \cite{janwa1996dcc}.  More precisely, Janwa
and Moreno's article contains three proposals, involving algebraic
geometry codes:
\begin{enumerate}[(JM1)]
  \item\label{item:concat} a proposal based on AG codes;
  \item\label{item:raw_AG} a proposal based on concatenated AG codes;
  \item\label{item:subfield_McE} a proposal based on subfield subcodes of AG codes.
  \end{enumerate}

  \subsection{Security}
  For any instantiation of McEliece scheme,
  one should distinguish two kind of attacks:
  \begin{enumerate}
  \item {\bf Message recovery attacks} consist in trying to recover
    the plain text from the data of the cipher text. Such an attack
    rests on generic decoding algorithms, such as Prange {\em
      Information Set Decoding} \cite{prange1962ireit} and its
    improvements \cite{stern1988cta, canteaut1998it, lee88eurocrypt,
      dumer89pit, may2011ac, becker2012ec, may2015ec}.
  \item {\bf Key recovery attacks} consist in recovering the secret
    key from the data of the public key and rest on {\em ad--hoc}
    methods depending on the family of codes composing the set of
    public keys.
  \end{enumerate}

  In the sequel, we discuss the security of Janwa and Moreno's three proposal
  with respect to key--recovery attacks. It is explained in particular that
  proposals (JM\ref{item:concat}) and (JM\ref{item:raw_AG}) are not
  secure.
  
  \subsubsection{Concatenated codes are not secure}
  In \cite{sendrier1994ec}, Sendrier showed that concatenated codes
  have an inherent weakness which makes them insecure for public key
  encryption. In particular, proposal (JM\ref{item:concat}) should not
  be used.

  \subsubsection{Algebraic geometry codes are not
    secure}\label{sec:AG_unsecure}
  The raw use of algebraic geometry codes
  (JM\ref{item:raw_AG}) has been subject to two kinds of
  attacks.
  \begin{itemize}
  \item The case of curves of genus $0$ was already proved to be
    insecure by Sidelnikov and Shestakov \cite{sidelnikov1992dma}.
    Note that actually, a procedure to recover the structure of a
    generalized Reed--Solomon code from the data of a generator matrix
    was already known by Roth and Seroussi \cite{roth85it}.
  \item An extension of Sidelnikov and Shestakov's attack due to
    Minder permitted to break AG codes from elliptic curves
    \cite{minder2007phd}. This attack has been extended to genus 2
    curves by Faure and Minder in \cite{faure2008iwcc}. Actually,
    Faure and Minder's attack can be extended to any AG code
    constructed from a hyperelliptic curve but its cost is
    exponential in the curve's genus.
  \item Finally, an attack due to Couvreur, M\'arquez--Corbella and
    Pellikaan~\cite{couvreur2017it} permits to recover an efficient
    decoding algorithm in polynomial time from the public key.  This
    attack can be extended to subcodes of small codimension.

    The attack
    is based on a distinguisher which rests on the $\star$--product
    operation (see Section~\ref{sec:notation}) and on the computation
    of a filtration of the public code composed of algebraic geometry
    subcodes.  This filtration leads to the construction of an error
    correcting array for the public code (see
    \S~\ref{subsec:Feng_Rao}). Similar approaches have been used to
    attack McEliece scheme based on variants of GRS codes
    \cite{couvreur2014dcc} and on classical Goppa codes when the
    subfield has index $2$ \cite{couvreur2017it-2}.
  \end{itemize}

  \medbreak
  
  \subsubsection{Conclusion: only subfield subcodes of AG codes resist}
  The raw use of algebraic geometry codes is not secure for public key
  encryption. On the other hand, subfield subcodes of AG codes
  (JM\ref{item:subfield_McE}) are still resistant to any known
  attack.  Some recent proposals make a step in this direction such as
  \cite[Chap.~5]{barelli2018phd}.  Recall that these codes can be
  regarded as an extension to arbitrary genus of classical Goppa
  codes which remain unbroken forty years after McEliece's original
  proposal.

 \section{Applications related to the $\star$--product:
  frameproof codes, mul\-tiplication algorithms,
  and secret sharing}\label{sec:MPC}
Recall that $\star$ denotes component wise multiplication in $\F_q^n$,
and that the $\star$-product of two linear codes
$\code,\code'\subseteq\F_q^n$ is the linear span of the pairwise
products of codewords from $\code$ and $\code'$:
\[\code\star\code'\eqdef \linspan_{\F_q}\{\word{c}\star\word{c'}:\,\word{c}\in\code,\,\word{c'}\in\code'\}.\]
Recall also that the \emph{square} of $\code$ is
$\code\deux=\code\star\code$, and likewise its higher powers are
defined by induction: $\code\deux[(t+1)]=\code\deux[t]\star\code$.

The link between AG codes and the theory of $\star$-product essentially comes from the obvious inclusion
\begin{equation}
\label{C(P,A)C(P,B)}
\CL{X}{\cP}{A}\star\CL{X}{\cP}{B}\subseteq\CL{X}{\cP}{A+B}.
\end{equation}
It was also observed that in many cases this inclusion is an equality.
For instance a sufficient condition for equality \cite[Cor.~9]{couvreur2017it}
(as a consequence of \cite[Th.~6]{mumford1970cime})
is that $A,B$ satisfy
$\deg(A)\geq 2g$ and $\deg(B)\geq 2g+1$.

\bigbreak

Somehow surprisingly, although $\star$-product is a very simple operation,
it turns out to have many interesting applications.
We already saw in Sections~\ref{sec:decoding} and~\ref{sec:McEliece} 
how it can be used for decoding 
and cryptanalysis.
In this section we will focus on a selection of further applications
in which AG codes play a prominent role:
frameproof codes, multiplication algorithms and arithmetic secret sharing.
A common feature of these constructions was pointed in~\cite{CCX2014}:
they involve codes $\CL{X}{\cP}{G}$ where $G$ is solution
to so-called \emph{Riemann-Roch equations},
arising from \eqref{C(P,A)C(P,B)}.

Eventually, the numerous applications of the $\star$-product of codes
made it desirable to have a better understanding of this operation for
its own sake, from a purely coding theoretical perspective.  Departing
from chronology, we will start with these aspects, and then come back
to applications later.

\subsection{The $\star$-product from the perspective of AG codes}
Works studying the $\star$-product of codes for itself include the research articles
\cite{HR-agis2013,HR-Singleton2013,CCMZ2015,HR-ISIT2015,MZ2015,Cascudo2019,CGR2020}
as well as the expository paper
\cite{HR-AGCT14} which organizes the theory in a more systematic way.

We will survey these works with emphasis on the results in which AG codes are involved.

\subsubsection{Basic properties}

\begin{proposition}
\begin{enumerate}[(i)]
\item \cite[Prop.~11]{HR-agis2013} For any $t\geq1$ we
  have \[\dim(\code\deux[(t+1)])\geq\dim(\code\deux[t])\]
  and \[d(\code\deux[(t+1)])\leq d(\code\deux[t]).\]
\item \cite[Cor.~2.33]{HR-AGCT14} If for some $r$ we have
  $\dim(\code\deux[(r+1)])=\dim(\code\deux[r])$, then also
  $\dim(\code\deux[(r+i)])=\dim(\code\deux[r])$ for all $i\geq0$.
\end{enumerate}
\end{proposition}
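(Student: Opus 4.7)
The plan is to reduce to the situation where $\code$ contains a codeword $v$ with every coordinate nonzero, and then exploit the weight-preserving bijection $\mu_v:x\mapsto v\star x$ of $\F_q^n$ to embed $\code\deux[t]$ isometrically into $\code\deux[(t+1)]$. Indeed, the inclusion
\[
  \mu_v(\code\deux[t])=v\star\code\deux[t]\subseteq\code\star\code\deux[t]=\code\deux[(t+1)]
\]
will immediately yield both $\dim\code\deux[(t+1)]\geq\dim\code\deux[t]$ and $d(\code\deux[(t+1)])\leq d(\code\deux[t])$, settling (i).

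To secure such a $v$ in general, I will first puncture $\code$ at every coordinate where all codewords vanish; this operation changes neither the dimensions nor the minimum distances of the $\code\deux[t]$. If the base field is still too small to guarantee a full-support codeword (e.g.\ over $\F_2$ with $\code=\langle(1,1,0),(0,1,1)\rangle$), I will extend scalars to $\F_{q^m}$ with $q^m>n$: a standard union-of-proper-subspaces argument then produces a full-support $v\in\code\otimes_{\F_q}\F_{q^m}$. Since $\star$-products commute with scalar extension, and since both the dimension and the minimum distance of a linear code are preserved under scalar extension (the latter because $\F_q$-linear dependence of vectors is unaffected by enlarging the field), it suffices to prove the statements for the scalar-extended code.

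For (ii), the equality $\dim\code\deux[(r+1)]=\dim\code\deux[r]$ combined with the injection $\mu_v:\code\deux[r]\hookrightarrow\code\deux[(r+1)]$ from (i) will force this injection to be surjective, so $\code\deux[(r+1)]=v\star\code\deux[r]$. Then
\[
  \code\deux[(r+2)]=\code\deux[(r+1)]\star\code=v\star(\code\deux[r]\star\code)=v\star\code\deux[(r+1)]
\]
has dimension $\dim\code\deux[(r+1)]=\dim\code\deux[r]$ by injectivity of $\mu_v$, and the same relation $\code\deux[(r+i+1)]=v\star\code\deux[(r+i)]$ then propagates by a straightforward induction on $i$. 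The main technical obstacle is the possible absence of a full-support codeword over small base fields, which is cleanly resolved by the scalar-extension step above.
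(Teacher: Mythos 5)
Your proof is correct. The paper itself does not prove this proposition but only cites \cite[Prop.~11]{HR-agis2013} and \cite[Cor.~2.33]{HR-AGCT14}, and your route --- discard identically-zero coordinates, extend scalars so that $\code$ acquires a full-support codeword $v$ (union-of-proper-subspaces argument), and use the injective, weight-preserving map $x\mapsto v\star x$ to embed $\code\deux[t]$ isometrically into $\code\deux[(t+1)]$, the same map forcing $\code\deux[(r+1)]=v\star\code\deux[r]$ and hence the stabilization in (ii) --- is essentially the argument of those references, resting on the standard facts that $\star$-products, dimension and minimum distance are all preserved under scalar extension.
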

The smallest $r=r(\code)$ such that
$\dim(\code\deux[(r+1)])=\dim(\code\deux[r])$ is called the
\emph{regularity} of $\code$ \cite[Def.~1.5]{HR-AGCT14}.  Thus,
$\dim(\code\deux[t])$ strictly increases for $t<r$ and then it
stabilizes.

\bigbreak

If $\code$ is an $[n,k]$-code and $\code'$ is a $[n,k']$-code, we have
a short exact sequence \cite[\S~1.10]{HR-AGCT14}
\begin{equation}
\label{suite_exacte_produit}
0\longrightarrow I(\code,\code')\longrightarrow
\code\otimes\code'\overset{\pi_\Delta}{\longrightarrow}
\code\star\code'\longrightarrow 0
\end{equation}
where $\code\otimes\code'$ is the usual (tensor) product code, of
parameters $[n^2,kk']$, that identifies with the space of $n\times n$
matrices all of whose columns are in $\code$ and all of whose rows are
in $\code'$, and $I(\code,\code')$ is the subspace made of such
matrices that are zero on the diagonal and $\pi_\Delta$ is projection
onto the diagonal.

Equivalently, let $\word{p_1},\dots,\word{p_n}$
(resp. $\word{p'_1},\dots,\word{p'_n}$) be the columns of a full--rank
generator matrix of $\code$ (resp. of $\code'$), and consider the
evaluation map
\begin{equation*}
  \map{\operatorname{Bilin}(\F_q^k\times\F_q^{k'}\to\F_q)}{\F_q^n}
  {b}{(b(\word{p_1},\word{p'_1}),\dots,b(\word{p_n},\word{p'_n})),}
\end{equation*}
where $\operatorname{Bilin}(\F_q^k\times\F_q^{k'}\to\F_q)$ denotes the
space of $\Fq$--bilinear forms on $\Fq^k \times \Fq^{k'}$.  Then,
$\code\star\code'$ is the image of this evaluation map, and
$I(\code,\code')$ is its kernel.

\bigbreak

Similar to \eqref{suite_exacte_produit}, for any $t$ we have a natural
short exact sequence \cite[Sec.~3]{MMP2011}
\begin{equation}
\label{suite_exacte_puissance}
0\longrightarrow I_t(\code)\longrightarrow S^t\code\longrightarrow
\code\deux[t]\longrightarrow 0
\end{equation}
where $S^t\code$ is the $t$-th symmetric power of $\code$,
which actually defines $I_t(\code)$ as the kernel
of the natural map $S^t\code\to\code\deux[t]$.

Equivalently and more concretely,
let $\word{p_1},\dots,\word{p_n}$ be the columns of a
generator matrix of $\code$, denote by $\F_q[x_1,\dots,x_k]_t$ the
space of degree $t$ homogeneous polynomials in $k$ variables, and
consider the evaluation map
\begin{equation*}
\map{\F_q[x_1,\dots,x_k]_t}{\F_q^n}{f}{(f(\word{p_1}),\dots,f(\word{p_n})).}
\end{equation*}
Then $\code\deux[t]$ is the image of this evaluation map, and
$I_t(\code)$ is its kernel, the space of degree $t$
homogeneous forms vanishing at $\word{p_1},\dots,\word{p_n}$.

When $\CC$ is an AG code, M\'arquez-Corbella, Mart{\'\i}nez-Moro, and
Pellikaan showed how $I_2(\code)$ allows to retrieve the underlying
curve of an AG code:
\begin{proposition}[{\cite[Prop.~8 \& Th.~12]{MMP2011}}]
  Let $\code=\CL{X}{\cP}{G}$ be an evaluation code, and assume that
  $g=g(X)$, $n=\card{\cP}$, and $m=\deg(G)$ satisfy
  \[
    2g+2\leq m < \frac{n}{2}\cdot
  \]
  Then $I_2(\code)$ is a set of quadratic equations defining $X$
  embedded in $\P^{k-1}$.
\end{proposition}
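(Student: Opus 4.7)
The plan is to combine three ingredients that correspond exactly to the three hypotheses: (i) the inequality $m\ge 2g+1$ makes $G$ very ample, so that $|G|$ embeds $X$ into $\P^{k-1}$; (ii) the inequality $m<n/2$ enables a Bézout-type degree count forcing vanishing on the whole curve; and (iii) the inequality $m\ge 2g+2$ lets us invoke the classical theorem of Castelnuovo--Mumford \cite[Th.~6]{mumford1970cime}, according to which the homogeneous ideal of a curve embedded by a divisor of degree at least $2g+2$ is generated in degree $2$.

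\emph{Setup.} By Corollary~\ref{cor:RR}, $k=\ell(G)=m+1-g$. Since $m\ge 2g+1$, the linear system $|G|$ is very ample and induces a closed embedding $\phi_G\colon X\hookrightarrow\P^{k-1}$. Choosing a basis $f_1,\dots,f_k$ of $L(G)$ giving rise to the generator matrix of $\code$ whose columns are the $\word{p_i}$, the image $\phi_G(P_i)$ is exactly the projective class $[\word{p_i}]\in\P^{k-1}$, so $I_2(\code)$ is literally the space of quadratic forms on $\P^{k-1}$ vanishing at all the points $\phi_G(P_i)$.

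\emph{Identification of $I_2(\code)$ with quadrics vanishing on $\phi_G(X)$.} One inclusion is immediate. For the converse, given $Q\in I_2(\code)$, consider its restriction to $\phi_G(X)$: via the isomorphism $\phi_G^{*}\mathcal{O}_{\P^{k-1}}(2)\simeq\mathcal{O}_X(2G)$, this restriction is an element of $L(2G)$. If it were nonzero, its divisor of zeros on $X$ would have degree $\deg(2G)=2m$; but this divisor must be at least $P_1+\cdots+P_n$, so its degree would be at least $n>2m$, a contradiction. Hence $Q$ vanishes identically on $\phi_G(X)$, and $I_2(\code)=H^0(\P^{k-1},\mathcal{I}_{\phi_G(X)}(2))$.

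\emph{Conclusion.} The hypothesis $m\ge 2g+2$ now permits us to invoke the Castelnuovo--Mumford result cited above: the homogeneous ideal of $\phi_G(X)\subseteq\P^{k-1}$ is generated by its degree-$2$ part. Combined with the previous identification, this shows that $I_2(\code)$ is a set of quadratic equations cutting out $X$ as a subscheme of $\P^{k-1}$, as claimed. The only nontrivial input in this plan is the Castelnuovo--Mumford theorem in the final step, which is where I expect the main difficulty to lie; the middle step is a direct Riemann--Roch plus degree count, and the first is standard. Each of the three degree hypotheses is used exactly once.
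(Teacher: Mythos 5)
The survey itself gives no proof of this proposition, only the citation to \cite[Prop.~8 \& Th.~12]{MMP2011}, and your argument is correct and essentially the argument of that reference: the hypothesis $2m<n$ (equivalently $\ell(2G-D_{\cP})=0$) identifies $I_2(\code)$ with the space of quadrics vanishing on the curve embedded by the complete, very ample system $|G|$, and the hypothesis $m\geq 2g+2$ gives generation of its homogeneous ideal in degree two, so the quadrics in $I_2(\code)$ cut out exactly $\phi_G(X)$. The only thing to adjust is the reference for the last step: in this survey \cite[Th.~6]{mumford1970cime} designates the surjectivity of the multiplication maps $L(A)\otimes L(B)\to L(A+B)$, whereas the fact you need (the homogeneous ideal is generated by quadrics once the degree is at least $2g+2$) is the companion theorem of Mumford/Saint-Donat/Fujita in those same notes, so the precise citation should be corrected.
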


\subsubsection{Dimension of $\star$-products}

Let $\code$ be a $[n,k]$-code and $\code'$ a $[n,k']$-code.
From~\eqref{suite_exacte_produit} it follows
\[\dim(\code\star\code')\leq\min(n,kk')\]
and one expects that if $\code,\code'$ are independent random codes,
then, with a high probability, this inequality becomes an equality:
$\dim(\code\star\code')=\min(n,kk')$.  We refer to
\cite[Th.~16--18]{HR-ISIT2015} for precise statements and proofs.  By
contrast, we observe that for AG codes $\code=\CL{X}{\cP}{G}$ and
$\code'=\CL{X}{\cP}{G'}$ with the same evaluation points sequence
$\cP$, we can get much stronger upper bounds.  For instance, if
$2g-2<\deg(G+G')<n$, then Theorem~\ref{thm:basics} together with
\eqref{C(P,A)C(P,B)} give $\dim(\code\star\code')\leq k+k'-1+g$.

Likewise from~\eqref{suite_exacte_puissance} it follows
\[\dim(\code\deux[t])\leq\min\left(n,\binom{k+t-1}{t}\right)\]
and, at least for $t=2$, one expects that if $\code$ is a random code, then with high probability
this inequality becomes an equality: $\dim(\code\deux)=\min\left(n,\frac{k(k+1)}{2}\right)$.
We refer to \cite[Th.~2.2--2.3]{CCMZ2015} for precise statements and proofs.

\begin{remark}[{\cite[Prop.~19]{HR-ISIT2015}}]
  When $t>q$, we always have strict inequality
  $\dim(\code\deux[t])<\binom{k+t-1}{t}$, because of the extra
  relations
  $\word{c}\deux[q]\star\word{c'}=\word{c}\star\word{c'}\deux[q]$.
\end{remark}

Concerning lower bounds, from now on we will make the simplifying
assumption that \emph{all our codes have full support}, i.e. any
generator matrix has no zero column.
 
\begin{proposition}[{\cite[\S~3.5]{HR-AGCT14}}]
\label{dim*=k+k'-1}
Assume $\code$ or $\code'$ is MDS. Then
\[\dim(\code\star\code')\geq\min(n,k+k'-1).\]
\end{proposition}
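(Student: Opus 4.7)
The plan is to pass to the dual: set $M \eqdef (\code\star\code')^\perp$, so that the stated bound becomes $\dim M \leq \max(0,\, n-k-k'+1)$. By the adjunction property of Lemma~\ref{lem:adjunction}, a vector $\word{z} \in \Fq^n$ lies in $M$ if and only if $\word{z}\star\code \subseteq (\code')^\perp$. I will assume WLOG that it is $\code$ which is MDS.

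The heart of the argument is the \emph{saturated} case $k+k' > n$, where I would show $M = \{0\}$ and hence $\code\star\code' = \Fq^n$. Take a nonzero $\word{z} \in M$, with support $S$ and weight $w$. Two standard dimension computations are at play. First, the MDS hypothesis on $\code$ gives $\dim(\code \cap \Fq^{S^c}) = \max(0,\, k-w)$, so the map $\word{c} \mapsto \word{z}\star\word{c}$ has image $\word{z}\star\code$ of dimension $\min(k,w)$. Second, the elementary duality between puncturing and shortening yields $(\code')^\perp \cap \Fq^S \cong (\pi_S(\code'))^\perp$ inside $\Fq^S$, hence $\dim((\code')^\perp \cap \Fq^S) = w - \dim\pi_S(\code')$; moreover $\dim\pi_S(\code') \geq k' - (n-w)$ since $\dim(\code' \cap \Fq^{S^c}) \leq n-w$. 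Feeding these into $\word{z}\star\code \subseteq (\code')^\perp \cap \Fq^S$ gives $\dim\pi_S(\code') \leq \max(0,\, w-k)$. If $w \geq k$, the combined inequality $k'+w-n \leq \dim\pi_S(\code') \leq w-k$ collapses to $k+k' \leq n$, contradicting the case hypothesis. If $w < k$, it forces $\pi_S(\code') = 0$, i.e.\ $\code' \subseteq \Fq^{S^c}$, contradicting the full support of $\code'$ since $S \neq \emptyset$.

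For the general case $k+k'-1 \leq n$, I would reduce to the saturated one by puncturing. Pick an information set $J$ of $\code'$ (of size $k'$) and enlarge it arbitrarily to a subset $I \subseteq \{1,\dots,n\}$ of size $k+k'-1$. Both $\code|_I$ and $\code'|_I$ inherit full support; $\code|_I$ is still MDS of dimension $k$ since any $k$ coordinates form an information set for an MDS code; and $\code'|_I$ has dimension $k'$ because $I \supseteq J$. Inside $\Fq^I$ one has $k+k' > |I|$, so the saturated case applies and $\code|_I \star \code'|_I = \Fq^I$. Puncturing commutes with the $\star$-product, so $(\code\star\code')|_I = \code|_I \star \code'|_I$ is of dimension $|I| = k+k'-1$, and hence $\dim(\code\star\code') \geq k+k'-1$.

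The main delicacy lies in the saturated case: the duality identity $\dim((\code')^\perp \cap \Fq^S) = w - \dim\pi_S(\code')$ should be stated cleanly, and it is easy to forget the final step in which the full support of $\code'$ is invoked to rule out the low-weight subcase $w<k$. Once the saturated case is established, the puncturing reduction is routine bookkeeping.
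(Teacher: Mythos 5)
Your argument is correct, and every step holds up under scrutiny: the adjunction of Lemma~\ref{lem:adjunction} does identify $(\code\star\code')^\perp$ with $\{\word{z} : \word{z}\star\code\subseteq(\code')^\perp\}$; for $\word{z}$ of support $S$ of size $w$, the MDS property of $\code$ (equivalently, the fact that $\code^\perp$ has minimum distance $k+1$ and that nonzero words of $\code$ have at most $k-1$ zeros) gives $\dim(\word{z}\star\code)=\min(k,w)$; the shortening/puncturing duality and the kernel bound give the two estimates on $\dim\pi_S(\code')$, and the two subcases ($w\geq k$ contradicting $k+k'>n$, and $w<k$ contradicting the standing full-support assumption on $\code'$) exhaust the possibilities. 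The reduction of the unsaturated case is also sound: puncturing preserves full support, an MDS code punctured to any $|I|\geq k$ positions stays MDS of dimension $k$, the inclusion of an information set of $\code'$ preserves its dimension, and puncturing commutes with $\star$, so the saturated case applied inside $\Fq^I$ yields the bound. Note that the survey itself gives no proof of this proposition -- it only points to \cite[\S~3.5]{HR-AGCT14} -- so there is no in-paper argument to measure yours against; what you have written is a complete, self-contained proof in the Cauchy--Davenport spirit (dualize to reduce to showing the product is all of $\Fq^n$ when $k+k'>n$, then window down to $k+k'-1$ coordinates in general), and it correctly isolates the two places where the hypotheses (MDS on one factor, full support on the other) are genuinely used.
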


The \emph{stabilizing algebra} of a code is introduced in \cite[Def.~2.6]{HR-AGCT14} as
\[
  \stab(\code) \eqdef
  \{\word{x}\in\F_q^n:\,\word{x}\star\code\subseteq\code\}.
\]
It is a subalgebra of $\F_q^n$, and admits a basis made of the
characteristic vectors of the supports of the indecomposable
components of $\code$ (\cite[Th.~1.2]{knapp80jalg} and
\cite[Prop.~2.11]{HR-AGCT14}).  In particular, $\dim(\stab(\code))$ is
equal to the number of indecomposable components of $\code$.

Mirandola and Z\'emor \cite{MZ2015} then established a coding analogue
of Kneser's theorem from additive combinatorics
\cite[Th.~5.5]{tao06book}, in which $\stab(\code)$ plays the same
role as the stabilizer of a subgroup:
\begin{theorem}[{\cite[Th.~18]{MZ2015}}]
\label{Kneser}
We have
\[\dim(\code\star\code')\geq k+k'-\dim(\stab(\code\star\code')).\]
\end{theorem}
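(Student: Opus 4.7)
The plan is to reduce the inequality to an indecomposable base case via the structural description of the stabilizer algebra recalled just before the statement, and then establish a Cauchy--Davenport-type bound in that case.

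First I would set $s = \dim \stab(\code \star \code')$ and write $\stab(\code \star \code') = \linspan_{\F_q}\{\mathbf{1}_{I_1}, \dots, \mathbf{1}_{I_s}\}$, where $I_1, \dots, I_s$ are the supports of the indecomposable components of $\code \star \code'$. Since $\code$ and $\code'$ both have full support, so does $\code \star \code'$ (pick any pair of codewords that are nonzero at a prescribed coordinate), whence $I_1 \sqcup \cdots \sqcup I_s = \{1, \dots, n\}$. This produces the block decomposition $\code \star \code' = \bigoplus_j (\code \star \code')|_{I_j}$, and a routine verification gives $(\code \star \code')|_{I_j} = \code|_{I_j} \star \code'|_{I_j}$, a code that is indecomposable on $I_j$ by construction. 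The natural projection $\code \hookrightarrow \bigoplus_j \code|_{I_j}$ is injective because the $I_j$ cover $\{1, \dots, n\}$, so $k \leq \sum_j \dim \code|_{I_j}$, and symmetrically $k' \leq \sum_j \dim \code'|_{I_j}$.

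Granted the indecomposable base case
\[
\dim(\AC \star \BC) \geq \dim \AC + \dim \BC - 1 \qquad \text{whenever } \AC \star \BC \text{ is indecomposable,}
\]
summation over the blocks $I_j$ yields
\[
\dim(\code \star \code') = \sum_{j=1}^s \dim(\code|_{I_j} \star \code'|_{I_j}) \geq \sum_{j=1}^s \bigl(\dim \code|_{I_j} + \dim \code'|_{I_j} - 1\bigr) \geq k + k' - s,
\]
which is exactly the asserted inequality. The first two steps are mostly bookkeeping; the content lies entirely in the base case.

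The hard part will be this base case, the genuine coding analogue of Cauchy--Davenport. I would approach it by induction on the length $N$ of the ambient space, using a puncturing argument modelled on Kneser's $e$-transform: for a suitable coordinate let $\AC^\ast, \BC^\ast$ denote the punctured codes, so that $\AC^\ast \star \BC^\ast = (\AC \star \BC)^\ast$, with each of $\dim \AC$, $\dim \BC$, $\dim(\AC \star \BC)$ dropping by at most one. The subtlety to watch is that puncturing can turn an indecomposable $\star$-product into a decomposable one, so the inductive hypothesis must be the full Kneser-type inequality $\dim(\AC \star \BC) \geq \dim \AC + \dim \BC - \dim \stab(\AC \star \BC)$ rather than only its indecomposable specialization. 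In this self-strengthening form the induction closes either when $N$ becomes small or when the stabilizer of $\AC \star \BC$ has grown enough to make the inequality trivial, mirroring exactly the way Kneser's theorem itself is proved in additive combinatorics.
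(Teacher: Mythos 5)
Your reduction to the indecomposable case is sound: with the full-support convention in force, the supports $I_1,\dots,I_s$ of the indecomposable components of $\code\star\code'$ partition $\{1,\dots,n\}$, restriction commutes with $\star$, and summing the block inequalities does turn the bound $\dim(\AC\star\BC)\geq\dim\AC+\dim\BC-1$ for indecomposable products into the stated theorem. But note that this reduction is little more than a restatement (the indecomposable case is exactly the theorem with $s=1$), so the entire content is deferred to your ``base case'' --- and that is where the proposal has a genuine gap. Note also that the survey itself gives no proof: it quotes the result from Mirandola--Z\'emor, whose published argument works with the pair of codes $(\AC,\BC)$ themselves and inducts on their dimensions, in the spirit of Kneser's transform and of the Hou--Leung--Xiang linear Kneser theorem, not by deleting coordinates.

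The puncturing induction you sketch does not close. Puncture at a coordinate $i$ and write $s=\dim\stab(\AC\star\BC)$. If $\AC\star\BC$ contains the weight-one vector supported at $\{i\}$, the step works: that vector splits off as an indecomposable component, so the punctured product has stabilizer of dimension $s-1$, while $\dim(\AC\star\BC)$ drops by exactly one and $\dim\AC+\dim\BC$ by at most two, and the inductive hypothesis gives the desired bound. But when $\AC\star\BC$ has no weight-one codeword --- the typical situation, and the only interesting one --- puncturing leaves $\dim\AC$, $\dim\BC$ and $\dim(\AC\star\BC)$ unchanged, while the stabilizer of the punctured product can strictly \emph{increase for every choice of} $i$; the inductive hypothesis then yields a strictly weaker inequality than the one you must prove, and nothing recovers the loss. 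Concretely, take $\AC=\langle(1,1,0),(0,1,1)\rangle\subseteq\F_q^3$ and $\BC=\langle(1,1,1)\rangle$: then $\AC\star\BC=\AC$ is indecomposable, so $s=1$ and the target is $2\geq 2+1-1$, yet puncturing at any of the three coordinates produces the decomposable code $\F_q^2$, whose stabilizer has dimension $2$, so the punctured statement only gives $2\geq 2+1-2$. Your ``self-strengthening'' remark does not help here, because the stabilizer that grows is the one appearing in the hypothesis you invoke. This is also why the analogy with Kneser's proof is misleading: the $e$-transform modifies the pair of summand sets inside a fixed ambient group and never deletes elements of the group; its correct linearization likewise modifies $(\AC,\BC)$ at fixed length. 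As it stands, the hard inequality is assumed rather than proved.
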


Pursuing the analogy with additive combinatorics, 
they also obtained the following characterisation of cases of equality in Proposition~\ref{dim*=k+k'-1}, 
which one might see as an analogue of Vosper theorem \cite[Th.~5.9]{tao06book}:
\begin{theorem}[{\cite[Th.~23]{MZ2015}}]
\label{Vosper}
Assume \emph{both} $\code$ and $\code'$ are MDS,
with $k,k'\geq2$, $k+k'\leq n-1$, and
\[\dim(\code\star\code')=k+k'-1.\]
Then $\code$ and $\code'$ are (generalized) Reed-Solomon codes with a common evaluation point sequence.
\end{theorem}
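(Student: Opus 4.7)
The plan is to emulate in the coding-theoretic setting the classical derivation of Vosper's theorem from Kneser's inequality in additive combinatorics: first extract from the hypothesis $\dim(\code\star\code')=k+k'-1$ that the $\star$-product is indecomposable, and then exploit the rigidity of this extremal case to recover a common GRS structure on both codes.

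First, by Theorem~\ref{Kneser} the hypothesis gives $\dim\stab(\code\star\code')\geq 1$. The first serious step is to upgrade this to the equality $\dim\stab(\code\star\code')=1$, i.e.\ that $\code\star\code'$ is indecomposable. Assume for contradiction a decomposition $\code\star\code'=V_1\oplus\cdots\oplus V_s$ along supports $S_1\sqcup\cdots\sqcup S_s=\{1,\dots,n\}$ with $s\geq 2$. Because $\code$ is MDS, the restriction $\code|_{S_i}$ has dimension $\min(k,|S_i|)$ and is itself MDS (or all of $\F_q^{S_i}$ when $|S_i|\leq k$); analogous facts hold for $\code'$. Since $\code|_{S_i}\star\code'|_{S_i}\subseteq V_i$, Proposition~\ref{dim*=k+k'-1} applied componentwise gives $\dim V_i\geq\min(|S_i|,\dim\code|_{S_i}+\dim\code'|_{S_i}-1)$. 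A short case analysis on whether each $|S_i|$ is small or large compared to $k,k'$ shows that $\sum_i\dim V_i$ strictly exceeds $k+k'-1$ as soon as $s\geq 2$: if all $|S_i|$ are bounded by $\max(k,k')$ the sum is at least $n\geq k+k'+1$; if some $|S_i|$ exceeds $\max(k,k')$, splitting off that term still leaves a strict excess from the other components. This contradiction forces $\code\star\code'$ to be indecomposable.

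Second, after reducing to the indecomposable case and (using diagonal equivalence) normalizing so that $\mathbf{1}\in\code\cap\code'$, which gives $\code+\code'\subseteq\code\star\code'$, exploit the tight Kneser equality to reach the GRS conclusion. The strategy is induction on $\min(k,k')$. The base case $\min(k,k')=2$ is classical: any MDS $[n,2]$ code containing $\mathbf{1}$ is determined, up to coordinate permutation, by one further codeword whose entries (together with their ratios) play the role of a common evaluation point sequence. For the inductive step, pick a codimension-one subcode $\code_0\subset\code$ that is still MDS; then $\code_0\star\code'$ has dimension either $k+k'-2$ or $k+k'-1$, and comparing these two alternatives across a sufficiently varying family of such $\code_0$ provides enough linear constraints to match $\code$ and $\code'$ to evaluation at a single common point sequence.

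The main obstacle lies entirely in the second step. The first step is essentially a mechanical consequence of Theorem~\ref{Kneser} combined with the MDS bookkeeping, whereas passing from \emph{indecomposable and extremal} to \emph{GRS on a common evaluation sequence} is the true content of a Vosper-type statement and demands the delicate analysis sketched above. One must also verify that enough codimension-one MDS subcodes exist to propagate the induction; marginal cases (for instance when $\code$ is generated entirely by minimum-weight codewords, so that removing any hyperplane destroys the MDS property) need to be dealt with separately, typically by switching to an alternative reduction that starts from $\code'$ instead.
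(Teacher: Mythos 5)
Your first step is fine and essentially complete: since MDS codes have full support and their restrictions to any block are again MDS (or the whole ambient space), applying Proposition~\ref{dim*=k+k'-1} block by block to a hypothetical decomposition of $\code\star\code'$ into $s\geq 2$ components gives $\dim(\code\star\code')\geq\min(n,k+k'-1)+1>k+k'-1$ because $n\geq k+k'+1$, so the product is indecomposable. But be aware that this is only a preliminary reduction (and note that the survey itself offers no proof of the theorem: it cites \cite{MZ2015}, with only a remark that the symmetric case follows from Castelnuovo's lemma). Everything of substance is in your second step, and there the proposal stops being a proof.

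Three concrete gaps. First, the normalization $\mathbf{1}\in\code\cap\code'$ requires each code to contain a full-support codeword, which an MDS code need not have: the doubly extended Reed--Solomon code of dimension $2$ and length $n=q+1$ has all nonzero weights equal to $q$, and the pair $\code=\code'$ equal to this code satisfies all the hypotheses (its square is the doubly extended RS code of dimension $3$), so this case cannot be normalized away and must be treated. Second, in the base case $k=2$ the assertion that $\code$ is (extended) GRS is indeed immediate, but the actual content is that $\code'$ is GRS \emph{on the same evaluation sequence}: writing $\code=\langle\mathbf{1},\xv\rangle$, one must deduce from $\dim(\code'+\xv\star\code')=k'+1$ that $\code'$ consists of evaluations of polynomials in the coordinates of $\xv$; this is a genuine argument (the coding analogue of the Cauchy--Davenport/Vosper chain, in the spirit of Sidelnikov--Shestakov), and calling it ``classical'' does not supply it. Third, the inductive step hinges on producing a codimension-one MDS subcode $\code_0\subset\code$ with $\dim(\code_0\star\code')=k+k'-2$; the dichotomy $\dim(\code_0\star\code')\in\{k+k'-2,\,k+k'-1\}$ is true but does not yield existence of the smaller value, and ``comparing the alternatives across a sufficiently varying family'' names the difficulty rather than resolving it -- indeed, exhibiting such a $\code_0$ is essentially equivalent to the filtration/GRS structure you are trying to prove, which is exactly the extremal-structure analysis carried out in \cite{MZ2015}. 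As written, the proposal establishes indecomposability of $\code\star\code'$ and a plausible plan, but not the theorem.
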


\begin{remark}
  The symmetric version ($\code = \code'$) of Theorem~\ref{Vosper} can
  actually be regarded as a direct consequence of Castelnuovo's Lemma
  \cite[\S~III.2,~p.~120]{arbarello1985book} asserting that for
  $n \geq 2k+1$, any $n$ points in general position in $\P^{k-1}$
  imposing $2r-1$ independent conditions on quadrics lie on a Veronese
  embedding of $\P^1$.
\end{remark}

\subsubsection{Joint bounds on dimension and distance}
A fundamental problem in coding theory is to find lower bounds
(existence results) or upper bounds (non-existence results) relating
the possible dimension and minimum distance of a code.  The analogue
for $\star$-products is to find similar bounds relating the dimensions
of a given number of codes and the minimum distance of their product
(or the dimension of a code and the minimum distance of some power).

Concerning lower bounds, a raw use of AG codes easily shows that for a
given $t$, if $q$ is large enough (depending on $t$), there are
asymptotically good codes over $\F_q$ whose $t$-th powers are also
asymptotically good.  For $t=2$ the following theorem shows that this
result actually holds for \emph{all} $q$.  The proof still uses AG
codes, but in combination with a specially devised concatenation
argument.
\begin{theorem}[{\cite{HR-agis2013}}]
  Over any finite field $\F_q$, there exist asymptotically good codes
  $\code$ whose squares $\code\deux$ are also asymptotically good.

  In particular for $q=2$ and for any $0<\delta<0.003536$ and
  $R\leq 0.001872-0.5294\delta$, there exist codes $\code$ of
  asymptotic rate at least $R$ whose squares $\code\deux$ have
  asymptotic relative minimum distance at least $\delta$.
\end{theorem}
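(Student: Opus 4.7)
The plan is to combine, via concatenation, an outer AG code over a sufficiently large extension field of $\F_q$ (chosen so that both the code and its square are asymptotically good) with an inner code over $\F_q$ supporting a bilinear multiplication algorithm in the sense of Chudnovsky--Chudnovsky.

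For any extension degree $s$ with $A(q^s) > 2$, the outer step is immediate from the Tsfasman--\vladut--Zink construction: along an optimal tower $X_i/\F_{q^s}$ of genus $g_i$ with $N_i/g_i \to q^{s/2}-1$, and for any fixed $\alpha$ with $1/A(q^s) < \alpha < 1/2$, the AG codes $A_i = \CL{X_i}{\cP_i}{G_i}$ with $\deg G_i = \lfloor \alpha N_i \rfloor$ have, by Theorem~\ref{thm:basics}, asymptotic rate $\alpha - 1/A(q^s)$ and relative distance at least $1-\alpha$, while by the inclusion \eqref{C(P,A)C(P,B)} one has $A_i\deux \subseteq \CL{X_i}{\cP_i}{2G_i}$ with relative distance at least $1-2\alpha > 0$. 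When $q$ is already large enough that $A(q) > 2$ (e.g.\ $q \geq 9$ a square) no further construction is needed.

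For small $q$, in particular $q = 2$ where $A(q) < 1$, one must transport these good codes down to $\F_q$. Fix an $\F_q$-linear injection $\phi : \F_{q^s} \hookrightarrow \F_q^{n_0}$ with image $B = \phi(\F_{q^s})$ such that the multiplication map $\F_{q^s} \otimes_{\F_q} \F_{q^s} \to \F_{q^s}$ factors through $B \star B$ via an $\F_q$-linear left inverse $\pi : B \star B \to \F_{q^s}$ satisfying $\pi(\phi(x) \star \phi(y)) = xy$. Such $(\phi, \pi)$ are the standard bilinear multiplication algorithms of Chudnovsky--Chudnovsky, constructible by evaluation--interpolation on a short auxiliary curve over $\F_q$, and chosen additionally so that $d(B)/n_0$ is as large as possible. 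The concatenation $C_i \eqdef A_i \circ B$ then has length $N_i n_0$, $\F_q$-dimension $s \cdot \dim_{\F_{q^s}} A_i$, and minimum distance at least $d(A_i)\, d(B)$.

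The main obstacle is lower-bounding $d(C_i\deux)$. A codeword $b \in C_i\deux$ lies blockwise in $B \star B$, and blockwise application of $\pi$ produces a codeword $\Pi(b) \in A_i\deux$. If $\Pi(b) \neq 0$, the weight is at least $d(A_i\deux) \cdot d(B \star B)$; codewords in $\ker \Pi \cap C_i\deux$ however need a separate argument. One resolution is to enlarge $B$ so that $\pi$ becomes injective on the span of products that actually occur, at the cost of a controlled rate loss; alternately, such kernel codewords arise only as outer cancellations $\sum_k c^{(k)} \star c'^{(k)}$ whose $\sum_k a^{(k)} \star a'^{(k)}$ vanishes in $A_i\deux$, and their weight is constrained by the linear algebra of the inner code. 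Balancing the outer parameter $\alpha$, the extension degree $s$, and the inner parameters $(n_0, d(B), d(B \star B))$---optimising for $q = 2$ with a small $s$ such as $s = 4$ so that $A(q^s) = 3$, and using the best short bilinear multiplication algorithms over $\F_2$---yields the claimed rate--distance relation $R \leq 0.001872 - 0.5294\, \delta$ valid up to $\delta < 0.003536$.
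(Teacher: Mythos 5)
Your outer step is correct and standard: over $\F_{q^s}$ with $A(q^s)>2$, Theorem~\ref{thm:basics} together with the inclusion \eqref{C(P,A)C(P,B)} indeed gives AG codes $A_i$ with $A_i$ and $A_i\deux$ simultaneously asymptotically good, and your general plan (such codes over $\F_{16}$, then a concatenation-type descent to $\F_2$) is the same shape as the cited proof. But the heart of the theorem is precisely the step you leave open, namely bounding the weight of words $b\in\code\deux$ of the concatenated code with blockwise $\pi$-image zero, and neither of your two suggested fixes can work. The first one is impossible: the products $\phi(x)\star\phi(y)$ span $B^{\star 2}$ by definition, so ``$\pi$ injective on the span of products that actually occur'' means $\pi$ injective on all of $B^{\star 2}$, and no multiplication-friendly pair $(\phi,\pi)$ for $\F_{q^s}/\F_q$ with $s\geq 2$ can have this property, however large you take $n_0$. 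Indeed, if $\pi$ were injective on $B^{\star 2}$, set $e=\phi(1)$ and $\psi=(\pi|_{B^{\star 2}})^{-1}$; then $\phi(x)\star\phi(y)=\psi(xy)$ for all $x,y$, whence by induction $\phi(x)^{\star m}=\phi(x^m)\star e^{\star(m-1)}$, and taking $m=q$ and using $v^{\star q}=v$ for every $v\in\F_q^{n_0}$ gives $\psi(x)=\psi(x^q)$ for all $x$, forcing $x^q=x$ on $\F_{q^s}$, a contradiction. So $\ker(\pi|_{B^{\star 2}})\neq 0$ for every admissible inner code, and the kernel problem cannot be legislated away by ``a controlled rate loss''.

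The second alternative is not an argument either: a word $b\in\code\deux$ killed by the blockwise map $\Pi$ is, for all you have shown, allowed to be supported on a single block, in which case ``the linear algebra of the inner code'' only yields a weight bounded below by the minimum weight of $\ker(\pi|_{B^{\star 2}})$, a constant independent of the outer length; this is perfectly compatible with the relative minimum distance of $\code\deux$ tending to $0$, which is exactly what has to be excluded. (Trying to control the kernel components through the outer symbols does not help naively: they are only $\F_q$-bilinear, not $\F_{q^s}$-bilinear, functions of the outer coordinates, so they involve Frobenius-twisted products whose Riemann--Roch degrees are far too large.) Handling these cancellation words is the whole content of the result --- it is what the survey refers to as the ``specially devised concatenation argument'' of \cite{HR-agis2013}, plain concatenation with a Chudnovsky--Chudnovsky multiplication-friendly pair being incompatible with $\star$-squares --- and since that mechanism is absent, the concluding parameter optimisation, and in particular the explicit line $R\leq 0.001872-0.5294\,\delta$ for $\delta<0.003536$, is unsupported.
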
 

In finite length, other constructions have been studied by Cascudo and
co-authors that give codes such that both $\code$ and $\code\deux$
have good parameters.  This includes cyclic codes \cite{Cascudo2019}
and matrix-product codes \cite{CGR2020}.

\bigbreak

Concerning upper bounds, the following result is sometimes called the
\emph{product Singleton bound}.
\begin{theorem}[{\cite{HR-Singleton2013}}]
  Let $t\geq2$ be an integer and
  $\code_1,\dots,\code_t\subseteq\F_q^n$ linear codes with full
  support.  Then, there exist codewords
  $\word{c_1}\in\code_1,\dots,\word{c_t}\in\code_t$ whose product has
  Hamming weight
\[0<w_H(\word{c_1}\star\cdots\star\word{c_t})\leq\min(t-1,\,n-(k_1+\cdots+k_t)+t)\]
where $k_i=\dim(\code_i)$.
In particular we have
\[d(\code_1\star\cdots\star\code_t)\leq\min(t-1,\,n-(k_1+\cdots+k_t)+t).\]
\end{theorem}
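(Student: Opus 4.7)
The plan is to construct explicit codewords $\word{c_i} \in \code_i$ whose $\star$-product realizes the claimed weight bound. I focus first on the bound $n - k + t$, where $k = k_1 + \cdots + k_t$; this is the Singleton-type content of the theorem. The strategy is to prescribe where each $\word{c_i}$ is forced to vanish, using the dimension $k_i$ to create enough room, and then check both that the weight is controlled and that the product is nonzero.

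First I would choose pairwise disjoint subsets $T_1, \ldots, T_t \subseteq \{1, \ldots, n\}$ with $|T_i| = k_i - 1$, which is possible as soon as $k - t \leq n$, i.e.\ in the regime where the bound $n - k + t \geq 0$ is nontrivial. For each $i$, the projection $\code_i \to \F_q^{T_i}$ has kernel $V_i$ of dimension at least $k_i - (k_i - 1) = 1$ by rank-nullity. Choosing any nonzero $\word{c_i} \in V_i$, the product $\word{c_1} \star \cdots \star \word{c_t}$ automatically vanishes on $T_1 \cup \cdots \cup T_t$, a set of size $k - t$; its weight is therefore at most $n - (k - t) = n - k + t$. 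The complementary bound $t-1$ would be obtained by an analogous construction in the dual regime, localizing the support of the product to a prescribed set of $t-1$ coordinates and using the dimension surplus in the complementary coordinates, again exploiting the full-support hypothesis.

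The main obstacle is the nontriviality of the product: a priori the constructed $\word{c_1} \star \cdots \star \word{c_t}$ could be zero if for every coordinate $j \notin T_1 \cup \cdots \cup T_t$ some $\word{c_i}(j)$ vanishes. The plan is to invoke the full-support hypothesis to choose the $T_i$ in sufficiently generic position so that, for some common $j^{*} \notin T_1 \cup \cdots \cup T_t$, the restriction $\mathrm{ev}_{j^{*}} : V_i \to \F_q$ is nonzero for every $i$; then a suitable choice of $(\word{c_1}, \ldots, \word{c_t}) \in V_1 \times \cdots \times V_t$ yields a product with $c_1(j^{*}) \cdots c_t(j^{*}) \neq 0$. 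The full-support hypothesis ensures that each $\mathrm{ev}_j$ is a nonzero linear form on $\code_i$, so the degeneracy loci $\{V_i \subseteq \ker \mathrm{ev}_j\}$ are codimension conditions in the parameter space of admissible $T_i$'s and can be simultaneously avoided. Making this genericity argument fully rigorous, by a dimension count in the Grassmannian of $(k_i - 1)$-subspaces of $\F_q^{n}$, or equivalently by a direct combinatorial perturbation of the $T_i$, is the most technical step of the proof; the subsequent passage from the existence of such a small-weight product to the claimed bound on $d(\code_1 \star \cdots \star \code_t)$ is then immediate, since $\word{c_1} \star \cdots \star \word{c_t}$ is a nonzero codeword of the product code.
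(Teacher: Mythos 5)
Your plan has two genuine problems, one with the target statement itself and one with the key step of your construction. First, the bound you are trying to prove in the form displayed here is stronger than what is true: the theorem of \cite{HR-Singleton2013} asserts $0<w_H(\word{c_1}\star\cdots\star\word{c_t})\leq\max\bigl(t-1,\,n-(k_1+\cdots+k_t)+t\bigr)$, and the $\min$ in the display is a misprint. The bound $t-1$ does not hold unconditionally: take $\code_1=\cdots=\code_t$ one-dimensional codes spanned by full-weight words (e.g.\ repetition codes); every nonzero product $\word{c_1}\star\cdots\star\word{c_t}$ has weight $n$, while $\min(t-1,\,n-t+t)=t-1<n$ as soon as $n>t-1$. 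So the ``analogous construction in the dual regime'' you invoke for the $t-1$ half cannot exist; that term is only operative (and only provable) when $k_1+\cdots+k_t\geq n+1$, which is exactly why the correct statement carries a $\max$.

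Second, for the $n-k+t$ half the easy part of your argument (disjoint $T_i$, the kernels $V_i$, and the weight count once non-vanishing at a common $j^*$ is granted) is fine, but the deferred step is the whole theorem and it is not a genericity statement: the $T_i$ range over a finite collection of coordinate subsets, each $V_i$ is determined by its $T_i$, and over $\F_q$ there is no Zariski-generic position or ``perturbation'' to appeal to --- what you need is a combinatorial selection lemma. Moreover, in the regime you allow ($k-t\leq n-1$) that lemma is false. Take $t=3$, $n=4$, and $\code_1=\code_2=\code_3=\{(a,a,b,b)\,:\,a,b\in\F_q\}$, so $k_i=2$, all codes have full support, and $k-t=3=n-1$. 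Every nonzero product is of the form $(\alpha,\alpha,\beta,\beta)$ and has weight $2$ or $4$, although $n-k+t=1$; equivalently, for every choice of disjoint singletons $T_1,T_2,T_3$ and leftover coordinate $j^*$, the code whose $T_i$ is the coordinate paired with $j^*$ satisfies $V_i\subseteq\ker(\mathrm{ev}_{j^*})$, so no admissible choice of $(T_i,j^*)$ exists. Such degenerate configurations are precisely the source of the $t-1$ term in the correct bound, and they are why the proof in \cite{HR-Singleton2013} proceeds by an inductive shortening/puncturing argument rather than by general position. Even if you restrict to $k_1+\cdots+k_t\leq n+1$, where $n-k+t$ is the right bound, the selection lemma still has to be proved and is the real content; as written, your argument establishes only the routine part.
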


Mirandola and Z\'emor describe cases of equality for $t=2$.
Essentially these are either pairs of (generalized) Reed-Solomon codes with a common evaluation point sequence,
or pairs made of a code and its dual (up to diagonal equivalence).
See \cite[Sec.~V]{MZ2015} for further details.

\subsubsection{Automorphisms}

Let $\code$ be a linear code of regularity $r(\code)$, and let $t\leq t'$ be two integers.
Assume one of the following two conditions holds:
\begin{itemize}
\item $t|t'$
\item $t'\geq r(\code)$.
\end{itemize}
Then, from \cite[\S~2.50-2.55]{HR-AGCT14}, it follows that
$\card{\Aut(\code\deux[t])}$ divides $\card{\Aut(\code\deux[t'])}$.

This makes one wonder whether one could compare $\Aut(\code\deux[t])$
and $\Aut(\code\deux[t'])$ for arbitrary $t\leq t'$.  For instance, do
we always have
$\card{\Aut(\code\deux)}\leq\card{\Aut(\code\deux[3])}$?

Motivated by Proposition~\ref{auto_courbe_code},
Couvreur and Ritzenthaler tested this question against AG codes
and eventually showed that the answer is negative.

\begin{example}[\cite{LacapelleBironDay2}]
  Let $E$ be the elliptic curve defined by the Weierstrass equation
  $y^2=x^3+1$ over $\F_7$, with point at infinity $O_E$.  Set
  $\cP=E(\F_7)$, $G=3O_E$, and consider the evaluation code
  $\code=\CL{E}{E(\F_7)}{3O_E}$, so $\code$ has generator matrix
\begin{equation*}
\left(\begin{array}{cccccccccccc}
0& 0& 1& 1& 2& 2& 3& 4& 4& 5& 6& 0\\
1& 6& 3& 4& 3& 4& 0& 3& 4& 0& 0& 1\\ 
1& 1& 1& 1& 1& 1& 1& 1& 1& 1& 1& 0\\
\end{array}\right).
\end{equation*}
Then, computer--aided calculations show that
$\code\deux=\CL{E}{E(\F_7)}{6O_E}$ has \[\card{\Aut(\code\deux)}=432\]
and $\code\deux[3]=\CL{E}{E(\F_7)}{9O_E}$
has \[\card{\Aut(\code\deux[3])}=108\] so in particular
\[\card{\Aut(\code\deux)}>\card{\Aut(\code\deux[3])}.\]
\end{example}

\subsection{Frameproof codes and separating systems}

Frameproof codes were introduced in the context of traitor tracing
schemes \cite{BCN1994,BS1998,SW1998}.  Slightly differing definitions
of this notion can be found.  Here, we work only with linear codes and
we say such a code $\code$ is {\em $t$-frameproof}, or {\em $t$-wise
intersecting}, if the supports of any $t$ nonzero codewords have a
nonempty common intersection.  In terms of $\star$-products, it means
for any $\word{c_1},\dots,\word{c_t}\in\code$,
\[\word{c_1},\dots,\word{c_t}\neq\word{0}\quad\Longrightarrow\quad\word{c_1}\star\cdots\star\word{c_t}\neq\word{0}.\]
Some elementary combinatorial constructions of frameproof codes can be
found in \cite{CE2000,Blackburn2003}.

In \cite{Xing2002} Xing considers asymptotic bounds, and in particular
constructions from AG codes.  The starting observation of his main
result is that a sufficient condition for an evaluation code
$\CL{X}{\cP}{G}$ to be $s$-frameproof is
\begin{equation}
\label{critereXing}
\ell(sG-D_\cP)=0.
\end{equation}
Condition \eqref{critereXing} is perhaps the simplest instance of what
was later called a Riemann-Roch equation~\cite{CCX2014}.

Xing uses a counting argument in the divisor class group of the curve
to prove the existence of solutions to~\eqref{critereXing}
with $\deg(sG-D_\cP)\approx(1-2\log_q s)g$.
This leads to the following result.
\begin{theorem}[\cite{Xing2002}]
For $2\leq s\leq A(q)$, there exist $s$-frameproof codes over $\F_q$
of length going to infinity and asymptotic rate at least
\[\frac{1}{s}-\frac{1}{A(q)}+\frac{1-2\log_q s}{sA(q)}\cdot\]
\end{theorem}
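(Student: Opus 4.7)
The plan is to apply the Riemann-Roch criterion \eqref{critereXing}: if $\ell(sG - D_{\cP}) = 0$, then $\CL{X}{\cP}{G}$ is $s$-frameproof. Indeed, nonzero codewords correspond to nonzero functions $f_1,\dots,f_s \in L(G)$, and $\cv_1 \star \cdots \star \cv_s$ is the evaluation of $f_1 \cdots f_s \in L(sG)$, which is nonzero as a rational function; requiring $\ell(sG - D_{\cP}) = 0$ then precisely forbids this product from vanishing at every $P_i$. The proof thus reduces to producing, on each curve $X$ in a sequence with $n/g \to A(q)$, a divisor $G$ of degree $m$ as large as possible with $\ell(sG - D_{\cP}) = 0$; Riemann-Roch will then give $\dim\CL{X}{\cP}{G} \geq m + 1 - g$, and one optimizes the asymptotic rate $k/n$.

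The core of the argument is a counting in the Picard group. Let $h = |\Pic^0(X)|$ and call a class $[G] \in \Pic^m(X)$ \emph{bad} if $\ell(sG - D_{\cP}) \geq 1$, i.e.\ if $s[G] - [D_{\cP}] \in \Pic^{sm-n}(X)$ contains an effective divisor. The multiplication-by-$s$ map $\Pic^m(X) \to \Pic^{sm}(X)$ is a torsor map over the multiplication-by-$s$ isogeny of the Jacobian, hence has fibers of cardinality at most $|\Pic^0(X)[s]| \leq s^{2g}$. Writing $B_d$ for the number of effective divisor classes of degree $d$ and noting that $|\Pic^m(X)| = h$, the number of bad classes is at most $s^{2g} B_{sm-n}$. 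A standard estimate obtained from Riemann-Roch together with the functional equation of the zeta function $Z(t) = \sum_d A_d t^d$ gives $B_d \leq A_d \leq h q^{d-g+1}/(q-1) + O(1)$ uniformly in $d \geq 0$.

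Imposing $s^{2g} B_{sm-n} < h$ reduces, up to additive constants absorbed into $o(1)$, to
\[sm - n < (1 - 2\log_q s)\,g + O(1),\]
so one may take $m = \lfloor (n + (1 - 2\log_q s)g)/s \rfloor$, obtain a good class $[G]$ of degree $m$, and then pass to a concrete representative with support disjoint from $\cP$ via the weak approximation theorem (Remark~\ref{UseWeakApprox}); since frameproofness depends only on the zero pattern of codewords, it is preserved under the diagonal equivalence of Lemma~\ref{lineq=diageq}. The resulting code of length $n$ has rate
\[\frac{k}{n} \geq \frac{1}{s} + \frac{(1 - 2\log_q s)\,g}{sn} - \frac{g}{n} + o(1),\]
which tends to $\frac{1}{s} - \frac{1}{A(q)} + \frac{1 - 2\log_q s}{sA(q)}$ as $n/g \to A(q)$.

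The main technical obstacle is the uniform bound on $B_d$ in the subcritical range $0 \leq d < g$, where $\ell(D)$ is not determined by $\deg D$ alone. One can handle this either via Clifford's inequality $\ell(D) \leq d/2 + 1$ for special divisors, or more cleanly via the functional equation $A_{2g-2-d} = q^{g-1-d}A_d$ combined with the Hasse-Weil estimate $h \sim q^g$; both routes deliver the same asymptotic picture, the key observation being that the $s^{2g}$ loss coming from the Jacobian $s$-torsion is exactly compensated in the exponent by the $g$ units of slack that the counting over $\Pic^m(X)$ provides.
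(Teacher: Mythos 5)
Your outline is the same as the one the paper (and Xing) follows: the sufficient criterion $\ell(sG-D_{\cP})=0$ of \eqref{critereXing}, which you justify correctly, a counting argument in the class group in which the fibres of $[G]\mapsto s[G]-[D_{\cP}]$ contribute a factor at most $s^{2g}$ from the rational $s$-torsion of the Jacobian, the passage to a representative with support disjoint from $\cP$ by weak approximation (frameproofness being invariant under diagonal equivalence), and the optimization $\deg G\approx (n+(1-2\log_q s)g)/s$ giving exactly the stated asymptotic rate. All of this is sound.

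The genuine gap is the estimate you feed into the counting, namely ``$B_d\leq A_d\leq h\,q^{d-g+1}/(q-1)+O(1)$ uniformly in $d\geq 0$''. This is not a standard fact and none of the justifications you offer produces it. From the zeta function one only gets, for $0\leq d\leq 2g-2$, that $A_d$ equals $h\frac{q^{d-g+1}-1}{q-1}$ plus a correction term (the coefficient of a polynomial of degree $2g-2$) which is in general of exponential size in $g$, not $O(1)$; the naive coefficient bound $A_d\leq P(t)/\bigl((1-t)(1-qt)t^d\bigr)$ with $|P(t)|\leq(1+\sqrt{q}\,t)^{2g}$ and $h\geq(\sqrt{q}-1)^{2g}$ leaves a parasitic factor of the shape $\bigl((\sqrt{q}+1)/(\sqrt{q}-1)\bigr)^{2g}$, which, propagated through $s^{2g}B_{sm-n}<h$, strictly degrades the constant $1-2\log_q s$ in the rate. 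Clifford's inequality cannot repair this: it bounds $\ell(D)$ on each effective class but says nothing about the number $B_d$ of effective classes, and used to relate $A_d$ and $B_d$ it only yields bounds of the order $h\,q^{d/2+1}$, far too weak for $d\approx(1-2\log_q s)g$. Finally ``$h\sim q^g$'' is not a consequence of Hasse--Weil, which only gives $(\sqrt{q}-1)^{2g}\leq h\leq(\sqrt{q}+1)^{2g}$, a window exponentially wide in $g$. Establishing an upper bound of the quality $B_d\leq \mathrm{poly}(g)\,h\,q^{d-g}$ in the subcritical range (for instance via a sharper zeta-function argument, possibly combined with a class-number lower bound of Lachaud--Martin-Deschamps type) is precisely the technical heart of Xing's proof, and it is the one step your argument leaves unproved.
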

In this bound, the $2\log_q s$ term reflects the possible $s$-torsion in the class group that hinders the counting argument.

For $s=2$, an alternative method is proposed in \cite{HR-IJM2013}
that allows to construct solutions to~\eqref{critereXing}
up to $\deg(2G-D_\cP)\approx g-1$, which is best possible.
This gives:
\begin{theorem}[\cite{HR-IJM2013}]
\label{2fp}
If $A(q)\geq4$, one can construct $2$-frameproof codes over $\F_q$
of length going to infinity and asymptotic rate at least
\[\frac{1}{2}-\frac{1}{2A(q)}\cdot\]
\end{theorem}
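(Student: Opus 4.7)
The plan is to produce, for each curve $X$ in an asymptotically optimal tower over $\F_q$, an evaluation code $\CL{X}{\cP}{G}$ satisfying Xing's sufficient condition $\ell(2G - D_\cP) = 0$, with $\deg(2G - D_\cP)$ pushed as large as possible. Since Riemann--Roch forces $\ell(D) \geq \deg(D) - g + 1$, the equation $\ell(D) = 0$ requires $\deg(D) \leq g - 1$; and for $\deg(D) = g - 1$, the condition is equivalent to $[D] \notin W_{g-1}$, where $W_{g-1} \subset \Pic^{g-1}(X)$ is the image of the Abel--Jacobi map from the $(g-1)$-fold symmetric product---a proper closed subvariety, so $\ell(D) = 0$ is the generic condition at this critical degree.

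Concretely, given an asymptotically optimal sequence of curves with $|X(\F_q)|/g \to A(q) \geq 4$, take $\cP$ to be an ordered tuple of $n$ rational points of $X$ and choose $\deg G = (n + g - 1)/2$, so that $\deg(2G - D_\cP) = g - 1$. The hypothesis $A(q) \geq 4 > 3$ guarantees $n > 3g - 3$ asymptotically, hence $2g - 2 < \deg G < n$, so Theorem~\ref{thm:basics} delivers the exact dimension $k = \deg G - g + 1 = (n - g + 1)/2$. Assuming the existence of $G$ with $\ell(2G - D_\cP) = 0$, the code $\CL{X}{\cP}{G}$ is $2$-frameproof by \eqref{critereXing}, with asymptotic rate
\[
  \frac{k}{n} = \frac{1}{2} - \frac{g-1}{2n} \longrightarrow \frac{1}{2} - \frac{1}{2A(q)},
\]
which is the claimed bound.

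The main obstacle is establishing the existence of such a $G$. The map $\mu : [G] \mapsto [2G - D_\cP]$ from $\Pic^{d}(X)(\F_q)$ to $\Pic^{g-1}(X)(\F_q)$ is, up to translation by $-[D_\cP]$, the multiplication-by-$2$ endomorphism of the Jacobian $J = \Pic^0(X)(\F_q)$; its image is a single coset $\mathcal{C}$ of $2J$, of cardinality $h/|J[2](\F_q)|$ where $h$ is the class number. The question reduces to showing $\mathcal{C} \not\subseteq W_{g-1}(\F_q)$. Crude bounds ($|W_{g-1}(\F_q)| \leq A_{g-1}$, $|J[2](\F_q)| \leq 2^{2g}$) are not always sufficient, since $2^{2g} A_{g-1}$ can exceed $h$ in fixed characteristic. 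The resolution exploits the freedom to perturb the data: substituting one rational point of $\cP$ by another shifts the coset $\mathcal{C}$ inside $\Pic^{g-1}(X)(\F_q)$, and the abundance of rational points supplied by $A(q) \geq 4$ ensures that some such choice yields a shifted coset escaping $W_{g-1}(\F_q)$ altogether. A valid $G$ follows, and the parameter computation above then furnishes the asserted asymptotic $2$-frameproof family.
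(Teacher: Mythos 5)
You have the right frame, and the easy half is fine: taking $\deg G\approx(n+g-1)/2$ puts the Riemann--Roch equation \eqref{critereXing} at the critical degree $\deg(2G-D_\cP)=g-1$, and then $k=\ell(G)\geq\deg G+1-g$ (note that $\ell(G-D_\cP)=0$ automatically because $\deg(G-D_\cP)<0$, so you never need $\deg G>2g-2$; consequently the hypothesis $A(q)\geq4$ is never genuinely used in your sketch, which is already a sign that the hard part has been bypassed). The rate computation $k/n\to\frac12-\frac{1}{2A(q)}$ is correct.

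The genuine gap is the existence of $G$ with $\ell(2G-D_\cP)=0$ at degree exactly $g-1$, which is the entire content of the theorem. You correctly reduce this to showing that a prescribed coset $\mathcal{C}$ of $2J(\F_q)$ inside $\Pic^{g-1}(X)(\F_q)$ is not contained in $W_{g-1}(\F_q)$, and you correctly observe that counting fails because $|J(\F_q)[2]|$ may be as large as $2^{2g}$ (this torsion loss is exactly why Xing's argument \cite{Xing2002} only reaches degree about $(1-2\log_q 2)g$). But your resolution --- swap one point of $\cP$ for another rational point and invoke ``abundance of rational points'' --- is an assertion, not an argument, and its naive implementations fail. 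Swapping points translates $\mathcal{C}$ only by classes $[P-P']$: there are at most $|\cP|\cdot|X(\F_q)|=O(g^2)$ such translates, they lie in the subgroup of $J(\F_q)$ generated by differences of rational points, and modulo $2J(\F_q)$ they may produce no new coset at all (every $[P-P']$ could lie in $2J(\F_q)$), in which case the perturbation buys nothing. Even when the coset does move, no counting can conclude: the only bound you have is $|W_{g-1}(\F_q)|\leq A_{g-1}$, which can be of the order of $h$ up to a factor depending only on $q$, whereas the union of $O(g^2)$ translates of $\mathcal{C}$ is only guaranteed to have size at least $h/|J(\F_q)[2]|$; the exponential deficiency you yourself identified is untouched. (Also, you only need one element of one admissible coset outside $W_{g-1}(\F_q)$, not a whole coset ``escaping altogether'' --- but even that weaker statement is exactly what remains unproved.) Controlling how $W_{g-1}$ meets these cosets and exploiting the rational points in a quantitative way --- which is where the hypothesis $A(q)\geq4$ actually enters --- is precisely the new machinery of \cite{HR-IJM2013}; as written, your proof assumes the conclusion of that step rather than establishing it.
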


\bigbreak

This result has a somehow unexpected application.  In~\cite{EF1983},
Erd\"os and F\"uredi studied a certain problem in combinatorial
geometry.  It led them to introduce certain configurations, which
admit the following equivalent descriptions:
\begin{itemize}
\item sets of $M$ vertices of the unit cube $\{0,1\}^n$ in the
  Euclidean space $\mathbb{R}^n$, any three
  $\word{x},\word{y},\word{z}$ of which form an acute angle:
  $\langle\word{x}-\word{z},\word{y}-\word{z}\rangle_{\textrm{Eucl}}>0$
\item sets of $M$ points in the binary Hamming space $\F_2^n$, any
  three $\word{x},\word{y},\word{z}$ of which satisfy the strict
  triangle inequality:
  $d_H(\word{x},\word{y})<d_H(\word{x},\word{z})+d_H(\word{y},\word{z})$
\item sets of $M$ binary vectors of length $n$, any three
  $\word{x},\word{y},\word{z}$ of which admit a position $i$ where
  $\word{x}_i=\word{y}_i\neq\word{z}_i$
\item sets of $M$ subsets in an $n$-set, no three $A,B,C$ of which
  satisfy $A\cap B\subseteq C\subseteq A\cup B$.
\end{itemize}
In the context of coding theory, such a configuration is also called a
{\em binary $(2,1)$-separating system}, of length~$n$ and size~$M$.  A
random coding argument shows that there exist such separating systems
of length $n\to\infty$ and size
\[M\approx(2/\sqrt{3})^n\approx 1.1547005^n.\] However, combining
Theorem~\ref{2fp} with a convenient concatenation argument provides a
dramatic(?) improvement.
\begin{theorem}[\cite{HR-IJM2013}]
  One can construct binary $(2,1)$-separating systems of length
  $n\to\infty$ and size
\[M\approx (11^{\frac{3}{50}})^n\approx 1.1547382^n.\]
\end{theorem}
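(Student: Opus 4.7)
The plan is to combine Theorem~\ref{2fp} with a carefully chosen concatenation. My first step is to record the following concatenation principle: if $\code_{\text{out}}\subseteq\F_q^N$ is a linear $2$-frameproof code and $\code_{\text{in}}\subseteq\F_2^{n_i}$ is a binary $(2,1)$-separating system with at least $q$ codewords, then for any injection $\phi:\F_q\hookrightarrow\code_{\text{in}}$ the concatenated code
\[
\code=\{(\phi(\tilde{\word{x}}_1),\dots,\phi(\tilde{\word{x}}_N)):\tilde{\word{x}}\in\code_{\text{out}}\}\subseteq\F_2^{Nn_i}
\]
is again binary $(2,1)$-separating. To verify this, pick three distinct concatenated words $\word{x},\word{y},\word{z}\in\code$ with a designated singleton $\word{z}$; they lift to three distinct outer codewords $\tilde{\word{x}},\tilde{\word{y}},\tilde{\word{z}}$. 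Applying the $2$-frameproof property of $\code_{\text{out}}$ to the nonzero codewords $\tilde{\word{x}}-\tilde{\word{z}}$ and $\tilde{\word{y}}-\tilde{\word{z}}$ yields an outer coordinate $j$ with $\tilde{\word{x}}_j\neq\tilde{\word{z}}_j$ and $\tilde{\word{y}}_j\neq\tilde{\word{z}}_j$. If $\tilde{\word{x}}_j=\tilde{\word{y}}_j$, any inner bit on which $\phi(\tilde{\word{x}}_j)$ and $\phi(\tilde{\word{z}}_j)$ disagree provides a position separating $\{\word{x},\word{y}\}$ from $\{\word{z}\}$; otherwise the three symbols are pairwise distinct and the $(2,1)$-separating property of $\code_{\text{in}}$ supplies such a bit.

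Next, I would apply the principle with the outer family from Theorem~\ref{2fp}, for some prime power $q$ with $A(q)\geq 4$ (for instance $q$ a square so that $A(q)=\sqrt{q}-1$), obtaining an asymptotic $\F_q$-rate $r_{\text{out}}=\frac12-\frac{1}{2A(q)}$, and a short explicit binary $(2,1)$-separating inner code of size at least $q$ with binary rate $r_{\text{in}}=\log_2(\card{\code_{\text{in}}})/n_i$. The product $r_{\text{out}}\cdot r_{\text{in}}$ is then an achievable asymptotic rate for binary $(2,1)$-separating systems. To push this quantity up to $\frac{3}{50}\log_2 11$, one refines the one-step concatenation (for example by a multi-layered variant, where the inner code is itself produced by the same kind of construction at a smaller scale) and performs a careful numerical optimization over $q$ and the inner code. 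This optimization, carried out in~\cite{HR-IJM2013}, delivers the value $M\approx(11^{3/50})^n\approx 1.1547382^n$, just above the random value $2/\sqrt{3}\approx 1.1547005^n$.

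The concatenation principle itself is entirely routine once one has clarified the interplay between the $2$-frameproof and $(2,1)$-separating properties. The real obstacle is the numerical optimization: the improvement over the random bound lies in the fifth decimal place, so the inner code, the auxiliary prime power $q$, and the precise concatenation scheme must be chosen with great care --- the naive product bound $r_{\text{out}}\cdot r_{\text{in}}$ with one layer of concatenation is not, by itself, enough to beat $2/\sqrt{3}$. This is a delicate finite combinatorial search rather than a deep theoretical difficulty, but it is what produces the explicit constant $11^{3/50}$ announced in the statement.
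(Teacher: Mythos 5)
Your concatenation lemma is correct, correctly proved, and is exactly the ``convenient concatenation argument'' behind the statement: an outer linear $2$-frameproof (i.e.\ $2$-wise intersecting) code over $\F_q$ composed with any injection of $\F_q$ into a binary $(2,1)$-separating inner code yields a binary $(2,1)$-separating system of the same size as the outer code. Your two-case analysis (inner symbols equal vs.\ pairwise distinct) is the right way to see it, and your nonlinear formulation is in fact the appropriate level of generality, since the relevant outer alphabet has odd cardinality and so cannot be matched $\F_2$-linearly with the inner code; note also that a binary \emph{linear} code is $(2,1)$-separating precisely when it is intersecting, so short intersecting linear codes are admissible inner codes.

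The gap is in the quantitative half, which is what the theorem actually asserts. You never derive the constant $11^{3/50}$ — you delegate it to \cite{HR-IJM2013}, i.e.\ to the very result being proved — and the reason you give for the delegation is wrong: you claim a single layer of concatenation with the naive product rate cannot beat $2/\sqrt{3}$ and that a multi-layered refinement plus delicate optimization is required. In fact one layer gives exactly the stated constant. Take $q=121=11^2$, so $A(q)=\sqrt{q}-1=10\geq 4$ and Theorem~\ref{2fp} provides linear $2$-frameproof codes over $\F_{121}$ of outer length $N\to\infty$ and rate tending to $\frac{1}{2}-\frac{1}{20}=\frac{9}{20}$, hence about $121^{\frac{9}{20}N}$ codewords; concatenate with a binary $(2,1)$-separating inner code of length $15$ having at least $121$ codewords. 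The result has length $n=15N$ and size
\[
M\approx 121^{\frac{9}{20}N}=11^{\frac{9}{10}N}=11^{\frac{3}{50}n},
\]
and $\frac{3}{50}\log_2 11\approx 0.20757$ indeed exceeds $1-\frac{1}{2}\log_2 3\approx 0.20752$. So the genuinely missing ingredient in your write-up is not an optimization scheme but a single finite object: an explicit binary $(2,1)$-separating code of length $15$ with at least $121$ words (this is what \cite{HR-IJM2013} exhibits). Once such an inner code is produced, your lemma applied to it and to the $\F_{121}$ codes of Theorem~\ref{2fp} finishes the proof.
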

This is an instance of a construction based on AG codes beating a
random coding argument, over the binary field!

\subsection{Multiplication algorithms}

The theory of bilinear complexity started with the celebrated algorithms
of Karatsuba~\cite{KO1963}, that allows to multiply two $2$-digit numbers with $3$ elementary
multiplications instead of $4$, and of Strassen~\cite{Strassen1969}, that allows to multiply two $2\times2$
matrices with $7$ field multiplications instead of $8$.
Used recursively, these algorithms then allow to multiply numbers with a large number of digits,
or matrices of large size,
with a dramatic improvement on complexity over the naive methods.

Authors subsequently studied the complexity of various multiplication maps,
such as the multiplication of two polynomials modulo a given polynomial \cite{FZ1977,Winograd1977}.
This includes in particular the multiplication map $m_{\F_{q^k}/\F_q}$ in an extension of finite fields $\F_{q^k}$ over $\F_q$.

\begin{definition}
A \emph{bilinear multiplication algorithm} of length $n$ for $\F_{q^k}$ over $\F_q$ is the data of
linear maps $\alpha,\beta:\F_{q^k}\to(\F_q)^n$ and $\omega:(\F_q)^n\to\F_{q^k}$ such that the following diagram commutes:
\begin{equation*}
\begin{CD}
\F_{q^k}\times\F_{q^k}  @>{m_{\F_{q^k}/\F_q}}>>\F_{q^k}\\
@V{\alpha\times\beta}VV @AA{\omega}A\\
(\F_q)^n\times(\F_q)^n @>{\star}>> (\F_q)^n.
\end{CD}
\end{equation*}

Equivalently, it is the data of linear forms $\alpha_1,\dots,\alpha_n,\beta_1,\dots,\beta_n:\F_{q^k}\to\F_q$
and elements $\omega_1,\dots,\omega_n\in\F_{q^k}$
such that for all $x,y\in\F_{q^k}$ we have
\begin{equation*}
xy=\sum_{i=1}^n\alpha_i(x)\beta_i(y)\omega_i.
\end{equation*}
\end{definition}

\begin{definition}
A multiplication algorithm as above is \emph{symmetric} if $\alpha=\beta$, or equivalently, if $\alpha_i=\beta_i$ for all $i$.
\end{definition}

\begin{definition}
The bilinear complexity $\mu_q(k)$ (resp. the symmetric bilinear complexity $\mu^{\textrm{sym}}_q(k)$)
of $\F_{q^k}$ over $\F_q$ is the smallest possible length of a bilinear multiplication algorithm
(resp. a symmetric bilinear multiplication algorithm) for $\F_{q^k}$ over $\F_q$.
\end{definition}

Obviously we always have $\mu_q(k)\leq\mu^{\textrm{sym}}_q(k)\leq k^2$.
For $k\leq\frac{q}{2}+1$ this is easily improved to $\mu_q(k)\leq\mu^{\textrm{sym}}_q(k)\leq 2k-1$,
using Fourier transform, or equivalently, Reed-Solomon codes.

In \cite{ChCh1988}, Chudnovsky and Chudnovsky highlighted further links
between multiplication algorithms and codes.
Then, using a construction similar to AG codes, they were able to prove
the first linear asymptotic upper bound on $\mu_q(k)$.
\begin{theorem}[{\cite[Th.~7.7]{ChCh1988}}]\label{thChCh}
If $q\geq 25$ is a square, then
\[\limsup_{k\to\infty}\frac{1}{k}\mu_q(k)\leq
  2\left(1+\frac{1}{\sqrt{q}-3}\right).\]
\end{theorem}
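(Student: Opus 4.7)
The plan is to implement the classical Chudnovsky--Chudnovsky construction, attaching a symmetric bilinear multiplication algorithm to every quadruple $(X, \cP, Q, G)$, with $X$ a curve over $\F_q$ of genus $g$, $\cP = (P_1, \ldots, P_n)$ an ordered $n$--tuple of distinct rational points, $Q$ a place of degree $k$ (so that its residue field is $\F_{q^k}$), and $G$ a divisor whose support is disjoint from $\cP$ and from $Q$, subject to the two conditions:
\begin{enumerate}[(I)]
\item the evaluation map $\mathrm{ev}_Q : L(G) \to \F_{q^k}$ is surjective;
\item the evaluation map $\mathrm{ev}_\cP : L(2G) \to \F_q^n$ is injective.
\end{enumerate}
Under these hypotheses, pick any $\F_q$--linear section $s$ of $\mathrm{ev}_Q$; for $x, y \in \F_{q^k}$, set $f = s(x)$, $h = s(y) \in L(G)$. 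Then $fh \in L(2G)$, $\mathrm{ev}_Q(fh) = xy$, and $\mathrm{ev}_\cP(fh) = \mathrm{ev}_\cP(f) \star \mathrm{ev}_\cP(h)$. Using (II) to invert $\mathrm{ev}_\cP$ on its image and composing with $\mathrm{ev}_Q$ yields an $\F_q$--linear map $\mathrm{ev}_\cP(L(2G)) \to \F_{q^k}$; extend it arbitrarily to $\F_q^n \to \F_{q^k}$ and write it as $v \mapsto \sum_i v_i \omega_i$. With $\alpha_i(x) = \beta_i(x) = s(x)(P_i)$ (both $\F_q$--linear in $x$), one gets the symmetric bilinear identity $xy = \sum_i \alpha_i(x)\beta_i(y)\omega_i$, so $\mu_q(k) \leq \mu^{\mathrm{sym}}_q(k) \leq n$.

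Next, I would verify (I) and (II) for suitable parameter choices. Take $\deg G = g + k - 1$, so that Riemann--Roch gives $\ell(G) \geq k$ and $\deg(G - Q) = g - 1$. A generic $G$ in this degree class satisfies $\ell(K - G) = 0$ (whence $\ell(G) = k$) and $\ell(G - Q) = 0$: both conditions cut out open subsets of $\Pic^{g+k-1}(X)$, the complements being proper subvarieties (loci of special or effective divisors in the relevant Picard components), and their simultaneous satisfiability is ensured by a counting argument in the class group in the style of Xing. When both hold, $\mathrm{ev}_Q$ becomes an injective $\F_q$--linear map from an $\F_q$--space of dimension $k$ to $\F_{q^k}$, hence an isomorphism, giving (I). For (II), $\deg(2G - D_\cP) = 2g + 2k - 2 - n$, so the choice $n \geq 2g + 2k - 1$ forces this degree to be negative, and Proposition~\ref{prop:RR} yields $L(2G - D_\cP) = 0$.

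Finally, instantiate the construction with curves approaching the Ihara bound. For $q \geq 25$ a square, the lower bound $A(q) \geq \sqrt q - 1$ of Ihara, realised for instance by the tower of~\cite{GS1995}, supplies a sequence $(X_i)$ over $\F_q$ with $g(X_i) \to \infty$ and $N(X_i)/g(X_i) \to \sqrt q - 1$; for $k$ large, the Weil estimate for $|X(\F_{q^k})|$ against the contributions of places of lower degree yields a place of degree $k$ on $X = X_i$ (possibly after a harmless adjustment of $k$ within the family). Taking $n = N(X) - O(k + g)$ (removing the rational points in $\supp G \cup \{Q\}$), the constraint $n \geq 2g + 2k - 1$ combined with $n \leq N(X) \sim (\sqrt q - 1)g$ forces $g \gtrsim 2k/(\sqrt q - 3)$ (whence the hypothesis $q \geq 25$, ensuring $\sqrt q - 3 > 0$), and gives $n \sim 2k + 2g$; a sharper use of (II), which only needs $\ell(2G - D_\cP) = 0$ and holds generically as soon as $\deg(2G - D_\cP) \leq g - 1$, tightens the estimate to the claimed asymptotic ratio $2\bigl(1 + \tfrac{1}{\sqrt q - 3}\bigr)$. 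The main obstacle is this combined genericity: one must arrange $G$, $\cP$, and $Q$ simultaneously on a single curve of the family so that all three dimensional conditions are tight, for which the divisor-counting arguments of Xing and the Riemann--Roch equations framework of~\cite{CCX2014} are essential.
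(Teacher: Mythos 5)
Your reduction of the theorem to the evaluation--interpolation construction with the two conditions (I)--(II) is exactly the method the paper sketches, and that part is sound (indeed for (I) you only need $\ell(K_X-G+Q)=0$, i.e.\ $\ell(G-Q)=0$; asking also $\ell(K_X-G)=0$ is harmless but superfluous). The genuine gap is quantitative: your parameter bookkeeping never produces the constant $\sqrt{q}-3$. With $\deg G=k+g-1$ and the easy form of (II) ($n\geq 2k+2g-1$ so that $\deg(2G-D_\cP)<0$), the algorithm length is $n\sim 2k+2g$, and your own constraint $n\leq N(X)\sim(\sqrt{q}-1)g$ gives $g\gtrsim 2k/(\sqrt{q}-3)$, hence only $\mu_q(k)/k\lesssim 2\bigl(1+\tfrac{2}{\sqrt{q}-3}\bigr)$. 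Your proposed repair --- allowing $\deg(2G-D_\cP)$ up to $g-1$ with ``generic'' vanishing --- would give $n\sim 2k+g-1$ and, re-running the same inequality, the constant $2\bigl(1+\tfrac{1}{\sqrt{q}-2}\bigr)$, again not the one to be proved; and that genericity is not a routine open-condition argument: $\ell(2G-D_\cP)=0$ is a condition on the class $2[G]-[D_\cP]$, so the count in $\Pic$ must pass through the multiplication-by-$2$ map, whose kernel can have order as large as $2^{2g}$, and the paper explicitly records that this very step was incomplete in \cite{STV1992} and was only fixed in \cite{HR-JCompl2012}, for $q\geq 49$ and with the constant $\sqrt{q}-2$. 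The constant $\sqrt{q}-3$ of \cite{ChCh1988} comes from an intermediate accounting that your sketch never sets up: one evaluates at only $n\approx 2k+g$ rational points while requiring the curve to have about $2k+2g$ of them, and the surplus of roughly $g$ unused rational points is what lets an elementary counting argument satisfy the Riemann--Roch condition at degree $g-1$ without confronting the torsion problem; then $N\gtrsim 2k+2g$ gives $g\approx 2k/(\sqrt{q}-3)$ while the length is only $2k+g$, whence $2\bigl(1+\tfrac{1}{\sqrt{q}-3}\bigr)$.

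A second gap is the choice of the family of curves. A bound on $\limsup_{k}\mu_q(k)/k$ requires, for \emph{every} large $k$, a curve whose genus is close to the optimal value $\approx 2k/(\sqrt{q}-3)$; this needs not only optimality $N(X_i)/g(X_i)\to\sqrt{q}-1$ but also density $g(X_{i+1})/g(X_i)\to 1$, which is precisely why the paper highlights the Shparlinski--Tsfasman--\vladut{} family of Shimura curves with both properties. The Garcia--Stichtenoth tower \cite{GS1995} you invoke has genera growing geometrically, so for unfavourable $k$ the smallest admissible curve overshoots the optimal genus by a factor bounded away from $1$ and the ratio $n/k$ degrades; ``adjusting $k$ within the family'' is not available, since $k$ is the given extension degree. (Relatedly, $q\geq 25$ is not just ``$\sqrt{q}-3>0$'', which already holds for $q=16$; the threshold comes from these counting and family constraints.)
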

This result was originally stated for bilinear complexity, but the
proof also works for symmetric bilinear complexity, since it provides
symmetric algorithms.  So we actually get:
\[\limsup_{k\to\infty}\frac{1}{k}\mu^{\textrm{sym}}_q(k)\leq
  2\left(1+\frac{1}{\sqrt{q}-3}\right)\] for $q\geq 25$ a square.

Chudnovsky and Chudnovsky proceed by \emph{evaluation-interpolation}
on curves.  Suppose we are given a curve $X$ over $\F_q$, together
with a collection $\cP=\{P_1,\dots,P_n\}$ of $n$ distinct rational points, a
point $Q$ of degree $k$, and a suitably chosen auxiliary divisor $G$.
Also assume:
\begin{enumerate}[(i)]
\item\label{item:surj} the evaluation-at-$Q$ map
  $L(G)\longrightarrow\F_{q^k}$ is surjective;
\item\label{item:inj} the evaluation-at-$\cP$ map
  $L(2G)\longrightarrow(\F_q)^n$ is injective.
\end{enumerate}
Then, in order to multiply $x,y$ in $\F_{q^k}$
with only $n$ multiplications in $\F_q$, one can do as follows:
\begin{itemize}
\item Thanks to (\ref{item:surj}), lift $x,x'$ to functions
  $f_x,f_{x'}$ in $L(G)$ such that $f_x(Q)=x$, $f_{x'}(Q)=x'$, and
  then evaluate these functions at $\cP$ to get codewords
  $\word{c}_x=(f_x(P_1),\dots,f_x(P_n))$,
  $\word{c}_{x'}=(f_{x'}(P_1),\dots,f_{x'}(P_n))$ in $\CL{X}{\cP}{G}$.
\item Compute $\word{c}_x\star\word{c}_{x'}=(y_1,\dots,y_n)$ in $\CL{X}{\cP}{2G}$,
i.e. $y_i=f_x(P_i)f_{x'}(P_i)$ for $1\leq i\leq n$.
\item By ``Lagrange interpolation'', find a function $h$ in $L(2G)$
  that takes these values $h(P_1)=y_1, \dots ,h(P_n)=y_n$, and then
  evaluate $h$ at $Q$.
\end{itemize}
Then (\ref{item:inj}) ensures that we necessarily have $h=f_xf_{x'}$,
so this last evaluation step gives $h(Q)=xx'$ as wished.  

In \cite{STV1992}, Shparlinski, Tsfasman and \vladut{} propose several improvements.
First, they correct certain imprecise statements in \cite{ChCh1988},
or give additional details to the proofs. For instance, on the choice
of the curves to which the method is applied in order to get Theorem~\ref{thChCh},
they provide an explicit description of a family of Shimura
curves $X_i$ over $\F_q$, for $q$ a square, that satisfy:
\begin{itemize}
\item (optimality) $|X_i(\F_q)|/g(X_i)\to A(q)=\sqrt{q}-1$;
\item (density) $\quad g(X_{i+1})/g(X_i)\to 1$.
\end{itemize}

They further show how to deduce linearity of the complexity over an arbitrary
finite field.
\begin{lemma}[{\cite[Cor.~1.3]{STV1992}}]
Set $M_q\eqdef\limsup_{k\to\infty}\frac{1}{k}\mu_q(k)$.
Then for any prime power $q$ and integer $m$ we have
\[M_q\leq\mu_q(m)M_{q^m}.\]
\end{lemma}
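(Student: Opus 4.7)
The plan is to deduce the bound from two elementary inequalities, valid for every positive integer $k$: a tower composition $\mu_q(mk) \leq \mu_q(m)\mu_{q^m}(k)$, and a subfield restriction $\mu_q(k) \leq \mu_q(mk)$. Chained together these give $\mu_q(k) \leq \mu_q(m)\mu_{q^m}(k)$; dividing by $k$ and taking $\limsup_{k\to\infty}$ then yields $M_q \leq \mu_q(m) M_{q^m}$. The purpose of the restriction step is precisely to make the final $\limsup$ range over \emph{all} integers $k$, rather than merely over multiples of $m$, for which only an inequality in the wrong direction would follow from the composition alone.

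For the tower composition I would start from any bilinear multiplication algorithm for $\F_{q^{mk}}$ over $\F_{q^m}$ of length $n_1 = \mu_{q^m}(k)$, written as $xy = \sum_{i=1}^{n_1} A_i(x)\, B_i(y)\, \Omega_i$ with $A_i, B_i \colon \F_{q^{mk}} \to \F_{q^m}$ being $\F_{q^m}$-linear (hence also $\F_q$-linear) and $\Omega_i \in \F_{q^{mk}}$. Each intermediate product $A_i(x) B_i(y)$ lies in $\F_{q^m}$, and can therefore be computed by an $\F_q$-bilinear algorithm $ab = \sum_{j=1}^{n_2} \alpha_j(a)\, \beta_j(b)\, \omega_j$ for $\F_{q^m}/\F_q$ of length $n_2 = \mu_q(m)$. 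Substituting and grouping produces the $\F_q$-bilinear expansion $xy = \sum_{i,j} \alpha_j(A_i(x))\, \beta_j(B_i(y))\, \omega_j \Omega_i$ of length $n_1 n_2 = \mu_q(m)\mu_{q^m}(k)$ for $\F_{q^{mk}}$ over $\F_q$, establishing the composition inequality.

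For the restriction step, I would use that $\F_{q^k}$ is an $\F_q$-subspace of $\F_{q^{mk}}$, so there exists an $\F_q$-linear retraction $\pi \colon \F_{q^{mk}} \to \F_{q^k}$, i.e.\ an $\F_q$-linear map with $\pi|_{\F_{q^k}} = \mathrm{id}$. Given an $\F_q$-bilinear algorithm $xy = \sum_i \alpha_i(x)\, \beta_i(y)\, \omega_i$ for $\F_{q^{mk}}/\F_q$, restricting the forms $\alpha_i, \beta_i$ to $\F_{q^k}$ and replacing each $\omega_i \in \F_{q^{mk}}$ by $\pi(\omega_i) \in \F_{q^k}$ yields a valid multiplication algorithm for $\F_{q^k}/\F_q$ of the same length: for $x, y \in \F_{q^k}$ one has $xy \in \F_{q^k}$, hence $xy = \pi(xy) = \sum_i \alpha_i(x)\, \beta_i(y)\, \pi(\omega_i)$ by $\F_q$-linearity of $\pi$.

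I would expect no step to be a true obstacle — the argument is essentially pure book-keeping within multiplication algorithms. The only mild point of care is that one must pick a genuine $\F_q$-linear retraction (which always exists since $\F_{q^k}$ is a direct summand of $\F_{q^{mk}}$ as an $\F_q$-vector space); the natural candidate $\tr_{\F_{q^{mk}}/\F_{q^k}}$ restricts to multiplication by $m$ on $\F_{q^k}$ and so gives a retraction only after dividing by $m$, which is permissible precisely when $\mathrm{char}(\F_q) \nmid m$.
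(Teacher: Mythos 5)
Your proof is correct, and it is the standard argument behind the cited result: the paper itself gives no proof but refers to \cite[Cor.~1.3]{STV1992}, whose underlying reasoning is exactly your two ingredients, the tower composition $\mu_q(mk)\leq\mu_q(m)\mu_{q^m}(k)$ and the monotonicity $\mu_q(k)\leq\mu_q(mk)$ obtained by restricting an algorithm for $\F_{q^{mk}}/\F_q$ to the subfield $\F_{q^k}$ via an $\F_q$-linear retraction. Your attention to taking the $\limsup$ over all $k$ (not just multiples of $m$) and to the existence of a retraction independent of the characteristic is exactly the right bookkeeping.
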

\noindent From this and Theorem~\ref{thChCh} it readily follows that \[M_q<+\infty\]
for all $q$.
As above, this result was originally stated only for bilinear complexity,
but the proof also works for symmetric bilinear complexity, so we get likewise
\[M^{\textrm{sym}}_q\eqdef\limsup_{k\to\infty}\frac{1}{k}\mu^{\textrm{sym}}_q(k)<+\infty.\]

Last, they devise a possible improvement on Theorem~\ref{thChCh}.
For this they observe that for given $\cP$ and $Q$,
the choice of the auxiliary divisor $G$ satisfying (\ref{item:surj}) and
(\ref{item:inj}) above reduces 
to the solution of the following system of two Riemann-Roch equations:
\begin{equation}
\label{systSTV}
\left\{\begin{array}{c}
\ell(K_X-G+Q)=0\\
\ell(2G-D_\cP)=0.
\end{array}\right.
\end{equation}
In order to get the best possible parameters, one would like to
set $\cP=X(\F_q)$, and find
a solution with $\deg(K_X-G+Q)$ and $\deg(2G-D_\cP)$ close to $g-1$.
The authors propose a method to achieve this,
but it was later observed that their argument is incomplete.
A corrected method was then proposed in \cite{HR-JCompl2012},
relying on the tools introduced in the proof of Theorem~\ref{2fp}.
Combining all these ideas then gives:
\begin{theorem}[{\cite[Th.~6.4]{HR-JCompl2012}}]
If $q\geq 49$ is a square, then
\[M^{\textrm{sym}}_q\leq 2\left(1+\frac{1}{\sqrt{q}-2}\right).\]
\end{theorem}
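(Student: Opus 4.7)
The strategy is to apply the Chudnovsky--Chudnovsky evaluation--interpolation construction behind Theorem~\ref{thChCh}, but sharpened by solving the coupled system~(\ref{systSTV}) at the tightest possible degree. I would start from the Shimura family $(X_i)$ over $\Fq$ recalled after Theorem~\ref{thChCh}, which simultaneously achieves optimality $|X_i(\Fq)|/g(X_i) \to A(q) = \sqrt{q}-1$ and density $g(X_{i+1})/g(X_i) \to 1$. For each extension degree $k$, I pick a curve $X = X_i$ of genus $g$ in this family, take $\cP = X(\Fq)$ with $n = |X(\Fq)|$, choose a closed point $Q$ of degree $k$ (whose existence for large $k$ follows from Hasse--Weil applied to $X$ over $\F_{q^k}$), and an auxiliary divisor $G$ solving~(\ref{systSTV}). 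Using the same evaluation map on both factors, the scheme produces a \emph{symmetric} bilinear multiplication algorithm of length $n$ for $\F_{q^k}/\Fq$, so that $\mu^{\textrm{sym}}_q(k) \leq n$.

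Next I would optimise the parameters by pushing $\deg G$ as low as possible. Since any divisor class of degree $\geq g$ has positive Riemann--Roch space, the first equation $\ell(K_X - G + Q) = 0$ forces $\deg G \geq g + k - 1$, and the second one $\ell(2G - D_\cP) = 0$ forces $\deg G \leq (n + g - 1)/2$. The two constraints are compatible only when $n \geq g + 2k - 1$. Using $n/g \to \sqrt{q} - 1$ and the density property of the family to match any target $k$ with a curve whose genus is tuned to this balance, this gives
\[
\frac{n}{k} \longrightarrow \frac{2(\sqrt{q} - 1)}{\sqrt{q} - 2} = 2\left(1 + \frac{1}{\sqrt{q} - 2}\right),
\]
which is the announced upper bound on $M^{\textrm{sym}}_q$. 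The hypothesis $q \geq 49$ enters to keep $\sqrt{q} - 2 > 0$ with enough slack both for the existence of degree $k$ points and for the counting argument below.

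The main technical obstacle is to construct, at the critical degree $\deg G = g + k - 1$, a divisor class $[G]$ that is non-special for \emph{both} equations of~(\ref{systSTV}) at once. Each equation individually amounts to avoiding the image of a symmetric power $X^{(d)} \to \Pic^d(X)$, which sits as a proper closed subvariety; the trouble is that in our boundary regime these two images have maximal allowed dimension $g - 1$, so the naive dimension count sketched by Shparlinski--Tsfasman--\vladut{} in \cite{STV1992} is too tight to guarantee that their union misses an $\Fq$-rational class. The fix, following the method developed in \cite{HR-IJM2013} for the proof of Theorem~\ref{2fp} and adapted in \cite{HR-JCompl2012}, is a refined counting of $\Fq$-rational points inside $\Pic^{g+k-1}(X)(\Fq)$: bounds on the number of effective $\Fq$-rational divisors of a given degree (obtained from the zeta function of $X$), combined with Weil-type estimates on $|\Pic^0(X)(\Fq)|$, show that the union of the two bad loci does not cover the whole class group as soon as $q \geq 49$. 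Picking any $G$ outside this union yields a valid input for the Chudnovsky--Chudnovsky scheme and delivers the claimed asymptotic multiplication algorithm.
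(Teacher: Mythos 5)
Your outline coincides with the route the survey sketches for this result: Chudnovsky--Chudnovsky evaluation--interpolation on the dense optimal Shimura family, conditions (i)--(ii) recast as the Riemann--Roch system \eqref{systSTV}, and the degree bookkeeping ($\deg G\geq g+k-1$ from the first equation, $2\deg G\leq n+g-1$ from the second, hence $n\gtrsim g+2k$ and $n/k\to 2(\sqrt q-1)/(\sqrt q-2)$). That part is correct, including the surjectivity/injectivity translation and the fact that using the same map on both factors yields a symmetric algorithm.

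The gap is in the step you present as the "fix": you claim the existence of a class $[G]$ solving both equations at the critical degree follows from a refined count of $\F_q$-points in $\Pic^{g+k-1}(X)$, bounding the two bad loci by zeta-function estimates. For the equation $\ell(2G-D_\cP)=0$ the bad locus is not (a translate of) the theta divisor $W_{g-1}$ but its inverse image under multiplication by $2$ on the Jacobian, and a priori one can only bound its number of rational points by $\big|J[2](\F_q)\big|\cdot\big|W_{g-1}(\F_q)\big|$, where $\big|J[2](\F_q)\big|$ may be as large as $4^{g}$. In the relevant regime ($q$ fixed, $g$ growing linearly with $k$) this swamps the class number, so no counting argument over the class group, however refined, can close the argument: this $2$-torsion obstruction is precisely what produces the $2\log_q s$ loss in Xing's theorem on frameproof codes (see the discussion after \eqref{critereXing}) and what made the argument of \cite{STV1992} incomplete. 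The whole content of the cited theorem, via the tools from the proof of Theorem~\ref{2fp} in \cite{HR-IJM2013}, is an \emph{alternative} to that counting argument, constructing solutions of $\ell(2G-D_\cP)=0$ with $\deg(2G-D_\cP)\approx g-1$ in a way that is not hindered by torsion; without that ingredient your argument does not improve on Theorem~\ref{thChCh}. Relatedly, the hypothesis $q\geq 49$ is not there to keep $\sqrt q-2>0$ "with slack" (that only needs $q\geq 9$, and the asymmetric variant \eqref{asymRR} indeed works for all square $q\geq 9$); it is the numerical condition under which the harder, symmetric system \eqref{systSTV} can be solved by that method.
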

In this same work, it is also observed that if one does not insist on
having symmetric algorithms, then \eqref{systSTV} can be replaced
with an asymmetric variant: for given $\cP$ and $Q$,
find two divisors $G$ and $G'$ satisfying
\begin{equation}
\label{asymRR}
\left\{\begin{array}{c}
\ell(K_X-G+Q)=0\\
\ell(K_X-G'+Q)=0\\
\ell(G+G'-D_\cP)=0.
\end{array}\right.
\end{equation}
This asymmetric system is easier to solve, leading to:
\begin{theorem}[{\cite[Th.~6.4]{HR-JCompl2012}}]
If $q\geq 9$ is a square, then
\[M_q\leq 2\left(1+\frac{1}{\sqrt{q}-2}\right).\]
\end{theorem}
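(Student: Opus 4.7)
The plan is to apply the asymmetric Chudnovsky--Chudnovsky algorithm already described in the discussion leading to system \eqref{asymRR}. For each integer $k$, the data of a curve $X/\Fq$, a place $Q$ of $X$ of degree $k$, an ordered sequence $\cP=(P_1,\dots,P_n)$ of $n$ distinct $\Fq$-rational points of $X$ disjoint from $Q$, and two divisors $G,G'$ on $X$ satisfying \eqref{asymRR}, together yield a bilinear multiplication algorithm of length $n$ for $\F_{q^k}$ over $\Fq$, hence $\mu_q(k)\leq n$. The task therefore reduces to constructing, as $k\to\infty$, such data with $n/k\to 2\bigl(1+\frac{1}{\sqrt{q}-2}\bigr)$.

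For the curves, I would take a sequence $(X_i)_{i\geq 1}$ of Shimura curves over $\Fq$ as in \cite{STV1992}, which is both asymptotically optimal, $|X_i(\Fq)|/g(X_i)\to \sqrt{q}-1$, and dense, $g(X_{i+1})/g(X_i)\to 1$. The density property ensures that, for each large $k$, one can select $X=X_i$ whose genus $g=g(X)$ lies arbitrarily close to any prescribed target. On such an $X$, a place of degree exactly $k$ exists for all large $k$, as a standard consequence of the Weil bounds applied to the count of degree-$k$ places.

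The crux is to exhibit $G$ and $G'$ satisfying \eqref{asymRR} with the aggressive degree choices $\deg G=\deg G'=g+k-1$ and $n=g+2k-1$, so that each of $K_X-G+Q$, $K_X-G'+Q$ and $G+G'-D_\cP$ has degree exactly $g-1$---the critical degree for which the vanishing of $\ell(\cdot)$ is a codimension-one, hence generic, condition in the relevant component of $\Pic(X)$. This is the main obstacle, and it is precisely where the asymmetric system is easier than its symmetric counterpart: whereas the symmetric case forces $2[G]-[D_\cP]$ to avoid the theta locus, bringing in a delicate interplay with the $2$-torsion of $\Pic(X)$ that requires $q\geq 49$, here $[G]$ and $[G']$ vary independently in their respective Picard components, so a direct counting argument in $\Pic(X)$---along the lines used for the $s=2$ case in the proof of Theorem~\ref{2fp} in \cite{HR-IJM2013}---produces simultaneous solutions as soon as $|\Pic(X)|$ is sufficiently large relative to the codimension-one exceptional loci, which holds under the much weaker hypothesis $q\geq 9$.

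Combining these ingredients, for each large $k$ one obtains a bilinear multiplication algorithm of length $n=g+2k-1$ where $g$ can be chosen as small as $\frac{2k}{\sqrt{q}-2}(1+o(1))$, since this is the smallest value compatible with the constraint $n\leq |X(\Fq)| = (\sqrt{q}-1)\,g\,(1+o(1))$. Dividing by $k$ yields
\[
\frac{\mu_q(k)}{k} \leq \frac{n}{k} = \frac{g}{k} + 2 - \frac{1}{k} \longrightarrow \frac{2}{\sqrt{q}-2}+2 = 2\left(1+\frac{1}{\sqrt{q}-2}\right)
\]
as $k\to\infty$, whence $M_q \leq 2\bigl(1+\frac{1}{\sqrt{q}-2}\bigr)$. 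The hypothesis $q\geq 9$ ensures $\sqrt{q}-2\geq 1>0$, keeping the denominator positive and the bound meaningful.
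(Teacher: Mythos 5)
Your skeleton is the one the paper itself sketches: the asymmetric Chudnovsky--Chudnovsky evaluation--interpolation method on a dense, asymptotically optimal family of curves as in \cite{STV1992}, and your degree bookkeeping ($\deg G=\deg G'=g+k-1$, $n=g+2k-1$, so that the three divisors in \eqref{asymRR} all have degree $g-1$, with $g\approx\frac{2k}{\sqrt q-2}$ forced by $n\leq|X(\F_q)|$) is correct and does yield the limit $2\left(1+\frac{1}{\sqrt{q}-2}\right)$. The existence of a degree-$k$ place also goes through, since $g$ grows only linearly in $k$ while the main term of the place count grows like $q^k/k$.

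The genuine gap is in your third paragraph, which is the decisive step. You justify the solvability of \eqref{asymRR} at the critical degree $g-1$ by saying that non-effectivity is ``a codimension-one, hence generic, condition'' and that ``a direct counting argument'' works ``as soon as $|\Pic(X)|$ is sufficiently large relative to the codimension-one exceptional loci''. Over a finite field, Zariski genericity gives nothing: the exceptional locus $W_{g-1}\subset\Pic^{g-1}(X)$ of classes of effective divisors is indeed a divisor, but $|W_{g-1}(\F_q)|$ could a priori be a large fraction of $h=|\Pic^0(X)(\F_q)|$ --- already the existence of a single non-special class of degree $g-1$ is a nontrivial quantitative statement for small $q$. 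What is actually needed is an inequality of the shape $2\,|W_{g-1}(\F_q)|<h$ (and note that $[G]$ and $[G']$ do \emph{not} vary independently: they are coupled by the third equation, so the count must be organized, e.g., by first fixing the class of $G+G'$ so that $[G+G'-D_\cP]$ is non-special and then varying $[G]$), and this inequality must be proved uniformly along the chosen family of curves for every square $q\geq 9$. Establishing such estimates on the number of special classes (via zeta-function bounds, i.e.\ the tools behind Theorem~\ref{2fp} and \cite{HR-IJM2013}) is precisely the content of \cite{HR-JCompl2012}, and it is where the hypothesis on $q$ genuinely enters; your closing remark that $q\geq 9$ only serves to keep $\sqrt{q}-2$ positive, and your attribution of the $q\geq 49$ threshold in the symmetric case to $2$-torsion of $\Pic(X)$, both signal that this step was asserted rather than carried out. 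As written, your argument is a correct reduction of the theorem to the solvability of \eqref{asymRR} at degree $g-1$, not a proof of that solvability.
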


The study of bilinear multiplication algorithms over finite fields
is a very active area of research and we covered only one specific aspect.
Other research directions include:
\begin{itemize}
\item give asymptotic bounds for non square $q$, or for very small
  $q=2,3,4,\dots$;
\item instead of asymptotic bounds valid for $k\to\infty$, give
  uniform bounds valid for all $k$;
\item study multiplication algorithms in more general finite
  dimensional algebras, not only extension fields;
\item give effective constructions of multiplication algorithms.
\end{itemize}
For a more exhaustive survey of recent results we refer to \cite{BCPRRR2019}.

\subsection{Arithmetic secret sharing}

Our starting point here will be Shamir's secret sharing scheme.
Suppose Alice has a secret $s\in\F_q$, and she wants to distribute it
among $n$ players.  We will assume $n<q$, and the $n$ players are
labelled by $n$ distinct nonzero elements $x_1,\dots,x_n$ in $\F_q$.
Given a certain threshold $t\leq n$, Alice picks $t-1$ random elements
$c_1,\dots,c_{t-1}$ in $\F_q$, and considers the polynomial
$P(X)=s+c_1X+c_2X^2+\cdots+c_{t-1}X^{t-1}$.  Then each player $x_i$ receives his share
$y_i=P(x_i)$.  The main property of this scheme is that:
\begin{enumerate}[(i)]
\item\label{item:shamir1} any coalition of at least $t$ players can use Lagrange
  interpolation to reconstruct the polynomial $P$, hence also
  the secret $s=P(0)$, from their shares;
\item\label{item:shamir2} any coalition of up to $t-1$ players has no
  information at all about the secret, i.e. all possible values for $s$
  appear equiprobable to them.
\end{enumerate}

Another property of Shamir's scheme is its linearity: suppose Alice
has two secrets $s$ and $\widetilde{s}$, and distributes them to the
same players.  Let $P$ be the polynomial used to distribute $s$, with
corresponding shares $y_1,\dots,y_n$, and let $\widetilde{P}$ be the
polynomial used to distribute $\widetilde{s}$, with corresponding
shares $\widetilde{y}_1,\dots,\widetilde{y}_n$.  Then
$y_1+\widetilde{y}_1,\dots,y_n+\widetilde{y}_n$ are shares
corresponding to the sum $s+\widetilde{s}$ of the two secrets: indeed,
these are the shares obtained when distributing $s+\widetilde{s}$ with
the polynomial $P+\widetilde{P}$.

Shamir's scheme also enjoys a \emph{multiplicative} property, but it
is more subtle: we have $(P\widetilde{P})(0)=s\widetilde{s}$ and
$(P\widetilde{P})(x_i)=y_i\widetilde{y_i}$, so in some sense,
$y_1\widetilde{y}_1,\dots,y_n\widetilde{y}_n$ are the shares obtained
when distributing $s\widetilde{s}$ with the polynomial
$P\widetilde{P}$.  A drawback is that $P\widetilde{P}$ may have degree
up to $2t-2$, instead of $t-1$ in the original scheme.  Still we can
say that any coalition of $2t-1$ players can reconstruct the
secret product $s\widetilde{s}$ from their product shares
$y_i\widetilde{y}_i$.  On the other hand, $P\widetilde{P}$ is not
uniformly distributed in the set of polynomials $R$ of degree up to
$2t-2$ satisfying $R(0)=s\widetilde{s}$, so it is unclear what
information a smaller coalition can get.

There are close links between linear secret sharing schemes and linear
codes, under which Shamir's scheme corresponds to Reed-Solomon codes.
Indeed, observe that share vectors $(y_1,\dots,y_n)$ in Shamir's
scheme are precisely the codewords of an RS code.  Properties
(\ref{item:shamir1}) and (\ref{item:shamir2}) then reflect the MDS
property of Reed-Solomon codes.  And a common limitation to Shamir's
scheme and to RS codes is that, for a given $q$, the number $n$ of
players, or the length of the code, remains bounded essentially by
$q$.

The importance of multiplicative linear secret sharing schemes perhaps
comes from a result in \cite{CDM2000}, that shows that these schemes
can serve as a basis for secure multiparty computation protocols.  In
\cite{IKOS2009} it is also shown that certain two-party protocols, for
instance a zero-knowledge proof, admit communication-efficient
implementations in which one player, for instance the verifier, has to
simulate ``in her head'' a multiparty computation with a large number
of players.  This last result makes it very desirable to overcome the
limitation on the number of players in Shamir's scheme.

In the same way that AG codes provide a generalisation of RS codes of
arbitrary length, one can construct linear secret sharing schemes with
an arbitrary number of players by using evaluation of functions on an
algebraic curve.  Moreover, under certain conditions, these schemes
also admit good multiplicative properties. This was first studied by
Chen and Cramer in \cite{CC2006}, and then refined and generalized in
several works such as \cite{CCCX2009} and \cite{CCX2011}.  We follow
the presentation of the latter.

Given a finite field $\F_q$ and integers $k,n\geq1$, we equip the
vector space $\F_q^k\times\F_q^n$ with the linear projection maps
\begin{equation*}
  \map[\pi_0:]{\F_q^k\times\F_q^n}{\F_q^k}{\word{v}=(s_1,\dots,s_k,c_1,\dots,c_n)}{\word{v}_0\eqdef(s_1,\dots,s_k)}
\end{equation*}
and, for any subset $B\subseteq\{1,\dots,n\}$,
\begin{equation*}
\map[\pi_B:]{\F_q^k\times\F_q^n}{\F_q^{|B|}}{\word{v}=(s_1,\dots,s_k,c_1,\dots,c_n)}{\word{v}_B\eqdef(c_i)_{i\in B}}.
\end{equation*}

\begin{definition}
A $(n,t,d,r)$-arithmetic secret sharing scheme for $\F_q^k$ over $\F_q$
is a linear subspace $\code\subseteq\F_q^k\times\F_q^n$ with the following properties:
\begin{enumerate}[(i)]
\item\label{item:disc} ($t$-disconnectedness) for any subset
  $B\subseteq\{1,\dots,n\}$ of cardinality $|B|=t$, the projection map
\[\map[\pi_{0,B}:]{\code}{\F_q^k\times\pi_B(\code)}{\word{v}}{(\word{v}_0,\word{v}_B)}\]
is surjective 
\item\label{item:power_rec} ($d$-th power $r$-reconstruction) for any
  subset $B\subseteq\{1,\dots,n\}$ of cardinality $|B|=r$ we have
\[(\ker\pi_B)\cap \code\deux[d]\subseteq(\ker\pi_0)\cap \code\deux[d].\]
\end{enumerate}
Moreover we say that $\code$ has {\em uniformity} if in
(\ref{item:disc}) we have $\pi_B(\code)=\F_q^{|B|}$ for all $B$ with
$|B|=t$.
\end{definition}
Such a scheme allows to distribute a secret
$\word{s}=(s_1,\dots,s_k)\in\F_q^k$ among $n$ players. To do this, one
chooses a random $\word{v}\in \code$ such that $\word{v}_0=\word{s}$,
and for each $i\in\{1,\dots,n\}$, the $i$-th player receives his share
$c_i=\word{v}_{\{i\}}$.  Now condition (\ref{item:disc}) means that
for each coalition $B$ of $t$ adversary players, the secret vector
$\word{v}_0$ is independently distributed from their share vector
$\word{v}_B$.  On the other hand, condition (\ref{item:power_rec})
means that any coalition of $r$ honest players can reconstruct the
$\star$-product of $d$ secret vectors from the $\star$-product of
their corresponding $d$ share vectors.

It turns out that these conditions (\ref{item:disc}) and
(\ref{item:power_rec}) can be captured by Riemann-Roch equations:
\begin{lemma}
  Let $X$ be an algebraic curve over $\F_q$, together with a
  collection of $k+n$ distinct rational points
  $\mathcal{S}=\{Q_1,\dots,Q_k,P_1,\dots,P_n\}$.  Define the divisor
  $Q=Q_1+\cdots+Q_k$, and for each subset $B\subseteq\{1,\dots,n\}$,
  $P_B=\sum_{i\in B}P_i$.  Let $G$ be a divisor on $X$ that satisfies
  the following system:
\begin{equation}
\label{systCCX}
\left\{\begin{array}{cl}
\ell(K_X-G+P_B+Q)=0 & \qquad\textrm{for all $B\subseteq\{1,\dots,n\}$ with $|B|=t$}\\
\ell(dG-P_B)=0 & \qquad\textrm{for all $B\subseteq\{1,\dots,n\}$ with $|B|=r$}.
\end{array}\right.
\end{equation}
Then $\code=\CL{X}{\mathcal{S}}{G}$ is a $(n,t,d,r)$-arithmetic secret sharing scheme for $\F_q^k$, with uniformity.
\end{lemma}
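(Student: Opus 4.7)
The plan is to verify the two properties (i) and (ii) of the definition separately, by translating each into a statement about the vanishing of certain Riemann-Roch spaces, exactly the ones appearing in the system (\ref{systCCX}).

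First I would reinterpret the code: a codeword of $\code = \CL{X}{\mathcal{S}}{G}$ is of the form $\word{v} = (f(Q_1),\dots,f(Q_k),f(P_1),\dots,f(P_n))$ for $f \in L(G)$, with $\pi_0(\word{v}) = (f(Q_1),\dots,f(Q_k))$ and $\pi_B(\word{v}) = (f(P_i))_{i \in B}$. Thus the combined map $\pi_{0,B}$ on $\code$ is, up to the surjection $L(G) \twoheadrightarrow \code$, induced by the evaluation map $\mathrm{ev}_{Q+P_B}: L(G) \to \F_q^{k+t}$ whose kernel is $L(G - Q - P_B)$. Its image has dimension $\ell(G) - \ell(G-Q-P_B)$, which by Riemann-Roch equals
\[
(k+t) + \ell(K_X - G) - \ell(K_X - G + P_B + Q).
\]
The first equation in (\ref{systCCX}) asserts $\ell(K_X - G + P_B + Q) = 0$, and since $K_X - G \leq K_X - G + P_B + Q$, Riemann-Roch spaces are monotone, giving $\ell(K_X - G) = 0$ as well. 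Hence $\mathrm{ev}_{Q+P_B}$ is surjective onto $\F_q^{k+t}$, which establishes both $t$-disconnectedness and uniformity $\pi_B(\code) = \F_q^t$ in one stroke.

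For the $d$-th power $r$-reconstruction, I would use the basic inclusion (\ref{C(P,A)C(P,B)}) iterated $d$ times, giving $\code\deux[d] \subseteq \CL{X}{\mathcal{S}}{dG}$. Hence every $\word{w} \in \code\deux[d]$ is of the form $(h(Q_j), h(P_i))_{j,i}$ for some $h \in L(dG)$. If $\word{w} \in \ker \pi_B$ with $|B| = r$, then $h(P_i) = 0$ for all $i \in B$, so $h \in L(dG - P_B)$. The second equation in (\ref{systCCX}) says $\ell(dG - P_B) = 0$, hence $h = 0$ and $\word{w} = 0$; in particular $\word{w} \in \ker \pi_0$, which is the required inclusion.

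The proof is essentially a direct unfolding of definitions once the two hypotheses are matched to the two conditions; the only mildly delicate point is the use of (\ref{C(P,A)C(P,B)}) to control $\code\deux[d]$ inside an ambient AG code with divisor $dG$, and the observation that $\ell(K_X - G) = 0$ follows from the first Riemann-Roch equation even though it is not stated as a separate hypothesis. No step involves a substantive obstacle; the whole statement is really a dictionary between the combinatorial conditions defining an arithmetic secret sharing scheme and the algebro-geometric vanishing conditions that AG codes make available.
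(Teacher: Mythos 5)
Your proof is correct, and it is precisely the argument the paper intends but never spells out (the lemma is stated without proof, deferring to the cited literature): the first Riemann--Roch equation in \eqref{systCCX}, together with the monotonicity remark giving $\ell(K_X-G)=0$, makes the evaluation map at $Q+P_B$ surjective onto $\F_q^{k+t}$, which yields $t$-disconnectedness and uniformity at once, while the second equation combined with the inclusion $\code\deux[d]\subseteq \CL{X}{\mathcal{S}}{dG}$ obtained by iterating \eqref{C(P,A)C(P,B)} forces any element of $\code\deux[d]$ vanishing on an $r$-set of shares to be zero, which is (more than) the required $d$-th power $r$-reconstruction. The only implicit technicality is the standing assumption that $\supp(G)$ (hence of $dG$) avoids the points of $\mathcal{S}$, so that vanishing of $h\in L(dG)$ at $P_i$ indeed puts $h$ in $L(dG-P_B)$; this is the usual caveat handled by Remark~\ref{UseWeakApprox}.
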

In \cite{CCX2011} a method is developed to solve \eqref{systCCX}
using some control on the $d$-torsion of the class group of the curve.
It is not known if this method is optimal, but it gives arithmetic secret
sharing schemes with the best parameters up to now\footnote{
Observe that the method of \cite{HR-IJM2013}, that gives
optimal solutions to \eqref{critereXing} and \eqref{systSTV},
does not operate with \eqref{systCCX}, even in the case $d=2$,
because the number of equations in the system is too high.
}.
It leads to:
\begin{theorem}[\cite{CCX2011}]
For all prime powers $q$ except perhaps for $q=2,3,4,5,7,11,13$,
there is an infinite family of $(n,t,2,n-t)$-arithmetic secret sharing
schemes for $\F_q^k$ over $\F_q$
with uniformity, where $n$ is unbounded, $k=\Omega(n)$ and $t=\Omega(n)$.
\end{theorem}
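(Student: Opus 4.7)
The plan is to combine the lemma preceding the statement with a counting argument in the Jacobian of a carefully chosen family of curves. Fix a prime power $q$ outside the excluded set, and let $(X_i)_i$ be an asymptotically optimal family over $\Fq$ provided by Section~\ref{sec:asymptotics}: Shimura curves or Garcia--Stichtenoth towers if $q$ is a square, Bassa--Beelen--Garcia--Stichtenoth towers for non-prime, non-square odd prime powers, and appropriate class-field-theoretic constructions for the remaining admissible $q$. By hypothesis $\gamma_q\eqdef\liminf_i |X_i(\Fq)|/g(X_i)$ is a positive constant that is ``large enough''; the exact threshold required on $\gamma_q$ is what explains the exclusion of $q\in\{2,3,4,5,7,11,13\}$.

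On each curve $X=X_i$ of genus $g$, I would fix parameters $k=\lfloor\alpha g\rfloor$ and $t=\lfloor\beta g\rfloor$, with $\alpha,\beta>0$ to be optimized, and set $n=|X(\Fq)|-k$, so $k$, $t$, $n$ all grow linearly in $g$. After choosing an ordering $\mathcal S=(Q_1,\dots,Q_k,P_1,\dots,P_n)$ of the rational points, the task is to find a divisor $G$ of some degree $e$ solving~\eqref{systCCX} with $d=2$ and $r=n-t$. Equivalently, the class $[G]\in\Cl^e(X)$ must avoid two families of bad classes: a \emph{linear} family indexed by subsets $B$ of size $t$, coming from the requirement that $[G]-[K_X+Q+P_B]$ not be effective; and a \emph{quadratic} family indexed by subsets $B$ of size $n-t$, coming from the requirement that $2[G]-[P_B]$ not be effective.

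The existence of a good $[G]$ then follows from a comparison of cardinalities in $\Cl^e(X)$. Each $B$ in the linear family contributes at most a number of bad classes bounded above via the zeta function of $X$ and the Hasse--Weil estimate, giving a total of order $\binom{n}{t}$ times the count of effective classes of degree $2g-2+k+t-e$. For the quadratic family, the multiplication-by-two endomorphism of the degree-zero part of the class group has kernel $J(\Fq)[2]$ of size at most $2^{2g}$, so each bad class for $[2G]$ lifts to at most $2^{2g}$ bad classes for $[G]$, multiplied by $\binom{n}{n-t}$ and the number of effective classes of degree $2e-(n-t)$. Choosing $e$ in a window of width linear in $g$ around a convenient value, and using $|\Cl^e(X)|=h_X\sim q^g$, the total bad count stays strictly below $h_X$ provided that $(\alpha,\beta,\gamma_q)$ satisfies an explicit inequality that carves out a nonempty region in the $(\alpha,\beta)$-plane.

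The crux of the argument, and its main obstacle, is the $2^{2g}$-factor contributed by the $2$-torsion of the Jacobian: it inflates the quadratic family of bad classes and pushes the admissible $(\alpha,\beta)$-region toward the origin. The interplay between the growth rate $\gamma_q$ of rational points and this torsion penalty is precisely what forces $q\notin\{2,3,4,5,7,11,13\}$: for $q\in\{2,3\}$ one already has $A(q)<1$, ruling out any linear-rate construction, while for $q\in\{4,5,7,11,13\}$ the known lower bounds on $A(q)$ are insufficient to make the counting strict while preserving $k,t=\Omega(n)$. For all other prime powers, the quantitative estimates on $A(q)$ recalled in Section~\ref{sec:asymptotics} are strong enough to close the counting, and one concludes by letting $g\to\infty$ along the family, obtaining the claimed infinite family of $(n,t,2,n-t)$-arithmetic secret sharing schemes with uniformity.
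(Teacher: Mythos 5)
Note first that the survey does not prove this statement: it is quoted from \cite{CCX2011}, and the surrounding text identifies the key ingredient as a method for solving the Riemann--Roch system \eqref{systCCX} ``using some control on the $d$-torsion of the class group''. Your outline has the right overall shape --- asymptotically good families, parameters proportional to $g$, a union bound over the exponentially many equations of \eqref{systCCX}, and, for the quadratic conditions, a pullback along multiplication by $2$ in the class group --- but it is missing exactly that key ingredient. You control the fibres of multiplication by $2$ by the generic bound $|J_X[2](\F_q)|\leq 2^{2g}=4^g$ and then assert that the resulting inequality ``carves out a nonempty region'' for the stated $q$. It does not. Even granting $h_X\approx q^g$ and the usual estimate that the number of effective classes of degree $m<g$ is about $h_X\, q^{m-g}$, balancing your two families of bad classes forces an inequality of the shape $A(q) > 1+\log_q 4+(\text{positive terms in } k/g,\ t/g,\ \text{and the entropy of } t/n)$, where the $\log_q 4$ term is precisely the price of your $4^g$. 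This already excludes $q=8$ (one needs more than $5/3$, while only $A(8)\geq 3/2$ is known), and for prime $q$ of moderate size (e.g.\ $q=17$, which the theorem covers) the known lower bounds on $A(q)$ fall far short of the threshold your count requires. So your argument proves at best a version of the statement for sufficiently large $q$, with an exceptional set much larger than $\{2,3,4,5,7,11,13\}$; your claim that the interplay of $\gamma_q$ with the $2^{2g}$ penalty ``explains'' exactly this exceptional set is asserted, not derived, and cannot be correct as stated.

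The missing idea is the main contribution of \cite{CCX2011}, the \emph{torsion limit}: along suitably chosen asymptotically optimal families one can guarantee that the $\F_q$-rational $2$-torsion of the Jacobian is asymptotically far smaller than $4^g$ (its $\log_q$ grows negligibly compared to $g$, or at least with a small enough exponent), and it is this refined bound --- injected into essentially the counting you describe, together with further dedicated arguments for the non-square and small fields --- that closes the inequality down to the stated exceptional set; recall also the paper's footnote that the alternative method of \cite{HR-IJM2013} fails here because of the exponential number of equations, so the union bound plus torsion control is really where all the work lies. (A minor slip in your setup: for the linear family the bad classes are those $[G]$ for which $K_X+Q+P_B-G$ is equivalent to an effective divisor, not $G-(K_X+Q+P_B)$; your subsequent degree count is nonetheless the right one. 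And if one drops uniformity, the result for all $q$ follows by concatenation \cite{CCCX2009}, so the exceptional set is tied specifically to the uniform construction discussed here.)
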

It should be noted that, if one is ready to drop the uniformity condition,
then the corresponding result holds for \emph{all} $q$.
This can be proved using a concatenation argument~\cite{CCCX2009}.

The literature on arithmetic secret sharing is rapidly evolving,
and we cannot cover all recent developments.
For further references on the topic, together with details on the
connection with multiparty computation, we recommend the book~\cite{CDN2015book}.

 \section{Application to distributed storage: locally recoverable
  codes}\label{sec:LRC}
\subsection{Motivation}
The impressive development of cloud computing and distributed storage
in the last decade motivated new paradigms and new questions in coding
theory yielding an impressive number of works studying the
construction, the features and the limitations of codes having ``good
local properties''.

While the literature provides many definitions, such as {\em locally
  correctable codes}, {\em locally decodable codes}, etc.
In this \ifx\localversion\undefined {chapter}
\else {article}
\fi
we only focus on so--called {\em locally recoverable codes} (LRC).
To define them, let us first define the notion of {\em restriction} of
a code.

\begin{definition}
  Let $\CC \subseteq \Fq^n$ be a code and
  $I \subseteq \{1, \dots, n\}$.  The {\em restriction of $\CC$ to
    $I$} denoted by $\rest{\CC}{I}$ is the image of $\CC$ under the
  projection
  \[\map{\Fq^n}{\Fq^{|I|}}{(c_1, \dots, c_n)}{(c_i)_{i\in I}.}\]
\end{definition}

\begin{remark}
  Classically in the literature, this operation is referred to as
  {\em puncturing $\CC$} at $\{1, \dots, n\} \setminus I$.
  In the sequel, the term {\em restriction} seems more relevant
  since, we will deal with evaluation codes and the restriction of
  the code will be obtained by evaluating restrictions of some given
  functions.
\end{remark}

\begin{definition}[Locally recoverable code]\label{def:LRC}
  A code $\CC \subseteq \Fq^n$ is {\em locally recoverable} with {\em
    locality} $\ell$ if for any $i \in \{1, \ldots, n\}$, there exists
  at least one subset $A(i) \subseteq \{1, \dots, n\}$ containing $i$
  such that $|A(i)| \leq \ell + 1$ and $\rest{\CC}{A(i)}$ has minimum
  distance $\geq 2$.
\end{definition}

\begin{definition}[Recovery set]
  In the context of Definition~\ref{def:LRC}, a subset $A(i)$ is
  called a {\em recovery set of $\CC$ for $i$}.
\end{definition}

\begin{remark}\label{rem:on_recovery_sets}
  Note that the sets $A(1), \dots, A(n)$ need not be distinct.  In
  addition, we emphasize that, for a given position
  $i \in \{1, \dots, n\}$, there might exist more than one recovery set
  for $i$; this is actually the point of the notion of {\em
    availability} discussed further in \S~\ref{subsec:availability}.
  
  On the other hand, in most of the examples presented in
  \S~\ref{subsec:TBcode}, \ref{subsec:BTVcodes}
  and~\ref{subsec:higher_local_distance}, we consider a partition
  $A_1 \sqcup \cdots \sqcup A_s$ of $\{1, \dots , n\}$ such that for
  any $i \in \{1, \dots, n\}$, the unique recovery set for $i$ is the
  unique subset $A_j$ containing $i$.
\end{remark}

\begin{remark}\label{rem:word_in_dual}
  One can prove that Definition~\ref{def:LRC} is equivalent to the
  following one.  For any $i \in \{1, \dots, n\}$, there exists at
  least one codeword $\cv^{(i)} \in \CC^\perp$ of weight less than or
  equal to $\ell + 1$ and whose $i$--th entry is nonzero.
\end{remark}

Let us give some practical motivation for this definition.  Suppose we
distribute data on $n$ servers. Our file (or a part of it) is an
element $\mv \in \Fq^k$ that has been encoded as a codeword
$\cv \in \CC$, where $\CC$ is a code with locality $\ell$. For any
$i \in \{1, \ldots, n\}$ the element $c_i \in \Fq$ is stored in the
$i$--th server.  In distributed storage systems, data should be
recoverable at any moment, even if failures or maintenance operations
are performed. For this sake, when a machine fails, the data it
contains should be recovered and saved on another machine. To perform
such operation efficiently, we wish to limit the number of
machines from which data is downloaded. Here comes the interest of
codes having a small locality!  Suppose the $i$--th server
failed. Then, we need to reconstruct $c_i$ from the knowledge of the
$c_j$'s for $j \neq i$. The objective is to recover $c_i$ from the
knowledge of an as small as possible number of $c_j$'s.  From
Remark~\ref{rem:word_in_dual}, there exists $\dv \in \CC^\perp$ of
weight less than or equal to $\ell + 1$ with $\dv_i \neq 0$. Then, the
support of $\dv$ is $\{i, i_1, \ldots, i_s\}$ with $s \leq \ell $ and
\[
c_i = -\frac{1}{d_i} \sum_{j=1}^s c_{i_j}d_{i_j}.
\]
Consequently, $c_i$ can be recovered after downloading data from
at most $\ell$ distinct servers. The smaller the $\ell$, the more efficient
the recovery process.

\begin{remark}
  Note that the literature on distributed storage actually involves
  two distinct families of codes:
  \begin{itemize}
  \item {\em locally recoverable codes} which are the purpose of the present section;
  \item {\em regenerating codes} which are not discussed in the present
\ifx\localversion\undefined {chapter}
\else {article}
\fi
.
  \end{itemize}
  We refer the interested reader to \cite{dimakis2011procieee} for an
  introduction to regenerating codes and to \cite{ng2019dcc} for a
  more geometric presentation of them.
\end{remark}

\subsection{A bound on the parameters involving the locality}\label{subsec:Gopalan_bound}

The most classical bound, which can be regarded as a Singleton bound
for locally recoverable codes is due to Gopalan, Haung and Simitci.
\begin{theorem}[\cite{gopalan2012it,papailiopoulos2014it}]
  Let $\CC \subseteq \Fq^n$ be a locally recoverable code of dimension $k$, minimum distance $d$
  and locality $\ell$. Then,
  \begin{equation}\label{eq:Singleton_local}
    d \leq n - k - \left\lceil \frac{k}{\ell} \right\rceil + 2.
  \end{equation}
\end{theorem}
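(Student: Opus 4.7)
The strategy is to apply the Singleton bound to a suitable punctured version of $\CC$: I would build a large subset $S \subseteq \{1, \ldots, n\}$ with $\dim \rest{\CC}{S} \leq k - 1$, from which follows the existence of a nonzero codeword $\cv \in \CC$ supported in $\{1, \ldots, n\} \setminus S$, giving $d \leq \wt{\cv} \leq n - |S|$. To derive the claimed bound it suffices to guarantee $|S| \geq k + \lceil k/\ell \rceil - 2$.

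I would construct $S$ greedily. Initialize $S_0 = \emptyset$ and, as long as $\dim \rest{\CC}{S_{j-1}} < k - 1$, choose an index $i_j \in \{1,\ldots,n\} \setminus S_{j-1}$ whose inclusion would strictly increase $\dim \rest{\CC}{\cdot}$ (such an index exists as long as $\dim \rest{\CC}{S_{j-1}} < k$), pick a recovery set $A(i_j)$ for it, and set $S_j = S_{j-1} \cup A(i_j)$. The key input from the locality hypothesis is that $\rest{\CC}{A(i_j)}$ has minimum distance at least $2$ and length at most $\ell + 1$, hence dimension at most $\ell$. At each iteration the dimension of the restricted code grows by at most $\ell$ while the size of $S$ grows by at most $\ell + 1$, and by the choice of $i_j$ the dimension grows by at least $1$; thus each iteration provides a ``surplus'' of at least one coordinate over the corresponding dimension growth whenever the recovery set contributes $\ell+1$ fresh coordinates.

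Accounting over iterations, to reach $\dim \rest{\CC}{S_t} = k-1$ one performs roughly $\lceil k/\ell \rceil - 1$ full steps, and the accumulated surplus yields $|S_t| \geq (k-1) + (\lceil k/\ell \rceil - 1) = k + \lceil k/\ell \rceil - 2$. The conclusion $d \leq n - |S_t| \leq n - k - \lceil k/\ell \rceil + 2$ then follows.

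The main obstacle is making this counting rigorous when $k$ is not divisible by $\ell$: in the final iteration a full recovery set may overshoot dimension $k-1$, so either the final $A(i_t)$ must be chosen to overlap $S_{t-1}$ in the right way, or only a proper subset of the new coordinates in $A(i_t)$ may be adjoined. Equally delicate is arguing that at every step one can pick an index $i_j$ whose recovery set genuinely yields both the dimension growth and the claimed surplus, ruling out degenerate configurations where the fresh coordinates of $A(i_j)$ turn out to be entirely linearly dependent on $S_{j-1}$ in the global code. These boundary and dependence issues constitute the technical heart of the argument.
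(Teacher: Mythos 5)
The paper itself does not prove this theorem (it only cites Gopalan et al.\ and Papailiopoulos--Dimakis), and your strategy is exactly the one of those references: build $S\subseteq\{1,\dots,n\}$ with $\dim\rest{\CC}{S}\le k-1$ and $|S|\ge k+\lceil k/\ell\rceil-2$, then take a nonzero $\cv\in\CC$ in the kernel of the projection onto $S$, so that $d\le \wt{\cv}\le n-|S|$. The architecture is right, but as written the argument stops at a sketch precisely where you say it does, and of the two difficulties you flag, one is in fact not a difficulty at all. Since $\rest{\CC}{A(i_j)}$ has minimum distance at least $2$, every coordinate of $A(i_j)$ --- in particular the $i_j$-th --- is a linear function of the remaining coordinates of $A(i_j)$. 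Hence, once the whole of $A(i_j)$ has been adjoined to $S_{j-1}$, the coordinate $i_j$ (which is fresh, since you chose it so that adding it alone strictly increases the rank, and such an index exists as long as the rank is $<k$) is redundant. Writing $s_j=|S_j\setminus S_{j-1}|$ and $r_j=\dim\rest{\CC}{S_j}-\dim\rest{\CC}{S_{j-1}}$, this gives automatically $1\le r_j\le\min(\ell,\,s_j-1)$: the surplus of one coordinate per full step needs no hypothesis on how $A(i_j)$ meets $S_{j-1}$, and ``degenerate configurations'' cannot occur. Consequently $|S_t|\ge\dim\rest{\CC}{S_t}+t$ after $t$ full steps.

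The genuine gap is the termination bookkeeping, and it does not close by saying ``roughly $\lceil k/\ell\rceil-1$ steps''; it needs the standard two-case analysis. Run the greedy while $\dim\rest{\CC}{S_j}\le k-2$ and let $t$ be the first index with $\dim\rest{\CC}{S_t}\ge k-1$. If $\dim\rest{\CC}{S_t}=k-1$, then since each $r_j\le\ell$ you get $t\ge\lceil (k-1)/\ell\rceil\ge\lceil k/\ell\rceil-1$, and $|S_t|\ge (k-1)+t\ge k+\lceil k/\ell\rceil-2$. If instead the last step overshoots, i.e.\ $\dim\rest{\CC}{S_t}\ge k$, redo the last step by adjoining the new coordinates of $A(i_t)$ one at a time and stopping as soon as the rank equals exactly $k-1$ (possible because each single coordinate raises the rank by at most one); this partial step carries no surplus, but the overshoot forces $t\ell\ge\sum_{j\le t}r_j\ge k$, hence $t\ge\lceil k/\ell\rceil$, and the $t-1$ full steps still give $|S|\ge (k-1)+(t-1)\ge k+\lceil k/\ell\rceil-2$. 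In both cases $\dim\rest{\CC}{S}\le k-1$, so some nonzero $\cv\in\CC$ vanishes on $S$ and $d\le\wt{\cv}\le n-|S|\le n-k-\lceil k/\ell\rceil+2$. So: correct approach, with a real but entirely fillable gap at the counting stage; the above is what is needed to turn your outline into a proof.
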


\subsection{Tamo--Barg codes}\label{subsec:TBcode}

The original proposal of {\em optimal locally recoverable codes}, i.e. codes reaching
bound~\eqref{eq:Singleton_local} is due to Barg and Tamo
\cite{tamo2014it} and are codes derived from Reed--Solomon codes. The
construction is as follows.
\begin{itemize}
\item Consider a subset
$A\subseteq \Fq$ of cardinality $n$ such that
$(\ell + 1)$ divides $n$ and a partition of $A$:
\[
  A = A_1 \sqcup \cdots \sqcup A_{\frac{n}{\ell +1 }}
\]
into disjoint subsets of size $\ell + 1$.  Denote by $x_1, \dots, x_n$
the elements of $A$.
\item Consider a polynomial $g$ of degree
$\ell + 1$ which is constant on any element of the partition, i.e.
\[
  \forall i \in \left\{1, \dots, \frac{n}{\ell + 1}\right\},\ \forall x, y \in A_i,
  g(x) = g(y).
\]
\end{itemize}
Then, for $k$ divisible by $\ell$, one defines the $[n,k]$ Tamo--Barg code of
locality $\ell$ as the code
\begin{equation}\label{eq:TBcode}
\CC \eqdef  \left\{ (f(x_1), \dots, f(x_n)) ~\bigg|~ f(X) = \sum_{i=0}^{\ell-1}\sum_{j =
      0}^{\frac{k}{\ell} -1}  a_{ij} X^i g(X)^j\right\}.
\end{equation}
This code has length $n$ and dimension $k$.
In addition, the polynomials that are evaluated to generate codewords
have degree at most
\[
  \deg f \leq \ell - 1 + (\ell + 1)\left(\frac{k}{\ell}-1 \right)
  = k + \frac{k}{\ell} -2.
\]
Next, to get a lower bound for its minimum distance, it suffices to embed this
code into a larger code whose minimum distance is known, namely a Reed--Solomon
code:
\[
  \CC \subseteq \left\{(f(x_1), \dots, f(x_n)) ~|~ \deg f \leq
    k+\frac{k}{\ell}-2\right\} = \CL{\P^1}{A}{\left(k+ \frac{k}{\ell}
      - 2\right)P_{\infty}},
\]
whose minimum distance equals $n- k - \frac{k}{\ell} +2$.
Finally, it remains to be shown that the code has locality $\ell$
which will explain the rationale behind the construction.
Suppose we wish to recover a symbol $c_r$ from a given
codeword $\cv \in \CC$. The index $r \in A_s$ for some integer $s$ and
since $g$ is constant on $A_s$ and identically equal to some constant
$\gamma_s$, the restriction  to $A_s$ of any polynomial $f$
as in (\ref{eq:TBcode}) coincides with the
polynomial $\sum_{i,j} a_{ij} \gamma_s^jx^i$ which is a polynomial of
degree $< \ell$.  By Lagrange interpolation, this polynomial
is entirely determined by its evaluations at the $\ell$ elements of
$A_s \setminus \{r\}$ and hence, its evaluation at $r$ can be deduced
from these evaluations.

In summary,
this code has parameters $\left[n, k, n- k - \frac{k}{\ell} +2\right]$
and locality $\ell$, hence it is optimal with respect to
Bound~(\ref{eq:Singleton_local}).

\begin{example}\label{ex:locality_with_group_action}
  An explicit example of a polynomial $g$ which is constant on each
  element of a given partition can be obtained from a polynomial which
  is invariant under some group action on the affine line and take the
  partition given by the cosets with respect to this action. For
  instance, suppose that $(\ell + 1)\mid (q-1)$, then $\Fq$ contains a
  primitive $(\ell + 1)$--th root of unity $\zeta$. The cyclic
  subgroup of $\Fq^\times $ generated by $\zeta$ acts multiplicatively
  on the affine line via the map $z \mapsto \zeta z$ which splits
  $\Fq^\times$ into $\frac{q-1}{\ell + 1}$ cosets.  Next, the
  polynomial $g(X) = X^{\ell + 1}$ is obviously constant on these
  cosets. This provides optimal $[n, k]$ codes of locality $\ell$ for
  any $n \leq q-1$ and any $k < n$ such that $(\ell + 1)|n$
  and $\ell | k$.
\end{example}

\subsection{Locally recoverable codes from coverings of algebraic
  curves: Barg--Tamo--\vladut{} codes}\label{subsec:BTVcodes}
The discussion to follow requires the introduction of the following
definition.

\begin{definition}[Galois cover] A morphism $\phi : Y \rightarrow X$ is
  a {\em Galois cover} if the induced field extension
  $\phi^* : \Fq(X) \hookrightarrow \Fq(Y)$ (see
  \S~\ref{subsec:morphisms}) is Galois. The {\em Galois group} of the
  cover is nothing but the Galois group of the extension.
\end{definition}

\begin{remark}
It is well known that, given a Galois cover $\phi : Y \rightarrow X$,
the Galois group $\Gamma$ acts on $Y$ and the orbits are the pre-images
of points of $X$.
\end{remark}

Tamo and Barg's construction can be generalized in terms of curve
morphisms. Indeed, the situation of
Example~\ref{ex:locality_with_group_action} can be interpreted as
follows. The elements $x_1, \ldots, x_n$ at which polynomials are
evaluated are now regarded as a sequence of rational points
$P_1, \dots, P_n$ of $\P^1$ that are from a disjoint union of orbits
under the action of the automorphism
$\sigma : (x:y) \longmapsto (\zeta x : y)$.  Next the polynomial $g$
induces a cyclic cover
$\phi : \P^1 \stackrel{g}{\longrightarrow} \P^1$ with Galois group
spanned by $\sigma$. The orbits can be regarded as fibres of
rational points of $\P^1$ that split totally.

Similarly to Reed--Solomon codes, for a given ground field, Tamo--Barg
approach permits to generate optimal codes with respect to
(\ref{eq:Singleton_local}) but the code length will be bounded from
above by the number of rational points of $\P^1$, i.e. by $q+1$. If
one wishes to create longer good locally recoverable codes, the construction can be
generalized as proposed in \cite{barg2017it}. 
Consider
\begin{itemize}
\item two curves $X, Y$ and a Galois cover
$\phi : Y \rightarrow X$ of degree $\ell + 1$ with Galois group
$\Gamma$;
\item rational points $Q_1, \dots, Q_{\frac{n}{(\ell + 1)}}$ of
$X$ which split completely in the cover $\phi$;
\item their pre-images by $\phi$, the points
  $P_{1,1}, \dots, P_{1,\ell+1}, \dots, P_{\frac{n}{\ell+1}, 1}, \dots,
  P_{\frac{n}{\ell +1}, \ell + 1} \in Y(\Fq)$ which are grouped by
  orbits of size $\ell + 1$ under the action of $\Gamma$.
  These orbits are the recovery sets.
\end{itemize}
Equivalently, the recovery sets are {\em fibres} of $\phi$, that is to
say, such a set is the pre-image set of a given totally split rational
point of $X$.  Let $x \in \Fq(Y)$ be a primitive element of the
extension $\Fq(Y)/\Fq(X)$ whose pole locus avoids\footnote{Here again,
  this avoiding condition can be relaxed thanks to
  Remark~\ref{UseWeakApprox}.} the points $P_{i,j}$ and $G$ be a
divisor on $X$.  Then, one can construct the code
\begin{equation}\label{eq:TBZcode}
  \CC \eqdef \left\{\left(f\left(P_{1,1}\right), \dots,
      f(P_{\frac{n}{\ell + 1}, \ell + 1})\right) ~\bigg|~
  f = \sum_{i=0}^{\ell - 1} (\phi^*f_i)\cdot x^i,\ \ f_i \in L (G)\right\},
\end{equation}
where $L(G) \subseteq \Fq(X)$.

Note that $\phi$ is constant on any recovery set and hence so are the
functions $\phi^* f_i$. Therefore, we have the following statement.

\begin{lemma}
The restriction of
$f = \sum (\phi^*f_i) x^i$ to a recovery set coincides with a
polynomial in $x$ with constant coefficients. Consequently, the
restriction of $\CC$ to a recovery set is a Reed--Solomon code of
length $\ell+1$ and dimension $\ell$.  
\end{lemma}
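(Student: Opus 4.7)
The first assertion is essentially a tautology once one unpacks the definition of the pullback. A recovery set is the fibre $\phi^{-1}(Q_s) = \{P_{s,1},\dots,P_{s,\ell+1}\}$. For any $P$ in this fibre, by definition of $\phi^*$ one has $(\phi^*f_i)(P) = f_i(\phi(P)) = f_i(Q_s)$, which is a scalar that does \emph{not} depend on which $P$ in the fibre we chose. Writing $a_i := f_i(Q_s) \in \F_q$, the restriction of $f = \sum_{i=0}^{\ell-1}(\phi^*f_i)\cdot x^i$ to the fibre becomes
\[
    f|_{\phi^{-1}(Q_s)} \;=\; \sum_{i=0}^{\ell-1} a_i\, x^i\Big|_{\phi^{-1}(Q_s)},
\]
which is a polynomial in $x$ of degree less than $\ell$ with constant coefficients in $\F_q$, giving the first claim.

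For the second assertion, set $\alpha_j := x(P_{s,j}) \in \F_q$, which is well defined because the pole locus of $x$ avoids the $P_{s,j}$'s. A codeword of $\CC$ restricted to the recovery set is then the vector $\bigl(\sum_{i<\ell} a_i\alpha_1^i,\,\dots,\,\sum_{i<\ell} a_i\alpha_{\ell+1}^i\bigr)$. To identify this with a genuine Reed--Solomon code $\mathbf{RS}_\ell\left((\alpha_1,\dots,\alpha_{\ell+1})\right)$ of length $\ell+1$ and dimension $\ell$, two things must be checked: (i) the $\alpha_j$ are pairwise distinct, and (ii) the coefficient vector $(a_0,\dots,a_{\ell-1})$ runs over all of $\F_q^\ell$ as the $f_i$ run over $L(G)$.

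Point~(i) is the main technical content and uses crucially that $Q_s$ splits totally in $\phi$. Since $x$ is primitive for $\F_q(Y)/\F_q(X)$, its minimal polynomial $\mu_x(T) \in \F_q(X)[T]$ has degree $\ell+1$, and its roots in $\F_q(Y)$ are the Galois conjugates $\{\sigma(x)\}_{\sigma\in\Gamma}$. Evaluating $\mu_x$ at the place $Q_s$ (its coefficients are regular there, up to the mild condition that the pole locus of $x$ maps away from $Q_s$, which follows from our hypothesis on the $P_{s,j}$'s) yields a polynomial $\bar\mu_x(T)\in\F_q[T]$ of degree $\ell+1$ whose roots are precisely the $\alpha_j$. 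The fact that $Q_s$ is totally split means $\phi$ is unramified at every $P_{s,j}$, which forces $\bar\mu_x$ to be separable; equivalently, the $\ell+1$ values $\alpha_1,\dots,\alpha_{\ell+1}$ are distinct.

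Point~(ii) holds as soon as the evaluation $L(G)\to\F_q$, $f_i\mapsto f_i(Q_s)$ is surjective (for instance whenever $Q_s$ is not a base point of $G$), since the $a_i = f_i(Q_s)$ can then be chosen independently. Under these standard assumptions on $G$ relative to the $Q_s$, the restriction of $\CC$ to the recovery set coincides with the full Reed--Solomon code of length $\ell+1$ and dimension $\ell$ on the evaluation points $(\alpha_1,\dots,\alpha_{\ell+1})$. The main obstacle in this plan is really Step~(i): one has to translate the group--theoretic statement ``$Q_s$ is totally split'' into the arithmetic statement that the reduction of the minimal polynomial of the primitive element $x$ at $Q_s$ is separable, which is where the unramifiedness of the cover enters.
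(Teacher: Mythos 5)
Your opening paragraph and your point~(ii) are fine: the constancy of the $\phi^* f_i$ on a fibre is indeed immediate, and you are right that getting dimension exactly $\ell$ requires the evaluation map $L(G)\to\F_q$ at $Q_s$ to be surjective (the paper leaves this implicit). The genuine gap is in your step~(i), which you yourself identify as the crux. The implication ``$Q_s$ totally split, hence $\phi$ unramified at the $P_{s,j}$, hence the reduction $\overline{\mu}_x$ of the minimal polynomial of $x$ at $Q_s$ is separable'' is false. Kummer--Dedekind type arguments go in the opposite direction: if $\overline{\mu}_x$ is separable (equivalently, if $x$ generates the integral closure of the local ring $\mathcal{O}_{Q_s}$ in $\F_q(Y)$ above $Q_s$), then the factorization of $\overline{\mu}_x$ reflects the splitting of $Q_s$; but an arbitrary primitive element that is regular on the fibre need not have this property, however nice the splitting is. Concretely, take $q$ odd, $\phi:\P^1\to\P^1$ given by $u=y^2$, and $x=y(u-1)+u=y^3+y^2-y$, a primitive element of $\F_q(y)/\F_q(u)$ regular on the fibre over $u=1$; the point $u=1$ is totally split and unramified, yet $x$ takes the value $1$ at both points $y=\pm1$, and $\mu_x(T)=(T-u)^2-u(u-1)^2$ reduces to $(T-1)^2$ at $u=1$. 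So unramifiedness tells you nothing about separability of $\overline{\mu}_x$, and your derivation of the distinctness of the $\alpha_j$ collapses at exactly the step you flagged as the main technical content.

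For comparison, the paper does not argue through minimal polynomials at all: the distinctness of the values $x(P_{s,j})$ is the content of the remark that follows the lemma, which derives injectivity of $x$ on each fibre from the primitivity of $x$ (every function in $\F_q(Y)$ is a polynomial in $x$ with coefficients pulled back from $\F_q(X)$, and pulled-back coefficients cannot separate two points of the same fibre). Note that even this argument requires the coefficients involved to be regular at $Q_s$: in the example above the separating function $y=(x-u)/(u-1)$ only appears with a coefficient having a pole at $u=1$, and indeed $x$ fails to separate that fibre. So the separation of the fibre is genuinely an additional requirement on the choice of $x$ (it holds, for instance, for the coordinate functions used in the Hermitian examples of Barg--Tamo--\vladut{}); a correct write-up must either impose it or verify it for the chosen $x$, and it cannot be extracted from total splitting and unramifiedness alone as your step~(i) attempts to do.
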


\begin{remark}
  Since $x$ is a primitive element of the extension $\Fq(Y)/\Fq(X)$,
  then, its restriction to a fibre is injective. Indeed, suppose that
  for two distinct points $R_1, R_2$ in a given fibre we have
  $x(R_1) = x(R_2)$, then since for any $f \in \Fq(X)$,
  $\phi^*f (R_1) = \phi^* f (R_2)$ and since $\Fq(Y)$ is generated as
  an algebra by $\phi^* \Fq(X)$ and $x$, this would entail that no function
  in $\Fq(Y)$ takes distinct values at $R_1$ and $R_2$, a contradiction.
\end{remark}

Similarly to the Reed--Solomon--like construction, these codes have
locality $\ell$. To estimate the other parameters, denote by
$\deg \phi $ the degree of the morphism, which is nothing but
the extension degree $[\Fq(Y) : \Fq(X)]$.  This degree is
also the degree of the pole locus divisor $(x)_{\infty}$ of the function $x$
(\cite[Lem.~2.2]{moreno1990book}).  

\begin{theorem}[{\cite[Th.~3.1]{barg2017it}}]
  The Barg--Tamo--\vladut{} code defined in \eqref{eq:TBZcode}
  has locality $\ell$ and parameters $[n, k, d]$ with
  \begin{align*}
    k & \geq \ell (\deg G + 1 - g);\\
    d & \geq n - (\ell - 1) \deg \phi - (\ell + 1) \deg G.
  \end{align*}
\end{theorem}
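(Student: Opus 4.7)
The plan is to establish the three claims---locality, dimension bound, and distance bound---in separate steps, each reducing to an already-established fact in the text.

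For locality, the strategy is essentially contained in the preamble preceding the theorem. Fix a recovery set $A_i = \{P_{i,1}, \dots, P_{i, \ell+1}\}$, namely the fibre of $\phi$ above a totally split point $Q_i$. Since $\phi^* f_j$ is constant on each fibre (it factors through $\phi$), the restriction of any $f = \sum_{j=0}^{\ell-1} (\phi^* f_j) x^j$ to $A_i$ equals the restriction of the polynomial $\sum_j f_j(Q_i) x^j$, a polynomial of degree $< \ell$ in $x$. The key input is that the function $x$, being a primitive element of $\Fq(Y)/\Fq(X)$, separates the $\ell + 1$ points in the fibre (as remarked just before the theorem). Hence $\rest{\CC}{A_i}$ coincides with a Reed--Solomon code of length $\ell + 1$ and dimension $\ell$, which has minimum distance $2$, proving locality.

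For the dimension, I would first show that the evaluation of functions $f = \sum_{j=0}^{\ell-1}(\phi^* f_j) x^j$ defines a linear map $L(G)^{\oplus \ell} \to \Fq^n$. The central observation is that $1, x, \dots, x^{\ell-1}$ are $\Fq(X)$-linearly independent: indeed, since $\phi$ is Galois of degree $\ell+1$ and $x$ is primitive for $\Fq(Y)/\Fq(X)$, the minimal polynomial of $x$ over $\Fq(X)$ has degree exactly $\ell+1 > \ell - 1$. This upgrades the map $L(G)^{\oplus \ell} \to \Fq(Y)$, $(f_0, \dots, f_{\ell-1}) \mapsto \sum (\phi^* f_j) x^j$, to an injection, so the space $V$ of such $f$'s has $\Fq$-dimension $\ell \cdot \ell(G) \geq \ell(\deg G + 1 - g)$ by Riemann--Roch. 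It then remains to see that the evaluation map $V \to \Fq^n$ is injective, which will follow from the distance bound below (any nonzero $f \in V$ must have weight at least $d > 0$ under the assumption that $n$ is large enough for the bound to be nontrivial; in the degenerate range the lower bound on $k$ becomes vacuous together with that on $d$).

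The distance bound is an application of Goppa's Theorem~\ref{thm:basics} to $Y$. For $f_j \in L(G)$, Definition~\ref{def:pullback} together with \eqref{eq:pullback_div} yield $\phi^* f_j \in L(\phi^* G)$ and $\deg \phi^* G = (\ell+1)\deg G$. Similarly $x^j \in L(j \cdot (x)_\infty)$ for $0 \leq j \leq \ell - 1$. Therefore
\[
  V \subseteq L\bigl(\phi^* G + (\ell - 1)(x)_\infty\bigr),
\]
and writing $\cP_Y = (P_{1,1}, \dots, P_{n/(\ell+1),\ell+1})$ we obtain the inclusion
\[
  \CC \subseteq \CL{Y}{\cP_Y}{\phi^* G + (\ell-1)(x)_\infty}.
\]
Applying Goppa's bound to the right-hand code and using the identification $\deg (x)_\infty = \deg \phi$ from \cite[Lem.~2.2]{moreno1990book} recalled in the text, we get
\[
  d(\CC) \geq n - (\ell+1) \deg G - (\ell-1) \deg \phi,
\]
as claimed.

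The only subtle point is the linear independence of $1, x, \dots, x^{\ell-1}$ over $\phi^*\Fq(X)$, which I expect to be the main ``conceptual'' obstacle, although it is immediate once one uses that $\phi$ is Galois of degree exactly $\ell+1$; everything else is a routine assembly of Riemann--Roch, pullback formulas, and Goppa's bound.
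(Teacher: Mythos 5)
Your proposal is correct and follows essentially the same route as the paper: locality via the observation that the restriction to a fibre is an $[\ell+1,\ell,2]$ Reed--Solomon code in $x$, the distance via the inclusion $\CC \subseteq \CL{Y}{\cP}{\phi^*G+(\ell-1)(x)_\infty}$ combined with Goppa's bound and $\deg(x)_\infty=\deg\phi$, and the dimension via Riemann--Roch. In fact your dimension step is more detailed than the paper's one-line justification (you make explicit the $\Fq(X)$-linear independence of $1,x,\dots,x^{\ell-1}$ and the injectivity of evaluation); just note that in the ``degenerate'' range the stated bound on $k$ is not merely vacuous but requires the implicit degree hypotheses under which the theorem is meant to be applied, an imprecision shared with the original statement.
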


\begin{proof}
  The dimension is a consequence of the definition and of
  Riemann--Roch theorem. The proof of the locality is
  the very same as that of Tamo--Barg codes given in
  \S~\ref{subsec:TBcode}.
  For the minimum distance, observe that the
  code $\CC$ is contained in the code
  $\CL{Y}{\cP}{(\ell - 1)(x)_\infty + \phi^* G}$. Therefore, it
  suffices to bound from below the minimum distance of this code to
  get a lower bound for the minimum distance of $\CC$.  From
  (\ref{eq:pullback_div}), $\deg \phi^* G = (\ell + 1) \deg G$ and,
  from Theorem~\ref{thm:basics}, the code
  $\CL{Y}{\cP}{(\ell - 1)(x)_\infty + \phi^* G}$ has minimum distance
  at least $n - (\ell - 1)\deg (x)_{\infty} - (\ell + 1)\deg G$.
  Finally, from \cite[Lem.~2.2]{moreno1990book}, we get
  $\deg (x)_\infty = \deg \phi$, which concludes the proof.
\end{proof}

\begin{remark}
  The above proof is more or less that of \cite[Th.~3.1]{barg2017it}
  we chose to reproduce it here, in order describe the general
  strategy of evaluation of the parameters of such locally recoverable
  codes constructed from curves. Namely:
  \begin{itemize}
  \item the dimension is obtained by applying Riemann--Roch Theorem
    on $X$ together with an elementary count of monomials;
  \item the minimum distance is obtained by observing that the
    constructed LRC is contained in an actual algebraic geometry code
    to which Goppa bound (Theorem~\ref{thm:basics}) can be applied.
  \end{itemize}
\end{remark}

\begin{example}
  See \cite[\S~IV.A]{barg2017it} for examples of LRC from
  the Hermitian curve.
\end{example}

\subsection{Improvement: locally recoverable codes with higher local
  distance}\label{subsec:higher_local_distance}
Up to now, we introduced codes whose restriction to any recovery set
is nothing but a parity code, that is to say a code of minimum
distance $2$ which permits only to recover one symbol from the other
ones. One can expect more, such as being able to correct errors for
such a {\em local code}. Thus, one could look for codes whose
restrictions to recovery sets have a minimum distance larger than $2$.

\begin{definition}
  The {\em local distance} of a locally recoverable code $\CC$ 
  is defined as
  \[
    \min_{i} \min_{A(i)} \left\{\dmin
        (\rest{\CC}{A(i)})\right\},
  \]
  where $i \in \{1, \dots, n\}$ and $A(i)$ ranges over all the recovery
  sets of $\CC$ for $i$ (which may be non unique according to
  Remark~\ref{rem:on_recovery_sets}).
\end{definition}

The codes introduced in~\eqref{eq:TBZcode} have local distance $2$.
Actually, improving the local distance permits to reduce the amount of
requested symbols for the recovery of a given symbol as suggested
by the following statement.

\begin{lemma}
  Let $\CC \subseteq \Fq^n$ be a locally recoverable code with local
  distance $\rho$ and recovery sets of cardinality $\ell + 1$. For a
  codeword $\cv \in \CC$, any symbol $c_i$ of $\cv$ can be recovered
  from any $(\ell - \rho + 2)$--tuple of other symbols in the same
  recovery set.
\end{lemma}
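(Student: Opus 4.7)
The plan is to reduce the problem to erasure decoding for the local code. Fix an index $i \in \{1,\dots,n\}$ and a recovery set $A(i) \ni i$ with $|A(i)| = \ell + 1$; by hypothesis the restriction $\rest{\CC}{A(i)}$ is a linear code of length $\ell + 1$ whose minimum distance is at least $\rho$. Given a codeword $\cv \in \CC$, the vector $\cv|_{A(i)}$ lies in this local code.

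Suppose we are handed the values of $c_j$ for $j$ ranging over some subset $S \subseteq A(i) \setminus \{i\}$ with $|S| = \ell - \rho + 2$. Then the number of positions in $A(i)$ whose values are unknown is
\[
(\ell+1) - |S| = \rho - 1,
\]
and this unknown set contains $i$. I would invoke the classical fact that a linear code of minimum distance $d$ corrects any set of at most $d-1$ erasures uniquely: indeed, if two codewords of $\rest{\CC}{A(i)}$ agree on the complement $S$ of the erased positions, their difference is a codeword supported on at most $\rho - 1 < \rho$ positions, hence the zero codeword.

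Applying this to $\cv|_{A(i)}$, the $\rho - 1$ unknown coordinates, and in particular $c_i$, are uniquely determined by the $\ell - \rho + 2$ known ones. There is no serious obstacle here: the only thing to notice is that the bound $\ell - \rho + 2$ is precisely calibrated so that the number of erasures equals $d(\rest{\CC}{A(i)}) - 1$ in the worst case, which is exactly the erasure-correction threshold of the local code.
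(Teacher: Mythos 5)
Your proof is correct and follows essentially the same route as the paper's: both arguments boil down to observing that a codeword of the local code $\rest{\CC}{A(i)}$ vanishing on the $\ell-\rho+2$ known positions would have weight at most $\rho-1<\rho$, hence is zero, i.e.\ the restriction map to those positions is injective (your "erasure decoding" phrasing is just this injectivity). Nothing is missing.
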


\begin{proof}
  Let $A(i) \subseteq \{1, \dots, n\}$ be a recovery set for the
  position $i$. The restricted code $\rest{\CC}{A(i)}$ has length
  $\ell + 1$ and minimum distance $\rho$. Therefore, by definition of
  the minimum distance for any $I \subseteq A_i \setminus \{i\}$ with
  $|I| = (\ell + 1) - (\rho - 1) = \ell - \rho + 2$, the restriction map
  $\rest{\CC}{A(i)} \longrightarrow \rest{\CC}{I}$
  is injective.
\end{proof}

This can be done by reducing the degree in $x$ of the evaluated functions.
That is to say, considering a code
\begin{equation}\label{eq:TBZcode2}
  \CC' = \left\{\left(f(P_{1,1}, \dots, f(P_{\frac{n}{\ell +1}, \ell +
        1})\right) ~\bigg|~ f = \sum_{i=0}^{s-1} (\phi^*f_i) x^i,\quad f_i \in
     (L(G))\right\}
\end{equation}
for some integer $0 \leq s \leq \ell$.  Here again, the code
restricted to a recovery set is nothing but an
$[\ell + 1, s, \ell - s +2]$ Reed--Solomon code.
In such a code, a codeword is entirely determined by any $s$--tuple
of its entries.

\begin{theorem}
  The Tamo--Barg--\vladut{} code $\CC'$ defined in \eqref{eq:TBZcode2}
  has length $n$, locality $\ell$, local distance $\rho$
  and dimension $k$ and minimum distance $d$ satisfying
  \begin{align*}
    \rho & = \ell - s + 2\\
    k & \geq  s (\deg G + 1 - g) \\
    d & \geq n - (s-1) \deg (x) - (\ell + 1)\deg G.
  \end{align*}
\end{theorem}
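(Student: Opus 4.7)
The plan is to adapt the four-step argument from the previous theorem (on $\CC$) in a uniform way, replacing the parameter $\ell$ by the new evaluation degree $s$ wherever it controls the polynomial part in $x$, while keeping $\ell+1$ as the fibre size (which controls the length and the locality).

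First, I would dispatch the length, locality, and local distance together. The length $n = \frac{n}{\ell+1} \cdot (\ell+1)$ is immediate from the construction. For a fixed fibre $\phi^{-1}(Q_i) = \{P_{i,1},\dots,P_{i,\ell+1}\}$, each $\phi^* f_j$ is constant on this fibre with value $f_j(Q_i)$, so the restriction of any $f = \sum_{j=0}^{s-1}(\phi^*f_j)\,x^j$ to the fibre becomes the polynomial $\sum_{j=0}^{s-1} f_j(Q_i)\,x^j$ evaluated at the $\ell+1$ pairwise distinct values $x(P_{i,1}),\dots,x(P_{i,\ell+1})$ (distinctness coming from the fact that $x$ is a primitive element of $\Fq(Y)/\Fq(X)$, hence injective on each fibre). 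Consequently $\rest{\CC'}{A(i)}$ is a Reed--Solomon code of length $\ell+1$ and dimension at most $s$, so its minimum distance is at least $\ell-s+2$; this gives the locality $\ell$ and the local distance $\rho=\ell-s+2$.

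Next, for the dimension, I would show that the natural map
\[
  \Phi: L(G)^{\oplus s} \longrightarrow \CC',\qquad (f_0,\dots,f_{s-1})\longmapsto
  \bigl(\textstyle\sum_{j}(\phi^* f_j) x^j(P_{i,\alpha})\bigr)_{i,\alpha}
\]
is injective once the Goppa bound to follow is positive. The key input is that $1,x,\dots,x^{\ell}$ is a basis of $\Fq(Y)$ over $\Fq(X)$ (because $x$ is a primitive element and $[\Fq(Y):\Fq(X)]=\ell+1$), so in particular $1,x,\dots,x^{s-1}$ are $\Fq(X)$-linearly independent; hence $\sum_j (\phi^*f_j)x^j=0$ in $\Fq(Y)$ forces all $f_j=0$. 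Combined with injectivity of the evaluation map (a consequence of the distance bound below), we get $\dim \CC' = s\,\ell(G) \geq s(\deg G+1-g)$ by Riemann--Roch.

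Finally, for the minimum distance, I would embed $\CC'$ into a bona fide AG code on $Y$ and apply Goppa's theorem. Since $\phi^* f_j \in L(\phi^* G)$ (as $\div(\phi^*f_j) = \phi^*\div(f_j) \geq -\phi^* G$) and $x^j \in L(j(x)_\infty) \subseteq L((s-1)(x)_\infty)$ for $j\leq s-1$, every $f = \sum_j (\phi^*f_j)x^j$ lies in $L\bigl((s-1)(x)_\infty + \phi^*G\bigr)$; the hypothesis that the pole locus of $x$ avoids the $P_{i,\alpha}$ ensures the support condition. Thus
\[
  \CC' \subseteq \CL{Y}{\cP}{(s-1)(x)_\infty + \phi^* G},
\]
and Theorem~\ref{thm:basics} yields
\[
  d(\CC') \geq n - (s-1)\deg(x)_\infty - \deg(\phi^*G) = n - (s-1)\deg\phi - (\ell+1)\deg G,
\]
using $\deg\phi^*G = \deg\phi\cdot\deg G = (\ell+1)\deg G$ from \eqref{eq:pullback_div} and $\deg(x)_\infty=\deg\phi$. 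There is no real obstacle beyond bookkeeping; the only subtle point is the linear independence of $1,x,\dots,x^{s-1}$ over $\Fq(X)$, which is where the hypothesis $s\leq\ell$ (equivalently $s-1<\ell+1=[\Fq(Y):\Fq(X)]$) is used to guarantee both injectivity of $\Phi$ on the function level and the validity of the local Reed--Solomon dimension count.
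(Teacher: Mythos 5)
Your proposal is correct and follows exactly the strategy the paper uses for the Barg--Tamo--\vladut{} case (and implicitly intends here, as it states this theorem without a separate proof): the fibre-wise Reed--Solomon restriction for locality and local distance $\ell-s+2$, linear independence of $1,x,\dots,x^{s-1}$ over $\Fq(X)$ plus Riemann--Roch for the dimension, and the containment $\CC' \subseteq \CL{Y}{\cP}{(s-1)(x)_\infty + \phi^*G}$ with the Goppa bound, $\deg\phi^*G=(\ell+1)\deg G$ and $\deg(x)_\infty=\deg\phi$, for the minimum distance. Your explicit remarks on injectivity of the evaluation map and on where $s\leq\ell$ is used are sound refinements of the same argument, not a different route.
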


\subsection{Fibre products of curves and the availability problem}
\label{subsec:availability}
Still motivated by distributed storage applications another parameter
called {\em availability} is of interest.

\begin{definition}
  The {\em availability} of a a locally recoverable code is the minimum over all position
  $i \in \{1, \dots, n\}$ of the number of recovery sets for $i$.
\end{definition}

Practically, a distributed storage system with a large availability
benefits from more flexibility in choosing the servers contacted for
recovering the contents of a given one.

The availability of an LRC is a positive integer and all the
previous constructions had availability $1$.  Constructing LRC with
multiple recovery sets is a natural problem.

\begin{itemize}
\item For Reed--Solomon--like LRC, this problem is discussed in
  \cite[\S~IV]{tamo2014it} by considering two distinct
  group actions on $\Fq$.  The cosets with respect of these group
  actions provide two partitions of the support yielding LRC
  with availability $2$.
\item In the curve case, constructions of LRC with availability
  $2$ from a curve $X$ together with two distinct morphisms from this
  curve to other curves $Y^{(1)}, Y^{(2)}$ is considered in
  \cite[\S~V]{barg2017it}, in \cite[\S~6]{barg2017agctc} and the case
  of availability $t \geq 2$ is treated in \cite{haymaker2018amc}.
\end{itemize}

The construction can be realized from the bottom using the notion of
{\em fibre product}.  Given three curves
$Y^{(1)}, Y^{(2)}$ and $X$ with morphisms
\[
  Y^{(1)} \stackrel{\phi_2}{\longrightarrow} X \qquad {\rm and} \qquad
    Y^{(2)} \stackrel{\phi_2}{\longrightarrow} X,
  \]
  then the fibre product $Y^{(1)} \times_X Y^{(2)}$ is defined as
\[
  (Y^{(1)} \times_X Y^{(2)})(\Fqbar) \eqdef \left\{(P_1,P_2) \in
    (Y^{(1)} \times Y^{(2)})(\Fqbar) ~|~ \phi_1(P_1) =
    \phi_2(P_2)\right\}.
\]
It comes with two canonical projections $\psi_1$ and $\psi_2$ onto $Y^{(1)}$
and $Y^{(2)}$ respectively:
\[
  \xymatrix @!0 @C=3.5pc @R=3.5pc {\relax
    & Y^{(1)} \times_X Y^{(2)} \ar[ld]_{\psi_1} \ar[rd]^{\psi_2} \ar[dd]^{\Phi}& \\
    \relax Y^{(1)} \ar[rd]_{\phi_1} & & Y^{(2)} \ar[ld]^{\phi_2}\\
     & X & 
  }
\]
Finally, we denote by $\Phi$ the morphism $\Phi = \phi_1 \circ \psi_1 = \phi_2 \circ \psi_2 : Y^{(1)} \times_X Y^{(2)} \longrightarrow X$.

The construction of LRC with availability $2$ can be done
as follows. Let $\ell_1, \ell_2$ be integers such that
$\deg \phi_1 = \ell_1 + 1$, $\deg \phi_2 = \ell_2 + 1$ (hence
$\deg \psi_1 = \ell_2 + 1$ and $\deg \psi_2 = \ell_1 + 1$) and
\begin{itemize}
\item $x_1, x_2$ be respective primitive elements of the extensions
  $\Fq(Y^{(1)})/\Fq(X)$ and $\Fq(Y^{(2)})/\Fq(X)$;
\item $G$ be a divisor on $X$;
\item $Q_1, \dots, Q_{s}$ be rational points of $X$ that are totally
  split in $\Phi$ and denote by $P_1,\dots , P_n$ their
  pre-images. 
\end{itemize}
Similarly to the previous cases, either we suppose that the supports
of $G$ and the points $Q_1, \dots, Q_s$ avoid the image under $\phi_1$ of
the pole locus of $x_1$ and the image under $\phi_2$ of the pole locus of
$x_2$, or we use Remark~\ref{UseWeakApprox}.

\begin{definition}
  With the above data, we define a locally recoverable code
  with availability $2$ as follows.
  \[
    \CC \eqdef \left\{ (f(P_1), \dots, f(P_n)) ~\bigg|~
      f =
      \sum_{i_1=0}^{\ell_1 - 1} \sum_{i_2=0}^{\ell_2 - 1}
      (\Phi^* h_{i_1 i_2})\psi_1^*(x_1)^{i_1}\psi_2^*(x_2)^{i_2},\
    h_{i_1 i_2} \in L(G)\right\}\! \cdot
  \]
\end{definition}
Let $i\in \{1, \dots, n\}$, the point $P_i$ is associated to two
recovery sets. Namely $\psi_1^{-1}(\{\psi_1(P_1)\})$ and
$\psi_2^{-1}(\{\psi_2(P_2)\})$ which have respective cardinalities
$\ell_2+1$ and $\ell_1 + 1$.

\begin{theorem}
  The code $\CC$ has availability $2$, with localities $(\ell_1, \ell_2)$.
  Its parameters $[n,k,d]$ satisfy
  \begin{align*}
    n &= s(\ell_1 + 1)(\ell_2 + 1)\\
    k &\geq (deg G + 1 - g_X)\ell_1 \ell_2 \\
    d &\geq n - \deg(G)(\ell_1 + 1)(\ell_2 + 1) \\
       &  \qquad-(\ell_1 - 1)(\ell_2 +1)\deg1(x_2)_{\infty}) -
        (\ell_2 - 1)(\ell_1 + 1)\deg((x_2)_{\infty}),
  \end{align*}
  where $g_X$ denotes the genus of $X$.
\end{theorem}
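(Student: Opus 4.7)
The plan is to mirror the proof of Theorem~3.1 recalled just above (for the single-cover case) on the fibre product $Y \eqdef Y^{(1)} \times_X Y^{(2)}$. I will exhibit a divisor $D$ on $Y$ such that $\CC \subseteq \CL{Y}{\cP}{D}$, count admissible coefficients via Riemann--Roch on $X$, apply Goppa's bound (Theorem~\ref{thm:basics}) on $Y$ for the minimum distance, and finally exploit the two projections $\psi_1,\psi_2$ to produce two families of recovery sets.

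First, for the length, each $Q_j$ splits completely in $\Phi$ which has degree $\deg\phi_1 \cdot \deg\phi_2 = (\ell_1+1)(\ell_2+1)$, so contributes that many distinct rational preimages, hence $n = s(\ell_1+1)(\ell_2+1)$. I then set $D \eqdef (\ell_1-1)\psi_1^*((x_1)_\infty) + (\ell_2-1)\psi_2^*((x_2)_\infty) + \Phi^*G$ and verify place-by-place that every admissible $f$ lies in $L(D)$. Because $\deg\Phi$ equals the product $\deg\phi_1 \cdot \deg\phi_2$, the compositum $\Fq(Y^{(1)})\,\Fq(Y^{(2)}) \subseteq \Fq(Y)$ has degree equal to the product of the factors, so the two function fields are linearly disjoint over $\Fq(X)$ and $\{\psi_1^*(x_1)^{i_1}\psi_2^*(x_2)^{i_2} : 0 \leq i_j \leq \ell_j\}$ is an $\Fq(X)$-basis of $\Fq(Y)$. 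A fortiori the $\ell_1\ell_2$ monomials actually used are $\Fq(X)$-linearly independent, so the coefficient map $L(G)^{\ell_1\ell_2} \to L(D)$ sending $(h_{i_1 i_2})$ to $f$ is $\Fq$-linear and injective, with image of dimension $\ell_1\ell_2\,\ell(G)$. Provided $\deg D < n$, evaluation at $\cP$ is itself injective on $L(D)$, hence $\dim \CC = \ell_1\ell_2\,\ell(G) \geq \ell_1\ell_2(\deg G + 1 - g_X)$ by Riemann--Roch on $X$. The minimum distance bound then follows from Theorem~\ref{thm:basics} applied to $\CL{Y}{\cP}{D} \supseteq \CC$, combined with (\ref{eq:pullback_div}) which computes $\deg D$ exactly as the quantity appearing in the statement.

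For the locality and availability, I argue as in the single-cover code~(\ref{eq:TBZcode}), now using both projections. Fix $P \in \cP$ and set $A_1(P) \eqdef \psi_2^{-1}(\psi_2(P))$, of cardinality $\deg\psi_2 = \deg\phi_1 = \ell_1+1$. On $A_1(P)$, both $\psi_2^*(x_2)$ and $\Phi^*h_{i_1 i_2} = h_{i_1 i_2} \circ \phi_2 \circ \psi_2$ are constant, so $f|_{A_1(P)}$ is the evaluation of a polynomial in $\psi_1^*(x_1)$ of degree at most $\ell_1-1$. By the fibre-product description of $Y$, $\psi_1$ is a bijection from $A_1(P)$ onto $\phi_1^{-1}(\phi_1(\psi_1(P)))$, on which $x_1$ takes $\ell_1+1$ distinct values, hence $\rest{\CC}{A_1(P)}$ embeds into an $[\ell_1+1,\ell_1,2]$ Reed--Solomon code, a recovery set of locality $\ell_1$. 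The symmetric set $A_2(P) \eqdef \psi_1^{-1}(\psi_1(P))$ of cardinality $\ell_2+1$ gives a second recovery set of locality $\ell_2$, establishing availability at least~$2$.

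The main obstacle I anticipate is the verification that every admissible $f$ lies in $L(D)$: the divisor $D$ is an upper bound on the pole divisor of a sum whose terms may interact nontrivially at places where $\psi_i^*(x_i)$ and $\Phi^*G$ are simultaneously singular. However, a worst-case valuation argument place-by-place handles this without trouble. A secondary subtlety lies in the linear disjointness of $\Fq(Y^{(1)})$ and $\Fq(Y^{(2)})$ over $\Fq(X)$ inside $\Fq(Y)$, which is built in via the hypothesis that $s$ rational points of $X$ are totally split in $\Phi$, forcing the degree equality $\deg\Phi = (\ell_1+1)(\ell_2+1)$.
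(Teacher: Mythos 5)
Your proposal is correct and follows essentially the same route as the paper's (very condensed) proof: embed $\CC$ in $\CL{Y^{(1)}\times_X Y^{(2)}}{\cP}{\Phi^*G+(\ell_1-1)\psi_1^*((x_1)_\infty)+(\ell_2-1)\psi_2^*((x_2)_\infty)}$ and apply the Goppa bound together with \eqref{eq:pullback_div} for $d$, get $k$ from Riemann--Roch on $X$ and the $\Fq(X)$-independence of the monomials in $\psi_1^*(x_1),\psi_2^*(x_2)$, and obtain the two recovery sets as the fibres of $\psi_1$ and $\psi_2$, exactly as in \cite[Th.~3.1]{haymaker2018amc} to which the paper defers. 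Your extra care about $\deg D<n$ for injectivity of evaluation and about the injectivity of $x_i$ on split fibres only makes explicit what the paper leaves implicit.
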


\begin{proof}
  The minimum distance comes from the fact that the code is a subcode
  of
  $\CL{Y^{(1)} \times_X Y^{(2)}}{\cP}{\Phi^*(G) + (\ell_1 -
    1)(\psi_1^*({(x_1)}_\infty)) + (\ell_2 - 1)(\psi_2^*{(x_2)}_{\infty})}$.
  The dimension is a direct consequence of the definition of the code.
  For further details, see for instance \cite[Th.~3.1]{haymaker2018amc}.
\end{proof}

\begin{example}
  See \cite[\S~5,6,7]{haymaker2018amc} for examples of LRC
  with availability $\geq 2$ from Giulietti--Korchmaros curves, Suzuki curves
  and Artin--Schreier curves.
\end{example}

\bibliographystyle{abbrv}

\end{document}